\documentclass[letterpaper]{article}
\usepackage[top=3cm,bottom=2.5cm,hmargin=2.5cm]{geometry}
\usepackage{amsmath,amssymb,amsthm,mathtools}
\usepackage{mathrsfs,bm,color}
\usepackage{hyperref}
\usepackage{graphicx}
\usepackage{subcaption}
\usepackage{algorithm}
\usepackage[noend]{algpseudocode}
\usepackage{indentfirst}

\newtheorem{theorem}{Theorem}[section]
\newtheorem{corollary}[theorem]{Corollary}
\newtheorem{lemma}[theorem]{Lemma}

\newtheorem{definition}[theorem]{Definition}
\newtheorem{observation}[theorem]{Observation}

\newtheorem{property}{Property}

\bibliographystyle{plainurl}
\title{A Lock-free Binary Trie}
\author{Jeremy Ko, University of Toronto, jerko@cs.toronto.edu}
\setcounter{tocdepth}{2} 
\setcounter{secnumdepth}{3} 

\algdef{SE}[DOWHILE]{Do}{EndDo}{\algorithmicdo}[1]{\algorithmicwhile\ #1}%
\algdef{SE}[SUBALG]{Indent}{EndIndent}{}{\algorithmicend\ }%
\algtext*{Indent}
\algtext*{EndIndent}

\newcommand{\alglinenoNew}[1]{\newcounter{ALG@line@#1}}
\newcommand{\alglinenoPop}[1]{\setcounter{ALG@line}{\value{ALG@line@#1}}}
\newcommand{\alglinenoPush}[1]{\setcounter{ALG@line@#1}{\value{ALG@line}}}

\alglinenoNew{alg1}

\newcommand{\PRE}{\mathit{D}}
\newcommand{\RUALL}{\textit{RU-ALL}}
\newcommand{\UALL}{\textit{U-ALL}}
\newcommand{\PALL}{\textit{P-ALL}}
\newcommand{\key}{\mathit{key}}
\newcommand{\nextptr}{\mathit{next}}
\newcommand{\head}{\mathit{head}}

\newcommand{\pred}{\mathit{pred}}

\newcommand{\nNode}{\mathit{nNode}}
\newcommand{\uNode}{\mathit{uNode}}
\newcommand{\dNode}{\mathit{dNode}}
\newcommand{\dNodePtr}{\mathit{dNodePtr}}
\newcommand{\iNode}{\mathit{iNode}}
\newcommand{\pNode}{\mathit{pNode}}


\newcommand{\notifyList}{\mathit{notifyList}}

\newcommand{\RuallPosition}{\mathit{RuallPosition}}

\newcommand{\updateNode}{\mathit{updateNode}}
\newcommand{\updateNodeMax}{\mathit{updateNodeMax}}
\newcommand{\notifyThreshold}{\mathit{notifyThreshold}}

\newcommand{\type}{\mathit{type}}
\newcommand{\status}{\mathit{status}}
\newcommand{\delprednode}{\mathit{delPredNode}}

\newcommand{\height}{\mathit{height}}

\newcommand{\latest}{\textit{latest}}
\newcommand{\target}{\textit{target}} 
\newcommand{\stopflag}{\textit{stop}}
\newcommand{\done}{\textit{completed}}
\newcommand{\latestNext}{\mathit{latestNext}}
\newcommand{\delpredsecond}{\mathit{delPred2}}
\newcommand{\delpred}{\mathit{delPred}}
\newcommand{\threshold}{\mathit{upper0Boundary}}
\newcommand{\insthreshold}{\mathit{lower1Boundary}}

\newcommand{\Iruall}{I_{\mathit{ruall}}}
\newcommand{\Druall}{D_{\mathit{ruall}}}
\newcommand{\Iuall}{I_{\mathit{uall}}}
\newcommand{\Duall}{D_{\mathit{uall}}}
\newcommand{\Inotify}{I_{\mathit{notify}}}
\newcommand{\Dnotify}{D_{\mathit{notify}}}


\begin{document}
	\maketitle
	
	\begin{abstract}
		A  binary trie is a sequential data structure for a dynamic set on the universe $\{0,\dots,u-1\}$ supporting \textsc{Search} with $O(1)$ worst-case step complexity, and \textsc{Insert}, \textsc{Delete}, and \textsc{Predecessor} operations with $O(\log u)$ worst-case step complexity. 
		
		We give a wait-free implementation of a relaxed binary trie, 
		using read, write, CAS, and ($\log u$)-bit AND operations. 
		It supports all operations with the same worst-case step complexity as the sequential binary trie.
		However, \textsc{Predecessor} operations may not return a key when there are concurrent update operations.
		We use this as a component of a lock-free, linearizable implementation of a binary trie. It supports \textsc{Search} with $O(1)$ worst-case step complexity and \textsc{Insert}, \textsc{Delete} and \textsc{Predecessor} with $O(c^2 + \log u)$ amortized step complexity,  where $c$ is a measure of the contention.
		
		A lock-free binary trie is challenging to implement as compared to many other lock-free data structures
		because \textsc{Insert} and \textsc{Delete} operations perform a non-constant number of modifications to the binary trie in the worst-case 
		to ensure the
		correctness of \textsc{Predecessor} operations.
	\end{abstract}
	
	
	\section{Introduction}\label{section_intro}

	Finding the predecessor of a key in a dynamic set is a fundamental problem with wide-ranging applications in sorting, approximate matching and nearest neighbour algorithms. Data structures supporting \textsc{Predecessor} can be used to design efficient priority queues and mergeable heaps~\cite{BoasKZ77}, and have applications in IP routing~\cite{DegermarkBCP97} and bioinformatics~\cite{BieganskiRCR94, Martinez-Prieto16}. 
	
	A binary trie is a simple sequential data structure that maintains a dynamic set of keys $S$ from the universe $U = \{0,\dots,u-1\}$. \textsc{Predecessor}$(y)$ returns the largest key in $S$ less than key $y$, or $-1$ if there is no key smaller than $y$ in $S$.
	It
	supports \textsc{Search} with $O(1)$ worst-case step complexity and \textsc{Insert}, \textsc{Delete}, and \textsc{Predecessor} with $O(\log u)$ worst-case step complexity. It has $\Theta(u)$ space complexity. 
	
	The idea of a binary trie is to represent the prefixes of keys in $U$ in a sequence of $b+1$ arrays, $\PRE_i$, for $0 \leq i \leq b$, where  $b = \lceil \log_2 u \rceil$. Each array $\PRE_i$ has length $2^i$ and is indexed by the bit strings $\{0,1\}^i$. The array entry $\PRE_i[x]$ stores the bit 1 if $x$ is the prefix of length $i$ of some key in $S$, and 0 otherwise. 
	The sequence of arrays implicitly forms a perfect binary tree. The array entry $\PRE_i[x]$ represents the node at depth $i$ with length $i$ prefix $x$. Its left child is the node represented by $\PRE_{i+1}[x \cdot 0]$ and its right child is the node represented by $\PRE_{i+1}[x \cdot 1]$. Note that $\PRE_b$ (which represents the leaves of the binary trie) is a direct access table describing the set $S \subseteq U$. An example of a binary trie is shown in Figure~\ref{figure:seq_trie}.
	
	A \textsc{Search}$(x)$ operation reads $\PRE_b[x]$ and returns \textsc{True} if $\PRE_b[x]$ has value 1, and \textsc{False} otherwise.
	An \textsc{Insert}$(x)$ operation sets the bits of the nodes on the path from the leaf $\PRE_b[x]$ to the root to 1.  A \textsc{Delete}$(x)$ operation begins by setting $\PRE_b[x]$ to 0. It then traverses up the trie starting at $\PRE_b[x]$, setting the value of the parent of the current node to 0 if both its children have value 0.
	A \textsc{Predecessor}$(y)$ operation $pOp$ traverses up the trie starting from the leaf $\PRE_b[y]$ to the root.
	If the left child of each node on this path either has value 0 or is also on this path, then $pOp$ returns $-1$.
	Otherwise, consider the first node on this path whose left child $t$ has value 1 and is not on this path.
	Then starting from $t$, $pOp$ traverses down the right-most path of nodes with value 1 until it reaches a leaf $\PRE_b[w]$, and returns $w$.
	
	\begin{figure}[b]
		\centering
		\includegraphics[scale=0.45]{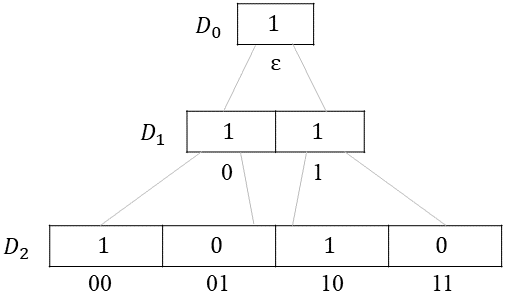}
		\caption{A sequential binary trie for the set $S = \{0,2\}$ from a universe $U = \{0,1,2,3\}$.}
		\label{figure:seq_trie}
	\end{figure}
	

	More complicated variants of sequential binary tries exist, such as van Emde Boas tries~\cite{Boas77}, x-fast tries and y-fast tries~\cite{Willard83}. Compared to binary tries, they improve the worst-case complexity of predecessor operations. Both x-fast tries and y-fast tries use hashing to improve the space complexity, and hence are not deterministic. Furthermore, none of these variants support constant time search.
	One motivation for studying lock-free binary tries is as a step towards
	efficient lock-free implementations of these data structures. 

	Universal constructions provide a framework to give (often inefficient) implementations of concurrent data structures from sequential specifications.
	A recent universal construction by Fatourou, Kallimanis, and Kanellou~\cite{FatourouKK19} can be used to implement a wait-free binary trie supporting operations with $O(N + \bar{c}(op) \cdot \log u)$ worst-case step complexity, where $N$ is the number of processes in the system, and $\bar{c}(op)$, is the \textit{interval contention} of $op$. This is the number of \textsc{Insert}, \textsc{Delete}, and \textsc{Predecessor} operations concurrent with the operation $op$. 
	Prior to this work, there have been no lock-free implementations of a binary trie or any of its variants without using universal constructions.
	
	There are many lock-free data structures that directly implement a dynamic set, including variations of linked lists, balanced binary search trees, skip lists~\cite{Pugh89}, and hash tables. There is also a randomized, lock-free implementation of a skip trie\cite{OshmanS13}. We discuss these data structures in more detail in Section~\ref{section_related}. 
	
	\textbf{Our contribution:}
	We give a lock-free implementation of a binary trie using registers, compare-and-swap (CAS) objects, and ($b+1$)-bounded min-registers. A min-write on a ($b+1$)-bit memory location can be easily implemented using a single ($b+1$)-bit AND operation, 
	so all these shared objects are supported in hardware.
	The amortized step complexity of our lock-free implementation of the binary trie is expressed using two other measures of contention.
	For an operation $op$, the \textit{point contention} of $op$,  denoted $\dot{c}(op)$, is the maximum number of concurrent \textsc{Insert}, \textsc{Delete}, and \textsc{Predecessor} operations at some point during $op$. The \textit{overlapping-interval contention}~\cite{OshmanS13} of $op$, denoted, $\tilde{c}(op)$ is the maximum interval contention of all update operations concurrent with $op$.
	Our implementation supports \textsc{Search} with  $O(1)$ worst-case step complexity, and \textsc{Insert} with $O(\dot{c}(op)^2 + \log u)$ amortized step complexity, and \textsc{Delete} and \textsc{Predecessor} operations with $O(\dot{c}(op)^2 + \tilde{c}(op) + \log u)$ amortized step complexity.
	In a configuration $C$ where there are $\dot{c}(C)$ concurrent \textsc{Insert}, \textsc{Delete}, and \textsc{Predecessor} operations, the implementation uses $O(u + \dot{c}(C)^2)$ space.
	Our data structure consists of a \textit{relaxed binary trie}, as well as auxiliary lock-free linked lists.
	Our goal was to maintain the $O(1)$ worst-case step complexity of \textsc{Search}, while avoiding $O(\bar{c}(op) \cdot \log u)$ terms in the amortized step complexity of the other operations seen in universal constructions of a binary trie.
	Our algorithms to update the bits of the relaxed binary trie and traverse the relaxed binary trie finish in $O(\log u)$ steps in the worst-case, and hence are \textit{wait-free}. 
	The other terms in the amortized step complexity are from updating and traversing the auxiliary lock-free linked lists.
	

	
	\textbf{Techniques:}
	A linearizable
	implementation of a concurrent data structure requires that all operations on the data structure 
	appear to happen atomically. 
	In our \textit{relaxed binary trie},
	predecessor operations are not linearizable. 
	We relax the properties maintained by the binary trie, so that
	the bit stored in each internal binary trie node does not always have to be accurate.
	At a high-level, we ensure that the bit at a node is accurate when there are no active update operations whose input key is a leaf of the subtrie rooted at the node. This allows us to design an efficient, wait-free algorithm for modifying the bits along a path in the relaxed binary trie. 
	
	Other lock-free data structures use the idea of relaxing properties maintained by the data structure.
	For example, in lock-free balanced binary search trees~\cite{BrownER14}, the balance conditions are often relaxed. This allows the tree to be temporarily unbalanced, provided there are active update operations.
	A node can be inserted into a tree by updating a single pointer. Following this, tree rotations may be performed, but they are only used to improve the efficiency of search operations. 
	Lock-free skip lists~\cite{FomitchevR04} relax the properties about the heights of towers of nodes.
	A new node is inserted into the bottom level of a skip list using a single pointer update. Modifications that add nodes into the
	linked lists at higher levels only improve the efficiency of search operations. 
	
	
	Relaxing the properties of the binary trie is more complicated than these two examples because it affects correctness, not just efficiency.
	For a predecessor operation to traverse a binary trie using the sequential algorithm, the bit of each node must be the logical OR of its children. This is not necessarily the case in our relaxed binary trie. 
	For example, a node in the relaxed binary trie with value 1 may have two children which each have value 0.
	
	The second way our algorithm is different from other data structures is how operations help each other complete.
	Typical lock-free data structures, including those based on universal constructions, use \textit{helping}
	when concurrent update operations require modifying the same part of a data structure: a process may help a different operation complete by making modifications to the data structure on the other operation's behalf. 
	This technique is efficient when a small, constant number of modifications to the data structure need to be done atomically. For example, many lock-free implementations of linked lists and binary search trees can insert a new node by repeatedly attempting CAS to modify a single pointer.
	For a binary trie, update operations require updating $O(\log u)$ bits on the path from a leaf to the root. 
	An operation that helps all these updates complete would have $O(\dot{c}(op) \cdot \log u)$ amortized step complexity. 
	
	Predecessor operations that cannot traverse a path through the relaxed binary trie
	do not help concurrent update operations complete. For our linearizable, lock-free binary trie, our approach is to have update operations and predecessor operations announce themselves in an update announcement linked list and a predecessor announcement linked list, respectively.
	We guarantee that a predecessor operation will either learn about concurrent update operations by traversing the update announcement linked list, or it will be notified by concurrent update operations via the predecessor announcement linked list.  A predecessor operation uses this information to determine a correct return value, especially when it cannot complete its traversal of the relaxed binary trie.
	This is in contrast to other lock-free data structures that typically announce operations so that they can be completed by concurrent operations in case the invoking processes crash.

	Section~\ref{section_model} describes the asynchronous shared memory model. In Section~\ref{section_related}, we describe related lock-free data structures and compare them to our lock-free binary trie. 
	In Section~\ref{section_relaxed_binary_trie} we give the specification of a relaxed binary trie, give a high-level description of our wait-free implementation, present the pseudocode of the algorithm, and prove it correct.
	In Section~\ref{section_implementation}, we give a high-level description of our implementation of a lock-free binary trie, present the pseudocode of the algorithm and a more detailed explanation, prove it is linearizable, and analyze its amortized step complexity.
	We conclude in Section~\ref{section_conclusion}.
	.
	

	\section{Model}\label{section_model}

	We use an asynchronous shared memory model. Shared memory consists of a collection of shared variables accessible by a system of $N$ processes. The primitives supported by these variables are performed atomically. 
	A register is an object supporting \textsc{Write}$(w)$, which stores $w$ into the object, and \textsc{Read}$()$, which returns the value stored in the object.
	CAS$(r, old, new)$ compares the value stored in object $r$ with the value $old$. If the two values are the same, the value stored in $r$ is replaced with the value $new$ and \textsc{True} is returned; otherwise \textsc{False} is returned. A min-register is an object that stores a value, and supports \textsc{Read}$()$, which returns the value of the object, and \textsc{MinWrite}$(w)$, which changes the value of the object to $w$ if $w$ is smaller than its previous value.
	
	A \textit{configuration}\index{configuration} of a system consists of the values of all shared objects and the states of all processes. A \textit{step}\index{step} by a process either accesses or modifies a shared object, and can also change the state of the process. An \textit{execution}\index{execution} is an alternating sequence of configurations and steps, starting with a configuration. 
	
	An \textit{abstract data type} is a collection of objects and types of operations that satisfies certain properties. A \textit{concurrent data structure} for the abstract data type provides representations of the objects in shared memory and algorithms for the processes to perform operations of these types. An operation on a data structure by a process becomes \textit{active}\index{active} when the process performs the first step of its algorithm. The operation becomes \textit{inactive}\index{inactive} after the last step of the algorithm is performed by the process. This last step may include a response to the operation. 
	The \textit{execution interval} of the operation consists of all steps between its first and last step (which may include steps from processes performing other operations). 
	In the initial configuration, the data structure is empty and there are no active operations. 
	
	
	We consider concurrent data structures that are \textit{linearizable}~\cite{HerlihyW90}, which means that its operations appear to occur atomically.
	One way to show that a concurrent data structure is linearizable is by defining a \textit{linearization function}, which, for all executions of the data structure, maps all completed operations and a subset of the uncompleted operations in the execution to a step or configuration, called its \textit{linearization point}. 
	The linearization points of these operations must satisfy two properties.
	First, each linearized operation is mapped to a step or configuration within its execution interval. Second, the return value of the linearized operations must be the same as in the execution in which all the linearized operations are performed atomically in the order of their linearization points,
	no matter how operations with the same linearization point are ordered.
	
	A concurrent data structure is \textit{strong linearizability} if its linearization function satisfies an additional \textit{prefix preserving} property:
	For all its executions $\alpha$ and for all prefixes $\alpha'$ of $\alpha$,
	if an operation is assigned a linearization point for $\alpha'$,
	then it is assigned the same linearization point for $\alpha$,
	and if it is assigned a linearization point for $\alpha$ that occurs during $\alpha'$,
	then it is assigned the same linearization point for $\alpha'$.
	This means that the linearization point of each operation is determined as steps are taken in the execution and cannot depend on steps taken later in the execution.
	This definition, phrased somewhat differently, was introduced by Golab, Higham and Woelfel~\cite{GolabHW11}.
	A concurrent data structure is \textit{strongly linearizable with respect to a set of operation types $\cal{O}$} if it has a linearization function that is only defined on operations whose types belong to $\cal{O}$. 
	
	
	A \textit{lock-free}\index{lock-free} data structure guarantees that,  in every infinite execution, whenever there are active operations, at least one will complete within a finite number of steps, provided at least one of the processes performing these operations continues to take steps in the execution.
	Note that any particular operation may not complete, provided other operations do complete.
	A \textit{wait-free}\index{wait-free}  data structure guarantees that, in every infinite execution, every operation completes within a finite number of steps by the process that invoked the operation, provided that 
	process continues to take steps in the execution.
	
	The \textit{worst-case step complexity}\index{worst-case step complexity} of an operation is the maximum number of steps taken by a process to perform any instance of this operation in any execution. The \textit{amortized step complexity}\index{amortized step complexity} of a data structure is the maximum number of steps in any execution consisting of operations on the data structure, divided by the number operations invoked in the execution. One can determine an upper bound on the amortized step complexity by assigning an \textit{amortized cost to each operation}, such that for all possible executions $\alpha$ on the data structure, the total number of steps taken in $\alpha$ is at most the sum of the amortized costs of the operations in $\alpha$. 
	

	\section{Related Work}\label{section_related}
	
	
	
	
	
	
	
	
	In this section, we discuss related lock-free data structures and the techniques used to implement them. We first describe simple lock-free linked lists, which are a component of our binary trie. We next describe search tree implementations supporting \textsc{Predecessor}, and discuss the general techniques used. Next we describe implementations of a Patricia trie and a skip trie. Finally, we discuss some universal constructions and general techniques for augmenting existing data structures.

	There are many existing implementations of lock-free linked lists~\cite{Harris01, Shafiei15, Valois95}. The implementation with best amortized step complexity is by Fomitchev and Ruppert~\cite{FomitchevR04}. It supports \textsc{Insert}, \textsc{Delete}, \textsc{Predecessor}, and \textsc{Search} operations $op$ with  $O(L(op) + \dot{c}(op))$ amortized step complexity, where $L(op)$ is the number of nodes in the data structure at the start of $op$. When $L(op)$ is in $O(\dot{c}(op))$, such as in the case of our binary trie implementation, operations have $O(\dot{c}(op))$ amortized step complexity.
	This implementation uses \textit{flagging}: a pointer that is flagged indicates an update operation wishes to modify it. Concurrent update operations that also need to modify this pointer must help the  operation that flagged the pointer to complete before attempting to perform their own operation.
	After an update operation completes helping, it \textit{backtracks} (using back pointers set in deleted nodes) to a suitable node in the linked list before restarting its own operation.
	
	Ellen, Fatourou, Ruppert, and van Breugel \cite{EllenFRB10} give the first provably correct lock-free implementation of an unbalanced binary search tree using CAS.  Update operations are done by flagging a constant number of nodes, followed by a single pointer update. A flagged node contains a pointer to an operation record, which includes enough information about the update operation so that other processes can \textit{help} it complete. Ellen, Fatourou, Helga, and Ruppert \cite{EllenFHR13} improve the efficiency of this implementation so each  operation $op$ has $O(h(op) + \dot{c}(op))$ amortized step complexity, where $h(op)$ is the height of the binary search tree when $op$ is invoked. This is done by allowing an update operation to backtrack along its search paths by keeping track of nodes it visited in a stack. There are many other implementations of lock-free unbalanced binary search trees~\cite{BrownA12, ChatterjeeDT13, HowleyJ12, NatarajanM14}.
	
	There are also many implementations of 
	lock-free balanced binary search trees~\cite{BraginskyP12, Brown14, BrownThesis17, DrachslerVY14}. 
	Brown, Ellen, and Ruppert designed a lock-free balanced binary search tree~\cite{BrownER14}
	by implementing more powerful primitives, called LLX and SCX, from CAS~\cite{BrownER13}. 
	These primitives are generalizations of LL and SC. SCX allows a single field to be updated by a process provided a specified set of nodes have not been modified since that process last performed LLX on them.
	Ko~\cite{Ko18} shows that a version of their lock-free balanced binary search tree has good amortized step complexity. 
	
	Although a binary trie represents a perfect binary tree, the techniques we use to implement a binary trie are quite different than those that have been used to implement lock-free binary search trees.
	The update operations of a binary trie may require updating the bits of all nodes on the path from a leaf to the root. LLX and SCX do not facilitate this because SCX only updates a single field. 
	
	Brown, Prokopec, and Alistarh~\cite{BrownPA20} give an implementation of an interpolation search tree supporting \textsc{Search} with $O(N + \log L(op))$ amortized step complexity, and \textsc{Insert} and \textsc{Delete} with $O(\bar{c}_{avg}(N + \log L(op))$ amortized step complexity, where $\bar{c}_{avg}$ is the average interval contention of the execution. Their data structure is a balanced search tree where nodes can have large degree.
	A node containing $n$ keys in its subtree is \textit{ideally balanced} when it has $\sqrt{n}$ children, each of which contains $\sqrt{n}$ nodes in its subtree. Update operations help replace subtrees that become too unbalanced with ideally balanced subtrees. When the input distribution of keys is well-behaved, \textsc{Search} can be performed with $O(P + \log\log L(op))$ expected amortized step complexity and update operations can be performed with $O(\bar{c}_{avg}(P + \log\log L(op))$ expected amortized step complexity. Their implementation relies on the use of double-compare single-swap (DCSS).
	DCSS is not a primitive typically supported in hardware, although there exist implementations of DCSS from CAS~\cite{Arbel-Raviv017, GiakkoupisGW21, HarrisFP02}.
	
	Shafiei~\cite{Shafiei13} gives an implementation of a Patricia trie. The data structure is similar to a binary trie, except that only internal nodes whose children both have value 1 are stored. In addition to \textsc{Search}, \textsc{Insert} and \textsc{Delete}, it supports \textsc{Replace}, which removes a key and adds another key in a possibly different part of the trie. Her implementation uses a variant of the flagging technique described in \cite{EllenFRB10}, except that it can flag two different nodes with the same operation record.
	
	Oshman and Shavit~\cite{OshmanS13} introduce a randomized data structure called a skip trie. It combines an x-fast trie with a truncated skip list whose max height is $\log_2 \log_2 u$
	(i.e. it is a y-fast trie whose balanced binary search trees are replaced with a truncated skip list).
	Only keys that are in the top level of the skip list are in the x-fast trie. 
	They give a lock-free implementation of a skip trie supporting \textsc{Search}, \textsc{Insert}, \textsc{Delete}, and \textsc{Predecessor} operations with $O(\tilde{c}(op) + \log\log u)$ expected amortized step complexity from registers, CAS and DCSS. 
	Their x-fast trie uses lock-free hash tables~\cite{ShalevS03}. 
	Their x-fast trie implementation supports update operations with $O(\dot{c}(op) \cdot \log u)$ expected amortized step complexity. 
	Their skip list implementation is similar to Fomitchev and Ruppert's skip list implementation~\cite{FomitchevR04}.
	In the worst-case (for example, when the height of the skip list is 0), a skip trie performs the same as a linked list, so \textsc{Search} and \textsc{Predecessor} take $\Theta(n)$ steps, even when there are no concurrent updates.
	Our lock-free binary trie implementation is deterministic, does not rely on hashing, uses primitives supported in hardware, and always performs well when there are no concurrent update operations. Furthermore, \textsc{Search} operations in our binary trie complete in a constant number of reads in the worst-case.
	
	Fatourou and Ruppert~\cite{FatourouR24} recently implemented an augmented wait-free binary trie.
	They introduce a technique that allows augmenting data structures with additional information to support new operations. 
	For example, suppose each node is associated with a \textit{sum} field, which stores the number of keys in the subtrie rooted at that node.
	Then they can implement a \textsc{Size} operation, which returns the number of keys in the trie.
	In their implementation, each binary trie node points to an immutable version node, which represents 
	the latest version of the fields stored at that binary trie node.
	Each version node stores a sum field and pointers to version nodes of its left and right children. 
	Because version nodes are immutable,
	a snapshot of the binary trie can be taken by reading the version node pointed to by the root of the binary trie. 
	Binary trie nodes can only be updated by making them point to new version nodes. 
	\textsc{Insert}$(k)$ or \textsc{Delete}$(k)$ operations start by updating the version node pointed to by the leaf with key $k$. Each node on the path from this leaf to the root is updated
	by pointing to a new version containing pointers to versions of the node's children. 
	The update operation is only linearized once the version of root is updated.
	A \textsc{Search}$(k)$ operation begins by taking a snapshot of the binary trie by reading the version node pointed to by the root, and then traversing down to the leaf version node with key $k$ of this snapshot.
	It cannot simply read the version pointed to by the leaf with key $k$, because the update operation that last updated this leaf node may not yet be linearized.
	Their \textsc{Search}, \textsc{Insert}, and \textsc{Delete} operations have $O(\log u)$ worst-case step complexity.
	
	The first universal constructions were by Herhily~\cite{Herlihy91, Herlihy93}. To achieve wait-freedom, he introduced an announcement array where operations announce themselves. Processes help perform these announced operations in a round-robin order.
	Barnes~\cite{Barnes93} gives a universal construction for obtaining lock-free data structures. He introduces the idea of using operation records to facilitate helping.
	A lot of work has been done to give more efficient universal constructions~\cite{AfekDT95, AndersonM95, ChuongER10, FatourouK11}, as well as prove limitations of universal constructions~\cite{BedinLMPP21, EllenFKMT16}.
	Recently, Fatourou, Kallimanis, and Kanellou~\cite{FatourouKK19} designed a universal construction that can give wait-free data structures supporting operations with  $O(N + \bar{c}(op) \cdot W)$ worst-case step complexity, where $W$ is the  worst-case step complexity to perform the operation on the sequential data structure it is based on.
	Operations announcement themselves in an announcement array and are executed in ordered batches.
	
	
	Lock-free data structures can be augmented to support iterators, snapshots, and range queries~\cite{Chatterjee17, FatourouPR19, PetrankT13, ProkopecBBO12}.
	Wei et al. \cite{WeiBBFR021} give a simple technique to take snapshots of concurrent data structures in constant time. 
	This is done by implementing a \textit{versioned CAS object} that allows old values of the object to be read.
	The number of steps needed to read the value of a versioned CAS object at the time of a snapshot is equal to the number of times its value changed since the snapshot was taken. 
	Provided update operations only perform a constant amortized number of successful versioned CAS operations, balanced binary search trees can be augmented to support \textsc{Predecessor} with  $O(\bar{c}(op) + \log L(op)) = O(\dot{c}(op) + \log L(op))$ amortized step complexity.

	\section{Relaxed Binary Trie}\label{section_relaxed_binary_trie}
	
	In this section, we describe our relaxed binary trie, which is used as a component of our lock-free binary trie. 
	We begin by giving the formal specification of the relaxed binary trie in Section~\ref{section_relaxed_binary_trie_specification}. 
	In Section~\ref{section_relaxed_binary_trie_representation}, we describe how our implementation is represented in memory. In Section~\ref{section_relaxed_binary_trie_high_level}, we give a high-level description of our algorithms for each operation. In Section~\ref{section_relaxed_binary_trie_pseudocode}, we give a detailed description of our algorithms for each operation and its pseudocode. Finally, in Section~\ref{section_relaxed_binary_trie_correctness}, we show that our implementation satisfies the specification.
	
	\subsection{Specification}\label{section_relaxed_binary_trie_specification}
	
	A relaxed binary trie is a concurrent data structure maintaining a dynamic set $S$ from the universe $U = \{0,\dots,u-1\}$ that supports the following strongly linearizable operations:
	\begin{itemize}
		\item \textsc{TrieInsert}$(x)$, which adds key $x$ into $S$ if it is not already in $S$,
		\item \textsc{TrieDelete}$(x)$, which removes key $x$ from $S$ if it is in $S$, and
		\item \textsc{TrieSearch}$(x)$, which returns \textsc{True} if  key $x \in S$, and \textsc{False} otherwise.
	\end{itemize}
	It additionally supports the (non-linearizable) \textsc{RelaxedPredecessor}$(y)$ operation. Informally, we allow a \textsc{RelaxedPredecessor}$(y)$ operation to sometimes return $\bot$, indicating concurrent update operations are interfering with it, preventing it from returning an answer. However, when there are no concurrent update operations, \textsc{RelaxedPredecessor}$(y)$ must return the correct predecessor of $y$.
	Its formal specification relies on a few definitions.
	
	Because the relaxed binary trie is strongly linearizable with respect to all of its update operations, 
	it is possible to determine the value of the set $S$ represented by the data structure in every configuration of every execution
	from the sequence of linearization points of the update operations prior to this configuration.
	For any execution of the relaxed binary trie and any key $x \in U$, consider the sequence $\sigma$ of \textsc{TrieInsert}$(x)$ and \textsc{TrieDelete}$(x)$ operations in the order of their linearization points. A \textsc{TrieInsert}$(x)$ operation is \textit{$S$-modifying} if it is the first \textsc{TrieInsert}$(x)$ operation in $\sigma$, or if it is the first \textsc{TrieInsert}$(x)$ operation that follows a \textsc{TrieDelete}$(x)$ operation. In other words, a \textsc{TrieInsert}$(x)$ operation is $S$-modifying if it successfully adds the key $x$ to $S$.
	Likewise, a \textsc{TrieDelete}$(x)$ operation is \textit{$S$-modifying} if it is the first \textsc{TrieDelete}$(x)$ operation that follows a \textsc{TrieInsert}$(x)$ operation.
	A key $x$ is \textit{completely present} throughout a \textsc{RelaxedPredecessor} operation, $pOp$, if there is an $S$-modifying \textsc{TrieInsert}$(x)$ operation, $iOp$, that completes before the invocation of $pOp$ and there is no $S$-modifying \textsc{TrieDelete}$(x)$ operation that is linearized after $iOp$ but before the end of $pOp$.
	
	
	\textbf{Specification of RelaxedPredecessor:}
	Let $pOp$ be a completed \textsc{RelaxedPredecessor}$(y)$ operation in some execution.
	Let $k$ be the largest key less than $y$ that is completely present throughout $pOp$, or $-1$ if no such key exists. Then $pOp$ returns a value in  $\{\bot\} \cup \{k,\dots,y-1\}$ such that:
	\begin{itemize}
		
		\item If $pOp$ returns $\bot$, then there exists a key $x$, where $k < x < y$, such that the last $S$-modifying update operation with key $x$ linearized prior to the end of $pOp$ is concurrent with $pOp$.
		
		
		\item If $pOp$ returns a key $x > k$, then $x \in S$ sometime during $pOp$.
	\end{itemize}
	
	\noindent These properties imply that if, for all $k < x < y$, the $S$-modifying update operation with key $x$ that was last linearized prior to the end of $pOp$ is not concurrent with $pOp$, then $pOp$ returns $k$. Furthermore, $k$ is the predecessor of $y$ throughout $pOp$.
	To see why, first notice that in this scenario, for any key $x$, where $k < x < y$, $x$ is not in $S$ at any point during $pOp$. Otherwise the last $S$-modifying update operation with key $x$ that was linearized prior to the end of $pOp$ must be a \textsc{TrieInsert}$(x)$ operation that completed before the start of $pOp$. Then $x$ is completely present throughout $pOp$, contradicting the definition of $k$.
	Since $x$ is not in $S$ at any point during $pOp$,
	the second property of the specification implies that $pOp$ does not return a key greater than $k$. Furthermore, the first property of the specification states that $pOp$ does not return $\bot$.
	Then $pOp$ must return $k$, which is the largest key in $S$ less than $y$ at any point during $pOp$.
	
	\subsection{Our Relaxed Binary Trie Implementation}\label{section_relaxed_binary_trie_representation}
	Our wait-free implementation of a relaxed binary trie supports \textsc{TrieSearch} with $O(1)$ worst-case step complexity, and \textsc{TrieInsert}, \textsc{TrieDelete}, and \textsc{RelaxedPredecessor} with $O(\log u)$ worst-case step complexity.
	We first describe the major components of the relaxed binary trie and how it is stored in memory. 
	
	Like the sequential binary trie, the relaxed binary trie consists of a collection of arrays, $\PRE_i$ for $0 \leq i \leq b = \lceil \log_2 u \rceil$. Each array $\PRE_i$, for $0 \leq i \leq b$, has length $2^i$ and represents the nodes at depth $i$ of the relaxed binary trie. 
	
	An \textit{update node} is created by a \textsc{TrieInsert} or \textsc{TrieDelete} operation. It is an INS node if it is created by a \textsc{TrieInsert} operation, or a DEL node if it is created by a \textsc{TrieDelete} operation.
	It includes the input key of the operation that created it.
	
	In a configuration $C$, the \textit{latest update operation with key $x$} is the last $S$-modifying update operation with key $x$ linearizezd prior to $C$. The \textit{latest update  node with key $x$} is the update node created by this update operation.
	There is an array $\latest$ indexed by each key in $U$, where $\latest[x]$ contains a pointer to the latest update node with key $x$.
	So the update node pointed to by $\latest[x]$ is an INS node if and only if $x \in S$.
	In the initial configuration, when $S = \emptyset$, $\latest[x]$ points to a dummy DEL node.
	

	
	
	Recall that, in the sequential binary trie, each binary trie node $t$ contains the bit 1 if there is leaf in its subtrie whose key is in $S$, and 0 otherwise. These bits need to be accurately set for the correctness of predecessor operations. 
	In our relaxed binary trie, each binary trie node has an associated value, called its \textit{interpreted bit}.
	The interpreted bit of a leaf with key $x$ is 1 if and only if $x \in S$. 
	
	For each internal binary trie node $t$,
	let $U_t$ be the set of keys of the leaves contained in the subtrie rooted at $t$.
	When there are no active update operations with keys in $U_t$, the interpreted bit of $t$ is the logical OR of the interpreted bits of the leaves of the subtrie rooted at $t$. More generally, our relaxed binary trie maintains the following two properties concerning its interpreted bits. For all binary trie nodes $t$ and configurations $C$:
	\begin{enumerate}
		
		\item[IB0] If $U_t \cap S = \emptyset$ and for all $x \in U_t$, 
		either there has been no $S$-modifying \textsc{TrieDelete}$(x)$ operation or
		the last $S$-modifying \textsc{TrieDelete}$(x)$ operation linearized prior to $C$ is no longer active,
		then the interpreted bit of $t$ is 0 in $C$.
		
		\item[IB1] If there exists $x \in U_t \cap S$ such that the last $S$-modifying \textsc{TrieInsert}$(x)$ operation linearized prior to $C$ is no longer active,
		then the interpreted bit of $t$ is 1 in $C$.
	\end{enumerate}
	\noindent When there are active update operations with a key in $U_t$, the interpreted bit of the binary trie node $t$ may be different from the bit stored in $t$ of the sequential binary trie representing the same set.
	
	The interpreted bit of $t$ is not physically stored in $t$, but is, instead, computed from the update node pointed to by $\latest[x]$, for some key $x \in U_t$.
	Each internal binary trie node $t$ stores this key $x$.
	The interpreted bit of $t$ \textit{depends on} the update node, $\uNode$, pointed to by $\latest[x]$. If $\uNode$ is an INS node, the interpreted bit of $t$ is 1.
	When $\uNode$ is a DEL node, the interpreted bit of $t$ is determined by two thresholds, $\uNode.\threshold$ and $\uNode.\insthreshold$.
	In this case, the interpreted bit of $t$ is
	\begin{itemize}
		\item 1 if $t.\height \geq \uNode.\insthreshold$,
		\item 0 if $t.\height < \uNode.\insthreshold$ and $t.\height \leq \uNode.\threshold$, and
		\item 1 otherwise.
	\end{itemize}
	
	\noindent Only \textsc{TrieInsert} operations decrease $\insthreshold$ (which can change the interpreted bit of a binary trie node to 1) and only \textsc{TrieDelete} operations increase $\threshold$ (which can change the interpreted bit of a binary trie node to 0). We discuss these thresholds in more detail when describing \textsc{TrieInsert} and \textsc{TrieDelete} in the following section.

	
	\subsection{High-Level Algorithm Description}\label{section_relaxed_binary_trie_high_level}
	
	A \textsc{TrieSearch}$(x)$ operation reads the update node pointed to by $\latest[x]$,
	returns \textsc{True} if it is an INS node, and returns \textsc{False} if it is a DEL node.
	
	An \textsc{TrieInsert}$(x)$ or \textsc{TrieDelete}$(x)$ operation, $uOp$, begins by reading the update node pointed to by $\latest[x]$. If it has the same type as $uOp$, then $uOp$ can return because $S$ does not need to be modified. Otherwise $op$ creates a new update node $\uNode$ with key $x$. 
	It then 
	attempts to change $\latest[x]$ to point to $\uNode$ using CAS. If successful, the operation is linearized at this step. 
	If multiple update operations with key $x$ concurrently attempt change $\latest[x]$ to point to their own update node, exactly one will succeed. 
	
	In the case that $uOp$ successfully changes $\latest[x]$ to point to $\uNode$, $uOp$ must then update the interpreted bits of the relaxed binary trie.
	Both \textsc{TrieInsert} and \textsc{TrieDelete} operations update the interpreted bits of the relaxed binary trie in manners similar to the sequential data structure. 
	A \textsc{TrieInsert}$(x)$ operation traverses from the leaf with key $x$ to the root and sets the interpreted bits along this path to 1 if they are not already 1. This is described in more detail in Section~\ref{section_relax_binary_trie_ins}. A \textsc{TrieDelete}$(x)$ operation traverses the binary trie starting from the leaf with key $x$ and proceeds to the root. It changes the interpreted bit of a binary trie node on this path to 0 if both its children have interpreted bit 0, and returns otherwise. This is described in more detail in Section~\ref{section_relax_binary_trie_del}.

	\subsubsection{TrieInsert}\label{section_relax_binary_trie_ins}
	
	Consider a \textsc{TrieDelete}$(x)$ operation, $iOp$, and let $\iNode$ be the INS node it created. So $\iNode$ is the update node pointed to by $\latest[x]$. 
	Let $t$ be a binary trie node $iOp$ encounters as it is updating the interpreted bits of the binary trie.
	If $t$ already has interpreted bit 1, then it does not need to be updated. This can happen when the interpreted bit of $t$ depends on an INS node (for example, when $t$ stores key $x$ and hence depends on $\iNode$).
	So suppose the interpreted bit of $t$ is 0. In this case, $t$ stores key $x' \neq x$ and its interpreted bit depends on a DEL node $\dNode$.
	Only a \textsc{Delete} operation can change the key stored in $t$ and it can only change the key to its own key. 
	We do not allow \textsc{Insert} operations to change this key to avoid concurrent \textsc{Delete} operations from repeatedly interfering with an \textsc{Insert} operation.
	Instead, \textsc{Insert} operations can modify $\dNode.\insthreshold$ to change the interpreted bit of $t$ from 0 to 1.
	This is a min-register whose value is initially $b+1$, which is greater than the height of any binary trie node. 
	All binary trie nodes that depend on $\dNode$
	and whose  height is at least the value of $\dNode.\insthreshold$  have interpreted bit 1.
	Therefore, to change the interpreted bit of $t$ from 0 to 1, 
	$iOp$ can perform \textsc{MinWrite}$(t.\height)$ to $\dNode.\insthreshold$. This also changes the interpreted bit of all ancestors of $t$ that depend on $\dNode$ to 1. A min-register is used so that modifying $\dNode.\insthreshold$ never changes the interpreted bit of any binary trie node from 1 to 0. 
	
	An example execution of an \textsc{TrieInsert} operation updating the interpreted bits of the relaxed binary trie is shown in Figure~\ref{figure:ins_execution}. 
	Blue rectangles represent INS nodes, while red rectangles represent DEL nodes.
	The number in each binary trie node is its interpreted bit.
	The dashed arrow from an internal binary trie node points to the update node it depends on. Note that the dashed arrow is not a physical pointer stored in the binary trie node.
	Under each update node are the values of its $\insthreshold$, abbreviated $l1b$, and $\threshold$, abbreviated $u0b$. 	
	Figure~\ref{figure:ins_execution}(a) shows a possible state of the data structure where $S = \emptyset$. In Figure~\ref{figure:ins_execution}(b), an \textsc{Insert}$(0)$ operation, $iOp$, activates its newly added INS node in $\latest[0]$. This simultaneously changes the interpreted bit of the leaf with key 0 and its parent from 0 to 1 in a single step.
	In Figure~\ref{figure:ins_execution}(c), $iOp$ changes the interpreted bit of the root from 0 to 1. This is done using a \textsc{MinWrite}, which changes the $\insthreshold$ of the DEL node in $\latest[3]$ (i.e. the update node that the root depends on) from 3 to the height of the root. 
	
	\begin{figure}[tb]
		\centering
		\begin{subfigure}[b]{0.3\textwidth}
			\centering
			\includegraphics[width=\textwidth]{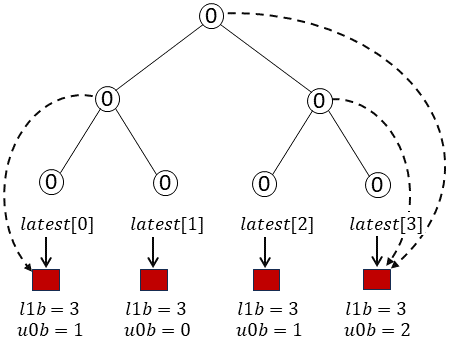}
			\caption{}
			\label{figure:ins_execution_a}
		\end{subfigure}
		\hfill
		\begin{subfigure}[b]{0.3\textwidth}
			\centering
			\includegraphics[width=\textwidth]{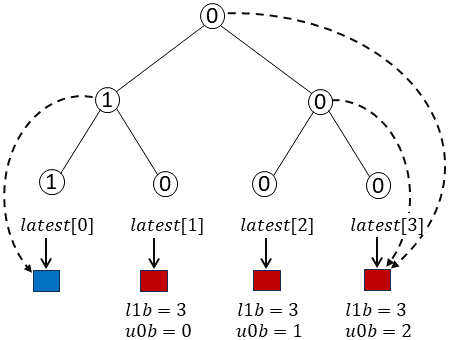}
			\caption{}
			\label{figure:ins_execution_b}
		\end{subfigure}
		\hfill
		\begin{subfigure}[b]{0.3\textwidth}
			\centering
			\includegraphics[width=\textwidth]{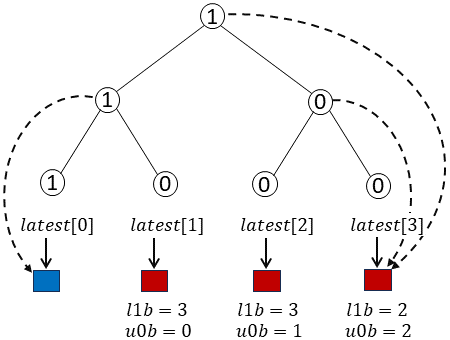}
			\caption{}
			\label{figure:ins_execution_c}
		\end{subfigure}
		\caption{An example of a \textsc{TrieInsert}$(0)$ operation setting the interpreted bits of the binary trie nodes from its leaf with key 0 to the root to 1.
		}
		\label{figure:ins_execution}
	\end{figure}
	
	\subsubsection{TrieDelete}\label{section_relax_binary_trie_del}
	
	Consider a \textsc{TrieDelete}$(x)$ operation, $dOp$, and let $\dNode$ be the DEL node it created, where $\dNode$ is the update node pointed to by $\latest[x]$. 
	This means that the leaf with key $x$ has interpreted bit 0. 
	
	Let $t$ be an internal binary trie node on the path from the leaf with key $x$ to the root. Suppose $dOp$ successfully changed the interpreted bit of one of $t$'s children to 0. 
	If the interpreted bit of the other child of $t$ is 0,
	then $dOp$ attempts to change the interpreted bit of $t$ to 0. First, $dOp$ tries to change the update node that $t$ depends on to $\dNode$ by changing the key stored in $t$ to $x$.
	After a constant number of reads and at most 2 CAS operations, our algorithm guarantees that if $dOp$ does not successfully change $t$ to depend on $\dNode$, then for some $y \in U_t$, a latest \textsc{Delete}$(y)$ operation, $dOp'$, changed $t$ to depend on the DEL node $\dNode'$ created by $dOp'$. 
	In this case, $dOp'$ will change the interpreted bit of $t$ to 0 on $dOp$'s behalf, so $dOp$ can stop updating the interpreted bits of the binary trie.
	Suppose $dOp$ does successfully change $t$ to depend on $\dNode$.
	To change the interpreted bit of $\dNode$ to 0, $dOp$ writes $t.\height$ into $\dNode.\threshold$, which is a register with initial value 0.
	This indicates that all binary trie nodes at height $t$ and below that depend on $\dNode$ have interpreted bit 0. Only $dOp$, the creator of $\dNode$, writes to $\dNode.\threshold$. 
	Since $dOp$ changes the interpreted bits of binary trie nodes in order from the leaf with key $x$ up to the root, $\dNode.\threshold$ is only ever incremented by 1 starting from 0. 
	
	An example execution of \textsc{TrieDelete} operations updating the interpreted bits of the relaxed binary trie is shown in Figure~\ref{figure:del_execution}. 
	Figure~\ref{figure:del_execution}(a) shows a possible state of the data structure where $S = \{0,1\}$.
	In Figure~\ref{figure:del_execution}(b), a \textsc{TrieDelete}$(0)$, $dOp$, and a \textsc{TrieDelete}$(1)$, $dOp'$, activate their newly added DEL nodes. This removes the keys 0 and 1 from $S$. The leaves with keys 0 and 1 have interpreted bit 0. 
	In Figure~\ref{figure:del_execution}(c), $dOp'$ sees that its sibling leaf has interpreted bit 0. Then $dOp'$ successfully changes the left child of the root to depend on its DEL node, while $dOp$ is unsuccessful and returns.
	In Figure~\ref{figure:del_execution}(d), $dOp'$ increments the $\threshold$ of its DEL node, so it is now equal to the height of the left child of the root. This changes the interpreted bit of the left child of the root to 0. 
	In Figure~\ref{figure:del_execution}(e), $dOp'$ sees that the right child of the root has interpreted bit 0, so the interpreted bit of the root needs to be updated. So $dOp'$ changes the root to depend on its DEL node. 
	In Figure~\ref{figure:del_execution}(f),  $dOp'$ increments the $\threshold$ of its DEL node, so it is now equal to the height of the root. This changes the interpreted bit of the left child of the root to 0. 
	
	\begin{figure}[tb]
		\centering
		\begin{subfigure}[b]{0.3\textwidth}
			\centering
			\includegraphics[width=\textwidth]{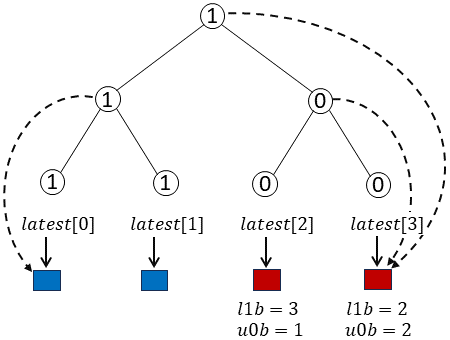}
			\caption{}
			\label{figure:del_execution_a}
		\end{subfigure}
		\hfill
		\begin{subfigure}[b]{0.3\textwidth}
			\centering
			\includegraphics[width=\textwidth]{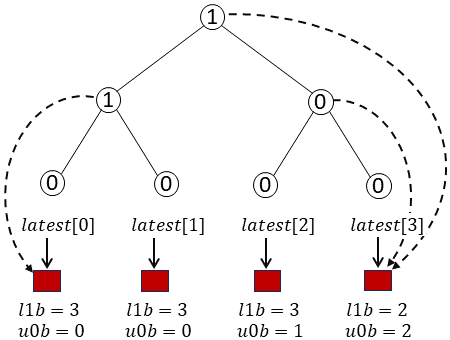}
			\caption{}
			\label{figure:del_execution_b}
		\end{subfigure}
		\hfill
		\begin{subfigure}[b]{0.3\textwidth}
			\centering
			\includegraphics[width=\textwidth]{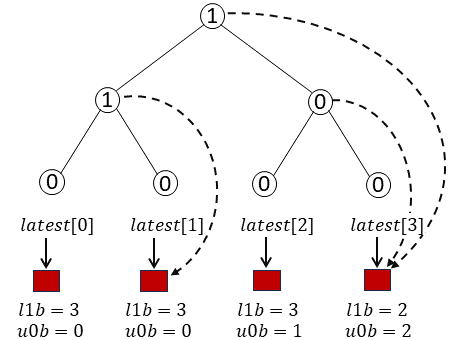}
			\caption{}
			\label{figure:del_execution_c}
		\end{subfigure}
		\hfill
		\begin{subfigure}[b]{0.3\textwidth}
			\centering
			\includegraphics[width=\textwidth]{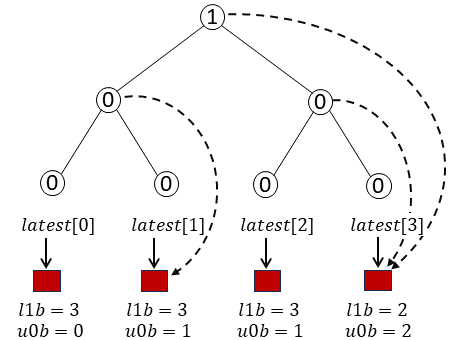}
			\caption{}
			\label{figure:del_execution_d}
		\end{subfigure}
		\hfill
		\begin{subfigure}[b]{0.3\textwidth}
			\centering
			\includegraphics[width=\textwidth]{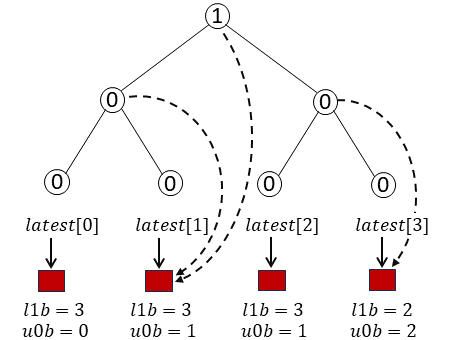}
			\caption{}
			\label{figure:del_execution_e}
		\end{subfigure}
		\hfill
		\begin{subfigure}[b]{0.3\textwidth}
			\centering
			\includegraphics[width=\textwidth]{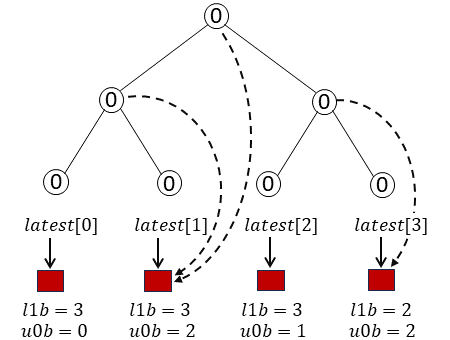}
			\caption{}
			\label{figure:del_execution_f}
		\end{subfigure}	
		\caption{An example of a \textsc{TrieDelete}$(0)$ and \textsc{TrieDelete}$(1)$ updating the interpreted bits of the binary trie. 
		}
		\label{figure:del_execution}
	\end{figure}
	
	\subsubsection{RelaxedPredecessor}
	A \textsc{RelaxedPredecessor}$(y)$ operation, $pOp$, traverses the relaxed binary trie in a manner similar to the sequential algorithm, except that it uses the interpreted bits of binary trie nodes to direct the traversal.
	If $pOp$ completes its traversal of the relaxed binary trie following the sequential algorithm, then it either returns the key of the leaf it reaches, or $-1$ if the traversal ended at the root.
	
	It is possible that $pOp$ is unable to complete a traversal of the relaxed binary trie due to inaccurate interpreted bits. During the downward part of its traversal, it may encounter an internal binary trie node with interpreted bit 1, but both of its children have interpreted bit 0. When this occurs, $pOp$ terminates and returns $\bot$.
	There is a concurrent update operation that needs to update this part of the relaxed binary trie. 
	
	\subsection{Detailed Algorithm Description and Pseudocode}\label{section_relaxed_binary_trie_pseudocode}
	
	In this section, we give a detailed description of the algorithm for each relaxed binary trie operation, and present its pseudocode. 
	
	Figure~\ref{data_records_relaxed_trie} summaries the fields of each update node and binary trie node. 
	Recall that in our high-level description of the binary trie nodes, we noted that each binary trie node $t$ stores a key $x$, which is used to determine the update node that its interpreted bit depends on. 
	In our implementation, $t$ does not store this key directly, but instead stores a pointer, called $\dNodePtr$, to a DEL node with key $x \in U_t$.  
	So the interpreted bit of $t$ depends on the update node pointed to by $\latest[t.\dNodePtr.\key]$.
	The reason we store a pointer to a DEL node with key $x$, instead of the key $x$ itself, is to prevent ABA problems involving out-dated \textsc{TrieDelete}$(x)$ operations incorrectly changing this key.
	
	\begin{figure}[htbp!]
		\begin{algorithmic}[1]
			\alglinenoPop{alg1}
			\State \textbf{Update Node}
			\Indent
			\State $\key$ (Immutable) \Comment{A key in $U$}
			\State $\type$ (Immutable) \Comment{Either INS or DEL}
			\State $\triangleright$ Additional fields when $\type = \text{INS}$
			\State $\target$ (Mutable, initially $\bot$) \Comment{pointer to DEL node}
			
			\State $\triangleright$ Additional fields when $\type = \text{DEL}$
			\State $\stopflag$ (From 0 to 1) \Comment{Boolean value}
			\State $\latestNext$ (Immutable) \Comment{Pointer to INS node}
			\State $\threshold$ (Mutable, initially $0$)\label{ln:init:threshold_relaxed} \Comment{An integer in $\{0,\dots,b\}$}
			\State $\insthreshold$ (Mutable min-register, initially $b+1$) \Comment{An integer in $\{0,\dots,b+1\}$}
			\EndIndent
			\State \textbf{Binary Trie Node}
			\Indent
			\State $\dNodePtr$ (Mutuable, initially points to a dummy DEL node) \Comment{A pointer to a DEL node}
			\EndIndent
			
			
			\alglinenoPush{alg1}
		\end{algorithmic}
		\caption{Summary of the fields and initial values of each node used by the data structure.}
		\label{data_records_relaxed_trie}
	\end{figure}
	
	\subsubsection{TrieSearch and Basic Helper Functions}\label{section_alg_search}
	
	The \textsc{TrieSearch}$(x)$ algorithm finds the update node pointed to by $\latest[x]$. 
	It returns \textsc{True} if this update node has type INS, and \textsc{False} if this update node has type DEL.
	We use the helper function \textsc{FindLatest}$(x)$ to return the update node pointed to by $\latest[x]$. 
	The helper function \textsc{FirstActivated}$(\uNode)$ takes a pointer to an update node, $\uNode$, and checks if $\uNode$ is pointed to by $\latest[\uNode.\key]$.
	For the relaxed binary trie, we consider all update nodes to be \textit{active}.
	The implementation of these helper functions are simple
	the case of the relaxed binary trie, but will be replaced with a different implementation when we consider the lock-free binary trie.
	The helper function \textsc{InterpretedBit}$(t)$ receives a binary trie node $t$, and returns its interpreted bit. 
	We will show it has the property that if the interpreted bit of $t$ is $i \in \{0,1\}$ throughout the instance of \textsc{InterpretedBit}, then it will return $i$.

	

	
	\begin{figure}[h!]
		\begin{algorithmic}[1]
			\alglinenoPop{alg1}
			\State \textbf{Algorithm} \textsc{FindLatest}$(x)$
			\Indent
			\State \Return $\ell \leftarrow \latest[x]$
			\EndIndent
			\alglinenoPush{alg1}
		\end{algorithmic}
	\end{figure}
	
	\begin{figure}[h!]
		\begin{algorithmic}[1]
			\alglinenoPop{alg1}
			\State \textbf{Algorithm} \textsc{TrieSearch}$(x)$
			\Indent 
			\State $\uNode \leftarrow \textsc{FindLatest}(x)$\label{ln:search:findLatest_relaxed}
			\If {$\uNode.\type = \textsc{INS}$} \Return \textsc{True}
			\Else ~\Return \textsc{False}
			\EndIf 
			\EndIndent
			\alglinenoPush{alg1}
		\end{algorithmic}
	\end{figure}
	
	\begin{figure}[h!]
		\begin{algorithmic}[1]
			\alglinenoPop{alg1}
			\State \textbf{Algorithm} \textsc{FirstActivated}$(\uNode)$
			\Indent 
			\State $\uNode' \leftarrow \latest[\uNode.\key]$ \label{ln:firstactivated:head_relaxed}
			\State \Return $\uNode = \uNode'$
			\EndIndent
			\alglinenoPush{alg1}
		\end{algorithmic}
	\end{figure}
	
	\begin{figure}[h!]
		\begin{algorithmic}[1]
			\alglinenoPop{alg1}
			\State \textbf{Algorithm} \textsc{InterpretedBit}$(t)$
			\Indent
			\State $\uNode \leftarrow \textsc{FindLatest}(t.\dNodePtr.\key)$\label{ln:ib:findlatest}
			
			\If {$\uNode.\type = \text{INS}$ }\label{ln:ib:ins}
			\Return $1$
			\EndIf
			
			
			
			
			\If {$t.\height \leq \uNode.\threshold$} \label{ln:ib:u0b}
			\If {$t.\height < \uNode.\insthreshold$ and \textsc{FirstActivated}$(\uNode)$} \Return $0$\label{ln:ib:l1b}
			\EndIf
			\EndIf
			\State \Return $1$
			
			\EndIndent
			
			\alglinenoPush{alg1}
		\end{algorithmic}
	\end{figure}
	
	\subsubsection{TrieInsert}
	
	A \textsc{TrieInsert}$(x)$ operation $iOp$ begins by reading the update node, $\dNode$, pointed to by $\latest[x]$. If $\dNode$ is not a DEL node, then $x$ is already in $S$ so  \textsc{TrieInsert}$(x)$ returns (on line~\ref{ln2:insert:return_early}). Otherwise $iOp$ creates a new INS node, denoted $\iNode$ with key $x$.
	The purpose of line~\ref{ln2:insert:helpStop} will be discussed later when describing \textsc{TrieDelete}.
	It then 
	attempts to change  $\latest[x]$ to point to $\iNode$ using CAS (on line~\ref{ln2:insert:cas_latest_head}). A \textsc{TrieInsert}$(x)$ operation that successfully performs this CAS adds $x$ to $S$, and is linearized at this successful CAS.
	If multiple \textsc{TrieInsert}$(x)$ operations concurrently attempt to change $\latest[x]$, exactly one will succeed. Any \textsc{TrieInsert}$(x)$ operations that perform an unsuccessful CAS can return, because some other \textsc{TrieInsert}$(x)$ operation successfully added $x$ to $S$.
	Note that by updating $\latest[x]$ to point to $\iNode$, the interpreted bit of the leaf with key $x$ is 1.
	
	\begin{figure}[h!]
		\begin{algorithmic}[1]
			\alglinenoPop{alg1}
			\State \textbf{Algorithm} \textsc{TrieInsert}$(x)$
			\Indent
			\State $\dNode \leftarrow \textsc{FindLatest}(x)$\label{ln2:insert:find_latest}
			\If {$\dNode.\type \neq \text{DEL}$} \Return \Comment{$x$ is already in $S$}\label{ln2:insert:return_early}
			\EndIf
			
			\State Let $\iNode$ be a pointer to a new update node: 
			\State \quad $\iNode.\key \leftarrow x$
			\State \quad $\iNode.\type \leftarrow \text{INS}$
			\State $\dNode.\latestNext.\target.\stopflag \leftarrow \textsc{True}$\label{ln2:insert:helpStop}\Comment{Ignore if any field reads $\bot$}
			\If {CAS$(\latest[x], \dNode, \iNode) = \textsc{False}$} \Return \label{ln2:insert:cas_latest_head}\Comment{$x$ added to $S$ if successful}
			\EndIf
			\State \textsc{InsertBinaryTrie}$(\iNode)$\label{ln2:insert:binaryTrie}
			\State \Return
			\EndIndent
			\alglinenoPush{alg1}
		\end{algorithmic}
	\end{figure}
	
	The purpose of \textsc{InsertBinaryTrie}$(\iNode)$ is to set the interpreted bit of each binary trie node $t$ on the path from the parent of the leaf with key $x$ to the root to 1. On line~\ref{ln:insert:findLatest},  $iOp$ determines the update node, $\uNode$, that $t$ depends on.
	If $\uNode$ is an INS node, then the interpreted bit of $t$ is 1 and $iOp$ proceeds to the parent of $t$.
	So $\uNode$ is a DEL node. The check on line~\ref{ln:insert:IfThreshold} indicates that the creator of $\uNode$ is currently trying to update the interpreted bit of $t$ to 0 (when $t.\dNodePtr = \uNode$), or the creator of $\uNode$ has previously set the interpreted bit of $t$ to 0 (when $t.\height \leq \uNode.\threshold$). 
	Then $iOp$ sets  $\iNode.\target$ to point to $\uNode$ (on line~\ref{ln:insert:setTarget}), indicating it is attempting to perform a minwrite to $\uNode.\insthreshold$.
	Note that $\iNode.\target$ always points to a DEL node, but not necessarily to one with the same key as $\iNode$.
	Provided $iOp$ is still the latest update operation with key $x$, $iOp$ updates
	$\uNode.\insthreshold$ to the value $t.\height$ using a \textsc{MinWrite} on line~\ref{ln:insert:minwrite}. 
	Updating $\uNode.\insthreshold$  serves the purpose of changing the interpreted bit of $t$ to 1, as well as informing the \textsc{TrieDelete} operation that created $\uNode$ to stop updating the interpreted bits of the binary trie. This \textsc{TrieDelete} operation no longer needs to continue checking if the interpreted bits of ancestors of $t$ need to be changed to 0, because $iOp$ is going to change them to 1.
	

	
	\begin{figure}[h!]
		\begin{algorithmic}[1]
			\alglinenoPop{alg1}
			\State \textbf{Algorithm} \textsc{InsertBinaryTrie}$(\iNode)$
			\Indent
			\For { each binary trie node $t$ on path from parent of the leaf $\iNode.\key$ to root } \label{ln:insert:for_loop}
			\State $\uNode \leftarrow \textsc{FindLatest}(t.\dNodePtr.\key)$ \label{ln:insert:findLatest}
			\If {$\uNode.type = \text{DEL}$} \label{ln:insert:delCheck}
			\If {$t.\dNodePtr  = \uNode$ or $t.\height \leq \uNode.\threshold$ }\label{ln:insert:IfThreshold}
			\State $\iNode.\target \leftarrow \uNode$ \label{ln:insert:setTarget}
			\If {\textsc{FirstActivated}$(\iNode) = \textsc{False}$} \Return \label{ln:insert:firstactivated2}
			\EndIf 
			\If {$t.\height < \uNode.\insthreshold$}\label{ln:insert:IfMinwrite}
			\State $\textsc{MinWrite}(\uNode.\insthreshold, t.\height)$ \label{ln:insert:minwrite}
			\EndIf
			\EndIf
			\EndIf			
			\EndFor
			\EndIndent
			\alglinenoPush{alg1}
		\end{algorithmic}
	\end{figure}
	
	\subsubsection{TrieDelete}
	
	A \textsc{TrieDelete}$(x)$ operation, $dOp$, checks if the update node, $\iNode$, pointed to by $\latest[x]$ is an INS node. If not, it returns because $x$ is not in $S$ (on line~\ref{ln2:delete:return_early}).
	Otherwise, it creates a new DEL node, $\dNode$. 
	It then updates $\latest[x]$ to point to $\dNode$ using CAS in the same way as \textsc{TrieInsert}$(x)$. A \textsc{TrieDelete}$(x)$ that successfully performs this CAS removes $x$ from $S$, and is linearized at this successful CAS.
	On line~\ref{ln2:delete:setStop}, $dOp$ sets $\iNode.\target.\stopflag$ to $\textsc{True}$. This is important when the creator of $\iNode$ has set $\iNode.\target$ on line~\ref{ln:insert:setTarget}, but not yet performed a minwrite to  $\iNode.\target.\insthreshold$ on line~\ref{ln:insert:minwrite}. It serves to inform the \textsc{TrieDelete} operation that created the DEL node pointed to by $\iNode.\target$ to stop trying to update the interpreted bits of binary trie nodes to 0. 
	
	\begin{figure}[h!]
		\begin{algorithmic}[1]
			\alglinenoPop{alg1}
			\State \textbf{Algorithm} \textsc{TrieDelete}$(x)$
			\Indent
			\State $\iNode \leftarrow \textsc{FindLatest}(x)$\label{ln2:delete:findLatest}
			\If {$\iNode.\type \neq \text{INS}$} \Return \Comment{$x$ is not in $S$}\label{ln2:delete:return_early}
			\EndIf
			
			\State Let $\dNode$ be a pointer to a new update node:
			\State \quad $\dNode.\key \leftarrow x$
			\State \quad $\dNode.\type \leftarrow \text{DEL}$
			\State \quad $\dNode.\latestNext \leftarrow \iNode$
			
			\If {CAS$(\latest[x], \iNode, \dNode) = \textsc{False}$} \Return \label{ln2:delete:cas_latest_head}\Comment{$x$ removed from $S$ if successful}
			\EndIf
			
			\State $\iNode.\target.\stopflag \leftarrow \textsc{True}$\label{ln2:delete:setStop}
			
			\State \textsc{DeleteBinaryTrie}$(\dNode)$\label{ln2:delete:binaryTrie}
			\State \Return
			\EndIndent
			\alglinenoPush{alg1}
		\end{algorithmic}
	\end{figure}
	
	Next, $dOp$ calls \textsc{DeleteBinaryTrie}$(\dNode)$ to update the interpreted bits of the relaxed binary trie nodes from the parent of the leaf with key $x$ to the root. 
	Let $t$ be an internal binary trie node on the path from the leaf with key $x$ to the root. Suppose $dOp$ successfully changed the interpreted bit of one of $t$'s children to 0. 
	On line~\ref{ln:delete:checkIB1}, $dOp$ checks that both of $t$'s children have interpreted bit 0.
	If so, $dOp$ attempts to change the interpreted bit of $t$ to 0.
	Recall that $t$ depends on the update node pointed to by $\latest[t.\dNodePtr.\key]$. To change $t$ to depend on $\dNode$, $dOp$ must perform a successful CAS to change $t.\dNodePtr$ to point to $\dNode$. 
	Two attempts are made by $dOp$ to perform this CAS (on line~\ref{ln:delete:trieCAS} and \ref{ln:delete:trieCAS2}).
	Before each attempt, $dOp$ checks that $\dNode$ is pointed to by $\latest[x]$ (on line~\ref{ln:delete:firstactivated1} or \ref{ln:delete:firstactivated2}) and that $\dNode.\stopflag = \textsc{True}$ or $\dNode.\insthreshold \neq b+1$ (on line~\ref{ln:delete:readstop1} or \ref{ln:delete:readstop2}).
	Recall that when $\dNode.\stopflag = \textsc{True}$ or $\dNode.\insthreshold \neq b+1$,
	$dOp$ should stop updating the interpreted bits of the binary trie to 0 because there is a \textsc{TrieInsert} operation trying to set the same interpreted bits to 1.
	Two CAS attempts are performed to prevent out-dated \textsc{TrieDelete} operations that were poised to perform CAS from conflicting with latest \textsc{TrieDelete} operations.
	If $dOp$ is unsuccessful in both its CAS attempts, it can stop updating the binary trie because some concurrent \textsc{TrieDelete}$(x')$ operation, with key $x' \in U_t$, successfully changed $t.\dNodePtr$ to point to its own DEL node.
	Otherwise $dOp$ is successful in changing the interpreted bit of $t$ to depend on $\dNode$. 
	Immediately after $dOp$'s successful CAS, the interpreted bit of $t$ is still 1 (because $\dNode.\threshold$ has not yet been incremented to $t.\height$). Once again, $dOp$ verifies both children of $t$ have interpreted bit 0 (on line~\ref{ln:delete:checkIB2}), otherwise it returns.
	To change the interpreted bit of $\dNode$ to 0, $dOp$ writes $t.\height$ into $\dNode.\threshold$ (on line~\ref{ln:delete:threshold}), which increments its value.
	This indicates that all binary trie nodes at height $t$ and below that depend on $\dNode$ have interpreted bit 0. 
	
	
	\begin{figure}[h!]
		\begin{algorithmic}[1]
			\alglinenoPop{alg1}
			\State \textbf{Algorithm} \textsc{DeleteBinaryTrie}$(\dNode)$
			\Indent
			
			\State $t \leftarrow $ leaf of binary trie with key $\dNode.key$
			\While {$t$ is not the root of the binary trie}
			
			\If { $\textsc{InterpretedBit}(t.\mathit{sibling}) = 1$ or $\textsc{InterpretedBit}(t) = 1$} \Return \label{ln:delete:checkIB1}
			\EndIf
			
			\State $t \leftarrow t.\mathit{parent}$
			\State $\mathit{d} \leftarrow t.\dNodePtr$\label{ln:delete:readPtr}
			
			\If  {\textsc{FirstActivated}$(\dNode)$ = \textsc{False}} \Return \label{ln:delete:firstactivated1}
			\EndIf
			
			\If  {$\dNode.\stopflag = \textsc{True}$ or $\dNode.\insthreshold \neq b+1$} \Return \label{ln:delete:readstop1}
			\EndIf
			
			\If  { CAS$(t.\dNodePtr, d, \dNode) = \textsc{False}$} \label{ln:delete:trieCAS}  \Comment{Try second attempt if unsuccessful}
			
			\State  $\mathit{d} \leftarrow t.\dNodePtr$ \label{ln:delete:readPtr2}
			
			\If  {\textsc{FirstActivated}$(\dNode)$ = \textsc{False}} \Return \label{ln:delete:firstactivated2}
			\EndIf
			\If  {$\dNode.\stopflag = \textsc{True}$ or $\dNode.\insthreshold \neq b+1$} \Return \label{ln:delete:readstop2}
			\EndIf
			
			\If {CAS$(t.\dNodePtr, d, \dNode) = \textsc{False}$} \Return \label{ln:delete:trieCAS2}
			\EndIf
			\EndIf
			\If {$\textsc{InterpretedBit}(t.\mathit{left}) = 1$ or  $\textsc{InterpretedBit}(t.\mathit{right}) = 1$} \Return \label{ln:delete:checkIB2}
			\EndIf
			\State $\dNode.\threshold \leftarrow t.\height$ \label{ln:delete:threshold}
			\EndWhile
			\EndIndent
			
			\alglinenoPush{alg1}
		\end{algorithmic}
	\end{figure}
	
	\subsubsection{RelaxedPredecessor}
	
	A \textsc{RelaxedPredecessor}$(y)$ operation, $pOp$, begins by traversing up the relaxed binary trie starting from the leaf with key $y$ towards the root (during the while-loop on line\ref{ln:traverseTrie:whileUp}). If the left child of each node on this path either has interpreted bit 0 or is also on this path, then $pOp$ returns $-1$ (on line~\ref{ln:traverseTrie:return_minus}).
	Otherwise, consider the first node on this path whose left child $t$ (set on line~\ref{ln:traverseTrie:downstart}) has interpreted bit 1 and is not on this path. Starting from $t$, $pOp$ traverses the right-most path of binary trie nodes with interpreted bit 1 (during the while-loop on line~\ref{ln:traverseTrie:whiledown}). If a binary trie node $t$ is encountered where both its children have interpreted bit 0, then $\bot$ is returned (on line~\ref{ln:traverseTrie:return_minus}). Otherwise, the $pOp$ reaches a leaf, and returns its key (on line~\ref{ln:traverseTrie:return_leaf}).
	
	\begin{figure}[h!]
		\begin{algorithmic}[1]
			\alglinenoPop{alg1}		
			\State \textbf{Algorithm} \textsc{RelaxedPredecessor}$(y)$
			\Indent
			\State $t \leftarrow$ the binary trie node represented by $\PRE_b[y]$
			\While {$t$ is the left child of $t.\mathit{parent}$ or $\textsc{InterpretedBit}(t.\mathit{sibling}) = 0$} \label{ln:traverseTrie:whileUp}
			\State $t \leftarrow t.parent$
			\If {$t$ is the root}
			\State \Return $-1$\label{ln:traverseTrie:return_minus}
			\EndIf 
			\EndWhile
			
			\State $\triangleright$ Traverse right-most path of nodes with interpreted bit 1 from $t.\mathit{parent}.\mathit{left}$
			\State $t \leftarrow t.\mathit{parent}.\mathit{left}$\label{ln:traverseTrie:downstart}
			\While {$t.\mathit{height} > 0$ }\label{ln:traverseTrie:whiledown}
			\If {$\textsc{InterpretedBit}(t.\mathit{right}) = 1$}\label{ln:traverseTrie:go_right}
			\State $t \leftarrow t.\mathit{right}$
			\ElsIf {$\textsc{InterpretedBit}(t.\mathit{left}) = 1$}\label{ln:traverseTrie:go_left}
			\State $t \leftarrow t.\mathit{left}$
			\Else 
			\State $\triangleright$ both children of $t$ have interpreted bit 0
			\State \Return $\bot$\label{ln:traverseTrie:return_t}
			\EndIf
			
			\EndWhile
			\State $\triangleright$ $t$ is a leaf node with key $t.key$
			\State \Return $t.\mathit{key}$\label{ln:traverseTrie:return_leaf}
			\EndIndent
			
			\alglinenoPush{alg1}
		\end{algorithmic} 
	\end{figure}

	\subsection{Proof of Correctness}\label{section_relaxed_binary_trie_correctness}
	
	In this section, we prove that our implementation satisfies the specification of the relaxed binary trie. 
	In Section~\ref{section_ins_del_search_lin_relaxed}, we formally define the linearization points for \textsc{TrieSearch}, \textsc{TrieInsert}, and \textsc{TrieDelete}, and prove that the implementation is strongly linearizable with respect to these operation types.
	In Section~\ref{section_interpreted_bits} we prove properties satisfied by the interpreted bits of the binary trie. 
	In Section~\ref{section_predecessor_linearization}, we prove that \textsc{RelaxedPredecessor} operations satisfies its specification.

	\subsubsection{Strong Linearizability of TrieInsert, TrieDelete, and TrieSearch}\label{section_ins_del_search_lin_relaxed}
	
	We let the $\textit{latest}$ update operation with key $x$ in a configuration $C$ be the last $S$-modifying update operation with key $x$ that was linearized prior to $C$. Let the $\textit{latest}$ update node with key $x$ be the update node created by this operation. For the relaxed binary trie, it is the update node pointed to by $\latest[x]$.

	A \textsc{TrieSearch}$(x)$ operation is linearized immediately after it reads $\latest[x]$. We argue that this operation returns \textsc{True} if and only if $x \in S$ in this configuration. 
	
	\begin{lemma}\label{lemma:search:true}
		Let $op$ be a \textsc{TrieSearch}$(x)$ operation. Then $op$ returns \textsc{True} if and only if $x \in S$ in the configuration $C$ immediately after $op$ reads $\latest[x]$.
	\end{lemma}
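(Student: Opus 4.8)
The plan is to reduce the lemma to a single invariant on the array $\latest$: \emph{in every configuration $C$ of every execution and for every key $x\in U$, the pointer $\latest[x]$ refers to an update node with key $x$, and this update node is an INS node if and only if $x\in S$ in $C$}. Granting the invariant, the lemma is immediate: a \textsc{TrieSearch}$(x)$ operation $op$ only reads $\latest[x]$ (on line~\ref{ln:search:findLatest_relaxed}), the $\type$ field of an update node is fixed when the node is created, and $op$ returns \textsc{True} exactly when the node it read has type INS; so applying the invariant to the configuration $C$ immediately after $op$'s read gives that $op$ returns \textsc{True} iff $x\in S$ in $C$.

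I would prove the invariant by induction on the length of the execution. The base case holds because initially $\latest[x]$ points to a dummy DEL node and $S=\emptyset$. For the inductive step, the key observation is that $\latest[x]$ is modified only by a successful CAS on line~\ref{ln2:insert:cas_latest_head} (inside a \textsc{TrieInsert}$(x)$ operation) or on line~\ref{ln2:delete:cas_latest_head} (inside a \textsc{TrieDelete}$(x)$ operation), and that, since $S$ is determined by the linearization points of update operations, membership of $x$ in $S$ changes only at the linearization point of an $S$-modifying update operation with key $x$ --- which is exactly such a successful CAS. Hence any step that leaves $\latest[x]$ unchanged also leaves the truth of ``$x\in S$'' unchanged, and the invariant is trivially preserved.

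For the two substantive cases I would argue as follows. If the step is a successful CAS$(\latest[x],\dNode,\iNode)$ on line~\ref{ln2:insert:cas_latest_head} by a \textsc{TrieInsert}$(x)$ operation $iOp$, then $iOp$ must have read $\dNode$ from $\latest[x]$ and passed the test on line~\ref{ln2:insert:return_early}, so $\dNode$ is a DEL node; by the induction hypothesis $x\notin S$ just before the step, so $iOp$ is $S$-modifying, is linearized here, adds $x$ to $S$, and afterwards $\latest[x]$ refers to the INS node $\iNode$ --- consistent. The case of a successful CAS on line~\ref{ln2:delete:cas_latest_head} by a \textsc{TrieDelete}$(x)$ operation is symmetric: the node it read is an INS node, so $x\in S$ before the step, the operation is $S$-modifying, removes $x$ from $S$, and $\latest[x]$ afterwards refers to a DEL node. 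In every case the invariant survives.

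The argument is short, and the only point requiring real care is bookkeeping: one must verify that the linearization assignment for \textsc{TrieInsert} and \textsc{TrieDelete} places each $S$-modifying update operation precisely at its successful CAS on line~\ref{ln2:insert:cas_latest_head} or~\ref{ln2:delete:cas_latest_head}, and that no other step can toggle membership of $x$ in $S$, so that ``$\latest[x]$ unchanged'' genuinely implies ``$x\in S$ unchanged''. Once this correspondence is pinned down, both directions of the biconditional fall out of the invariant simultaneously, independently of how operations sharing a linearization point are ordered, since at most one CAS on $\latest[x]$ can succeed in a single step.
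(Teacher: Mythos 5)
Your proof is correct, and it reaches the same key fact as the paper's but by a noticeably more explicit route. The paper's own proof is only three sentences: it observes that $op$ returns \textsc{True} iff the node it read has type INS, and then asserts ``it follows by definition that $x \in S$ in $C$'' --- implicitly invoking the statement in Section~\ref{section_relaxed_binary_trie_representation} that the update node pointed to by $\latest[x]$ is an INS node iff $x \in S$. That correspondence is what you isolate as an invariant and prove by induction on the execution, verifying the base case and the two CAS steps that can change $\latest[x]$, and explicitly confirming that those CASes are exactly where the $S$-modifying operations are linearized so that ``$\latest[x]$ unchanged'' implies ``$x \in S$ unchanged''. So the substance is the same, but you surface the inductive argument that the paper's ``by definition'' elides: the paper treats the invariant as a consequence of the way $S$ is determined by linearization points, whereas you prove it from scratch. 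Your version is more self-contained and arguably cleaner for a reader who wants to see why the correspondence holds, at the cost of length; it would also be the natural route if one wanted this lemma without first fixing the linearization map, though here the paper has already committed to linearizing $S$-modifying updates at their successful CAS, which is precisely what your inductive step uses.
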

	
	\begin{proof}
		Let $\uNode$ be the update node pointed to by $\latest[x]$ read by $op$, and let $C$ be the configuration immediately after this read by $op$.
		If $op$ returned \textsc{True}, it read that $\uNode.\type = \textsc{INS}$. The type of an update node is immutable, so  $\latest[x]$ points to an \textsc{INS} node in $C$. So it follows by definition that $x \in S$ in $C$.
		If $op$ returned \textsc{False}, it read that $\uNode.\type = \textsc{DEL}$. It follows by definition that $x \notin S$ in $C$.
	\end{proof}
	
	
	

	The  \textsc{TrieInsert}$(x)$ and \textsc{TrieDelete}$(x)$ operations that are $S$-modifying successfully change $\latest[x]$ to point to their own update node using CAS, and are linearized at this CAS step.
	A \textsc{TrieInsert}$(x)$ operation that is not $S$-modifying does not update $\latest[x]$ to point to its own update node. This happens when it reads that $\latest[x]$ points to a INS node, or when it performs an unsuccessful CAS.
	In the following two lemmas, we prove that for each of these two cases, there is a configuration during the \textsc{TrieInsert}$(x)$ in which $x \in S$, and hence does not need to add $x$ to $S$. 
	
	\begin{lemma}\label{lemma:insert:return_early}
	If $uOp$ is a \textsc{TrieInsert}$(x)$ operation that returns on line~\ref{ln2:insert:return_early},  
	then in the configuration $C$ immediately after $uOp$ reads $\latest[x]$,  $x \in S$.
	\end{lemma}
	
	\begin{proof}
	Suppose $uOp$ is a \textsc{TrieInsert}$(x)$ operation.
	Let $\uNode$ be the pointed to by $\latest[x]$ that is read by $uOp$.
	Since $uOp$ returned on line~\ref{ln2:insert:return_early}, it saw that $\uNode.\type = \text{INS}$. By definition, $x \in S$ in the configuration $C$ immediately after this read.
	\end{proof}
	
	\begin{lemma}\label{lemma:insert:return_early2}
	If $uOp$ is a \textsc{TrieInsert}$(x)$ operation that returns on line~\ref{ln2:insert:cas_latest_head}, then there is a configuration during $uOp$ in which $x \in S$.
	\end{lemma}
	
	\begin{proof}
	
	Since $uOp$ does not return on line~\ref{ln2:insert:return_early}, it read that $\latest[x]$ points to a DEL node.
	From the code, $\latest[x]$ can only change from pointing to this DEL node to an INS node by a successful CAS of some \textsc{TrieInsert}$(x)$ operation.
	Since $uOp$ performs an unsuccessful CAS on line~\ref{ln2:insert:cas_latest_head}, some other \textsc{TrieInsert}$(x)$ changed $\latest[x]$ to point to an INS node using a successful CAS sometime between $uOp$'s read of $\latest[x]$ and $uOp$'s unsuccessful CAS on $\latest[x]$. In the configuration immediately after this successful CAS,  $x \in S$.
	\end{proof}
	
	The next two lemmas are for \textsc{TrieDelete} operations. Their proofs are symmetric to \textsc{TrieInsert}.
	
	\begin{lemma}
	If $uOp$ is a \textsc{TrieDelete}$(x)$ operation that returns on line~\ref{ln2:delete:cas_latest_head}, then there is a configuration during $uOp$ in which $x \notin S$.
	\end{lemma}
	
	\begin{lemma}
	If $uOp$ is a \textsc{TrieDelete}$(x)$ returns on line~\ref{ln2:delete:return_early}, then in the configuration $C$ immediately after $uOp$ reads $\latest[x]$, $x \notin S$
	\end{lemma}
	
	Finally, we prove that the relaxed binary trie is strongly linearizable with respect to \textsc{TrieInsert}, \textsc{TrieDelete}, and \textsc{TrieSearch} operations.
	
	\begin{lemma}
	The implementation of the relaxed binary trie is strongly linearizable with respect to \textsc{TrieInsert}, \textsc{TrieDelete}, and \textsc{TrieSearch} operations.
	\end{lemma}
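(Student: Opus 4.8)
The plan is to assemble a linearization function from the pieces already established and to verify the two defining properties of strong linearizability: each linearized operation is mapped to a point in its execution interval, and the return values match the sequential execution obtained by ordering operations by their linearization points. For the linearization function itself, I would set: a \textsc{TrieSearch}$(x)$ operation is linearized at the step where it reads $\latest[x]$ on line~\ref{ln:search:findLatest_relaxed}; an $S$-modifying \textsc{TrieInsert}$(x)$ or \textsc{TrieDelete}$(x)$ operation is linearized at its successful CAS on $\latest[x]$ (line~\ref{ln2:insert:cas_latest_head} or line~\ref{ln2:delete:cas_latest_head}); a non-$S$-modifying \textsc{TrieInsert}$(x)$ that returns on line~\ref{ln2:insert:return_early} is linearized at its read of $\latest[x]$ on line~\ref{ln2:insert:find_latest}; a non-$S$-modifying \textsc{TrieInsert}$(x)$ that returns on line~\ref{ln2:insert:return_help} is linearized at any configuration during its interval in which $x \in S$ (such a configuration exists by Lemma~\ref{lemma:insert:return_early2}, and for strong linearizability I would pin it down canonically, e.g.\ to the first such configuration, or equivalently to the successful CAS of the concurrent \textsc{TrieInsert}$(x)$ that beat it); symmetrically for \textsc{TrieDelete}. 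Every such point lies within the operation's execution interval, giving the first property.

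For the second property, the key observation — already noted in the text — is that the set $S$ in any configuration is a deterministic function of the sequence of linearization points of the $S$-modifying update operations: following the chain of successful CAS steps on $\latest[x]$ for a fixed $x$, the node pointed to by $\latest[x]$ alternates INS, DEL, INS, \dots, so $x \in S$ in configuration $C$ iff the most recent successful CAS on $\latest[x]$ before $C$ installed an INS node. Hence I would argue return-value correctness key by key. For \textsc{TrieSearch}$(x)$, Lemma~\ref{lemma:search:true} already gives that it returns \textsc{True} iff $x \in S$ in the configuration immediately after its linearization point, which is exactly what the sequential execution demands. For a non-$S$-modifying \textsc{TrieInsert}$(x)$, its return value (``nothing'', i.e.\ $S$ unchanged) is consistent precisely because $x \in S$ at its linearization point: by Lemmas~\ref{lemma:insert:return_early} and~\ref{lemma:insert:return_early2} this holds for both the line~\ref{ln2:insert:return_early} case and the line~\ref{ln2:insert:return_help} case, so placing it in the sequential order at that point leaves $S$ correctly unmodified. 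For an $S$-modifying \textsc{TrieInsert}$(x)$, linearizing at the successful CAS is consistent because that CAS succeeded only because $\latest[x]$ pointed to a DEL node there, i.e.\ $x \notin S$ just before, so in the sequential order $x$ is correctly added. The \textsc{TrieDelete}$(x)$ cases are symmetric. One should also check the tie-breaking clause: two operations mapped to the same step — this can only happen when a \textsc{TrieSearch}$(x)$ read and some other operation coincide, or when a losing \textsc{TrieInsert}/\textsc{TrieDelete} shares a point with the winner — and in each case any ordering of the tied operations yields the same return values, since a read commutes with being placed just before or just after a write that it observed, and a non-$S$-modifying update does not change $S$.

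The prefix-preserving property is the last thing to verify, and I expect it to be the least routine part, though still short. Each operation's linearization point is determined by a concrete step the operation itself takes — its read of $\latest[x]$, or its successful CAS — or, in the line~\ref{ln2:insert:return_help} case, by the successful CAS of the concurrent operation that beat it; in all cases this step has already occurred by the time the operation returns, and nothing later in the execution can move it or retract it. So if a prefix $\alpha'$ of an execution $\alpha$ already contains enough of an operation's steps to determine its linearization point, that point is the same in $\alpha'$ as in $\alpha$, and if the point falls within $\alpha'$ then $\alpha'$ already contains the determining step; this is exactly the prefix-preserving condition. The main subtlety to handle carefully is the line~\ref{ln2:insert:return_help} case, where I must make sure the ``canonical'' configuration I pick (the successful CAS of the winning concurrent \textsc{TrieInsert}$(x)$) is genuinely fixed once it has occurred and does not get reassigned if further operations are appended — which it is, because the winning CAS is a fixed step in the execution and is visible (via the failed CAS) by the time the losing operation takes its own CAS step.
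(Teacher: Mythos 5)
Your proof takes essentially the same approach as the paper's: the same assignment of linearization points (the read of $\latest[x]$ for \textsc{TrieSearch} and early-returning updates, the successful CAS for $S$-modifying updates, and the winning competitor's successful CAS for updates that fail their CAS), the same appeal to Lemmas~\ref{lemma:search:true}, \ref{lemma:insert:return_early}, and~\ref{lemma:insert:return_early2} for return-value correctness, and the same prefix-preservation argument that each assigned point is a fixed step that is determined (and cannot be retracted) once it appears in a prefix. Your version is somewhat more explicit than the paper's about tie-breaking and about deriving $S$ from the sequence of linearization points, but the underlying argument is the same.
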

	
	\begin{proof}
	Consider an execution $\alpha$ of the relaxed binary trie. Let $\alpha'$ be any prefix of $\alpha$. 
	Suppose that all \textsc{TrieInsert} and \textsc{TrieDelete} operations have been linearized in $\alpha$.
	Suppose $op$ is a \textsc{TrieInsert}$(x)$ operation. Suppose $op$ reads that $\latest[x]$ points to a INS node (on line~\ref{ln2:insert:find_latest}) during $\alpha'$ and, hence, returns without changing $\latest[x]$. 
	By Lemma~\ref{lemma:insert:return_early}, $x \in S$ in the configuration $C$ immediately after this read.
	If $\alpha'$ contains $C$, then $op$ is linearized at $C$ for both $\alpha$ and $\alpha'$. 
	
	So $op$ reads that $\latest[x]$ points to a DEL node, so $x \notin S$ immediately after this step.
	If $op$ successfully changes $\latest[x]$ to point to its own update node using CAS on line~\ref{ln2:insert:cas_latest_head}, then $op$ is linearized at this CAS step. 
	If $\alpha'$ contains this CAS step, then 
	$op$ is linearized at this CAS step for both $\alpha$ and $\alpha'$. 
	
	If $op$ does not successfully change $\latest[x]$ to point to its own update node using CAS on line~\ref{ln2:insert:cas_latest_head}, then by Lemma~\ref{lemma:insert:return_early2}, 
	there is a configuration sometime during $op$ in which $x \in S$.
	Consider the earliest such configuration $C$. Note that $C$ immediately follows the successful CAS of some \textsc{TrieInsert}$(x)$ operation by another processes. 
	By assumption, the update nodes allocated by different processes are distinct.
	Hence, in any continuation of the execution that contains this successful CAS, $op$ performs an unsuccessful CAS on line~\ref{ln2:insert:cas_latest_head}.
	If $\alpha'$ contains this successful CAS (and hence $C$), then  
	$op$ is linearized at $C$ for both $\alpha$ and $\alpha'$. 
	
	The case when $op$ is a \textsc{TrieDelete}$(x)$ operation is similar.

	Now consider any arbitrary \textsc{TrieSearch} operation, $op$.
	Let $C$ be the configuration immediately after $op$ reads $\latest[x]$. 
	By Lemma~\ref{lemma:search:true}, $op$ returns \textsc{True} if and only if $x \in S$ in $C$.
	The last $S$-modifying update operation linearized before $C$ is a \textsc{TrieInsert}$(x)$ operation if and only if $x \in S$ in $C$, so the return value of $op$ is consistent with the sequential execution in which operations are performed sequentially in order of their linearization points.
	If $\alpha'$ contains $C$, then $op$ is linearized at $C$ for both $\alpha$ and $\alpha'$. 
	\end{proof}

	\subsubsection{Properties of the Interpreted Bits}\label{section_interpreted_bits}
	
	In this section, we prove that properties IB0 and IB1 of the interpreted bits are satisfied by our implementation.
	

	We first observe that \textsc{FindLatest}$(x)$ correctly returns the update node of the latest update operation with key $x$. In the implementation of the relaxed binary trie, this is done by a single read to $\latest[x]$. 
	
	\begin{observation}\label{lemma:findLatestRelaxed}
	Let $\tau$ be a completed instance of \textsc{FindLatest}$(x)$ that returns $\uNode$.
	Then there is a configuration during $\tau$ in which the update operation that created $\uNode$ is the latest update operation with key $x$.
	\end{observation}
	
	The next two observations state that \textsc{FirstActivated}$(\uNode)$ correctly determines if $\uNode$ is pointed to by $\latest[x]$. In particular, it holds in the configuration immediately after $\latest[x]$ is read and compared with $\uNode$. 
	
	\begin{observation}\label{lemma:firstActivatedRelaxed_true}
	Let $\tau$ be a completed instance of \textsc{FirstActivated}$(\uNode)$, for some update node, $\uNode$. If $\tau$ returns \textsc{True}, there is a configuration during $\tau$ in which
	the update operation that created $\uNode$ is the latest update operation with key $x$.
	\end{observation}
	
	\begin{observation}\label{lemma:firstActivatedRelaxed_false}
	Let $\tau$ be a completed instance of \textsc{FirstActivated}$(\uNode)$, for some update node, $\uNode$. If $\tau$ returns \textsc{False}, there is a configuration during $\tau$ in which
	the update operation that created $\uNode$ is not the latest update operation with key $x$.
	\end{observation}

	
	
	We next state basic observations about how $t.\dNodePtr$ changes, for any binary trie node $t$. They follow from the code of \textsc{DeleteBinaryTrie}.
	
	\begin{observation}\label{obs:dNodeOwner}
	Only \textsc{TrieDelete} operations change the $\dNodePtr$ field of a binary trie node, 
	and they only change it
	to point to their own DEL node (on line~\ref{ln:delete:trieCAS} or \ref{ln:delete:trieCAS2}).
	\end{observation}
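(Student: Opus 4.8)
The plan is to prove Observation~\ref{obs:dNodeOwner} by an exhaustive inspection of the pseudocode, identifying every statement that could write the $\dNodePtr$ field of a binary trie node. First I would dispose of the read-only procedures: \textsc{FindLatest}, \textsc{FirstActivated}, \textsc{TrieSearch}, \textsc{InterpretedBit}, and \textsc{RelaxedPredecessor} never assign to $t.\dNodePtr$ for any binary trie node $t$ — \textsc{InterpretedBit} merely reads $t.\dNodePtr.\key$ on line~\ref{ln:ib:findlatest}, and \textsc{RelaxedPredecessor} only follows the fixed tree pointers ($\mathit{parent}$, $\mathit{left}$, $\mathit{right}$, $\mathit{sibling}$). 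Next I would check \textsc{TrieInsert} and \textsc{InsertBinaryTrie}: \textsc{TrieInsert} does not touch any $\dNodePtr$ field, and \textsc{InsertBinaryTrie} only reads $t.\dNodePtr.\key$ on line~\ref{ln:insert:findLatest} — it modifies $\iNode.\target$ and $\uNode.\insthreshold$, but never $\dNodePtr$. Finally I would look at \textsc{TrieDelete}, which itself does not write $\dNodePtr$ directly, so any such write must occur inside the procedure it calls, \textsc{DeleteBinaryTrie}.

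Within \textsc{DeleteBinaryTrie}$(\dNode)$, the field $t.\dNodePtr$ is read on lines~\ref{ln:delete:readPtr} and~\ref{ln:delete:readPtr2}, and the only statements that write it are the two CAS operations on lines~\ref{ln:delete:trieCAS} and~\ref{ln:delete:trieCAS2}, each of which has $\dNode$ as its third (new-value) argument. Since \textsc{DeleteBinaryTrie}$(\dNode)$ is invoked only on line~\ref{ln2:delete:binaryTrie} of \textsc{TrieDelete}, and there $\dNode$ is exactly the new DEL node that \textsc{TrieDelete} created earlier in its own execution, every successful write to $t.\dNodePtr$ is performed by a \textsc{TrieDelete} operation and sets $t.\dNodePtr$ to that operation's own DEL node. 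This establishes both halves of the observation.

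I do not expect any real obstacle: the statement is a purely syntactic consequence of the code, so the "proof" is just the case analysis over the procedures described above, and the only care needed is to be certain the enumeration of procedures (and of lines within \textsc{DeleteBinaryTrie}) is complete.
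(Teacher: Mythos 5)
Your proof proposal is correct, and it takes the only reasonable approach: the paper states this as an unproved \textsc{Observation}, implicitly justified by inspection of the pseudocode, and your exhaustive case analysis over the procedures of the relaxed binary trie (confirming that the sole writes to $\dNodePtr$ are the two CAS statements on lines~\ref{ln:delete:trieCAS} and~\ref{ln:delete:trieCAS2} inside \textsc{DeleteBinaryTrie}, which is called only from \textsc{TrieDelete} on line~\ref{ln2:delete:binaryTrie} with the operation's own DEL node as argument) supplies exactly the justification the paper leaves implicit.
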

	
	\begin{observation}\label{obs:aba}
	For any binary trie node $t$, suppose $t.\dNodePtr$ is changed to point from a DEL node, $\dNode$, to a different DEL node. Then in any future configuration, $t.\dNodePtr$ does not point to $\dNode$.
	\end{observation}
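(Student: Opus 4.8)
The plan is to argue that once $t.\dNodePtr$ is changed away from $\dNode$ it can never again point to $\dNode$, because the \emph{only} agent that is ever allowed to install $\dNode$ into $t.\dNodePtr$ has already used up its single opportunity to do so. Concretely, let $s_1$ be the step at which $t.\dNodePtr$ changes from $\dNode$ to a different DEL node, and suppose toward a contradiction that $t.\dNodePtr = \dNode$ in some configuration after $s_1$. Then there is a step $s_2$ after $s_1$ that sets $t.\dNodePtr$ to $\dNode$ (the value is $\neq \dNode$ just after $s_1$ and $= \dNode$ later, so it must be reinstalled). By Observation~\ref{obs:dNodeOwner}, $s_2$ is a successful CAS on line~\ref{ln:delete:trieCAS} or line~\ref{ln:delete:trieCAS2}, and since $\dNode$ is the DEL node created by exactly one \textsc{TrieDelete} operation, say $dOp$, the step $s_2$ must be performed by $dOp$.

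The core of the argument is to show that $dOp$ performs \emph{at most one} successful CAS that sets $t.\dNodePtr$ to $\dNode$ over the entire execution. First I would note that $dOp$ invokes \textsc{DeleteBinaryTrie}$(\dNode)$ at most once, and inside it the local node variable $t$ is initialised to the leaf with key $\dNode.key$ and then reassigned to $t.parent$ at the top of each while-loop iteration; since the parent/child structure of the binary trie is fixed, the successive values of this variable strictly increase in height, so any fixed binary trie node is the value of the variable in at most one iteration. Within a single iteration, $dOp$ reads $t.\dNodePtr$ into $d$ (line~\ref{ln:delete:readPtr}), performs CAS$(t.\dNodePtr, d, \dNode)$ on line~\ref{ln:delete:trieCAS}, and attempts the second CAS on line~\ref{ln:delete:trieCAS2} only if the first returned \textsc{False}; hence at most two attempts are made and at most one succeeds. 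Combining, $dOp$ performs at most one successful CAS writing $\dNode$ into $t.\dNodePtr$.

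To finish, I would account for the value of $t.\dNodePtr$ just before $s_1$, which equals $\dNode$: either this was the value of $t.\dNodePtr$ in the initial configuration — but then $\dNode$ was not created by any \textsc{TrieDelete} operation, so no operation $dOp$ exists and no step (in particular $s_2$) can set $t.\dNodePtr$ to $\dNode$, a contradiction — or $\dNode$ was installed by a step before $s_1$, which by Observation~\ref{obs:dNodeOwner} is $dOp$'s unique successful CAS on $t.\dNodePtr$; then $s_2$ would be a \emph{second} successful such CAS by $dOp$, contradicting the previous paragraph. Hence no such configuration after $s_1$ exists. I expect the main obstacle to be exactly the bookkeeping in the second paragraph: making precise that the traversal in \textsc{DeleteBinaryTrie} is strictly monotone up the trie (so each node is ``processed'' in at most one iteration) and that the control flow of lines~\ref{ln:delete:readPtr}--\ref{ln:delete:trieCAS2} permits at most one successful CAS per processed node; everything else follows immediately from Observation~\ref{obs:dNodeOwner} and the immutability of DEL-node ownership.
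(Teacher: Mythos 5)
The paper states this as an Observation and gives no explicit proof, so there is no paper argument to compare against; your proposal is a valid and complete justification. The decomposition is exactly the natural one the paper implicitly relies on: Observation~\ref{obs:dNodeOwner} confines all writes to $t.\dNodePtr$ to CASs by the creating \textsc{TrieDelete} operation installing its own DEL node, \textsc{DeleteBinaryTrie}$(\dNode)$ is invoked at most once and its loop variable strictly ascends the trie so each node gets at most one successful CAS from $dOp$, and the case split on whether $\dNode$ is the never-installable initial dummy or was already installed by $dOp$'s single successful CAS on $t.\dNodePtr$ closes the argument.
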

	
	\begin{observation}\label{obs:dNodeThreshold}
	Let $\dNode$ be the DEL node created by a  \textsc{TrieDelete} operation, $dOp$. Only $dOp$ writes to $\dNode.\threshold$ and it only does so on line~\ref{ln:delete:threshold}. 
	\end{observation}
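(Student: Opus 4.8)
The plan is to prove Observation~\ref{obs:dNodeThreshold} by direct inspection of the pseudocode, in the same style as the neighbouring Observations~\ref{obs:dNodeOwner} and~\ref{obs:aba}. There are two claims to verify: (i) the only statement in any of the algorithms that assigns to a field named $\threshold$ is line~\ref{ln:delete:threshold} of \textsc{DeleteBinaryTrie}; and (ii) whenever a process executes that line, the node it writes is exactly the DEL node created by the \textsc{TrieDelete} operation on whose behalf it is running \textsc{DeleteBinaryTrie}.

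For (i), I would scan each of \textsc{FindLatest}, \textsc{TrieSearch}, \textsc{FirstActivated}, \textsc{InterpretedBit}, \textsc{TrieInsert}, \textsc{InsertBinaryTrie}, \textsc{TrieDelete}, \textsc{DeleteBinaryTrie}, and \textsc{RelaxedPredecessor}. The field $\threshold$ of an update node is \emph{read} in several places (e.g.\ in \textsc{InterpretedBit} and on line~\ref{ln:insert:IfThreshold}), but the only \emph{write} to a $\threshold$ field is the assignment $\dNode.\threshold \leftarrow t.\height$ on line~\ref{ln:delete:threshold}; in particular, \textsc{InsertBinaryTrie} modifies only $\insthreshold$ (via the \textsc{MinWrite} on line~\ref{ln:insert:minwrite}), never $\threshold$, and line~\ref{ln2:delete:setStop} of \textsc{TrieDelete} touches a DEL node's $\stopflag$, not its $\threshold$. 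For (ii), I would observe that \textsc{DeleteBinaryTrie} is invoked only from line~\ref{ln2:delete:binaryTrie} of \textsc{TrieDelete}$(x)$, with argument the DEL node $\dNode$ allocated a few lines earlier in that same invocation (and reached only after the successful CAS on line~\ref{ln2:delete:cas_latest_head}, so every DEL node is the creation of exactly one \textsc{TrieDelete} operation $op$). Hence the process executing line~\ref{ln:delete:threshold} with target $\dNode$ is precisely the process performing $op$, and it writes $\dNode.\threshold$ nowhere else. Combining (i) and (ii) yields the observation.

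There is no real obstacle here, only a small bookkeeping point: one should confirm $\dNode.\threshold$ is not modified outside the displayed pseudocode. Its stated initial value $0$ is established when the node is allocated inside \textsc{TrieDelete}, before $\dNode$ becomes reachable by any other process, so this is part of node creation rather than a write in the sense of the observation, and no other procedure accesses the field for writing. The observation will then be used downstream — together with the fact (noted in the description of \textsc{DeleteBinaryTrie}) that $op$ ascends the trie one level at a time — to conclude that $\dNode.\threshold$ is non-decreasing over time and is only ever incremented by $1$ from its initial value $0$.
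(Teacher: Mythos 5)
Your proposal is correct and takes essentially the same approach as the paper: the paper states this as an unproved Observation, relying implicitly on exactly the code inspection you spell out — the only assignment to a $\threshold$ field anywhere in the pseudocode is line~\ref{ln:delete:threshold}, and \textsc{DeleteBinaryTrie}$(\dNode)$ is invoked solely from line~\ref{ln2:delete:binaryTrie} of the \textsc{TrieDelete} invocation that allocated $\dNode$, after its own successful CAS.
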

	
	
	The next lemma follows from the fact that \textsc{TrieDelete} operations traverse up the binary trie, updating the $\dNodePtr$ field of each node on its path to point to the DEL node it created.  As \textsc{TrieDelete} operations traverse up the trie during \textsc{DeleteBinaryTrie}, they increment the $\threshold$ field of the DEL node they created.
	
	\begin{lemma}\label{lemma:sequenceCAS}
	Let $\dNode$ be a DEL node created by a  \textsc{TrieDelete}$(x)$ operation, $dOp$.
	Suppose $dOp$ changes $t.\dNodePtr$ of some binary trie node $t$ to point to $\dNode$ due to a successful CAS, $s$ (performed on line~\ref{ln:delete:trieCAS} or \ref{ln:delete:trieCAS2}).	
	Before $s$, $dOp$ has performed a sequence of successful CASs updating each binary trie node from the parent of the leaf with key $x$ to a child of $t$ to point to $\dNode$. Immediately after $s$, $\dNode.\threshold = t.\height-1$.
	\end{lemma}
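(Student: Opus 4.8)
The plan is to work entirely within the single thread executing $op$'s call to \textsc{DeleteBinaryTrie}$(\dNode)$, whose steps are totally ordered, and to track the while loop iteration by iteration. Let $t_0$ be the leaf with key $x$ and let $t_1, t_2, \dots$ be its ancestors in order; then the $j$-th iteration of the loop sets the local variable $t$ to $t_j$ (this is immediate from the initialization $t\leftarrow$ leaf and the single assignment $t\leftarrow t.parent$ per iteration), and $t_j$ has height $j$. The first observation I would record is that an iteration can finish \emph{without returning} only by reaching line~\ref{ln:delete:threshold}, and that doing so requires a successful CAS changing $t_j.\dNodePtr$ to $\dNode$ on line~\ref{ln:delete:trieCAS} or, after a failed first attempt, on line~\ref{ln:delete:trieCAS2}; otherwise the iteration returns at one of lines~\ref{ln:delete:checkIB1}, \ref{ln:delete:firstactivated1}, \ref{ln:delete:readstop1}, \ref{ln:delete:firstactivated2}, \ref{ln:delete:readstop2}, \ref{ln:delete:trieCAS2}, or \ref{ln:delete:checkIB2}.

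For the first claim, let $s$ be the successful CAS in the statement, occurring in iteration $k$, so $t = t_k$. Since the loop iterations of $op$ run consecutively, reaching $s$ forces none of iterations $1,\dots,k-1$ to have returned, so by the observation above each such iteration $j$ performed a successful CAS setting $t_j.\dNodePtr$ to $\dNode$ (on line~\ref{ln:delete:trieCAS} or \ref{ln:delete:trieCAS2}), and these CASs occur in increasing order of $j$ and all before $s$. As $t_{k-1}$ is the child of $t_k = t$ on the leaf-to-root path, this is precisely the asserted sequence of successful CASs from the parent of the leaf with key $x$ up to a child of $t$ (an empty sequence when $k = 1$). For the second claim, recall $\dNode.\threshold$ is initialized to $0$ and, by Observation~\ref{obs:dNodeThreshold}, is written only by $op$ and only on line~\ref{ln:delete:threshold}. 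From the first claim, before $s$ the only executions of line~\ref{ln:delete:threshold} by $op$ were in iterations $1,\dots,k-1$, writing $t_1.\height, \dots, t_{k-1}.\height$ in that order, while the write in iteration $k$ (which would store $t_k.\height$) still lies after $s$. Since $\dNode.\threshold$ has the unique writer $op$, the value immediately after $s$ is the one written in iteration $k-1$, namely $t_{k-1}.\height = k-1 = t_k.\height - 1$; and when $k = 1$ no write has yet occurred, so $\dNode.\threshold$ still equals its initial value $0 = t_1.\height - 1$.

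I do not expect a serious obstacle: the argument is a careful case analysis of the loop body together with the fact that $op$'s steps are sequential and $\dNode.\threshold$ has a single writer. The only points that need to be stated cleanly, rather than proved, are the correspondence between loop iterations and the path nodes $t_j$ (that $t$ moves strictly one level up per iteration, and $t_j.\height = j$), and that the line~\ref{ln:delete:threshold} write completing iteration $k-1$ is the last write to $\dNode.\threshold$ before $s$; both follow directly from the structure of \textsc{DeleteBinaryTrie}.
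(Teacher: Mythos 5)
Your proof is correct and takes essentially the same approach as the paper's: both argue that reaching iteration $k$ requires a successful CAS in every earlier iteration (so the claimed sequence of CASs exists), and then use the single-writer property of $\dNode.\threshold$ (Observation~\ref{obs:dNodeThreshold}) to conclude its value immediately after $s$ is the one written in iteration $k-1$, namely $t.\height-1$, with the initial value $0$ handling the base case $k=1$. Your version is slightly more explicit about the iteration-to-height correspondence, but the substance is identical.
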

	
	\begin{proof}
	Let $C$ be the configuration immediately after the CAS, $s$, that changes $t.\dNodePtr$ to point to $\dNode$. Let $h = t.\height$. Let $t_0,\dots,t_h$ be the sequence of nodes on the path from the leaf $t_0$ with key $x$ to $t_h = t$. 
	
	Suppose that $t_h$ is the parent of the leaf with key $x$, so $t.\height = 1$. Recall that the initial value of $\dNode.\threshold$ is 0 and only $dOp$ increments it. In the first iteration of \textsc{DeleteBinaryTrie}, $dOp$ does not change  $\dNode.\threshold$ prior to performing $s$. Therefore, in $C$, $\dNode.\threshold = 0$.
	
	So suppose that $t_h$ is not the parent of the leaf with key $x$.
	From the code of \textsc{DeleteBinaryTrie}, $dOp$ does not proceed to its next iteration unless it performs at least one successful CAS on either  line~\ref{ln:delete:trieCAS} or \ref{ln:delete:trieCAS2}.
	Hence, prior to $C$, $dOp$ has performed a sequence of successful CASs, updating the $t_i.\dNodePtr$ to point to $\dNode$, for $1 \leq i \leq h-1$. 
	
	During the iteration \textsc{DeleteBinaryTrie} prior to the iteration in which $dOp$ performs a successful CAS on $t_h$, 
	$dOp$ increased $\dNode.\threshold$ to $t_{h-1}.\height = t.\height-1$ on line~\ref{ln:delete:threshold}. By Observation~\ref{obs:dNodeThreshold}, no operation besides $dOp$ changes $\dNode.\threshold$ between this write and $C$.
	Hence, in $C$, $\dNode.\threshold = t.\height-1$.
	
	\end{proof}
	
	
	
	
	The next lemma states that once  a \textsc{TrieDelete} operation successfully changes the $\dNodePtr$ field of some binary trie node, \textsc{TrieDelete} operations linearized earlier can no longer change this $\dNodePtr$ field.
	
	\begin{lemma}\label{lemma:staleCASFail}
	
	Suppose $\dNode$ and $\dNode'$ are the DEL nodes created by two different, $S$-modifying \textsc{TrieDelete}$(x)$ operations, $dOp$ and $dOp'$, respectively. 
	Suppose that $dOp$ is linearized before $dOp'$. 
	Once $dOp'$ performs a successful CAS on the $\dNodePtr$ of some binary trie node, $dOp$ does not later perform a successful CAS on the $\dNodePtr$ of that same binary trie node.
	
	\end{lemma}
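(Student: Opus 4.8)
The plan is to argue by contradiction, using the two-CAS structure of \textsc{DeleteBinaryTrie} together with Observations~\ref{obs:dNodeOwner} and~\ref{obs:aba}. Suppose toward a contradiction that, after $op'$ performs a successful CAS $s'$ changing $t.\dNodePtr$ to point to $\dNode'$ for some binary trie node $t$, the operation $op$ later performs a successful CAS $s$ changing $t.\dNodePtr$ to point to $\dNode$. I would first record the key structural facts: by Observation~\ref{obs:dNodeOwner}, every successful CAS on $t.\dNodePtr$ installs some DEL node, so between $s'$ and $s$ the value of $t.\dNodePtr$ goes from $\dNode'$ through possibly several DEL nodes and finally equals the old value expected by $s$. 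Since $s$ is performed by $op$, the value it expects is a DEL node $d$ read earlier by $op$ on line~\ref{ln:delete:readPtr} or line~\ref{ln:delete:readPtr2}; and crucially, since $\dNode$ is activated \emph{before} $\dNode'$, the node $\dNode$ itself was never $t.\dNodePtr$'s value at any configuration at or after $s'$ — here I would invoke Observation~\ref{obs:aba} if $\dNode$ had ever been installed and replaced, but more directly, $\dNode$ can only be installed by $op$ via $s$ itself (Observation~\ref{obs:dNodeOwner}), so $d \neq \dNode$.

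Next I would exploit the guard checks that $op$ performs immediately before each of its CAS attempts. Before the first attempt (line~\ref{ln:delete:trieCAS}) $op$ checks \textsc{FirstActivated}$(\dNode)$ on line~\ref{ln:delete:firstactivated1}, and before the second attempt (line~\ref{ln:delete:trieCAS2}) it re-checks on line~\ref{ln:delete:firstactivated2}. Since $\dNode$ is activated before $\dNode'$, and $op'$'s successful CAS $s'$ requires $\dNode'$ to be the first activated node in $\latest[x]$ at the time $op'$ read it, I would show that once $\dNode'$ has been activated, $\dNode$ is no longer the first activated update node in $\latest[x]$: $\latest[x]$ has moved on to a later update node. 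Hence after $\dNode'$ is activated, every call \textsc{FirstActivated}$(\dNode)$ returns \textsc{False}. So if $op$'s read on line~\ref{ln:delete:firstactivated1} or line~\ref{ln:delete:firstactivated2} (whichever immediately precedes the successful CAS $s$) occurs after $\dNode'$ is activated, $op$ returns and never performs $s$ — contradiction. It remains to handle the case where $op$'s final guard read precedes the activation of $\dNode'$.

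For that remaining case I would use the "two CAS attempts" design and an ABA-style argument. The successful CAS $s$ by $op$ expects $t.\dNodePtr = d$, and $s$ must occur after $s'$ (by assumption), so at the moment of $s$ we have $t.\dNodePtr = d$; but $s'$ set it to $\dNode'$, so some CAS strictly between $s'$ and $s$ must have re-established the value $d$. By Observation~\ref{obs:dNodeOwner} that CAS installed a DEL node, namely $d$; and by Observation~\ref{obs:aba}, once $t.\dNodePtr$ leaves $d$ it never returns to $d$ — so in fact $d$ cannot be re-installed after it has been overwritten. Thus $d$ was never $t.\dNodePtr$ after the first time it was overwritten; combined with $s' $ overwriting $t.\dNodePtr$, the only way $s$ can succeed is if $d$ was the value installed by $s'$ or a later CAS, i.e. $d = \dNode'$ or $d$ is some DEL node installed after $s'$. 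In either subcase, the corresponding owner operation activated its DEL node no earlier than $\dNode'$, hence strictly after $\dNode$; this forces $op$'s intervening guard read of \textsc{FirstActivated}$(\dNode)$ — which happens after $op$ read $d = t.\dNodePtr$ on line~\ref{ln:delete:readPtr} or~\ref{ln:delete:readPtr2} — to occur after $\dNode'$ (or the later node) was activated, so it returns \textsc{False} and $op$ returns before $s$. This is the contradiction.

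I expect the main obstacle to be the bookkeeping in the last paragraph: carefully pinning down \emph{which} of $op$'s guard reads immediately precedes the successful CAS $s$ (the first-attempt path versus the second-attempt path), and showing in each case that the read of $d$ on line~\ref{ln:delete:readPtr}/\ref{ln:delete:readPtr2} already "sees" a DEL node that was activated no earlier than $\dNode'$, so that the subsequent \textsc{FirstActivated}$(\dNode)$ check is forced to fail. Getting the ordering of (read of $d$) $\to$ (\textsc{FirstActivated} check) $\to$ (CAS $s$) to interact correctly with the activation order of $\dNode$ versus $\dNode'$ is the delicate part; everything else follows from Observations~\ref{obs:dNodeOwner} and~\ref{obs:aba} and the fact that $\latest[x]$ only advances through activated update nodes in linearization order.
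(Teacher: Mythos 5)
Your proposal is correct and takes essentially the same approach as the paper: the \textsc{FirstActivated}$(\dNode)$ guard must return \textsc{True} before $op$'s CAS, which forces that check and the preceding read of $t.\dNodePtr$ (on line~\ref{ln:delete:readPtr} or \ref{ln:delete:readPtr2}) to occur before $\dNode'$ is activated and hence before $op'$'s successful CAS, after which Observation~\ref{obs:aba} forces $op$'s CAS to fail. The only difference is that the paper closes the second case directly from Observation~\ref{obs:aba} (once $t.\dNodePtr$ leaves the value $op$ read, it never returns, so $op$'s CAS is unsuccessful) rather than via your detour through ``$d$ must have been installed at or after $s'$''; your version is more circuitous but reaches the same contradiction and does not introduce a gap.
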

	
	\begin{proof}
	Let $C'$ be the configuration immediately after $dOp'$'s first successful CAS on $t.\dNodePtr$ (on line~\ref{ln:delete:trieCAS} or \ref{ln:delete:trieCAS2}) of some binary trie node, $t$. From the code, this only occurs after  $\latest[x]$ is updated to point to $\dNode'$. Then in $C'$,  $\latest[x]$ no longer points to $\dNode$.
	
	
	Suppose, for contradiction, that $dOp$ performs a CAS on $t.\dNodePtr$ in some configuration after $C'$. Before this CAS, \textsc{FirstActivated}$(\dNode)$ returned \textsc{True} on line~\ref{ln:delete:firstactivated1} or \ref{ln:delete:firstactivated2}. By Observation~\ref{lemma:firstActivatedRelaxed_true}, $dOp$ read that $\latest[x]$ points to $\dNode$.
	
	
	This read must occur before $C'$, since $\latest[x]$ no longer points to $\dNode$ in $C'$.
	Prior to this read, $dOp'$ reads $t.\dNodePtr$ (on line~\ref{ln:delete:readPtr} or \ref{ln:delete:readPtr2}).
	Hence, $dOp$'s last read of $t.\dNodePtr$ also occurred before $C'$.
	Then $dOp'$ changes the value of $t.\dNodePtr$ to a value different than what was last read by $dOp$. So the CAS that $dOp$ performs on $t.\dNodePtr$ will be unsuccessful. 
	\end{proof}
	
	
	\begin{lemma}\label{lemma:deleteUpdateTrieVal}
	Let $\dNode$ be the DEL node created by a \textsc{TrieDelete}$(x)$ operation, $dOp$.
	Suppose $dOp$ changes $t.\dNodePtr$ of some binary trie node $t$ to point to $\dNode$ (due to a successful CAS on line~\ref{ln:delete:trieCAS} or \ref{ln:delete:trieCAS2}). In the configuration immediately after this CAS,
	the interpreted bit of $t$ is 1.
	\end{lemma}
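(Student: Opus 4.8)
The plan is to read off the interpreted bit of $t$ straight from its definition in the configuration $C$ immediately after the successful CAS $s$ of $op$ on $t.\dNodePtr$. In $C$ we have $t.\dNodePtr = \dNode$, a DEL node with $\dNode.\key = x$, so the interpreted bit of $t$ in $C$ is determined by the update node $\uNode$ to which $\latest[x]$ points in $C$. If $\uNode$ is an INS node, then the interpreted bit of $t$ is $1$ by definition and we are done. So assume $\uNode$ is a DEL node with key $x$, and suppose toward a contradiction that the interpreted bit of $t$ is $0$ in $C$, i.e. $t.\height \le \uNode.\threshold$ (recall $t$ is internal, since \textsc{DeleteBinaryTrie} only CASes $\dNodePtr$ after moving up from the leaf, so $t.\height \ge 1$).

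If $\uNode = \dNode$, then since $s$ changes $t.\dNodePtr$ to point to $\dNode$, Lemma~\ref{lemma:sequenceCAS} gives $\dNode.\threshold = t.\height - 1$ in $C$, so $t.\height > \uNode.\threshold$, a contradiction. It remains to handle the case $\uNode \ne \dNode$.

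First I would argue that $\dNode$ was activated (became the node pointed to by $\latest[x]$) strictly before $\uNode$ was. Just before $op$ performs $s$, its most recent \textsc{FirstActivated}$(\dNode)$ call (line~\ref{ln:delete:firstactivated1} or line~\ref{ln:delete:firstactivated2}) returned \textsc{True}, so $\latest[x]$ points to $\dNode$ then. Since every CAS on $\latest[x]$ installs a freshly allocated update node, $\latest[x]$ points to any given update node during at most one contiguous interval, and these intervals occur in activation order; the interval for $\dNode$ contains a configuration before $C$, the interval for $\uNode$ contains $C$, and the two are disjoint because $\uNode \ne \dNode$, so $\dNode$'s precedes $\uNode$'s. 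Now let $op^*$ be the \textsc{TrieDelete}$(x)$ operation that created $\uNode$. By Observation~\ref{obs:dNodeThreshold}, only $op^*$ writes $\uNode.\threshold$ (on line~\ref{ln:delete:threshold}), and by Lemma~\ref{lemma:sequenceCAS} together with the one-at-a-time increments on that line the value of $\uNode.\threshold$ passes through $0,1,2,\dots$ in step with the heights of the nodes $op^*$ processes from the leaf with key $x$ up toward the root. Since $\uNode.\threshold \ge t.\height$ in $C$, at some step before $C$ the operation $op^*$ executed line~\ref{ln:delete:threshold} while processing the node of height $t.\height$ on the path from the leaf with key $x$ to the root, which is exactly $t$; and by the control flow of \textsc{DeleteBinaryTrie}, $op^*$ reaches line~\ref{ln:delete:threshold} for $t$ only after a successful CAS on $t.\dNodePtr$, which by Observation~\ref{obs:dNodeOwner} set $t.\dNodePtr$ to $\uNode$. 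That successful CAS by $op^*$ on $t.\dNodePtr$ occurs before $C$, hence before $s$; but $\dNode$ was activated before $\uNode$, so Lemma~\ref{lemma:staleCASFail} (with $\dNode$, $\uNode$ in the roles of $\dNode$, $\dNode'$) implies $op$ performs no successful CAS on $t.\dNodePtr$ after $op^*$ does, contradicting $s$. Hence the interpreted bit of $t$ is $1$ in $C$.

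I expect the main obstacle to be exactly this last case, and specifically pinning down the temporal orderings: justifying the activation order of $\dNode$ and $\uNode$ from the \textsc{FirstActivated} check, and verifying that $op^*$'s successful CAS on $t.\dNodePtr$ genuinely precedes $s$ (not merely $C$) so that Lemma~\ref{lemma:staleCASFail} applies; everything else is a direct unfolding of the definition of the interpreted bit together with Lemma~\ref{lemma:sequenceCAS}.
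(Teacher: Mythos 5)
Your proof is correct and takes essentially the same route as the paper's: assume for contradiction that $t$ has interpreted bit $0$ in $C$, use Lemma~\ref{lemma:sequenceCAS} to rule out $\uNode = \dNode$, observe that $\dNode$ must have been activated before $\uNode$, deduce from $\uNode.\threshold \ge t.\height$ that the creator of $\uNode$ already performed a successful CAS on $t.\dNodePtr$ before $C$, and then invoke Lemma~\ref{lemma:staleCASFail} to conclude that $op$'s CAS $s$ cannot succeed. The only difference is that you fill in two small gaps the paper leaves implicit — the disjoint-interval argument for the activation order of $\dNode$ and $\uNode$, and the chain from $\uNode.\threshold \ge t.\height$ through line~\ref{ln:delete:threshold} back to a prior successful CAS by $op^*$ on $t.\dNodePtr$ — but the decomposition and the key lemmas used are the same.
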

	
	\begin{proof}
	Let $cas$ be the successful CAS operation where $dOp$ updates $t.\dNodePtr$ to point to $\dNode$. Let $C$ be the configuration immediately after this CAS.
	
	Suppose, for contradiction, that the interpreted bit of $t$ is 0 in $C$.
	The interpreted bit of $t$ depends on the update node pointed to by $\latest[x]$. This must be a DEL node, otherwise the interpreted bit of $t$ is 1.
	
	Let $\dNode'$ be the update node pointed to by $\latest[x]$ in $C$.
	The interpreted bit of $t$ is 0 when $t.\height < \dNode'.\insthreshold$ and $t.\height \leq \dNode'.\threshold$. By Lemma~\ref{lemma:sequenceCAS}, $\dNode.\threshold = t.\height-1$, so $\dNode' \neq \dNode$. So $\dNode'$ is a DEL node created by a \textsc{TrieDelete}$(x)$ operation, $dOp'$, where $dOp' \neq dOp$. 
	Furthermore, $dOp'$ is the latest update operation with key $x$ in $C$, so it was linearized after $dOp$.
	From the code $dOp'$ does not increment $\dNode'.\threshold$ to equal $t.\height$ until after it performs a successful CAS on $t.\dNodePtr$ (on line~\ref{ln:delete:trieCAS} or \ref{ln:delete:trieCAS2}).
	By Lemma~\ref{lemma:staleCASFail}, this CAS must be unsuccessful, a contradiction. 
	
	
	
	\end{proof}

	The next two lemmas state that the helper function \textsc{InterpretedBit}$(t)$ returns a correct value for the interpreted bit of a binary trie node $t$.
	
	\begin{lemma}\label{lemma:IB_return0}
	Consider an instance $\tau$ of \textsc{InterpretedBit}$(t)$ for some binary trie node $t$. If the interpreted bit of $t$ is 0 throughout $\tau$, then $\tau$ returns 0.
	\end{lemma}
	
	\begin{proof}
	Let $\uNode$ be the update node returned by \textsc{FindLatest}$(t.\dNodePtr.\key)$ on line~\ref{ln:ib:findlatest}. Since the interpreted bit of $t$ is 0 throughout $\tau$, $\uNode$ must be a DEL node. So $\tau$ does not return 1 on line~\ref{ln:ib:ins}.
	
	Since the interpreted bit of $t$ is 0 throughout $\tau$, it follows from Lemma~\ref{lemma:deleteUpdateTrieVal} that $t.\dNodePtr$ does not change to point to any other DEL node. Furthermore, 
	if $\latest[t.\dNodePtr.\key]$ changes to point to an update node other than $\uNode$, it will be changed to point to an INS node and cause  $t$ to have interpreted bit 1 by the definition of interpreted bits. So $\latest[t.\dNodePtr.\key]$  points to $\uNode$ throughout $\tau$.
	So the interpreted bit of $t$ is determined by $\uNode.\insthreshold$ and $\uNode.\threshold$ throughout $\tau$. Additionally, by Observation~\ref{lemma:firstActivatedRelaxed_false}, any calls of \textsc{FirstActivated}$(\uNode)$ during $\tau$ must return \textsc{True}.
	
	If $\tau$ reads that $t.\height > \uNode.\threshold$ on line~\ref{ln:ib:u0b}, then the interpreted bit of $t$ is 1 immediately after the read. So $\tau$ reads that $t.\height \leq \uNode.\threshold$.
	If $\tau$ reads that $t.\height \geq \uNode.\insthreshold$ on line~\ref{ln:ib:l1b}, then the interpreted bit of $t$ is 1 immediately after the read.  So $\tau$ reads that $t.\height < \uNode.\insthreshold$. Since \textsc{FirstActivated}$(\uNode)$ on line~\ref{ln:ib:l1b} returns \textsc{True}, it follows that $\tau$ returns 0.
	\end{proof}

	\begin{lemma}\label{lemma:IB_return1}
	Consider an instance $\tau$ of \textsc{InterpretedBit}$(t)$ for some binary trie node $t$. If the interpreted bit of $t$ is 1 throughout $\tau$, then $\tau$ returns 1.
	\end{lemma}
	
	\begin{proof}
	Let $\uNode$ be the update node returned by \textsc{FindLatest}$(t.\dNodePtr.\key)$ on line~\ref{ln:ib:findlatest}.
	If $\uNode$ is an INS node, then it follows by line~\ref{ln:ib:ins} that $\tau$ returns 1. So suppose $\uNode$ is a DEL node.
	
	Suppose, for contradiction, that $\tau$ returns 0.
	From the code, it follows that this can only happen if $\tau$ reads  $t.\height \leq \uNode.\threshold$ on line~\ref{ln:ib:u0b}, $t.\height < \uNode.\insthreshold$ on line~\ref{ln:ib:l1b}, and \textsc{FirstActivated}$(\uNode)$ returns \textsc{True} on line~\ref{ln:ib:l1b}.
	
	Since \textsc{FirstActivated}$(\uNode)$ returns \textsc{True}, 
	it follows from Observation~\ref{lemma:findLatestRelaxed} and Observation~\ref{lemma:firstActivatedRelaxed_true} that
	$\uNode$ is the latest update node from the end of $\tau$'s call to \textsc{FindLatest} and the beginning of \textsc{FirstActivated}$(\uNode)$. 
	So the interpreted bit of $t$ depends on $\uNode$ and is determined by $\uNode.\insthreshold$ and $\uNode.\threshold$ throughout this interval.
	Since $\uNode.\insthreshold$ only decreases, in the configuration immediately after $\tau$  reads $t.\height \leq \uNode.\threshold$ on line~\ref{ln:ib:u0b}, it also holds that $t.\height < \uNode.\insthreshold$. Therefore, in this configuration, the interpreted bit of $t$ is 0, a contradiction.
	
	
	\end{proof}

	Let $iOp$ be the latest \textsc{TrieInsert} operation with key $x$ in configuration $C$.
	The following definitions describe the number of iterations of \textsc{InsertBinaryTrie}  $iOp$ has completed.
	For the leaf $\ell$ with key $x$, we say that $iOp$ has \textit{completed iteration $\ell$} in all configurations after $iOp$ is linearized.
	Now consider an internal binary trie node $t$ where $x \in U_t$.
	We say that $iOp$ has \textit{completed iteration $t$ in $C$} of \textsc{InsertBinaryTrie} if, during iteration $t$ of \textsc{InsertBinaryTrie}, $iOp$
	\begin{itemize}
	\item reads that $\uNode$, the update node returned on line~\ref{ln:insert:findLatest}, has type INS on line~\ref{ln:insert:delCheck}, 
	\item reads that $t.\dNodePtr \neq \uNode$ and $t.\height > \uNode.\threshold$ on line~\ref{ln:insert:IfThreshold},
	\item reads that  $t.\height \geq \uNode.\insthreshold$ on line~\ref{ln:insert:IfMinwrite}, or
	\item performed a \textsc{MinWrite} of $t.height$ to $\uNode$ on line~\ref{ln:insert:minwrite}.
	\end{itemize}
	Intuitively, this means that $iOp$ has ensured that $t$ has interpreted bit 1.
	Note that if $iOp$ returns while performing iteration $t$ of \textsc{InsertBinaryTrie} on line~\ref{ln:insert:firstactivated2}, it does not complete iteration $t$. 
	
	We give similar definitions for \textsc{TrieDelete} operations.
	Let $dOp$ be the \textsc{TrieDelete} operation with key $x$ in configuration $C$, and
	let $\dNode$ be the DEL node created by $dOp$.
	For the leaf $\ell$ with key $x$, we say that $dOp$ has \textit{completed iteration $\ell$} in all configurations after $dOp$ is linearized.
	Now consider an internal binary trie node $t$ where $x \in U_t$.
	We say that $dOp$ has \textit{completed iteration $t$ in $C$} of \textsc{DeleteBinaryTrie} if $dOp$ performed the write with value $t.\height$ to $\dNode.\threshold$ on line~\ref{ln:delete:threshold} of \textsc{DeleteBinaryTrie}.
	

	We now prove several lemmas about update operations completing iteration $t$, for some binary trie node $t$.

	\begin{observation}\label{obs:completeIterationSeq}
	Suppose an update operation $uOp$ with key $x$ is in iteration $t$ (of \textsc{InsertBinaryTrie} for \textsc{TrieInsert} operations or \textsc{DeleteBinaryTrie} for \textsc{TrieDelete} operations). Then for each node $t'$ on the path from the parent of the leaf with key $x$ to $t$, $uOp$ has completed iteration $t'$.
	\end{observation}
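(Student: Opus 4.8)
The plan is to argue directly from the control-flow structure of \textsc{InsertBinaryTrie} and \textsc{DeleteBinaryTrie}. Both algorithms traverse the nodes on the path from the parent of the leaf with key $x$ toward the root, in increasing order of height, and neither ever restarts or revisits a node; hence for every node $t$ on this path, ``iteration $t$'' is well-defined, and the iterations of the loop occur in the order the nodes appear along the path.

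First I would pin down, for each algorithm, the exact points at which $op$ can return in the middle of an iteration, so that ``$op$ reaches the end of the loop body for node $t$ without returning'' coincides with the definition of ``$op$ completes iteration $t$.'' For \textsc{InsertBinaryTrie}, the only in-loop return is line~\ref{ln:insert:firstactivated2}; otherwise iteration $t$ ends in exactly one of three ways --- reading $\uNode.\type = \text{INS}$ on line~\ref{ln:insert:delCheck}, reading $t.\height \geq \uNode.\insthreshold$ or $t.\height > \uNode.\threshold$ on line~\ref{ln:insert:IfThreshold}, or performing the \textsc{MinWrite} on line~\ref{ln:insert:minwrite} --- which is precisely the definition of completing iteration $t$. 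For \textsc{DeleteBinaryTrie} I would likewise observe that the loop body associated with an internal node $t$ (the portion in which the loop variable equals $t$ after the assignment $t \leftarrow t.\mathit{parent}$) either returns at one of lines~\ref{ln:delete:checkIB1}, \ref{ln:delete:firstactivated1}, \ref{ln:delete:readstop1}, \ref{ln:delete:firstactivated2}, \ref{ln:delete:readstop2}, \ref{ln:delete:trieCAS2}, or~\ref{ln:delete:checkIB2}, or else runs to the write $\dNode.\threshold \leftarrow t.\height$ on line~\ref{ln:delete:threshold}, which I take as the definition of completing iteration $t$.

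Given this, the argument is short. Suppose $op$ completes iteration $t$. Then $op$ executed the loop body for $t$, so it had previously executed, in order, the loop bodies for every node $t'$ on the path strictly below $t$ and at or above the parent of the leaf with key $x$; and to advance past each such $t'$ to the next node in the traversal, $op$ must have reached the end of the loop body for $t'$ without returning, i.e., $op$ completed iteration $t'$. Together with the trivial case $t' = t$, this gives that $op$ completed iteration $t'$ for every $t'$ on the path from the parent of the leaf with key $x$ to $t$. (Equivalently, one may phrase it as a downward induction along the path.)

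The main obstacle is bookkeeping rather than depth: in \textsc{DeleteBinaryTrie} the loop variable is advanced ($t \leftarrow t.\mathit{parent}$) in the middle of the body, and the check on line~\ref{ln:delete:checkIB1} at the top of the body inspects the node processed in the previous iteration rather than the current one, so I must fix a consistent convention for which portion of the code constitutes ``iteration $t$'' and verify that, under that convention, $op$ cannot reach line~\ref{ln:delete:threshold} for $t$ without having first returned to the while-test with the loop variable equal to the child of $t$ on the path --- equivalently, without having completed iteration $t'$ for that child, when the child is internal. Once this convention is fixed and matched against the definition of completing an iteration, the result follows immediately from the fact that both loops are simple, restart-free path traversals.
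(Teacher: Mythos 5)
Your proof is correct and takes essentially the same approach as the paper's: both rely on the observation that the loop bodies of \textsc{InsertBinaryTrie} and \textsc{DeleteBinaryTrie} process nodes strictly in order from the leaf's parent toward the root, never restart, and that the only way to exit a loop body without satisfying the definition of ``completing iteration $t$'' is to return from the whole function, so reaching the body for $t$ implies completion of every earlier body. The paper's version is terser (two sentences), whereas you additionally pin down the exact in-body return points and flag the bookkeeping subtlety in \textsc{DeleteBinaryTrie} where the loop variable is advanced mid-body and the line~\ref{ln:delete:checkIB1} check inspects the previous iteration's node; this is a welcome precaution but does not constitute a different argument.
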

	
	\begin{proof}
	From the code of \textsc{InsertBinaryTrie} and \textsc{DeleteBinaryTrie}, if $uOp$ does not complete an iteration, it performs a step that causes it to return from \textsc{InsertBinaryTrie} or \textsc{DeleteBinaryTrie} during that iteration.
	Since $uOp$ completes iterations up the binary trie, starting from the leaf with key $x$ to the root, $uOp$ has previously completed iterations for all binary trie nodes on the path to $t$.
	\end{proof}
	
	\begin{lemma}\label{lemma:delCompleteIteration}
	Let $\dNode$ be the DEL node created by an $S$-modifying \textsc{TrieDelete}$(x)$ operation, $dOp$, where  $\dNode.\threshold > 0$. For each binary trie node $t$ where $t.\height \leq \dNode.\threshold$ and  $x \in U_t$, $dOp$ has completed iteration $t$ of \textsc{DeleteBinaryTrie}. 
	\end{lemma}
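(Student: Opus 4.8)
The plan is to exploit two facts about $\dNode.\threshold$: by Observation~\ref{obs:dNodeThreshold} it is written only by $op$ and only on line~\ref{ln:delete:threshold}, and each such write has the form $\dNode.\threshold \leftarrow t.\height$, where $t$ is the current loop variable of $op$'s execution of \textsc{DeleteBinaryTrie}. Since in \textsc{DeleteBinaryTrie} the loop variable is initialized to the leaf with key $x$ (of height $0$) and is advanced to its parent at the start of every iteration and never otherwise modified, the loop variable at the moment of the $i$-th write equals the unique ancestor $t_i$ of the leaf with key $x$ at height $i$, and the value written is $i$. Because $op$ never moves back down the trie, the successive values written to $\dNode.\threshold$ are $1, 2, 3, \dots$, so (as already noted in the proof of Lemma~\ref{lemma:sequenceCAS}) $\dNode.\threshold$ is incremented by exactly $1$ from its initial value $0$, and its value in any configuration equals the number of iterations of \textsc{DeleteBinaryTrie} that $op$ has carried through line~\ref{ln:delete:threshold}.

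Given this, I would argue as follows. Fix the configuration $C$ in question and let $h = \dNode.\threshold$ in $C$; by hypothesis $h \ge 1$. Since $\dNode.\threshold$ starts at $0$ and only increases, its most recent write was the write of the value $h$, performed by $op$ on line~\ref{ln:delete:threshold} while the loop variable was $t_h$ (the ancestor of the leaf with key $x$ at height $h$). Reaching and executing line~\ref{ln:delete:threshold} with loop variable $t_h$ is exactly what it means for $op$ to have completed iteration $t_h$ of \textsc{DeleteBinaryTrie}.

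Now let $t$ be any binary trie node with $x \in U_t$ and $t.\height \le h$. Since $x \in U_t$, the node $t$ lies on the path from the leaf with key $x$ to the root, so $t = t_j$ with $j = t.\height$ and $j \le h$. If $1 \le j \le h$, then $t_j$ lies on the path from the parent of the leaf with key $x$ to $t_h$, so by the observation immediately preceding this lemma --- that completing iteration $t_h$ entails having completed iteration $t'$ for every node $t'$ on that path --- $op$ has completed iteration $t_j = t$. If $j = 0$, then $t$ is the leaf with key $x$; here I would either appeal to the paper's convention that "completed iteration" is defined only for internal nodes on the path, or note that $op$'s having completed iteration $t_1$ already entails that $op$ executed the loop-body steps corresponding to the leaf, in particular the check on line~\ref{ln:delete:checkIB1}.

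I do not anticipate a genuine difficulty; the argument is bookkeeping on the control flow of \textsc{DeleteBinaryTrie}, and in particular does not rely on $op$ being the latest \textsc{TrieDelete}$(x)$ operation. The one step that requires care is the claim that $op$ cannot proceed from one iteration of the while-loop to the next without executing line~\ref{ln:delete:threshold}, which is what makes the values written to $\dNode.\threshold$ consecutive and in one-to-one correspondence with the completed iterations; verifying this amounts to checking that every path through the loop body either returns or passes through line~\ref{ln:delete:threshold}. The only other minor point is fixing the intended meaning of "completed iteration $t$" at the leaf, as noted above.
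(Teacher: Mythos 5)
Your proof is correct and takes essentially the same route as the paper: both rest on Observation~\ref{obs:dNodeThreshold} (only $op$ writes $\dNode.\threshold$, only on line~\ref{ln:delete:threshold}), the fact that each completed iteration increments $\dNode.\threshold$ by exactly one, and the definition that completing iteration $t$ is exactly writing $t.\height$. Your extra attention to the leaf (height~$0$) case is a reasonable point of care --- the paper's definition of ``completed iteration $t$'' for \textsc{DeleteBinaryTrie} implicitly applies only to internal nodes, and the paper's own one-line proof glosses over this --- but it does not represent a different approach.
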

	
	\begin{proof}
	By Observation~\ref{obs:dNodeThreshold}, only the \textsc{TrieDelete} operation that created $\dNode$ writes to $dOp$. By Lemma~\ref{lemma:sequenceCAS}, each completed iteration of \textsc{DeleteBinaryTrie} increases $\dNode.\threshold$ by 1. By definition, $dOp$ completes iteration $t$ of \textsc{DeleteBinaryTrie} immediately after it writes $t.\height$ to $\dNode$. By Observation~\ref{obs:completeIterationSeq}, $dOp$ has completed iteration $t$ of \textsc{DeleteBinaryTrie} for each node $t$ where $x \in U_t$ and $t.\height \leq \dNode.\threshold$.
	\end{proof}

	The following lemma implies that, in all configurations $C$ and all binary trie nodes $t$, either the interpreted bit of $t$ is correct or there exists a key $x$ in $U_t$ whose latest update operation has a potential update to $t$. It is the main lemma used to show that property IB1 of the interpreted bits is satisfied.
	
	\begin{lemma}\label{lemma:ib1_technical}
	For any binary trie node $t$, if there exists a key $x \in U_t$ such that the latest update operation with key $x$ is a \textsc{TrieInsert}$(x)$ operation that has completed iteration $t$ of \textsc{InsertBinaryTrie}, then $t$ has interpreted bit 1.
	\end{lemma}
	
	\begin{proof}
	We prove by induction on the configurations of the execution. In the initial configuration, the latest update operation of every key in $U$ is a dummy \textsc{TrieDelete} operation. So the lemma holds in the base case.
	


	Suppose the lemma is true in all configuration before a step $s$ by an update operation $uOp$ with key $x$, and we show that it is true for some binary trie node $t$ after $s$. Let $C'$ and $C$ be the configurations immediately before and after $s$, respectively. 
	This step by $uOp$ can only affect the truth of the claim for $t$ if $uOp$ is $S$-modifying and $x \in U_t$. 
	
	First suppose $t$ is the leaf with key $x$. If $s$ is a step in which an $S$-modifying \textsc{TrieDelete}$(x)$ operation  is linearized, then the latest update operation with key $x$ is a \textsc{TrieDelete}$(x)$ operation. So the lemma is true for $t$ in $C$.
	The only other step that affects the truth of the lemma is if $s$ is a step in which an $S$-modifying \textsc{TrieInsert}$(x)$ operation  is linearized.
	By definition, $iOp$ has completed iteration $t$ immediately after it is linearized. So immediately after $s$, $\latest[x]$ points to the INS node, $\uNode$, created by $uOp$. By definition, $t$ has interpreted bit 1 in $C$.

	Now suppose $t$ is an internal binary trie node, and suppose the lemma holds for all proper descendants of $t$ and all configurations up to $C$. We show that the lemma holds for $t$ in $C$.
	The update node that the interpreted bit of $t$ depends on can change if  $\latest[x]$ changes to point to a new update node, or if $t.\dNodePtr$ is changed to point to a DEL node with different key. This may cause the interpreted bit of $t$ to change. It may also change by decreasing the $\threshold$ field or increasing the $\insthreshold$ field of the update node that the interpreted bit of $t$ depends on. Finally, we consider the case when $s$ is by a latest \textsc{TrieInsert} operation that completes iteration $t$.
	
	\begin{itemize}
	\item Suppose $s$ is a successful CAS (on line~\ref{ln2:insert:cas_latest_head}) that changes $\latest[x]$ to point to an INS node. 
	The interpreted bit of $t$ does not change as a result of $s$ unless $t.\dNodePtr.\key = x$. In this case, it follows by definition that $t$ has interpreted bit 1 in $C$.
	
	\item Suppose $s$ is a successful CAS (on line~\ref{ln2:delete:cas_latest_head}) that changes $\latest[x]$ to point to a DEL node, $\dNode$.
	The interpreted bit of $t$ does not change as a result of $s$ unless $t.\dNodePtr.\key = x$.
	Since $\dNode$ is a newly created DEL node, $\dNode.\threshold = 0$. 
	Since $t.\height \geq 1$, it follows by definition that $t$ has interpreted bit 1 in $C$.
	
	\item Suppose $s$ is a successful CAS (on line~\ref{ln:delete:trieCAS} or line~\ref{ln:delete:trieCAS2}) that changes $t.\dNodePtr$ to point to a different DEL node.
	Only the interpreted bit of $t$ may change as a result of $s$. By Lemma~\ref{lemma:deleteUpdateTrieVal}, the interpreted bit of $t$ is 1 in $C$.
	
	\item Suppose $s$ is a step in which $uOp$ completes iteration $t$ of \textsc{InsertBinaryTrie} (as a result of a read on line~\ref{ln:insert:delCheck} or line~\ref{ln:insert:IfThreshold}, or a $\textsc{MinWrite}$ of $t.\height$ to $\dNode.\insthreshold$). We argue that the interpreted bit of $t$ is 1 in $C$.
	
	
	
	
	During iteration $t$ of $uOp$'s instance of \textsc{InsertBinaryTrie}, $uOp$ reads $t.\dNodePtr$ on line~\ref{ln:insert:findLatest}. Let the configuration immediately after this read be $C''$.
	Let $\uNode$ be the update node returned by \textsc{FindLatest}$(t.\dNodePtr)$ on line~\ref{ln:insert:findLatest}.
	By Lemma~\ref{lemma:findLatestRelaxed}, there is a configuration during this instance of \textsc{FindLatest} in which $\latest[\uNode.\key]$ points to $\uNode$. 
	
	Suppose the interpreted bit of $t$ no longer depends on  $\uNode$ in $C$, but instead on a latest update node, $\uNode' \neq \uNode$. 
	Let $uOp'$ be the update operation that created $\uNode'$.
	If $\uNode'$ is an INS node, then the interpreted bit of $t$ is 1 in $C$. So suppose  $\uNode'$ is a DEL node.
	Suppose, for contradiction, that the interpreted bit of $t$ is 0 in $C$.
	By the definition of interpreted bits, it is necessary that $\uNode'.\threshold \geq t.\height$ in $C$.
	By Observation~\ref{obs:dNodeThreshold}, only $uOp'$ changes  $\uNode'.\threshold$.
	Before $uOp'$ increments  $\uNode'.\threshold$ to equal $t.\height$, it performs a successful CAS, $cas$, on $t.\dNodePtr$ so it points to $\uNode'$ (on line~\ref{ln:delete:trieCAS} or \ref{ln:delete:trieCAS2}). 
	If $t.\dNodePtr$ is changed to point to some DEL node other than $\uNode'$, then 	Lemma~\ref{lemma:staleCASFail} and the fact that $\uNode'$ is a latest update node in $C$
	implies that the interpreted bit of $t$ cannot depend on $\uNode'$ in $C$.
	So $t.\dNodePtr$ points to  $\uNode'$ in all configurations immediately after $cas$ until $C$.
	Therefore, $cas$ occurs after $C''$, otherwise $uOp$'s instance of \textsc{FindLatest} would return $\uNode'$.
	By Observation~\ref{obs:completeIterationSeq}, $uOp$ has completed iteration $t'$ of \textsc{InsertBinaryTrie}, where $t'$ is a child of $t$, prior to the start of iteration $t$.
	By the induction hypothesis, the interpreted bit of $t'$ is 1 in all configurations from $C''$ to $C$. 
	Sometime after $uOp$ performs $cas$ but before it increments  $\uNode'.\threshold$ to equal $t.\height$, $uOp'$ verifies that \textsc{InterpretedBit}$(t')$ returns 0 on line~\ref{ln:delete:checkIB2}, and returns if not.
	However, by Lemma~\ref{lemma:IB_return1}, $uOp'$'s instance of \textsc{InterpretedBit}$(t')$ returns 1. So $uOp'$ returns before  it increments  $\uNode'.\threshold$ to equal $t.\height$. So the interpreted bit of $t$ is 1 in $C$, a contradiction.

	So suppose the interpreted bit of $t$ depends on $\uNode$ in $C$. If $s$ is a read that $\uNode.\type = \text{INS}$ on line~\ref{ln:insert:delCheck}, $s$ is a read that $t.\height \geq \uNode.\insthreshold$ on line~\ref{ln:insert:IfMinwrite}, or $s$ is a \textsc{MinWrite} of $t.\height$ to $\uNode.\insthreshold$ on line~\ref{ln:insert:minwrite}, then the interpreted bit of $t$ 1 in $C$.
	So suppose $s$ completes iteration $t$ because it reads $t.\height > \uNode.\threshold$ on line~\ref{ln:insert:IfThreshold}. It follow by definition that the interpreted bit of $t$ is 1 in $C$.
	
	
	\item Suppose $s$ is a write of $t.\height$ to $\dNode.\threshold$ on line~\ref{ln:delete:threshold} of \textsc{DeleteBinaryTrie}. The interpreted bit of $t$ may only change as a result of $s$ if the interpreted bit of $t$ depends on $\dNode$.
	If there is no key $x' \in U_t$ such that the latest update operation with key $x'$ is a \textsc{TrieInsert}$(x')$ operation that has completed iteration $t$ in $C'$, then the lemma is true in $C$. So suppose there is such an \textsc{TrieInsert}$(x')$ operation, $iOp$.
	
	Since $uOp$ is poised to perform $s$ (a write to $\dNode.\threshold$ on line~\ref{ln:delete:threshold}), it follows from the code that $uOp$ had previously performed a successful CAS, $cas$, (on line~\ref{ln:delete:trieCAS} or \ref{ln:delete:trieCAS2}) that successfully changed $t.\dNodePtr$ to point to $\dNode$. Since the interpreted bit of $t$ depends on $\dNode$ in $C$, it follows from Lemma~\ref{lemma:staleCASFail} that $t.\dNodePtr$ points to $\dNode$ in all configurations immediately after $cas$ to $C$.
	Since $iOp$ has completed iteration $t$, it follows from Observation~\ref{obs:completeIterationSeq} that it also previously completed iteration $t'$ for some child of $t$. 
	
	Suppose $cas$ occurs after $iOp$ completes iteration $t'$. By the induction hypothesis, the interpreted bit of $t'$ is 1 from when it competed iteration $t'$ to $C$. It follows from the code that, sometime between $cas$ and $s$,  $uOp$ performs \textsc{InterpretedBit}$(t')$ on line~\ref{ln:delete:checkIB2}. By Lemma~\ref{lemma:IB_return1}, $uOp$'s instance of \textsc{InterpretedBit}$(t')$ returns 1. So $uOp$ returns before it performs $s$, a contradiction.
	
	So suppose $cas$ occurs before $iOp$ completes iteration $t'$. Consider iteration $t$ of $iOp$'s instance of \textsc{InsertBinaryTrie}. 
	Since $\dNode$ is pointed to by $t.\dNodePtr$ and is the latest update node with key $x'$ throughout this iteration, it follows 
	from Observation~\ref{lemma:findLatestRelaxed} that $iOp$'s instance of \textsc{FindLatest}$(t.\dNodePtr)$ on line~\ref{ln:insert:findLatest} returns $\dNode$.
	So the if-statement on line~\ref{ln:insert:delCheck} is true.
	Since $iOp$'s read of $t.\dNodePtr$ always returns $\dNode$, the if-statement on line~\ref{ln:insert:IfThreshold} is true.
	Since $\iNode$ is the latest update node with key $x'$ throughout iteration $t$, \textsc{FirstActivated}$(\iNode)$ on line~\ref{ln:insert:firstactivated2} returns \textsc{True}. If follows from the code that, $iOp$ either completes iteration $t$ by reading $t.\height \geq \uNode.\insthreshold$ on line~\ref{ln:insert:IfMinwrite}
	or
	performing a min-write of $t.\height$ to $\dNode.\insthreshold$ on line~\ref{ln:insert:minwrite}. 
	Since $\dNode.\insthreshold$ is strictly decreasing, it follows that $\dNode.\insthreshold \leq t.\height$ in $C$. 
	By definition, the interpreted bit of $t$ is 1 in $C$.

	\end{itemize}
\end{proof}

Lemma~\ref{lemma:ib1_technical} can be used to prove property IB1 of the interpreted bits.

\begin{lemma}\label{lemma:ib1}
For all binary trie nodes $t$ and configurations $C$, if there exists $x \in U_t \cap S$ such that the last $S$-modifying \textsc{TrieInsert}$(x)$ operation linearized prior to $C$ is no longer active,
then the interpreted bit of $t$ is 1 in $C$.
\end{lemma}

\begin{proof}
Let $iOp$ be the last $S$-modifying \textsc{TrieInsert}$(x)$ operation linearized prior to $C$. 
Since $x \in S$, $iOp$ is the latest $S$-modifying update operation with key $x$ in $C$.
Since $iOp$ is the latest update operation with key $x$ in all configurations during $iOp$, Lemma~\ref{lemma:firstActivatedRelaxed_false} states that any instance of \textsc{FirstActivated}$(\iNode)$ invoked by $iOp$ returns \textsc{True}. From the code of \textsc{InsertBinaryTrie}, $iOp$ does not return from \textsc{InsertBinaryTrie} unless \textsc{FirstActivated}$(\iNode)$ returns \textsc{False} on line~\ref{ln:insert:firstactivated2}. Therefore, for all binary trie nodes $t$ where $x \in U_t$,
$iOp$ has completed iteration $t$ of \textsc{InsertBinaryTrie} in configuration $C$. By Lemma~\ref{lemma:ib1_technical}, 
the interpreted bit of $t$ is 1 in $C$.
\end{proof}





We now focus on showing that property IB0 of the interpreted bits is satisfied. We begin by introducing a few definitions used to indicate which \textsc{TrieDelete} operations may change the interpreted bit of a binary trie node $t$ in a future configuration.

Let $dOp$ be an $S$-modifying \textsc{TrieDelete}$(x)$ operation that created a DEL node, $\dNode$. Let $t$ be a binary trie node such that $x \in U_t$.
We say that $dOp$ is \textit{stopped} if $\dNode.\stopflag = \textsc{True}$ or $\dNode.\insthreshold \neq b+1$. Note that this means $dOp$ will return if it executes line~\ref{ln:delete:readstop1} or \ref{ln:delete:readstop2} of \textsc{DeleteBinaryTrie}.
We say that $dOp$ has a \textit{potential update to $t$} in a configuration $C$ if
\begin{itemize}
\item $dOp$ is the latest update operation with key $x$,
\item $dOp$ is not stopped,
\item $dOp$ has not yet completed iteration $t$ of \textsc{DeleteBinaryTrie},
\item $dOp$ has not yet returned from \textsc{DeleteBinaryTrie}, and
\item one of the following is true:
\begin{itemize}
\item $dOp$ has not yet read $t.\dNodePtr$ on line~\ref{ln:delete:readPtr2},
\item  $dOp$ has read $t.\dNodePtr$ on line~\ref{ln:delete:readPtr2} but not yet performed a CAS on $t.\dNodePtr$ on line~\ref{ln:delete:trieCAS2} and $t.\dNodePtr$ has not changed since $dOp$'s last read of $t.\dNodePtr$, or
\item  $dOp$ has successfully performed a CAS on $t.\dNodePtr$ (on line~\ref{ln:delete:trieCAS} or \ref{ln:delete:trieCAS2}) and $t.\dNodePtr$ has not changed since this CAS.
\end{itemize}

\end{itemize}
A dummy \textsc{TrieDelete} operation does not have a potential update to any binary trie node in any configuration.
When $dOp$ has a potential update to $t$ in $C$, this means that if the lines~\ref{ln:delete:checkIB1} and \ref{ln:delete:checkIB2} of \textsc{DeleteBinaryTrie} are ignored, $dOp$ will complete iteration $t$ of \textsc{DeleteBinaryTrie} in its solo continuation from $C$, changing the interpreted bit of $t$ to 0.

Consider a configuration $C$ and binary trie node $t$ in which all the latest $S$-modifying update operations with keys in $U_t$ are \textsc{TrieDelete} operations that have invoked \textsc{DeleteBinaryTrie}.
Let $D(t,C)$ denote the earliest configuration before $C$ in which 
all the latest $S$-modifying \textsc{TrieDelete} operations with keys in $U_t$ have invoked \textsc{DeleteBinaryTrie}
and are the same as those in $C$.
Let $OP(t,C)$ be the set of latest \textsc{TrieDelete} operations with keys in $U_t$ that have a potential update to $t$ in $D(t,C)$.
Note that in the step immediately before $D(t,C)$, a latest \textsc{TrieDelete} operation 
invokes \textsc{DeleteBinaryTrie}.
This \textsc{TrieDelete} operation has a potential update to $t$, so $OP(t,C)$ is non-empty.



The next lemma describes when writes to $\dNode.\insthreshold$ or $\dNode.\stopflag$ may occur, for some DEL node $\dNode$. These writes are only performed by \textsc{TrieInsert} operations (or \textsc{TrieDelete} operations helping \textsc{TrieInsert} operations). Intuitively, once the latest update operations for all keys in $U_t$ become \textsc{TrieDelete} operations  that have invoked \textsc{DeleteBinaryTrie}, there are no longer any latest \textsc{TrieInsert} operations with keys in $U_t$ that perform these writes.

\begin{lemma}\label{lemma:no_minwrite}
Consider a configuration $C$ and a binary trie node $t$ such that the latest update operations for all keys in $U_t$ are \textsc{TrieDelete} operations that have invoked \textsc{DeleteBinaryTrie}.
Let $\dNode$ be the DEL node created by the latest \textsc{TrieDelete}$(x)$ operation, where $x \in U_t$.
For any key $x' \in U_t$,  
no update operation with key $x'$ writes to $\dNode.\insthreshold$ (on line~\ref{ln:insert:minwrite}) or $\dNode.\stopflag$ (on line~\ref{ln2:delete:setStop} or line~\ref{ln2:insert:helpStop}) between $D(t,C)$ and $C$.
\end{lemma}

\begin{proof}
Suppose, for contradiction, 
that in a step $s$, some update operation with key $x' \in U_t$ writes to $\dNode.\insthreshold$ (on line~\ref{ln:insert:minwrite}) or $\dNode.\stopflag$ (on line~\ref{ln2:delete:setStop} or line~\ref{ln2:insert:helpStop}) between $D(t,C)$ and $C$.

First suppose $s$ is a write to $\dNode.\stopflag$ (on line~\ref{ln2:delete:setStop} or line~\ref{ln2:insert:helpStop}).
This only occurs after some $S$-modifying \textsc{TrieInsert}$(x')$ operation has set $\iNode.\target = \dNode$ (on line~\ref{ln:insert:setTarget}), where $\iNode$ is the INS node that it created.
If $s$ occurs on line~\ref{ln2:delete:setStop}, then it follows from the code of \textsc{TrieDelete} and the CAS on line~\ref{ln2:delete:cas_latest_head} that
$s$ is performed by a \textsc{TrieDelete}$(x')$ operation, $dOp'$, linearized immediately after this \textsc{TrieInsert} operation.
Otherwise $s$ occurs on line~\ref{ln2:insert:helpStop}, in which case a \textsc{TrieInsert}$(x')$ operation is helping $dOp'$ perform this write (before attempting to change $\latest[x]$ to point to its own INS node on line~\ref{ln2:insert:cas_latest_head}).
In either case, $s$ occurs when $dOp'$ is the latest update operation with key $x'$ and $dOp'$ has not yet invoked \textsc{DeleteBinaryTrie}. 
Since $x' \in U_t$, $s$ must be before $D(t,C)$. 

Now suppose $s$ is a write to  $\dNode.\insthreshold$ (on line~\ref{ln:insert:minwrite}). So $s$ is performed by a \textsc{TrieInsert}$(x')$ operation, $iOp$, and let $\iNode$ be the INS node it created.
Prior to $s$, $iOp$ set $\iNode.\target = \dNode$ on line~\ref{ln:insert:setTarget}, and $iOp$'s instance of \textsc{FirstActivated}$(\iNode)$ on line~\ref{ln:insert:firstactivated2} returns \textsc{True}. By Observation~\ref{lemma:firstActivatedRelaxed_true}, there is a configuration $C'$ during this instance of \textsc{FirstActivated}$(\iNode)$ in which $\iNode$ is the latest update node with key $x'$. 
Since $x' \in U_t$ and $iOp$ is a latest \textsc{TrieInsert}$(x')$ operation, it follows by definition that $C'$ occurs before $D(t,C)$. 

Since $s$ occurs after $D(t,C)$,
it follows from the code that $\iNode.\target$ is not changed between $C'$ and $D(t,C)$, and $\iNode.\target = \dNode$ during this interval.
Consider the $S$-modifying \textsc{TrieDelete}$(x')$ operation, $dOp'$, linearized after $iOp$. So $dOp'$ was linearized sometime between $C'$ and $D(t,C)$. 
Prior to invoking \textsc{DeleteBinaryTrie} (and hence prior to $D(t,C)$), it reads  $\iNode.\target = \dNode$.

If $dOp'$ invokes \textsc{DeleteBinaryTrie} before $D(t,C)$, then it sets $\dNode.\stopflag = \textsc{True}$ (on line~\ref{ln2:delete:setStop}). So $s$ occurs before $D(t,C)$.
If not, then since the latest update operations for all keys in $U_t$ in $D(t,C)$ are \textsc{TrieDelete} operations that have invoked \textsc{DeleteBinaryTrie}, there must be an $S$-modifying \textsc{TrieInsert}$(x')$ operation and $S$-modifying \textsc{TrieDelete}$(x')$ operation linearized after  $dOp'$ but before $D(t,C)$.
Let  $iOp''$ be this \textsc{TrieInsert}$(x')$ operation linearized after $dOp'$ but before $D(t,C)$.
Then before $iOp''$ is linearized, $iOp''$ helps $dOp'$ set $\dNode.\stopflag = \textsc{True}$ (on line~\ref{ln2:insert:helpStop}). 
So $s$ occurs before $D(t,C)$.
\end{proof}

The following lemma is a stronger version of Lemma~\ref{lemma:no_minwrite} that applies to the DEL node of a \textsc{TrieDelete}$(x)$ operation in $OP(t,C)$.

\begin{lemma}\label{lemma:no_minwrite_OP}
Consider a configuration $C$ and a binary trie node $t$ such that the latest update operations for all keys in $U_t$ are \textsc{TrieDelete} operations that have invoked \textsc{DeleteBinaryTrie}.
Let $\dNode$ be the DEL node created by a \textsc{TrieDelete}$(x)$ operation, $dOp$, in $OP(t,C)$, where $x \in U_t$.
For any key $x' \in U_t$,  
no update operation with key $x'$ writes to $\dNode.\insthreshold$ (on line~\ref{ln:insert:minwrite}) or $\dNode.\stopflag$ (on line~\ref{ln2:delete:setStop} or line~\ref{ln2:insert:helpStop}) prior to $C$.
\end{lemma}

\begin{proof}
Suppose, for contradiction, that during a step $s$, some update operation with key $x'$ writes to  $\dNode.\threshold$ (on line~\ref{ln:delete:threshold}) or $\dNode.\stopflag$ (on line~\ref{ln2:delete:setStop} or line~\ref{ln2:insert:helpStop}) prior to $C$.
By Lemma~\ref{lemma:no_minwrite}, $s$ occurs before $D(t,C)$. Since $dOp$ is stopped by $s$ before $D(t,C)$, $dOp$ does not have a potential update to $t$ in $D(t,C)$. So $dOp \notin OP(t,C)$, a contradiction. 
\end{proof}

The most important property of the \textsc{TrieDelete} operations in $OP(t,C)$, which we prove in the following lemma, is that they are not stopped in $C$ provided they still have a potential update to $t$.

\begin{lemma}\label{lemma:no_flag}
Consider a configuration $C$ and a binary trie node $t$ such that the latest update operations for all keys in $U_t$ are \textsc{TrieDelete} operations that have invoked \textsc{DeleteBinaryTrie}.
For any \textsc{TrieDelete}$(x)$ operation, $dOp$, in $OP(t,C)$ that still has a potential update to $t$ in $C$, $dOp$ is not stopped by any step prior to $C$.
\end{lemma}

\begin{proof}
Suppose, for contradiction, that $dOp$ is stopped by a step $s$, where $s$ occurs prior to $C$.
So $s$ is a write to  $\dNode.\insthreshold$ (on line~\ref{ln:insert:minwrite}) or $\dNode.\stopflag$ (on line~\ref{ln2:delete:setStop} or line~\ref{ln2:insert:helpStop}) prior to $C$.
In either case, some $S$-modifying \textsc{TrieInsert}$(x')$ operation, $iOp$, has set $\iNode.\target = \dNode$ (on line~\ref{ln:insert:setTarget}), where $\iNode$ is the INS node that it created.
Suppose this write to $\iNode.\target$ occurs during iteration $t'$ of $iOp$'s instance of \textsc{InsertBinaryTrie}.
In the line of code prior, $iOp$ reads that $t'.\dNodePtr = \dNode$ or $t'.\height\leq \dNode.\threshold$ on line~\ref{ln:insert:IfThreshold}. 
If $t'.\height\leq \dNode.\threshold$, then Lemma~\ref{lemma:delCompleteIteration} implies that $dOp$ must have completed iteration $t'$ of \textsc{DeleteBinaryTrie} to successfully increment $\dNode.\threshold$ to equal $t'.\height$.
So in either case, $dOp$ has previously performed a successful CAS on  $t'.\dNodePtr$ to change it to point to $\dNode$. 
Since $dOp$ has a potential update to $t$ in $C$ and $dOp$ completes iterations in order from leaf to root, this implies $t'$ is a descendant of $t$ (i.e. either $t' = t$ or $t'$ is a proper descendant of $t$).
So $x' \in U_t$. By Lemma~\ref{lemma:no_minwrite}, $s$ does not occur prior to $C$, a contradiction.


\end{proof}

The next lemma is used to show property IB0 is satisfied by the interpreted bits.

\begin{lemma}\label{lemma:ib0_technical}

Consider any configuration $C$ and binary trie node $t$.
If the latest update operation for all keys in $U_t$ in $C$ are \textsc{TrieDelete} operations
that have invoked \textsc{DeleteBinaryTrie}
and no \textsc{TrieDelete} operation in $OP(t,C)$ has a potential update to $t$ in $C$, then
\begin{enumerate}
\item[(a)] $t$ has interpreted bit 0 in $C$, and
\item[(b)] when $t$ is not a leaf, $t.\dNodePtr$ points to a DEL node created by a \textsc{TrieDelete} operation in $OP(t,C)$.
\end{enumerate}
\end{lemma}

\begin{proof}

The proof is by induction on the configurations of the execution and the height of binary trie nodes.

In the initial configuration, the latest update operation of every key in $U$ is a dummy \textsc{TrieDelete} operation and each entry in $\latest$ points to a dummy update node.
The fields of the dummy update nodes are initialized so that the interpreted bits of all binary trie nodes are 0.
Since the dummy operations do not have a potential update to any binary trie node,
the lemma is true in the initial configuration.


Consider any other configuration $C$ of the execution, and assume the lemma holds for all binary trie nodes in all configurations before $C$.
Let $C'$ be the configuration immediately before $C$ and let $s$ be the step performed immediately following $C'$ that results in $C$. 

Consider a leaf $\ell$ of the binary trie with key $x$. 
Consider the latest update operation with key $x$ in $C$, which is a \textsc{TrieDelete}$(x)$ operation, $dOp$, that has invoked \textsc{DeleteBinaryTrie}. By definition, it has completed its update to $\ell$ immediately after it is linearized. By definition, $\latest[x]$ points to the DEL node, $\dNode$, created by $dOp$ in $C$.
Since it always holds that $\dNode.\threshold \geq \ell.\height = 0$ and $\dNode.\insthreshold > 0$, the interpreted bit of $t$ is 0.


Now suppose $t$ is an internal binary trie node. We assume the lemma is true in $C'$ for all binary trie nodes that are proper descendants of $t$.
Only steps by $S$-modifying update  operations with keys in $U_t$  can affect the truth of the lemma.
So suppose $s$ is performed by an $S$-modifying update operation.


By definition,
step $s$ can change the interpreted bit of $t$ if and only if it is a successful CAS that changes $t.\dNodePtr$, an update to $\dNode.\insthreshold$ or $\dNode.\threshold$, where 
$\dNode$ is the DEL node pointed to by $\latest[t.\dNodePtr.\key]$, or a successful CAS that changes $\latest[t.\dNodePtr.\key]$ to point to a different update node. Changing $t.\dNodePtr$ can also cause a \textsc{TrieDelete} operation to no longer have a potential update to $t$.


Step $s$ can stop a \textsc{TrieDelete} operation if and only if it is a write to $\dNode.\stopflag$ or $\dNode.\insthreshold$ for some DEL node $\dNode$. This step results in the operation that created $\dNode$ from no longer having a potential update to $t$.

Step $s$ can cause a \textsc{TrieDelete} operation to return from \textsc{DeleteBinaryTrie} if it reads $\dNode.\stopflag = \textsc{True}$, $\dNode.\insthreshold \neq b+1$, \textsc{FirstActivated}$(\dNode)$ returns \textsc{False}, or is an unsuccessful CAS on $t.\dNodePtr$.


We consider each type of step $s$ performed by an update operation with key $x \in U_t$, and argue the lemma holds for $t$ in $C$. We first consider steps in which $S$-modifying update operations with key $x$ are linearized or invoke \textsc{DeleteBinaryTrie}.

\begin{itemize}
\item Suppose $s$ is a successful CAS (on line~\ref{ln2:insert:cas_latest_head}) that changes $\latest[x]$ to point to an INS node, $\iNode$. 
So $s$ is performed by a \textsc{TrieInsert}$(x)$ operation, $iOp$, that created $\iNode$.
In $C$, $iOp$ is the latest update operation with key $x \in U_t$.
Since the latest update operations for all keys in $U_t$ are not  \textsc{TrieDelete} operations, the lemma is true for $t$ in $C$.

\item Suppose $s$ is a successful CAS (on line~\ref{ln2:delete:cas_latest_head}) that changes $\latest[x]$ to point to a DEL node, $\dNode$.
So $s$ is performed by a \textsc{TrieDelete}$(x)$ operation, $dOp$, that created $\dNode$. 
In $C$, $dOp$ is the latest update operation with key $x$.
It follows from the code that $dOp$ has not yet invoked \textsc{DeleteBinaryTrie}, so the lemma is true for $t$ in $C$.

\item Suppose $s$ is a step where a latest \textsc{TrieDelete}$(x)$ operation, $dOp$, invokes its instance of \textsc{DeleteBinaryTrie}. 
If there is another latest update operation with key in $U_t$ that has not yet invoked \textsc{DeleteBinaryTrie}, then the lemma is true for $t$ in $C$.
So, after $s$, the latest update operation for all keys in $U_t$ in $C$ are \textsc{TrieDelete} operations
that have invoked \textsc{DeleteBinaryTrie}. 
So $C = D(t,C)$.

We show that $dOp$ has a potential update to $t$ in $C$, and hence the lemma is true for $t$ in $C$.
If $dOp$ is not the latest update operation with key $x$, then $s$ does not affect the truth of the lemma. Then since the lemma is true for $t$ in $C'$, it is also true for $t$ in $C$.
So $dOp$ is the latest update operation with key $x$.
Since $dOp$ has not yet completed any iterations of \textsc{DeleteBinaryTrie}, $\dNode.\threshold = 0$.
By line~\ref{ln:insert:IfThreshold} of \textsc{InsertBinaryTrie}, no \textsc{TrieInsert} operation writes to $\dNode.\insthreshold$ or sets $\iNode.\target = \dNode$, where $\iNode$ is the INS node created by the  \textsc{TrieInsert} operation prior to $C$. So $dOp$ is  not stopped in $C$.
It follows by definition that $dOp$ has a potential update to $t$ in $C$ and $dOp \in OP(t,C)$.


\end{itemize}

In the remainder of cases for the types of step $s$, we assume that in $C$ (and $C'$), the latest update operation for all keys in $U_t$ are \textsc{TrieDelete} operations that have invoked \textsc{DeleteBinarayTrie}. If not, the lemma will be true for $t$ in $C$.

\begin{itemize}

\item Suppose $s$ is a write that sets $\dNode'.\stopflag$ to \textsc{True} of \textsc{TrieDelete} to some DEL node, $\dNode'$, with key $x$.
Then $s$ is either performed by a \textsc{TrieDelete}$(x)$ operation on line~\ref{ln2:delete:setStop}, or by a \textsc{TrieInsert}$(x)$ operation on line~\ref{ln2:insert:helpStop}.
Let $dOp'$ be \textsc{TrieDelete}$(x)$ operation that created $\dNode'$. 

Suppose $s$ falsifies the lemma by causing $dOp'$ to no longer have a potential update to $t$ in $C$ and $dOp'$ is the only latest update operation with key in $U_t$ that has a potential update to $t$ in $C'$.
Since $dOp'$ has a potential update to $t$ in $C'$, it by definition has a potential update to $t$ in $D(t,C)$, so $dOp' \in OP(t,C)$.
By Lemma~\ref{lemma:no_flag}, $dOp'$ is not stopped by any step prior to $C$, a contradiction.

\item Suppose $s$ is a $\textsc{MinWrite}$ of $t.\height$ to $\dNode.\insthreshold$ on line~\ref{ln:insert:minwrite} of \textsc{InsertBinaryTrie}, for some DEL node $\dNode'$, where $x' = \dNode.\key$. 
Let $dOp'$ be the \textsc{TrieDelete}$(x')$ that created $\dNode'$, and let $iOp$ be the \textsc{TrieInsert}$(x)$ operation that performed $s$.

Suppose $s$ falsifies the lemma by causing $dOp'$ to no longer have a potential update to $t$ in $C$ and $dOp'$ is the only latest update operation with key in $U_t$ that has a potential update to $t$ in $C'$.
Since $dOp'$ has a potential update to $t$ in $C'$, it by definition has a potential update to $t$ in $D(t,C)$, so $dOp' \in OP(t,C)$.
By Lemma~\ref{lemma:no_flag}, $dOp'$ is not stopped by any step prior to $C$, a contradiction.

The only other way $s$ can falsify the lemma is by changing the interpreted bit of $t$ to 1.
So in $C'$ and $C$, the latest update operation for all keys in $U_t$ in $C$ are \textsc{TrieDelete} operations
that have invoked \textsc{DeleteBinaryTrie}, and none of these have a potential update to $t$. 
The interpreted bit of $t$ is 0 in $C'$, but is changed to 1 by $s$.
The interpreted bit of $t$ is not affected by $s$ unless it depends on $\dNode'$, so $t.\dNodePtr = \dNode'$ and $\latest[x']$ points to  $\dNode'$.
By the induction hypothesis (Lemma~\ref{lemma:ib0_technical}(b)), $dOp' \in OP(t,C)$. By Lemma~\ref{lemma:no_minwrite}, no update operation with key in $x$ writes to $\dNode'.\insthreshold$, so $s$ does not occur, a contradiction.


\item  Suppose $s$ is a step where a \textsc{TrieDelete}$(x)$ operation, $dOp$, in $OP(t,C)$ returns because some instance of \textsc{FirstActicated}$(\dNode)$ returns \textsc{False} (on line~\ref{ln:delete:firstactivated1} or line~\ref{ln:delete:firstactivated2}) during iteration $t$ of \textsc{DeleteBinaryTrie}.

By Observation~\ref{lemma:firstActivatedRelaxed_false}, there is a configuration during this instance of \textsc{FirstActivated}$(\dNode)$ in which $dOp$ is not the latest update operation with key $x$. So $dOp$ is not the latest update operation with key $x$ in $C'$. 
So $s$ does not affect the truth of the lemma.
Since the lemma is true for $t$ in $C'$, it is also true for $t$ in $C$.

\item  Suppose $s$ is a step where a \textsc{TrieDelete}$(x)$ operation, $dOp$, returns because it reads $\dNode.\stopflag = \textsc{True}$ or $\dNode.\threshold \neq b+1$ on line~\ref{ln:delete:readstop1} or line~\ref{ln:delete:readstop2} during iteration $t$ of \textsc{DeleteBinaryTrie}, where $\dNode$ is the DEL node $dOp$ created.

By definition, $dOp$ is stopped in $C'$ and $C$ and does not have a potential update to $t$ in $C'$ and $C$. 
So $s$ does not affect the truth of the lemma.
Since the lemma is true for $t$ in $C'$, it is also true for $t$ in $C$.

\item  Suppose $s$ is the step where a \textsc{TrieDelete}$(x)$ operation, $dOp$, returns from \textsc{DeleteBinaryTrie} because \textsc{InterpretedBit}$(t.\mathit{left})$ or \textsc{InterpretedBit}$(t.\mathit{right})$ returns 1.
Without loss of generality, suppose  \textsc{InterpretedBit}$(t.\mathit{left})$ returns 1.

By Lemma~\ref{lemma:IB_return0}, there is a configuration during this instance of  \textsc{InterpretedBit}$(t.\mathit{left})$ in which the interpreted bit of $t.\mathit{left}$ is 1.
The induction hypothesis for the children of $t$ implies that either there is a latest update operation with key in $U_{t.\mathit{left}}$ that is a \textsc{TrieInsert} operation, or there is a \textsc{TrieDelete} operation with key in $U_t$ that has a potential update to $t.\mathit{left}$. Since $U_{t.\mathit{left}} \subsetneq U_t$, the lemma is true for $t$ in $C$.

\item Suppose $s$ is the second unsuccessful CAS performed by a \textsc{TrieDelete}$(x)$ operation, $dOp$, during iteration $t$ of \textsc{DeleteBinaryTrie}.
Then $t.\dNodePtr$ has changed since $dOp$'s last read of $t.\dNodePtr$.
By definition, $dOp$ does not have a potential update to $t$ in $C'$ and $C$. Since the lemma is true for $t$ in $C'$, it must also be true for $t$ in $C$.

\item Suppose $s$ is a successful CAS that changes $t.\dNodePtr$ to point to DEL node, $\dNode$. So $s$ is performed by a \textsc{TrieDelete}$(x)$ operation, $dOp$, that created $\dNode$.

If there are still latest \textsc{TrieDelete} operations with keys in $U_t$ that have a potential update to $t$ in $C$, the lemma is true for $t$ in $C$. 
So in $C$, suppose no latest \textsc{TrieDelete} operations have a potential update to $t$. 
Note that if there is a latest \textsc{TrieDelete} operation with a key in $U_t$ with a potential update to $t$ in $C'$ that is in iteration $t'$ of \textsc{DeleteBinaryTrie}, where $t'$ is a descendant of $t$, then this \textsc{TrieDelete} operation will still have a potential update to $t$ in $C$.
So all latest update operations with keys in $U_t$ are in iteration $t$ of \textsc{DeleteBinaryTrie}.
Furthermore, these operations either already performed a successful CAS during iteration $t$ of \textsc{DeleteBinaryTrie} or have read $t.\dNodePtr$ on line~\ref{ln:delete:readPtr2} but not yet attempted their second CAS during iteration $t$ of \textsc{DeleteBinaryTrie}.


Suppose that in $C'$, $dOp$ is the latest update operation with key $x$ and $dOp$ is not stopped. 
Then $dOp$ has a potential update to $t$ in $C$. So the lemma is true for $t$ in $C$.


So suppose that, in $C'$, $dOp$ is stopped or is not the  latest update operation with key $x$. For each of these cases, we first argue that $t.\dNodePtr$ does not change between $D(t,C)$ and $C'$.
Suppose that in $C'$, $dOp$ is not the latest update operation with key $x$ in $C'$. 
Prior to $dOp$ performing $s$, $dOp$ reads $t.\dNodePtr$ (on line~\ref{ln:delete:readPtr} or \ref{ln:delete:readPtr2}), and then reads that $\textsc{FirstActivated}(\dNode) = \textsc{True}$ in the following line of code on line~\ref{ln:delete:firstactivated1} or line~\ref{ln:delete:firstactivated2}. 
So there is a latest $\textsc{TrieDelete}(x)$ operation, $dOp'$, that updates $\latest[x]$ to point to its own DEL node, $\dNode'$, sometime after $dOp$ reads that $\textsc{FirstActivated}(\dNode) = \textsc{True}$. 
So $D(t,C)$ occurs after $dOp$ reads $t.\dNodePtr$ (on line~\ref{ln:delete:readPtr} or \ref{ln:delete:readPtr2}). Since $s$ is successful, $t.\dNodePtr$ does not change between $D(t,C)$ and $C'$. 

Now we consider the case when $dOp$ is  stopped in $C'$.
Since $dOp$ is in iteration $t$ of \textsc{DeleteBinaryTrie}, it may only be stopped by update operations with keys in $U_t$.
Since $dOp$ is stopped in $C$, it follows by Lemma~\ref{lemma:no_minwrite} that $dOp \notin OP(t,C)$ and $dOp$ is stopped in a configuration before $D(t,C)$.
Prior to $dOp$ performing $s$, $dOp$ reads $t.\dNodePtr$ (on line~\ref{ln:delete:readPtr} or \ref{ln:delete:readPtr2}), and then reads that it is not stopped on line~\ref{ln:delete:readstop1} or line~\ref{ln:delete:readstop2}. 
So $D(t,C)$ occurs after $dOp$ reads $t.\dNodePtr$ (on line~\ref{ln:delete:readPtr} or \ref{ln:delete:readPtr2}). Since $s$ is successful, $t.\dNodePtr$ does not change between $D(t,C)$ and $C'$. 
Therefore, in either case, $t.\dNodePtr$ does not change between $D(t,C)$ and $C'$, so no successful CAS is performed on  $t.\dNodePtr$  between $D(t,C)$ and $C'$.

We now show that there is still a \textsc{TrieDelete}$(x)$ operation in $OP(t,C)$ with a potential update to $t$ in $C$.
Let $t_\ell$ and $t_r$ be the left and right children of $t$, respectively. Note that both $D(t_\ell,C)$ and $D(t_r,C)$ occur at or before $D(t,C)$.
By the induction hypothesis (Lemma~\ref{lemma:ib0_technical}(a)), $t_\ell$ and $t_r$ have interpreted bit 0 in $C'$. 
By the induction hypothesis (Lemma~\ref{lemma:ib0_technical}(b),  $t_\ell.\dNodePtr$ points to a \textsc{TrieDelete} operation in $OP(t_\ell,C)$ in $C'$ and  $t_r.\dNodePtr$ points to a \textsc{TrieDelete} operation in $OP(t_r,C)$ in $C'$.
Let $dOp'$ be the last operation in $OP(t,C)$ to start iteration $t$ of \textsc{DeleteBinaryTrie} prior to $C'$.
So $dOp'$ starts iteration $t$ of \textsc{DeleteBinaryTrie} sometime after $D(t,C)$.
Since $t_\ell$ and $t_r$ have interpreted bit 0 from when $dOp'$ starts iteration $t$ of \textsc{DeleteBinaryTrie} to $C'$, it follows by Lemma~\ref{lemma:IB_return0} that $dOp'$ does not return from \textsc{DeleteBinaryTrie} on line~\ref{ln:delete:checkIB1} or line~\ref{ln:delete:checkIB2} prior to $C'$.
Since $dOp'$ is a latest update operation from when it is linearized to $C'$, it follows by Observation~\ref{lemma:firstActivatedRelaxed_false} that $dOp'$ does not return from \textsc{DeleteBinaryTrie} on line~\ref{ln:delete:firstactivated1} or line~\ref{ln:delete:firstactivated1} prior to $C'$.
By Lemma~\ref{lemma:no_flag},  $dOp'$ is not stopped in $C'$. 
It follows from the code that $dOp'$ may only return from iteration $t$ of \textsc{DeleteBinaryTrie} by performing an unsuccessful CAS (on line~\ref{ln:delete:trieCAS} or \ref{ln:delete:trieCAS2}).

Suppose $dOp'$ performs a CAS on $t$ (on line~\ref{ln:delete:trieCAS})  before $C'$. Since $dOp'$'s last read of $t.\dNodePtr$ (on line~\ref{ln:delete:readPtr}) occurs between  $D(t,C)$ and $C'$, this CAS will be successful. 
This contradicts the fact that $t.\dNodePtr$ does not change between $D(t,C)$ and $C'$.
So $dOp'$ has not yet performed a CAS on $t$ (on line~\ref{ln:delete:trieCAS}).		
So $dOp'$ still has another attempt to change $t.\dNodePtr$ to point to its own DEL node (on line~\ref{ln:delete:trieCAS2}) in $C'$.
It follows from the code that $dOp'$ has a potential update to $t$ in $C$. So the lemma is true for $t$ in $C$.





\item Suppose $s$ is a step where a \textsc{TrieDelete}$(x)$ operation, $dOp$, completes iteration $t$ of \textsc{DeleteBinaryTrie} because it writes $t.\height$ to $\dNode.\threshold$ (on line~\ref{ln:delete:threshold}), where $\dNode$ is the DEL node created by $dOp$.


If there are still latest \textsc{TrieDelete} operations in $OP(t,C)$ with keys in $U_t$ that have a potential update to $t$ in $C$, the lemma is true for $t$ in $C$. 
So in $C$, suppose no latest \textsc{TrieDelete} operations in $OP(t,C)$ have a potential update to $t$.
Since $s$ can only cause $dOp$ to no longer have a potential update to $t$, it follows that, in $C'$, $dOp$ is the only 
\textsc{TrieDelete}$(x)$ operation in $OP(t,C)$ that has a potential update to $t$.
By definition, $dOp$ is the last operation prior to $C'$ to perform a successful CAS on $t.\dNodePtr$, changing it to point to $\dNode$.
Since $dOp$ is the latest update operation with key $x$ in $C'$, the interpreted bit of $t$ depends on $\dNode$. After $dOp$ performs $s$, $\dNode.\threshold = t.\height$. By definition, the interpreted bit of $t$ is 1, and $t.\dNodePtr$ points to a DEL node created by a \textsc{TrieDelete} operation in  $OP(t,C)$. So the lemma is true for $t$ in $C$.

\end{itemize}
\end{proof}

The following lemma states that property IB0 of the interpreted bits is satisfied. It is proved using Lemma~\ref{lemma:ib0_technical}.

\begin{lemma}\label{lemma:ib0}
For all binary trie nodes $t$ and configurations $C$,
if $U_t \cap S = \emptyset$ and for all $x \in U_t$, 
either there has been no $S$-modifying \textsc{TrieDelete}$(x)$ operation or
the last $S$-modifying \textsc{TrieDelete}$(x)$ operation linearized prior to $C$ is no longer active,
then the interpreted bit of $t$ is 0 in $C$.
\end{lemma}

\begin{proof}
Since $U_t \cap S = \emptyset$, it follows by definition that the latest update operation for all keys in $U_t$ are \textsc{TrieDelete} operations.
For any  $x \in U_t$, 
when there has been no $S$-modifying \textsc{TrieDelete}$(x)$ operation linearized, the latest update operation with key $x$ is the dummy \textsc{TrieDelete}$(x)$ operation. By definition, it does not have a potential update to any update node in any configuration.
If the last $S$-modifying \textsc{TrieDelete}$(x)$ operation linearized prior to $C$ is no longer active, then by definition it does  not have a potential update to $t$ in $C$.
It follows that the latest update operation for all keys in $U_t$ do not have a potential update to $t$ in $C$. By Lemma~\ref{lemma:ib0_technical}, the interpreted bit of $t$ is 0 in $C$.
\end{proof}

\subsubsection{Correctness of RelaxedPredecessor}

In this section, we prove that the output of \textsc{RelaxedPredecessor} satisfies the specification outlined in Section~\ref{section_relaxed_binary_trie_specification}.
Let $pOp$ be a completed instance of \textsc{RelaxedPredecessor}$(y)$. Let $k$ be the largest key that is completely present throughout $pOp$ that is less than $y$, or $-1$ if no such key exists.

\begin{lemma}\label{lemma:path_ones_from_k}
In all configurations during $pOp$, for each binary trie node $t$ on the path from the leaf with key $k$ to the root, $t$ has interpreted bit 1.
\end{lemma}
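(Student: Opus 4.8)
The plan is to reduce the claim directly to property IB1 of the interpreted bits, instantiated with the key $k$. The starting observation is that for every binary trie node $t$ on the path from the leaf with key $k$ to the root, that leaf lies in the subtrie rooted at $t$, so $k \in U_t$. Hence it suffices to show that in every configuration $C$ during $pOp$ we have $k \in U_t \cap S$ and the last $S$-modifying \textsc{TrieInsert}$(k)$ operation linearized prior to $C$ is no longer active; property IB1 then yields that the interpreted bit of $t$ is $1$ in $C$. (If $k = -1$ there is nothing to prove.)

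First I would unpack the hypothesis that $k$ is completely present throughout $pOp$: by definition there is an $S$-modifying \textsc{TrieInsert}$(k)$ operation $iOp$ that completes before the invocation of $pOp$, and no $S$-modifying \textsc{TrieDelete}$(k)$ operation is linearized after $iOp$ but before the end of $pOp$. Since $iOp$ is $S$-modifying, $k \in S$ immediately after $iOp$'s linearization point, which (lying within $iOp$'s execution interval) precedes $pOp$'s first step. I would then argue that $k$ remains in $S$ through the end of $pOp$: once $k \in S$, the next \textsc{TrieDelete}$(k)$ linearized would be $S$-modifying, and the hypothesis forbids any such operation before the end of $pOp$; hence $k \in S$ in every configuration $C$ during $pOp$, which gives $k \in U_t \cap S$.

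Next I would identify the last $S$-modifying \textsc{TrieInsert}$(k)$ linearized before a configuration $C$ during $pOp$. Considering the sequence $\sigma$ of \textsc{TrieInsert}$(k)$ and \textsc{TrieDelete}$(k)$ operations in linearization order, between $iOp$ and $C$ there is no $S$-modifying \textsc{TrieDelete}$(k)$ (by hypothesis, since $C$ precedes the end of $pOp$), and therefore no \textsc{TrieDelete}$(k)$ at all that changes $S$; consequently every \textsc{TrieInsert}$(k)$ appearing in $\sigma$ after $iOp$ and before $C$ is non-$S$-modifying. So the last $S$-modifying \textsc{TrieInsert}$(k)$ linearized prior to $C$ is $iOp$ itself, and since $iOp$ completed before $pOp$ was invoked, $iOp$ is inactive in $C$. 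Applying property IB1 (established via Lemma~\ref{lemma:ib1}) with $x = k$ and node $t$ then completes the argument for every such $t$ and $C$.

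I do not expect a serious obstacle; the statement is essentially a repackaging of IB1. The only point requiring care is the bookkeeping about which operations are $S$-modifying — specifically, confirming that nothing linearized between $iOp$ and $C$ can make some other \textsc{TrieInsert}$(k)$ the ``last $S$-modifying'' one — which the short case analysis on $\sigma$ above handles.
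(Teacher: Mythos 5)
Your proposal is correct and takes essentially the same route as the paper: both reduce the claim to Property~IB1 by noting $k \in U_t$ for all $t$ on the path, arguing from the definition of completely present that $k \in S$ throughout $pOp$ and that the last $S$-modifying \textsc{TrieInsert}$(k)$ is $iOp$, which is inactive during $pOp$. The paper's version is terser and asserts these consequences directly, while you unpack the definition and verify the $\sigma$-bookkeeping explicitly; that extra care is sound but does not change the argument.
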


\begin{proof}
Since $k$ is completely present throughout $pOp$, 
for each binary trie node $t$ on the path from the leaf with key $k$ to the root, $k \in U_t \cap S$ in all configurations during $pOp$.
Moreover, the last $S$-modifying \textsc{TrieInsert}$(k)$ operation linearized prior to the end of $pOp$ is 
not concurrent with $pOp$.
Hence, in all configurations during $pOp$, the last $S$-modifying \textsc{TrieInsert}$(k)$ operation is not active.
So, by Property IB1 (i.e. Lemma~\ref{lemma:ib1}), $t$ has interpreted bit 1 in all configurations during $pOp$.
\end{proof}

The next two lemmas prove that the specification of \textsc{RelaxedPredecessor} is satisfied. 

\begin{lemma}\label{lemma:relaxed_trie_kxy}
Suppose $pOp$ returns a key $x \in U$. Then $k \leq x <y$ and $x \in S$ in some configuration during $pOp$.
\end{lemma}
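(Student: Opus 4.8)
The plan is to follow $pOp$'s traversal node by node, keeping track of its position relative to the leaf with key $k$ and using Lemma~\ref{lemma:path_ones_from_k} to control the interpreted bits along the path from that leaf to the root. I would first settle the two assertions that do not mention $k$. For $x < y$: when the upward while-loop on line~\ref{ln:traverseTrie:whileUp} exits without returning $-1$, the current node $v$ is a right child, and (since the loop starts at the leaf with key $y$ and only moves to parents) $v$ is an ancestor of (or equal to) that leaf; the downward traversal then starts at $s := v.\mathit{parent}.\mathit{left} = v.\mathit{sibling}$ (line~\ref{ln:traverseTrie:downstart}) and never leaves the subtrie rooted at $s$, and since every key in that subtrie is strictly smaller than every key in the subtrie of $v$, which contains $y$, the returned key satisfies $x < y$. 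For ``$x \in S$ in some configuration during $pOp$'': the leaf returned on line~\ref{ln:traverseTrie:return_leaf} is reached only by a move on line~\ref{ln:traverseTrie:go_right} or line~\ref{ln:traverseTrie:go_left}, that is, immediately after a call to \textsc{InterpretedBit} on that leaf returned $1$; since the interpreted bit of a leaf with key $x$ is $1$ exactly when $x \in S$, the key $x$ is in $S$ in the configuration just after that read.

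The substantial part is $k \le x$. If $k = -1$ this is immediate since $x \ge 0$, so suppose $k \ge 0$ and let $\ell$ be the lowest common ancestor of the leaves with keys $k$ and $y$; because $k < y$, the leaf with key $k$ lies in the subtrie of $\ell.\mathit{left}$ and the leaf with key $y$ in the subtrie of $\ell.\mathit{right}$. I would first argue that the upward loop exits at $\ell.\mathit{right}$ or at a descendant of $\ell.\mathit{right}$ on the path from the leaf with key $y$: the loop exits at the first node on that path which is a right child whose sibling has interpreted bit $1$, and $\ell.\mathit{right}$ is itself such a node --- it is the child of $\ell$ on the path to the leaf with key $y$, hence a right child, and its sibling $\ell.\mathit{left}$ is on the path from the leaf with key $k$ to the root, hence has interpreted bit $1$ throughout $pOp$ by Lemma~\ref{lemma:path_ones_from_k} --- and the loop cannot return $-1$ before reaching $\ell.\mathit{right}$, since $\ell.\mathit{right}$ is not the root. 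Consequently the downward traversal starts either at $s = \ell.\mathit{left}$, which is an ancestor of (or equal to) the leaf with key $k$, or at a node strictly inside the subtrie of $\ell.\mathit{right}$, every key of which exceeds $k$.

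I would then prove, by induction on the steps of the downward traversal, the invariant that the current node is either an ancestor of (or equal to) the leaf with key $k$ or has every key in its subtrie strictly greater than $k$, and that in either case it has interpreted bit $1$. The base case is the previous paragraph. For the step, if every key of the current node already exceeds $k$, then any move to a child with interpreted bit $1$ preserves both halves of the invariant. If the current node $t$ is an ancestor of (or equal to) the leaf with key $k$, let $c$ be the child of $t$ on the path to that leaf; by Lemma~\ref{lemma:path_ones_from_k}, $c$ has interpreted bit $1$, so the traversal does not return $\bot$ at $t$. The traversal moves to $t.\mathit{left}$ only when $\textsc{InterpretedBit}(t.\mathit{right}) = 0$, which forces $c = t.\mathit{left}$, so the traversal never turns onto the side of $t$ that does not contain the leaf with key $k$; if it moves to $c$ the invariant is immediate, and if it moves to the other child then that child has interpreted bit $1$ and, since $c$ (and hence the leaf with key $k$) lies on the opposite side, every key in its subtrie exceeds $k$. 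Since $pOp$ returns a key, the downward traversal reaches a leaf, which by the invariant is either the leaf with key $k$ (so $x = k$) or a leaf whose key is greater than $k$; in both cases $k \le x$.

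The main obstacle I anticipate is the case analysis in the inductive step: making precise that turning right while sitting on an ancestor of the leaf with key $k$ always enters a subtrie all of whose keys exceed $k$, and that the \textsc{InterpretedBit}-driven branching on lines~\ref{ln:traverseTrie:go_right}--\ref{ln:traverseTrie:return_t}, combined with Lemma~\ref{lemma:path_ones_from_k} guaranteeing interpreted bit $1$ all the way down the path to the leaf with key $k$, rules out ever turning off that path toward smaller keys. The structural bookkeeping about which child of a given node is an ancestor of the leaf with key $k$, and correlating it with the left/right preference built into the traversal, is where the argument must be handled most carefully; everything else is routine reasoning about the key ranges of subtries of a binary trie.
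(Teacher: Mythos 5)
Your proof is correct and follows essentially the same route as the paper's: both arguments hinge on Lemma~\ref{lemma:path_ones_from_k} to pin the interpreted bits on the path from the leaf with key $k$ to the root, locate the lowest common ancestor of the leaves with keys $k$ and $y$, show the upward phase cannot overshoot its right child, and then argue the rightward-biased downward phase never crosses to the strictly-smaller side of that path. Your explicit inductive invariant on the downward traversal is a somewhat more carefully packaged version of the paper's terser ``since $pOp$ traverses the right-most path of binary trie nodes with interpreted bit 1 starting from $t.\mathit{left}$, $pOp$ reaches a leaf with key at least $k$,'' but the substance is the same.
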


\begin{proof}
Before $pOp$ returns a key $x \in U$, it verified that the leaf with key $x$ had interpreted bit 1 by reading $\latest[x]$. In the configuration immediately after this read, $\latest[x]$ points to an INS node, so $x \in S$.

Since $pOp$ begins its upward traversal starting at the leaf with key $y$ and then performs a downward traversal starting from the left child of a binary trie node $t$ on this path and ending at the leaf with key $x$, it follows that $x < y$. 

Consider the path of binary trie nodes from the leaf with key $y$ to $t.\mathit{right}$. 
Any node that is not on this path and is a left child of a node on this path has
interpreted bit 0 when encountered by $pOp$ during its upward traversal.
By Lemma~\ref{lemma:path_ones_from_k}, each node on the path from the leaf with key $k$ to $t$ has interpreted bit 1.
If $k$ is in the left subtrie of a proper ancestor of $t$, then $k < x$.
Otherwise $k$ is in the left subtrie of $t$.
Since $pOp$ traverses the right-most path of binary trie nodes with interpreted bit 1 starting from $t.\mathit{left}$, $pOp$ reaches a leaf with key at least $k$. Therefore, $k \leq x$.
\end{proof}


\begin{lemma}\label{lemma_relaxed_trie_k}


If $pOp$ returns $\bot$, then there exists a key $x$, where $k < x < y$, such that the last $S$-modifying update operation with key $x$ linearized prior to the end of $pOp$ is concurrent with $pOp$.
\end{lemma}

\begin{proof}
Assume that, for all $k < x < y$, the $S$-modifying update operation with key $x$ that was last linearized prior to the end of $pOp$ 
is not concurrent with $pOp$. 
We will prove that $pOp$ returns $k \neq \bot$.

By definition of $k$, 
there are no keys greater than $k$ and less than $y$ that are completely present throughout $pOp$.
By assumption, it follows that, throughout $pOp$,
there are no keys in $S$ that are greater than $k$ and smaller than $y$.

First suppose $k = -1$. 
Recall that $pOp$  begins by traversing up the relaxed binary trie starting from the leaf with key $y$.
Consider any node on this path whose left child $t$ is not on this path.
Every key in $U_t$ is less than $y$, so $U_t \cap S = \emptyset$.
By Property IB0 (i.e. Lemma~\ref{lemma:ib0}), $t$ has interpreted bit 0 in all configurations during $pOp$. It follows that the while-loop on line~\ref{ln:traverseTrie:whileUp} always evaluates to \textsc{True}. So $pOp$ eventually reaches the root, and
returns $-1$ on line~\ref{ln:traverseTrie:return_minus}.

Now suppose $k \in U$. 
Consider any binary trie node $t$ such that $k < \min U_t < \max U_t < y$. Since  $U_t \cap S = \emptyset$, it follows from Property IB0 (i.e. Lemma~\ref{lemma:ib0}) that $t$ has interpreted bit 0 in all configurations during $pOp$.

Let $t$ be the lowest common ancestor of the leaf with key $k$ and the leaf with key $y$.
Then the leaf with key $k$ is in the subtree rooted at 
$t.\textit{left}$
and the leaf with key $y$ is in the subtree rooted at 
$t.\textit{right}$.
Consider any node on the path from $y$ to
$t.\textit{right}$ whose left child $t'$ is not on this path.
Note that $U_{t'} \cap S = \emptyset$, since $k < \min U_{t'} < \max U_{t'} < y$.
By Property IB0, $t'$ has interpreted bit 0 in all configurations during $pOp$. 
It follows from Lemma~\ref{lemma:path_ones_from_k} that $pOp$ reaches $t$ and  traverses down the right-most path of binary trie nodes with interpreted bit 1 starting from $t.\textit{left}$.

Consider any node on the path from the leaf with key $k$ to
$t.\textit{left}$ whose right child $t'$ is not on this path. Note that $U_{t'} \cap S = \emptyset$, since $k < \min U_{t'} < \max U_{t'} < y$.
By Property IB0 (i.e. Lemma~\ref{lemma:ib0}), $t'$ has interpreted bit 0 in all configurations during $pOp$. 
By Lemma~\ref{lemma:path_ones_from_k}, each binary trie node on the path from the leaf with key $k$ to $t$ has interpreted bit 1 throughout $pOp$, so $pOp$ reaches the leaf with key $k$ and $pOp$ returns $k$. 
\end{proof}

\section{Lock-free Binary Trie}\label{section_implementation}

In this section, we give the implementation of the linearizable, lock-free binary trie, which uses the relaxed binary trie as one of its components. 	This implementation supports a linearizable \textsc{Predecessor} operation, unlike the \textsc{RelaxedPredecessor} operation of the relaxed binary trie.

In Section~\ref{section_full_impl_high_level}, we give the high-level description of our algorithms.  We then describe the algorithms in detail and present the pseudocode in Section~\ref{section_detailed_algorithm}. We then prove that the implementation is linearizable in Section~\ref{section_predecessor_linearization}.







\subsection{High-level Algorithm Description}\label{section_full_impl_high_level}

One component of our lock-free binary trie is a relaxed binary trie, described in Section~\ref{section_relaxed_binary_trie}.
The other components are linked lists, which enable \textsc{Insert}, \textsc{Delete}, and \textsc{Predecessor} operations to help one another make progress. 

To ensure update operations are announced at the same time as they are linearized, each update node has a status.
It is initially inactive and can later change to active by the operation that created it, at which point the update node is \textit{activated}.
An $S$-modifying update operation will be linearized when the status of its update node changes to active.
For each key $x \in U$, $\latest[x]$ points to an update node, $\uNode$, with key $x$ whose $\latestNext$ field is either $\bot$, or points to an update node whose $\latestNext$ field is $\bot$.
We let the \textit{$\latest[x]$ list} be the sequence of length at most 2 consisting of $\uNode$, followed by the update node pointed to by $\uNode.\latestNext$ if it is not $\bot$.
When an update node with key $x$ is added to this list, it is inactive and it is only added to the beginning of the list. 
This list contains at least 1 activated update node
and only its first update node can be inactive.
The sequence of update nodes pointed to by $\latest[x]$ in an execution is the history of  $S$-modifying \textsc{TrieInsert}$(x)$ and \textsc{TrieDelete}$(x)$ operations performed.
So the types of the update nodes added to the $\latest[x]$ list alternate between INS and DEL.
The first activated update node in the $\latest[x]$ list is an INS node if and only if $x \in S$. The interpreted bit of a binary trie node that stores key $x$ now depends on the first activated update node in the $\latest[x]$ list.



The \textit{update announcement linked list}, called the $\UALL$,
is a lock-free linked list of update nodes sorted by key. 
An update operation can add an inactive update node to the $\UALL$,
and is added after every update node with the same key.
Then it can announce itself by activating its update node. Just before the update operation completes, it removes its update node from the $\UALL$.
Whenever update nodes are added and removed from the $\UALL$, we additionally modify a lock-free linked list called the \textit{reverse update announcement linked list} or $\RUALL$. It contains a copy of all update nodes in the $\UALL$, except it is sorted by keys in descending order and then by the order in which they were added. 
For simplicity, we assume that both the $\UALL$ and $\RUALL$ contain two sentinel nodes with keys $\infty$ and $-\infty$. So $\UALL.\head$ always points to the sentinel node with key $\infty$ and $\RUALL.\head$ always points to the sentinel node with key $-\infty$.

The \textit{predecessor announcement linked list}, or the $\PALL$, is an unsorted lock-free linked list of \textit{predecessor nodes}. 
Each predecessor node contains a key and an insert-only linked list, called its $\notifyList$.
An update operation can notify a predecessor operation by adding a \textit{notify node} to the beginning of the $\notifyList$ of the predecessor operation's predecessor node.
Each \textsc{Predecessor} operation begins by creating a predecessor node and then announces itself by adding this predecessor node to the beginning of the $\PALL$. Just before a \textsc{Predecessor} operation completes, it removes its predecessor node from the $\PALL$.

Figure~\ref{figure:trie} shows an example of the data structure for $U = \{0,1,2,3\}$. White circles represent nodes of the relaxed binary trie. Blue rectangles represent activated INS nodes, red rectangles represent activated DEL nodes, and light red rectangles represent inactive DEL nodes. Yellow diamonds represent predecessor nodes.
This example depicts 5 concurrent operations, \textsc{Insert}$(0)$, \textsc{Insert}$(1)$, \textsc{Delete}$(3)$, and two \textsc{Predecessor} operations.
The data structure represents the set $S = \{0, 1, 3\}$ because the first activated update node in each of the $\latest[0]$, $\latest[1]$, and $\latest[3]$ lists is an INS node. 

\begin{figure}[bt]
	\centering
	\includegraphics[width=0.6\textwidth]{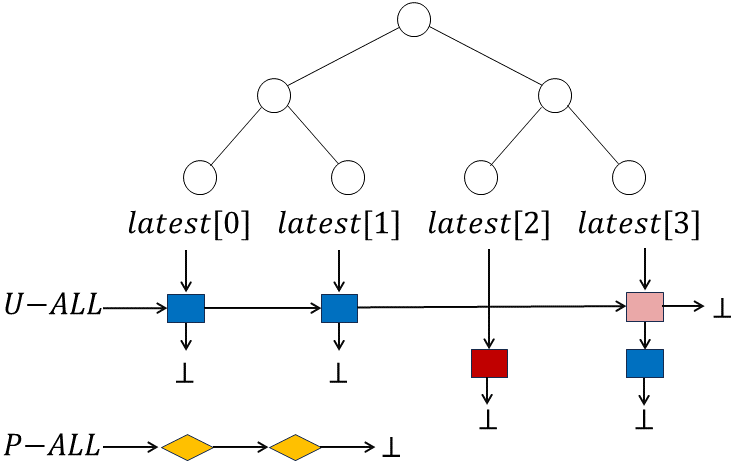}
	\caption{An example of the lock-free binary trie representing $S = \{0, 1, 3\}$.}
	\label{figure:trie}
\end{figure}



\subsubsection*{Search Operations}
A \textsc{Search}$(x)$ operation finds the first activated update node in the $\latest[x]$ list,
returns \textsc{True} if it is an INS node, and returns \textsc{False} if it is a DEL node.

\subsubsection*{Insert and Delete Operations}

An \textsc{Insert}$(x)$ or \textsc{Delete}$(x)$ operation, $uOp$, is similar to an update operation of the relaxed binary trie, with a few modifications.

The operation $uOp$ begins by finding the first activated update node in the $\latest[x]$ list. 
If it has the same type as $uOp$, then $uOp$ can return because $S$ does not need to be modified. Otherwise $uOp$ creates a new inactive update node, $\uNode$, with key $x$ and attempts to add it to the beginning of the $\latest[x]$ list.
If successful, $\uNode$ is then added to the $\UALL$ and $\RUALL$. Next, $uOp$ changes the status of $\uNode$ from inactive to active, which announces $uOp$. Additionally, $uOp$ is linearized at this step.
The field $\uNode.\latestNext$ is set to $\bot$, allowing another update node to be added to the $\latest[x]$ list.
If multiple update operations with key $x$ concurrently attempt to add an update node to the beginning of the $\latest[x]$ list, exactly one will succeed. 
Update operations that are unsuccessful instead help the update operation that succeeded become linearized. Inserting into the $\UALL$ and $\RUALL$ has an amortized cost of $O(\dot{c}(op))$ because their lengths are at most $\dot{c}(op)$.

Another modification is to notify predecessor operations after the relaxed binary trie is updated.
For each predecessor node $\pNode$ in the $\PALL$, $uOp$ creates a notify node containing information about its update node and adds it to the beginning of $\pNode$'s $\notifyList$, provided the update node created by $uOp$ is still the first activated update node in the $\latest[x]$ list. After notifying the predecessor operations announced in the $\PALL$, $uOp$ removes its update node from the $\UALL$ and $\RUALL$ before returning. 

There are at most $\dot{c}(uOp)$ predecessor nodes in the $\PALL$ when $uOp$ is invoked. Adding a notify node to the beginning of a $\notifyList$ has an amortized cost of $O(\dot{c}(uOp))$. So the total amortized cost charged to $uOp$ for notifying these predecessor nodes is $O(\dot{c}(uOp)^2)$. If a predecessor node $\pNode$ is added to the $\PALL$ after the start of $uOp$, the predecessor operation $pOp$ that created $\pNode$ pays the amortized cost of notifying $\pNode$ on $uOp$'s behalf. Since there are  $O(\dot{c}(pOp))$ update operations concurrent with $pOp$ when it is invoked, the total amortized cost charged to $pOp$ is $O(\dot{c}(pOp)^2)$.

When $uOp$ is a \textsc{Delete}$(x)$ operation, it also performs two embedded \textsc{Predecessor}$(x)$ operations, one just before $uOp$ is announced and one just before $uOp$ begins to update the relaxed binary trie.
The announcement of these embedded predecessor operations remain in the $\PALL$ until just before $uOp$ returns. 
Pointers to the predecessor nodes of these embedded \textsc{Predecessor}$(x)$ operations and their return values are stored in $\uNode$. 
This information is used by other \textsc{Predecessor} operations.

\subsubsection*{Predecessor Operations}

A \textsc{Predecessor}$(y)$ operation, $pOp$, begins by adding a predecessor node to the beginning of the $\PALL$ so that it can be notified by update operations. 
It continues to traverses the $\PALL$ to determine embedded predecessor operations belonging to \textsc{Delete} operations that have not yet completed.
It then traverses the $\RUALL$ to identify DEL nodes corresponding to \textsc{Delete} operations that may have been linearized before the start of $pOp$.

Following this, $pOp$ determines a number of \textit{candidate return values}, which are keys in $S$ sometime during $pOp$, and returns the largest of these. The exact properties satisfied by the candidate return values are stated in Section~\ref{section_crv_properties}.
Some candidate return values are determined from a traversal of the relaxed binary trie, a traversal of the $\UALL$, and a traversal of its notify list.
The linearization point of $pOp$ depends on when $pOp$ encountered the value it eventually returns.
The amortized cost for $pOp$ to perform these traversals is $O(\bar{c}(pOp)) = O(\dot{c}(pOp))$.

When $pOp$ encounters an update node with key $x < y$ during its traversal of the $\UALL$ or its notify list, 
it verifies that the update node was the first activated update node in $\latest[x]$ at some point during $pOp$.
If a verified update node is an INS node, then $x \in S$ sometime during $pOp$ and $x$ is a candidate return value.
If it is a DEL node and the \textsc{Delete}$(x)$ operation that created it is linearized  during $pOp$, then $x \in S$ immediately before this linearization point and $x$ is a candidate return value.
The traversal of the $\RUALL$ is used to identify \textsc{Delete} operations that may have been linearized before the start of $pOp$. The keys of these operations are not added to the set of candidate return values during $pOp$'s traversal of the $\UALL$ and its notify list.


If the result returned by traversing the relaxed binary trie using \textsc{RelaxedPredecessor}$(y)$ is not $\bot$, then it is a candidate return value. 
Now suppose \textsc{RelaxedPredecessor}$(y)$ returns $\bot$.
Let $k$ be the largest key less than $y$ that is completely present throughout $pOp$'s traversal of the relaxed binary trie, or $-1$ if no such key exists. 
By the specification of the \textsc{RelaxedPredecessor}$(y)$ when it returns $\bot$ (Lemma~\ref{lemma_relaxed_trie_k}), there is an $S$-modifying update operation $uOp$ with key $x$, where $k < x < y$, whose update to the relaxed binary trie is concurrent with $pOp$'s traversal of the relaxed binary trie.
The update node created by $uOp$ is encountered by $pOp$ either in the $\UALL$ or in its notify list.
This is because either $pOp$ will traverse the $\UALL$ before $uOp$ can remove its update node from the $\UALL$, or $uOp$ will notify $pOp$ before $pOp$ removes its predecessor node from the $\PALL$. 
Unless $uOp$ is a \textsc{Delete}$(x)$ operation linearized before the start of $pOp$ (or shortly after the start of $pOp$), $x$ is a candidate return value.

Now suppose $uOp$ is a \textsc{Delete}$(x)$ operation linearized before the start of $pOp$.
For simplicity, suppose $uOp$ is the only update operation concurrent with $pOp$. 
Since $uOp$ is concurrent with $pOp$'s traversal of the relaxed binary trie, its DEL node is the only update node that $pOp$ encounters during its traversal of the $\RUALL$.
Let $pOp'$ be the first embedded \textsc{Predecessor}$(x)$ of $uOp$, which was completed before $uOp$ was announced in the $\RUALL$.
The result returned by $pOp'$ may be added as a candidate return value for $pOp$. In addition, $pOp$ traverses the notify list of $pOp'$ to possibly obtain other candidate return values.
Note that $k$ is the predecessor of $y$ throughout $pOp$, so $pOp$ must return $k$. 
Let $iOp$ be the completed \textsc{Insert}$(k)$ operation that last added $k$ to $S$ prior to the start of $pOp$.
First, suppose $iOp$ is linearized after $pOp'$ was announced. Then $iOp$ will notify $pOp'$ because $iOp$ is completed before the start of $pOp$.
When $pOp$ traverses the notify list of $pOp'$, it adds $k$ to its set of candidate return values.

Now suppose $iOp$ is linearized before $pOp'$ was announced. Then $k \in S$ throughout $pOp'$, and $pOp'$ returns a value $k'$ where $k \leq k' < x$. If $k' = k$, then $k$ is added to $pOp$'s candidate return values.
If $k' \neq k$, then $k'$ is removed from $S$ by a \textsc{Delete}$(k')$ operation prior to the start of $pOp$ because $k$ is the predecessor of $y$ at the start of $pOp$.
More generally, consider any \textsc{Delete} operation, $dOp$, with key strictly between $k$ and $y$ that is linearized after $pOp'$ is announced and is completed before the start of $pOp$. In particular, $dOp$'s 
second embedded \textsc{Predecessor} is completed before the start of $pOp$ and returns a value which is at least $k$. Before $dOp$ completed, it notified $pOp'$ of this result. 
We prove that there is a \textsc{Delete} operation that notifies $pOp'$ and whose second embedded \textsc{Predecessor} returns a value exactly $k$. 
By traversing the notify list of $pOp'$, $pOp$ can determine the largest key less than $y$ that is in $S$ at the start of $pOp$'s traversal of the relaxed binary trie.
So $k$ is added to $pOp$'s set of candidate return values.
The cost for $pOp$ to traverse the notify list of $pOp'$ is $O(\tilde{c}(pOp))$.

\subsection{Detailed Algorithm and Pseudocode}\label{section_detailed_algorithm}

In this section, we present a detailed description of \textsc{Insert}, \textsc{Delete} and \textsc{Predecessor}, as well as present their pseudocode.

Figure~\ref{data_records} gives a summary of the fields used by each type of node, and classifies each field as immutable, update-once, or mutable.
An \textit{immutable field} is set when the field is initialized and is never changed. 
A \textit{mutable field} may change its value an arbitrary number of times.
The possible transitions of other fields is specified.

\begin{figure}[h!]
	\begin{algorithmic}[1]
		\alglinenoPop{alg1}
		\State \textbf{Update Node}
		\Indent
		\State $\key$ (Immutable) \Comment{A key in $U$}
		\State $\type$ (Immutable) \Comment{Either INS or DEL}
		\State $\status$ (Initially \textsc{Inactive}, only changes from \textsc{Inactive} to \textsc{Active}) \Comment{One of $\{\textsc{Inactive}, \textsc{Active}\}$}
		\State $\latestNext$ (Initialized to point to an update node. Changes once to $\bot$)\label{ln:init:latestNext}
		\State $\target$ (Mutable, initially $\bot$) \Comment{pointer to update node}
		\State $\stopflag$ (From 0 to 1) \Comment(Boolean value)
		\State $\done$ (From 0 to 1) \Comment(Boolean value)
		
		\State $\triangleright$ Additional fields when $\type = \text{DEL}$
		\State $\threshold$ (Mutable, initially $0$)\label{ln:init:threshold} \Comment{An integer in $\{0,\dots,b\}$}
		\State $\insthreshold$ (Mutable min-register, initially $b+1$) \Comment{An integer in $\{0,\dots,b+1\}$}
		\State $\delprednode$  (Immutable) \Comment{A pointer to a predecessor node}
		\State $\delpred$ (Immutable) \Comment{A key in $U$}
		\State $\delpredsecond$ (From $\bot$ to a key in $U$) \Comment{A key in $U$}
		\EndIndent
		\State \textbf{Predecessor Node} of $\PALL$
		\Indent
		\State $\key$ (Immutable) \Comment{A key in $U$}
		\State $\notifyList$ (Mutable, initially an empty linked list)  \Comment{A linked list of notify nodes}
		\State $\RuallPosition$ (Mutable, initially a pointer to sentinel update node with key $\infty$) 
		\EndIndent
		
		\State \textbf{Notify Node} of $\pNode.\notifyList$ for each $\pNode \in \PALL$
		\Indent
		\State $\key$ (Immutable) \Comment{A key in $U$}
		\State $\updateNode$ (Immutable) \Comment{A pointer to an update node}
		\State $\updateNodeMax$ (Immutable) \Comment{A pointer to an update node}
		\State $\notifyThreshold$ (Immutable) \Comment{A key in $U$}
		\EndIndent

		\State \textbf{Binary Trie Node}
		\Indent
		\State $\dNodePtr$ (Mutuable, initially points to a dummy DEL node) \Comment{A pointer to a DEL node}
		\EndIndent
		
		\alglinenoPush{alg1}
	\end{algorithmic}
	\caption{Summary of the fields and initial values of each node used by the data structure.}
	\label{data_records}
\end{figure}

\subsubsection*{Search Operations}

The \textsc{Search}$(x)$ algorithm finds the first activated update node in the $\latest[x]$ list by calling \textsc{FindLatest}$(x)$. 
It returns \textsc{True} if this update node has type INS, and \textsc{False} if this update node has type DEL.

\begin{figure}[h!]
	\begin{algorithmic}[1]
		\alglinenoPop{alg1}
		\State \textbf{Algorithm} \textsc{FindLatest}$(x)$
		\Indent
		\State $\uNode \leftarrow \latest[x]$\label{ln:findLatest:read_head}
		\If {($\uNode.\status = \textsc{Inactive}$)} \label{ln:findLatest:if}
		\State $\uNode' \leftarrow \uNode.\latestNext$ \label{ln:findLatest:read_latestNext}
		\If {$\uNode' \neq \bot$} \Return $\uNode'$ \label{ln:findLatest:if2} 
		\EndIf
		\EndIf
		\Return $\uNode$ \label{ln:findLatest:return}
		\EndIndent
		\alglinenoPush{alg1}
	\end{algorithmic}
\end{figure}

\begin{figure}[h!]
	\begin{algorithmic}[1]
		\alglinenoPop{alg1}
		\State \textbf{Algorithm} \textsc{Search}$(x)$
		\Indent 
		\State $\uNode \leftarrow \textsc{FindLatest}(x)$\label{ln:search:findLatest}
		\If {$\uNode.\type = \textsc{INS}$} \Return \textsc{True}
		\Else ~\Return \textsc{False}
		\EndIf 
		\EndIndent
		\alglinenoPush{alg1}
	\end{algorithmic}
\end{figure}

\begin{figure}[h!]
	\begin{algorithmic}[1]
		\alglinenoPop{alg1}
		\State \textbf{Algorithm} \textsc{FirstActivated}$(\uNode)$
		\Indent 
		\State $\uNode' \leftarrow \latest[\uNode.\key]$ \label{ln:firstactivated:head}
		\State \Return $\uNode = \uNode'$ or ($\uNode'.status = \textsc{Inactive}$ and $\uNode = \uNode'.\latestNext$)\label{ln:firstactivated:return}
		\EndIndent
		\alglinenoPush{alg1}
	\end{algorithmic}
\end{figure}

The helper function \textsc{FindLatest}$(x)$ returns the first activated update node in the $\latest[x]$ list.
It 
first reads the update node $\uNode$ pointed to by $\latest[x]$.
If $\uNode$ is inactive, then the update node, $\uNode'$, pointed to by its next pointer is read. If $\uNode'$ is $\bot$, then $\uNode$ was activated sometime between when $\uNode$ was read to be inactive and when its next pointer was read. If $\uNode' \neq \bot$, then $\uNode'$ was active when $\uNode$ was read to be inactive.
If \textsc{FindLatest}$(x)$ returns an update node $\uNode$, we prove that there is a configuration during the instance of \textsc{FindLatest} in which $\uNode$ is the first activated update node in the $\latest[x]$ list.

The helper function \textsc{FirstActivated}$(\uNode)$ takes a pointer to an activated update node, $\uNode$, and checks if $\uNode$ is the first activated update node in the $\latest[\uNode.key]$ list. 
It first reads the update node, $\uNode'$, pointed to by $\latest[x]$. If $\uNode' = \uNode$, then the algorithm returns \textsc{True} because $\uNode$ is the first activated update node in the $\latest[\uNode.key]$ list.
If $\uNode'$ is inactive when it is read and its $\latestNext$ pointer points to $\uNode$, then $\uNode$ is the first activated update node in the $\latest[\uNode.\key]$ list.

\subsubsection*{Insert Operations}

We next describe the algorithm for an \textsc{Insert}$(x)$ operation $iOp$. It is roughly divided into the main parts of inserting a new INS node, $\iNode$, into $\latest[x]$, adding $\iNode$ to $\UALL$, updating the relaxed binary trie, notifying predecessor operations, and removing $\iNode$ from the $\UALL$.

It begins by finding the first activated update node in the $\latest[x]$ list (on line~\ref{ln:insert:find_latest}). If this is an INS node then $iOp$ returns, because $x$ is already in $S$.
Otherwise it is a DEL node, $\dNode$. 
A new inactive update node, $\iNode$ is created, where $\iNode.\latestNext$ points to $\dNode$, and CAS is used to try to change $\latest[x]$ to point from $\dNode$ to $\iNode$.
If the CAS is unsuccessful, then some other \textsc{Insert}$(x)$ operation, $iOp'$, successfully updated $\latest[x]$ to point to $\iNode'$.
In this case, $iOp$ invokes \textsc{HelpActivate}$(\iNode')$ to help activate $\iNode'$, and hence linearize $iOp'$.
First, $iOp$ helps $iOp'$ add $\iNode'$ to the $\UALL$, and changes the status of $\iNode'$ from \textsc{Inactive} to \textsc{Active}. 
Recall from the relaxed binary trie that \textsc{Delete} operations help stop the target of the previously linearized $S$-modifying \textsc{Insert} operation (i.e. line~\ref{ln2:delete:setStop} from the relaxed binary trie).
This is done when helping on line~\ref{ln:help_activate:help_stop}.
On line~\ref{ln:help_activate:set_bot}, $iOp$ sets $\iNode'.\latestNext$ to $\bot$, allowing another update node to be added to the $\latest[x]$ list.
Then $iOp$ checks if $\iNode'.\done$ is set to \textsc{True} (on line~\ref{ln:help_activate:check_remove}). This indicates that  $iOp'$ has completed updating the relaxed binary trie, and $\iNode'$ no longer needs to be in $\UALL$ or $\RUALL$. It is possible that $iOp'$ already removed $\iNode'$ from the $\UALL$ and $\RUALL$, but $iOp$ helped add it back sometime it was removed. So $iOp$ must remove it again from the $\UALL$ and $\RUALL$.
After $iOp$ removes $\iNode'$ from the $\UALL$ and $\RUALL$, it returns.

\begin{figure}[h!]
	\begin{algorithmic}[1]
		\alglinenoPop{alg1}
		\State \textbf{Algorithm} \textsc{HelpActivate}$(\uNode)$
		\Indent
		\If {$\uNode.status = \textsc{Inactive}$}
		\State Insert $\uNode$ into $\UALL$  and $\RUALL$
		\State $\uNode.\status \leftarrow \textsc{Active}$\label{ln:help_activate:CAS}
		\If {$\uNode.\type = \text{DEL}$}
		\State $\uNode.\latestNext.\target.\stopflag \leftarrow \textsc{True}$ \Comment{Ignore if any field reads $\bot$}\label{ln:help_activate:help_stop}
		\EndIf
		\State $\uNode.\latestNext \leftarrow \bot$\label{ln:help_activate:set_bot}
		\If {$\uNode.\done = \textsc{True}$}\label{ln:help_activate:check_remove} \Comment{$\uNode$ no longer needed in $\UALL$ or $\RUALL$}
		\State Remove $\uNode$ from $\UALL$ and $\RUALL$
		\EndIf 
		\EndIf
		\EndIndent
		\alglinenoPush{alg1}
	\end{algorithmic}
\end{figure}

\begin{figure}[h!]
	\begin{algorithmic}[1]
		\alglinenoPop{alg1}
		\State \textbf{Algorithm} \textsc{TraverseUall}$(x)$
		\Indent
		\State Initialize local variables $I \leftarrow \emptyset$ and  $D \leftarrow \emptyset$
		\State $\uNode \leftarrow \UALL.\head$
		\While {$\uNode \neq \bot$ and $\uNode.\key < x$}\label{ln:traverseUall:while}
		\If {($\uNode.\status \neq \textsc{Inactive}$ and $\textsc{FirstActivated}(\uNode)$)} \label{ln:traverseUALL:findLatest}
		\If {$\uNode.type = \textsc{INS}$} $I \leftarrow I \cup \{\uNode\}$
		\Else  ~$D \leftarrow D \cup \{\uNode\}$
		\EndIf
		\EndIf
		\State $\uNode \leftarrow \uNode.next$
		\EndWhile
		\State \Return $I, D$
		\EndIndent
		\alglinenoPush{alg1}
	\end{algorithmic}
\end{figure}

\begin{figure}[h!]
	\begin{algorithmic}[1]
		\alglinenoPop{alg1}
		\State \textbf{Algorithm} \textsc{NotifyPredOps}$(\uNode)$
		\Indent
		\State $I,D \leftarrow \textsc{TraverseUall}(\infty)$\label{ln:notify:travuall}
		\For {each node $\pNode \in \PALL$}\label{ln:notify:for_loop}
		
		
		\If {$\textsc{FirstActivated}(\uNode) = \textsc{False}$} \Return \label{ln:notify:first_activated} 
		\EndIf
		
		\State Create a notify node $\nNode$:
		\State \quad $\nNode.\key \leftarrow \uNode.\key$
		\State \quad $\nNode.\updateNode \leftarrow \uNode$
		\State \quad $\nNode.\updateNodeMax \leftarrow $ INS node in $I$ with largest key less than $\pNode.\key$
		\State \quad $\nNode.\notifyThreshold \leftarrow \pNode.\RuallPosition.\key$ \label{ln:notify:notifyThreshold}
		\If {$\textsc{SendNotification}(\nNode, \pNode) = \textsc{False}$} \Return
		\EndIf
		\EndFor
		\EndIndent
		\alglinenoPush{alg1}
	\end{algorithmic}
\end{figure}

\begin{figure}[h!]
	\begin{algorithmic}[1]
		\alglinenoPop{alg1}
		\State \textbf{Algorithm} \textsc{SendNotification}$(\mathit{nNodeNew},\pNode)$
		\Indent
		\While {$\textsc{True}$}
		\State $\nNode \leftarrow \pNode.\notifyList.\head$\label{ln:sendNotification:readHead}
		\State $\mathit{nNodeNew}.\nextptr \leftarrow \nNode$ 
		\If {\textsc{FirstActivated}$(\mathit{nNodeNew}.\updateNode) = \textsc{False}$} \Return \textsc{False}\label{ln:sendNotification:firstActivated}
		\EndIf
		\If {CAS$(\pNode.\notifyList.\head,  \nNode, \mathit{nNodeNew}) = \textsc{True}$} \Return \textsc{True}\label{ln:sendNotification:CAS}
		\EndIf
		\EndWhile
		\EndIndent
		\alglinenoPush{alg1}
	\end{algorithmic}
\end{figure}

Otherwise the CAS is successful, and $iOp$ inserts $\iNode$ into the $\UALL$ and $\RUALL$. The status of $\iNode$ is changed from \textsc{Inactive} to \textsc{Active}, which announces the operation. The operation $iOp$ is linearized during the write that changes $\iNode.\status$ from \textsc{Inactive} to \textsc{Active}, which may be performed by $iOp$ or an \textsc{Insert}$(x)$ operation helping $iOp$. It then performs \textsc{InsertBinaryTrie} (on line~\ref{ln:insert:binaryTrie}), which is the same as in the relaxed binary trie, to update the interpreted bits of the binary trie.

\begin{figure}[h!]
	\begin{algorithmic}[1]
		\alglinenoPop{alg1}
		\State \textbf{Algorithm} \textsc{Insert}$(x)$
		\Indent
		\State $\dNode \leftarrow \textsc{FindLatest}(x)$\label{ln:insert:find_latest}
		\If {$\dNode.\type \neq \text{DEL}$} \Return \Comment{$x$ is already in $S$}\label{ln:insert:return_early}
		\EndIf
		
		\State Let $\iNode$ be a pointer to a new update node: 
		\State \quad $\iNode.\key \leftarrow x$, $\iNode.\type \leftarrow \text{INS}$
		\State \quad $\iNode.\latestNext \leftarrow \dNode$
		\State $\dNode.\latestNext.\target.\stopflag \leftarrow \textsc{True}$ \Comment{Ignore if any field reads $\bot$}
		\State $\dNode.\latestNext \leftarrow \bot$\label{ln:insert:set_latestNext_bot}
		\If {CAS$(\latest[x], \dNode, \iNode) = \textsc{False}$}\label{ln:insert:cas_latest_head}
		\State \textsc{HelpActivate}$(\latest[x])$\label{ln:insert:help_activate}
		\State \Return\label{ln:insert:return_help}
		\EndIf
		
		
		\State Insert $\iNode$ into $\UALL$ and $\RUALL$\label{ln:insert:add_uall}
		\State $\iNode.\status \leftarrow \textsc{Active}$ \label{ln:insert:set_active}
		\State $\iNode.\latestNext \leftarrow \bot$\label{ln:insert:set_bot}
		\State \textsc{InserstBinaryTrie}$(\iNode)$\label{ln:insert:binaryTrie}
		\State $\textsc{NotifyPredOps}(\iNode)$ \label{ln:insert:notify}
		\State $\iNode.\done \leftarrow \textsc{True}$
		\State Remove $\iNode$ from $\UALL$  and $\RUALL$\label{ln:insert:removeUall}
		\State \Return
		\EndIndent
		\alglinenoPush{alg1}
	\end{algorithmic}
\end{figure}




The helper function \textsc{NotifyPredOps}$(\iNode)$ attempts to notify to all predecessor nodes in $\PALL$ about $\iNode$ as follows. 
First, $iOp$ invokes \textsc{TraverseUall}$(x)$ (on line~\ref{ln:notify:travuall}). It finds update nodes with key less than $x$ that are the first activated update node in their respective latest lists.
INS nodes are put into $iOp$'s local set $I$ and DEL nodes into $iOp$'s local set $D$.
It simply traverses the $\UALL$, checking if each update node $\uNode$ visited has key less than $x$ and is the first activated update node in the $\latest[\uNode.\key]$ list. If so, $\uNode$ is added to $I$ if it is an INS node, or it is added to $D$ if it is a DEL node.
We prove that the keys of the nodes in $I$ are in the set $S$ sometime during the traversal, while the keys in $D$ are not in the set $S$ sometime during the traversal.

Then $iOp$ notifies each predecessor node $\pNode$ in $\PALL$ with $\pNode.\key > x$. 
It creates a new notify node, $\nNode$, containing a pointer to $iOp$'s update node
and a pointer to the INS node in $I$ with the largest key less than $\pNode.\key$. It also reads the value stored in $\pNode.\RuallPosition$, which is a pointer to an update node.
The key of this update node is written into $\nNode.\notifyThreshold$ (on line~\ref{ln:notify:notifyThreshold}).
This information is used by the predecessor operation to determine if the notification sent by $iOp$ should be used to determine a candidate return value.
Prior to adding $\nNode$ to the head of $\pNode.\notifyList$ using CAS, $iOp$ checks that $\iNode$ is still the first activated update node in the $\latest[x]$ list (on line~\ref{ln:sendNotification:firstActivated}). If not, $iOp$ stops sending notifications.

Before $iOp$ returns from its \textsc{Insert}$(x)$ operation, $\iNode$ is removed from the $\UALL$ and $\RUALL$ (on line~\ref{ln:insert:removeUall}).


\subsubsection*{Delete Operations}

We next describe the algorithm for a \textsc{Delete}$(x)$ operation, $dOp$. The algorithm is similar to an \textsc{Insert}$(x)$ operation, but with a few more parts. Most importantly, $dOp$ may perform embedded predecessor operations, which we describe later. The main parts of a \textsc{Delete}$(x)$ operation are performing an embedded predecessor operation, notifying \textsc{Predecessor} operations about the previous $S$-modifying \textsc{Insert}$(x)$ operation, adding a new DEL node into the $\latest[x]$ list and then into the $\UALL$, performing a second embedded predecessor operation, updating the relaxed binary trie, and then notifying \textsc{Predecessor} operations about its own operation.

\begin{figure}[h!]
	\begin{algorithmic}[1]
		\alglinenoPop{alg1}
		\State \textbf{Algorithm} \textsc{Delete}$(x)$
		\Indent
		\State $\iNode \leftarrow \textsc{FindLatest}(x)$\label{ln:delete:findLatest}
		\If {$\iNode.\type \neq \text{INS}$} \Return \Comment{$x$ is not in $S$}\label{ln:delete:return_early}
		\EndIf
		
		\State $\delpred, \pNode1 \leftarrow \textsc{PredHelper}(x)$\label{ln:delete:first_em_pred}
		\State Let $\dNode$ be a pointer to a new update node:
		\State \quad $\dNode.\key \leftarrow x$, $\dNode.\type \leftarrow \text{DEL}$
		\State \quad $\dNode.\latestNext \leftarrow \iNode$
		\State \quad $\dNode\delpred \leftarrow \delpred$
		\State \quad $\dNode.\delprednode \leftarrow \pNode1$
		\State $\iNode.\latestNext \leftarrow \bot$\label{ln:delete:set_latestNext_bot}
		\State \textsc{NotifyPredOps}$(\iNode)$\Comment{Help previous \textsc{Insert} send notifications}\label{ln:delete:help_notify}
		
		\If {CAS$(\latest[x], \iNode, \dNode) = \textsc{False}$}\label{ln:delete:cas_latest_head}
		\State \textsc{HelpActivate}$(\latest[x])$\label{ln:delete:help_activate}
		\State Delete $\pNode1$ from $\PALL$.
		\State \Return\label{ln:delete:return_help}
		\EndIf
		
		\State Insert $\dNode$ into $\UALL$ and $\RUALL$
		\State $\dNode.\status \leftarrow \textsc{Active}$\label{ln:delete:set_active}
		
		\State $\iNode.\target.\stopflag \leftarrow \textsc{True}$\label{ln:delete:setStop}
		\State $\dNode.\latestNext \leftarrow \bot$\label{ln:delete:set_bot}
		
		\State $\delpredsecond, \pNode2 \leftarrow \textsc{PredHelper}(x)$\label{ln:delete:second_em_pred}
		\State $\dNode.\delpredsecond \leftarrow \delpredsecond$
		
		\State \textsc{DeleteBinaryTrie}$(\dNode)$\label{ln:delete:binaryTrie}
		\State \textsc{NotifyPredOps}$(\dNode)$\label{ln:delete:notifyPredOps}
		
		\State $\dNode.\done \leftarrow \textsc{True}$
		
		\State Delete $\dNode$ from $\UALL$  and $\RUALL$\label{ln:delete:remove_uall_ruall}
		\State Delete $\pNode1$ and $\pNode2$ from $\PALL$\label{ln:delete:remove_pall}
		\EndIndent
		\alglinenoPush{alg1}
	\end{algorithmic}
\end{figure}

The \textsc{Delete}$(x)$ operation, $dOp$, begins by finding the first activated update node in the $\latest[x]$ list. It immediately returns if this is a DEL node, since $x \notin S$. Otherwise, the $\latest[x]$ list begins with an INS node, $\iNode$.
Next, $dOp$ performs an embedded predecessor operation with key $x$ (on line~\ref{ln:delete:first_em_pred}).
The result of the embedded predecessor operation is saved in a new DEL node $\dNode$, along with other information. Recall that this embedded predecessor operation will be used by concurrent \textsc{Predecessor} operations in the case that $dOp$ prevents them from traversing the relaxed binary trie.

Additionally, $dOp$ performs \textsc{NotifyPredOps}$(\iNode)$ (on line~\ref{ln:delete:help_notify}) to help notify predecessor operations about $\iNode$. Since an \textsc{Insert}$(x)$ operation does not send notifications when its update node is no longer the first activated update node in the $\latest[x]$ list, we ensure at least one update operation with key $x$ notifies all predecessor nodes about $\iNode$ before a \textsc{Delete}$(x)$ operation is linearized.

Then $dOp$ attempts to add $\dNode$ to the $\latest[x]$ list, by using CAS  to change $\latest[x]$ to point from $\iNode$ to $\dNode$ (on line~\ref{ln:delete:cas_latest_head}).
If the CAS is unsuccessful, 
a concurrent \textsc{Delete}$(x)$ operation, $dOp'$, successfully updated  $\latest[x]$ to point to some DEL node, $\dNode' \neq \dNode$. So $dOp$ performs \textsc{HelpActivate}$(\dNode')$ to help $dOp'$ linearize and then $dOp$ returns.
If the CAS is successful, $dOp$ inserts $\dNode$ into the $\UALL$ and $\RUALL$ and changes $\dNode.\status$ from \textsc{Inactive} to \textsc{Active} using a write. 
This write is the linearization point of $dOp$.
As in the case of the relaxed binary trie (on line~\ref{ln2:delete:setStop}), $dOp$ helps stop the previously linearized \textsc{Insert}$(x)$ operation (on line~\ref{ln:delete:setStop}).
On line~\ref{ln:delete:set_bot}, $dOp$ sets $\dNode'.\latestNext$ to $\bot$, allowing another update node to be added to the $\latest[x]$ list.
Next, $dOp$ performs a second embedded predecessor operation (on line~\ref{ln:delete:second_em_pred}) and 
records the result in $\dNode$.

The interpreted bits of the binary trie are updated in \textsc{DeleteBinaryTrie}$(x)$, which is the same algorithm as in the implementation of the relaxed binary trie.
Once the interpreted bits have been updated, $dOp$ notifies predecessor operations using \textsc{NotifyPredOps}$(\dNode)$ described previously.
Finally, $dOp$ removes its update node $\dNode$ from the $\UALL$ and $\RUALL$ and removes the predecessor nodes it created for its embedded predecessor operations from the $\PALL$. 




\subsubsection*{Predecessor Operations}\label{section_lock_free_trie_pred}

A \textsc{Predecessor}$(y)$ operation begins by calling an instance, $pOp$, of $\textsc{PredHelper}(y)$, which does all the steps of the operation except for removing the announcement from the $\PALL$. This helper function is also used by \textsc{Delete}$(y)$ operations to perform their embedded predecessor operations. Recall that these embedded predecessor operations do not remove their announcements from the $\PALL$ until the end of their \textsc{Delete}$(y)$ operation. The helper function $\textsc{PredHelper}(y)$ is complicated, so its description 
is divided into six main parts: announcing the operation in the $\PALL$, traversing the $\RUALL$,
traversing the relaxed binary trie,
traversing the $\UALL$,
collecting notifications,
and handling the case when the traversal the relaxed binary trie returns $\bot$.
These parts are performed one after the other in this order.

\begin{figure}[htbp]
	\begin{algorithmic}[1]
		\alglinenoPop{alg1}
		\State \textbf{Algorithm} \textsc{PredHelper}$(y)$
		\Indent
		\State Create a predecessor node $\pNode$ with key $y$
		\State Insert $\pNode$ at the head of $\PALL$ \label{ln:pred:insertPALL}
		\State $\pNode' \leftarrow \pNode.\nextptr$\label{ln:pred:travPALL}
		\State  $Q \leftarrow ()$ \Comment{Initialize empty sequence}
		\While {$\pNode' \neq \bot$} 
		\State prepend $\pNode'$ to $Q$
		\State $\pNode' \leftarrow \pNode'.\nextptr$ \label{ln:pred:travPALLend}
		\EndWhile
		
		
		
		\State $(\Iruall,\Druall) \leftarrow \textsc{TraverseRUall}(\pNode)$\label{ln:pred:uall1}
		
		
		\State $r_0 \leftarrow \textsc{RelaxedPredecessor}(y)$\label{ln:pred:travtrie}
		\State $(\Iuall,\Duall) \leftarrow \textsc{TraverseUall}(y)$\label{ln:pred:uall2}
		
		\State $(\Inotify, \Dnotify) \leftarrow (\emptyset,\emptyset)$
		
		\For {each notify node $\nNode$ in $\pNode.\notifyList$ with key less than $y$} \label{ln:pred:newlist}
		
		\If {$\nNode.\updateNode.\type = \text{INS}$}
		\If {$\nNode.\notifyThreshold \leq \nNode.\key$}\label{ln:pred:notifyThresholdINS}
		\State $\Inotify \leftarrow \Inotify \cup \{ \nNode.\updateNode \}$\label{ln:pred:I2}
		\EndIf
		\Else
		\If {$\nNode.\notifyThreshold < \nNode.\key$}\label{ln:pred:notifyThreshold}
		\State $\Dnotify \leftarrow \Dnotify \cup \{ \nNode.\updateNode \}$\label{ln:pred:D2}
		\EndIf
		\EndIf
		
		\If  {$\nNode.\notifyThreshold = -\infty$ and $\nNode.\updateNode \notin (\Iruall \cup \Druall)$}\label{ln:pred:updateNodeMaxCheck}
		\State $\Inotify \leftarrow \Inotify \cup \{ \nNode.\updateNodeMax \}$\label{ln:pred:updateNodeMax}
		\EndIf
		
		\EndFor
		
		
		\State $r_1 \leftarrow \max \left(\{-1\} \cup \{\uNode.\key \mid \uNode \in \Iuall \cup \Inotify \cup (\Duall - \Druall) \cup (\Dnotify - \Druall)\} \right)$\label{ln:pred:candidate}
		
		
		
		\State $\triangleright$ Unsuccessful traversal of relaxed binary trie
		\If {$r_0 = \bot$ and $\Druall \neq \emptyset$}\label{ln:pred:fix_binary_trie_pred} 
		\State $L_1 \leftarrow ()$ \Comment{Initialize empty sequence}
		\State $\mathit{predNodes} \leftarrow \{ \pNode' \mid \exists \dNode \in \Druall \text{ where } \dNode.\delprednode = \pNode' \}$\label{ln:pred:predNodes}
		\If {$Q$ contains a predecessor node in $\mathit{predNodes}$}
		\State $\pNode' \leftarrow$ predecessor node in $\mathit{predNodes}$ that occurs earliest in sequence $Q$\label{ln:pred:pNode'}
		\For {each notify node $\nNode$ in $\mathit{pNode'}.\notifyList$ with key less than $y$}\label{ln:pred:L}
		\State prepend $\nNode.\updateNode$ to $L_1$ if not already in $L_1$\label{ln:pred:prependL'1}
		\EndFor 
		\EndIf
		\State $L_2 \leftarrow ()$  \Comment{Initialize empty sequence}
		\For {each notify node $\nNode$ in $\pNode.\notifyList$ with key less than $y$}\label{ln:pred:L'}
		\State remove $\nNode.\updateNode$ from $L_1$\label{ln:pred:removeL'1}
		\If  {$\nNode.\notifyThreshold \geq \nNode.\key$} \label{ln:pred:thresholdL'2}
		\State prepend $\nNode.\updateNode$ to $L_2$  if not already in $L_2$\label{ln:pred:prependL'2}
		\EndIf 
		\EndFor
		
		\State $L \leftarrow $ sequence of update nodes in $L_1$ followed by $L_2$\label{ln:pred:L-L'}
		\State $L \leftarrow L - \{\dNode \in L \mid \dNode.\type = \text{DEL} \text{ and is not the last update node in } L \text{ with key } \dNode.\key \}$\label{ln:pred:LremoveDEL}
		
		
		\State Let $T_L = (V,E)$ be the following directed graph (i.e. Definition~\ref{definition:T_L})\label{ln:pred:T_L}:
		\State \hspace{0.5cm} $V = \{ u.\key \mid u \in L \} \cup \{ d.\delpredsecond \mid d \in L \} \cup \{ d.\delpred \mid d \in \Druall \}$
		\State \hspace{0.5cm} $E = \{ (u,v) \mid \exists \dNode \in L \text{ where } \dNode.\key = u \text{ and  }  \dNode.\delpredsecond = v \}$
		

		\State $X \leftarrow \{ x \in U \cup \{-1\} \mid \exists \dNode \in \Druall \text{ where } \dNode.\delpred = x \}$ \label{ln:pred:X_init}
		\State $X \leftarrow X \cup \{  x \in U \mid \exists \iNode \in L \text{ where } \iNode.\type = \text{INS} \text{ and } \iNode.\key = x \}$ \label{ln:pred:X_init_INS}
		\State $R \leftarrow \{ w \text{ is a sink in } T_L \mid w \text{ is reachable from some key in } X \}$ \label{ln:pred:R_from_TL}
		
		
		\State $R \leftarrow R - \{ x \in R \mid \exists \dNode \in \Druall \text{ where } \dNode.\key = x \}$ \label{ln:pred:R_remove_Druall}
		\State $r_0 \leftarrow \max\{R\}$ \label{ln:pred:set_pred0}

		\EndIf
		\State \Return $\max\{r_0, r_1 \}$,  $\pNode$\label{ln:pred:return}
		\EndIndent
		\alglinenoPush{alg1}
	\end{algorithmic} 
\end{figure}

\begin{figure}[htbp]
	\begin{algorithmic}[1]
		\alglinenoPop{alg1}
		\State \textbf{Algorithm} \textsc{Predecessor}$(y)$
		\Indent
		\State $\pred, \pNode \leftarrow \textsc{PredHelper}(y)$\label{ln:pred:PredHelper}
		\State Remove $\pNode$ from $\PALL$\label{ln:pred:removePall}
		\State \Return $\pred$
		\EndIndent
		\alglinenoPush{alg1}
	\end{algorithmic} 
\end{figure}

\noindent\\
\textbf{Announcing the operation in the $\PALL$:}
An instance, $pOp$, of \textsc{PredHelper}$(y)$ announces itself
by creating a new predecessor node, $\pNode$, with key $y$ and inserts it at the head of $\PALL$ (on line \ref{ln:pred:insertPALL}).
Then it traverses the $\PALL$ starting from $\pNode$ (on line \ref{ln:pred:travPALL} to \ref{ln:pred:travPALLend}), locally storing the sequence of predecessor nodes it encounters into a sequence $Q$.

\noindent\\
\textbf{Traversing the $\RUALL$:}
The traversal of the $\RUALL$ is done by a call to  \textsc{TraverseRUall}$(\pNode)$ (on line~\ref{ln:pred:uall1}). During this traversal, $pOp$ identifies each update node with key less than $y$ that is the first activated update node in its latest list.
Those with type INS are put into $pOp$'s local set $\Iruall$, while those with type DEL are put into $pOp$'s local set $\Druall$.
The sets $\Iruall$ and $\Druall$ include the update nodes of all $S$-modifying update operations linearized before the start of $pOp$ and are still active at the start of $pOp$'s traversal of the relaxed binary trie.
They may additionally contain update nodes of update operations linearized shortly after the start of $pOp$, because it is difficult to distinguish them from those that were linearized before the start of $pOp$. 
Since the \textsc{Delete} operations of DEL nodes in $\Druall$ may be linearized before the start of $pOp$, they are not used to determine candidate return values.
Instead, they are used to eliminate announcements or notifications later seen by $pOp$ when it traverses the $\UALL$ or $\pNode.\notifyList$.

While $pOp$ is traversing the $\RUALL$, it makes available the key of the update node it is currently visiting in the $\RUALL$. This is done by maintaining $\pNode.\RuallPosition$, which contains a pointer to the update node in the $\RUALL$ that $pOp$ is currently visiting. Recall that the key field of an update node is immutable, so a pointer to this node is sufficient to obtain its key.
Initially, $\pNode.\RuallPosition$ points to the sentinel at the head of the $\RUALL$, which has key $\infty$.
Only $pOp$ modifies $\pNode.\RuallPosition$. Each time $pOp$ reads a pointer to the next node in the $\RUALL$, $pOp$ atomically copies this pointer into $\pNode.\RuallPosition$. 
Single-writer atomic copy can be implemented from CAS with $O(1)$ worst-case step complexity~\cite{BlellochW20}.

The pseudocode for \textsc{TraverseRUall}$(\pNode)$ is as follows. The local variable $\uNode$ is initialized to point to the sentinel node with key $\infty$ at the head of the $\RUALL$ (on line~\ref{ln:travRUALL:head}). 
Then $pOp$ traverses the $\RUALL$ one update node at a time, atomically copying $\uNode.\nextptr$ into  $\pNode.\RuallPosition$ (on line~\ref{ln:travRUALL:atomicCopy}) before progressing to the next node. It traverses the list until it first reaches an update node with key less than $y$. From this point on, it checks whether the update node it is pointing to is the first activated update node in the  $\latest[\uNode.\key]$ list (on line~\ref{ln:traverseRevUALL:findLatest}).
If so, the update node is added to $\Iruall$ or $\Druall$, depending on its type.
When $\uNode.\key = -\infty$, $\pNode.\RuallPosition$ points to the sentinel node at the end of the $\RUALL$. 

\begin{figure}[!htbp]
	\begin{algorithmic}[1]
		\alglinenoPop{alg1}
		\State \textbf{Algorithm} \textsc{TraverseRUall}$(\pNode)$
		\Indent
		\State Initialize local variables $I \leftarrow \emptyset$ and  $D \leftarrow \emptyset$
		\State $y \leftarrow \pNode.\key$
		\State $\uNode \leftarrow \pNode.\RuallPosition$\label{ln:travRUALL:head}
		\Do
		\State atomic copy $\uNode.\nextptr$ to $\pNode.\RuallPosition$  \Comment{Atomic read and write}\label{ln:travRUALL:atomicCopy}
		\State $\uNode \leftarrow \pNode.\RuallPosition$ \label{ln:travRUALL:updateuNode}
		\If {$\uNode.\key < y$}\label{ln:TraverseRUall:keyCheck}
		\If {($\uNode.\status \neq \textsc{Inactive}$ and $\textsc{FirstActivated}(\uNode)$)} \label{ln:traverseRevUALL:findLatest}
		\If {$\uNode.type = \textsc{INS}$} $I \leftarrow I \cup \{\uNode\}$
		\Else  ~$D \leftarrow D \cup \{\uNode\}$\label{ln:TraverseRUall:addD}
		\EndIf
		\EndIf
		\EndIf
		\EndDo {$(\uNode.\key \neq -\infty)$} \label{ln:TraverseRUall:whileCheck}
		
		\State \Return $I, D$
		\EndIndent
		\alglinenoPush{alg1}
	\end{algorithmic}
\end{figure}

We now explain the purpose of having available the key of the update node $pOp$ is currently visiting in the $\RUALL$.
Recall that when an update operation creates a notify node, $\nNode$, to add to $\pNode.\notifyList$, it reads $\pNode.\RuallPosition.\key$ and writes it into $\nNode.\notifyThreshold$. This is used by $pOp$ to determine if $\nNode.\key$ should be used as a candidate return value. 
For example, consider a \textsc{Delete}$(w)$ operation, for some where $w < y$, linearized before the start of $pOp$, that notifies $pOp$. 
This \textsc{Delete}$(w)$ operation's DEL node should not be used to determine a candidate return value, because $w$ was removed from $S$ before the start of $pOp$.
If $pOp$ sees this DEL node 
when it traverses  the $\RUALL$, then it is added to $\Druall$ and hence will not be used. 
Otherwise this DEL is removed from the $\RUALL$ before $pOp$ can see it during its traversal of the $\RUALL$.
The notification from a \textsc{Delete} operation should only be used to determine a candidate return value if $pOp$ can guarantee that the \textsc{Delete} operation was linearized sometime during $pOp$.
In particular, when $pOp$ does not add a DEL node into $\Druall$ and $pOp$ is currently at an update node with key strictly less than $w$, the \textsc{Delete}$(w)$ operation must have added its DEL node into the $\RUALL$ before this update node. 
So only after $pOp$ has encountered an update node with key less than $w$ during its traversal of the $\RUALL$ does $pOp$ begin accepting the notifications of \textsc{Delete}$(w)$ operations. 
Note that $pOp$ cannot accept notifications from \textsc{Insert}$(w)$ operations until $pOp$ also accepts notifications from \textsc{Delete} operations with key larger than $w$. Otherwise $pOp$ may miss a candidate return value larger than $w$.
So only after $pOp$ has encountered an update node with key less than or equal to $w$ during its traversal of the $\RUALL$ does $pOp$ begin accepting the notifications of \textsc{Insert}$(w)$ operations. 




It is important that the $\RUALL$ is sorted by decreasing key. 	Since the $\RUALL$ is sorted by decreasing key, as $pOp$ traverses the $\RUALL$, it begins accepting notifications from update operations with progressively smaller keys.  This is so that if $pOp$ determines accepts the notification from an update operation, it does not miss notifications of update operations with larger key.
For example, consider the execution depicted in Figure~\ref{figure:ruall_example_2}.
\begin{figure}[!h]
	\centering
	\includegraphics[scale=0.5]{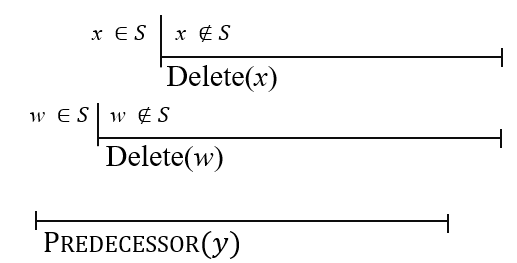}
	\caption{An example execution of a \textsc{Delete}$(x)$, \textsc{Delete}$(w)$, and \textsc{Predecessor}$(y)$ operation, where $w < x < y$.}
	\label{figure:ruall_example_2}
\end{figure}
There are three concurrent operations: a \textsc{Predecessor}$(y)$ operation, a \textsc{Delete}$(x)$ operation, and a \textsc{Delete}$(w)$ operation, where $w < x < y$. The \textsc{Delete}$(x)$ operation is linearized after the \textsc{Delete}$(w)$ operation, and both are linearized after the start of the \textsc{Predecessor}$(y)$ operation, $pOp$.
Notice that $x \in S$ for all configurations during $pOp$ in which $w \in S$. So if $w$ is a candidate return value of $pOp$, $pOp$ must also determine a candidate return value which is at least $x$. 
Hence,
If $pOp$ accepts notifications from \textsc{Delete}$(w)$ operations, 
it also accepts notifications from update operations with keys larger than $w$, and hence $x$. 

Atomic copy is used to make sure that no update operations modify the next pointer of an update node in the $\RUALL$ between when $pOp$ reads this next pointer and when $pOp$ writes a copy it into $\pNode.\RuallPosition$.
Otherwise $pOp$ may miss such an update operation whose key should be used as a candidate return value. 	
To see why, we consider the following execution where $pOp$ does not use atomic copy, which is depicted in Figure~\ref{figure:ruall_example}. 
The $\RUALL$ contains an update node, $\uNode_{20}$, with key $20$, and the two sentinel nodes with keys $\infty$ and $-\infty$. 
A \textsc{Predecessor}$(40)$ operation, $pOp$, reads a pointer to  $\uNode_{20}$ during the first step of its traversal of the $\RUALL$, but it does not yet write it into its predecessor node, $\pNode$.
An $S$-modifying \textsc{Delete}$(25)$ operation, $dOp_{25}$, is linearized, and then an $S$-modifying \textsc{Delete}$(29)$ operation,  $dOp_{29}$, is linearized. 
These DEL nodes of these \textsc{Delete} operations are not seen by $pOp$ because they are added to the $\RUALL$ before $pOp$'s current location in the $\RUALL$. 
Then $dOp_{29}$ attempts to notify $pOp$. It reads that $\pNode.\RuallPosition$ points to the sentinel node with key $\infty$, so $dOp_{29}$ writes $\infty$ to the notify threshold of its notification.  Hence, the notification is rejected by $pOp$.
Now $pOp$ writes the pointer to $\uNode_{20}$ to $\pNode.\RuallPosition$. So $pOp$ begins accepting the notifications of \textsc{Delete} operations with key greater than 20. 
When  $dOp_{25}$ attempts to notify $pOp$, it reads that $\pNode.\RuallPosition$ points to $\uNode_{25}$, so it writes $25$ to the notify threshold of its notification. Hence, the notification is accepted by $pOp$, and 25 is a candidate return value of $pOp$. In all configurations in which 25 is in $S$, the key $29$ is also in $S$. So $pOp$ should not return 25.
By using atomic copy, either both 25 and 29 will be added as candidate return values, or neither will be.

\begin{figure}[!h]
	\centering
	\includegraphics[scale=0.5]{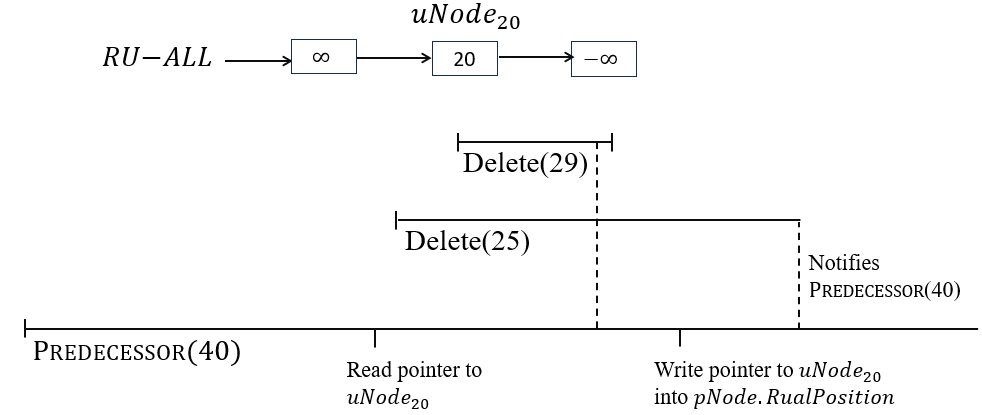}
	\caption{Example execution of a \textsc{Predecessor}$(40)$, \textsc{Delete}$(25)$, and \textsc{Delete}$(29)$. The vertical dashed lines indicate when a \textsc{Delete} operation notifies the \textsc{Predecessor}$(40)$ operation.}
	\label{figure:ruall_example}
\end{figure}

\noindent\\
\textbf{Traversing the relaxed binary trie:} 
Following $pOp$'s traversal of the $\RUALL$, $pOp$ traverses the relaxed binary trie using \textsc{RelaxedPredecessor}$(y)$ (on line~\ref{ln:pred:travtrie}). This has been described in Section~4. If it returns a value other than $\bot$, the value is a candidate return value of $pOp$. 

\noindent\\
\textbf{Traversing the $\UALL$:}
The traversal of the $\UALL$ is done in \textsc{TraverseUall}$(y)$  (on line~\ref{ln:pred:uall2}). 
Recall that it returns two sets of update nodes $\Iuall$ and $\Duall$.
The keys of INS nodes in $\Iuall$ are candidate return values. The keys of DEL nodes in $\Duall$ not seen during $pOp$'s traversal of $\RUALL$ are candidate return values. This is because the \textsc{Delete} operations that created these DEL nodes are linearized sometime during $pOp$.

\noindent\\
\textbf{Collecting notifications:}
The collection of $pOp$'s notifications is done  on line~\ref{ln:pred:newlist} to line~\ref{ln:pred:updateNodeMax}.
Consider a notify node, $\nNode$, created by an update operation, $uOp$, with key less than $y$ that $pOp$ encounters in its $\notifyList$ at the beginning of the for-loop on line~\ref{ln:pred:newlist}.
If $\nNode$ was created by an \textsc{Insert} operation, then $\nNode$ is \textit{accepted} if $\nNode.\notifyThreshold$ is less than or equal to $\nNode.\key$. 
In this case, the update node created by the  \textsc{Insert} operation is put into $pOp$'s local set $\Inotify$.
If $\nNode$ was created by a \textsc{Delete} operation, then $\nNode$ is \textit{accepted} if $\nNode.\notifyThreshold$ is less than $\nNode.\key$. In this case, the update node created by the  \textsc{Delete} operation is put into $pOp$'s local set $\Dnotify$.
Note that, if $\nNode$ is not accepted, it may still be used in the sixth part of the algorithm.

Recall that $\nNode.\updateNodeMax$ 
is the INS node with largest key less than $y$ that $uOp$ identified during its traversal of the $\UALL$. This is performed before $uOp$ notifies any \textsc{Predecessor} operations. Operation $pOp$ also determines if the key of this INS node should be used as a candidate return value:
If $\nNode.\notifyThreshold = -\infty$ (indicating that $pOp$ had completed its traversal of the $\RUALL$ when $pOp$ was notified) and
$\nNode.\updateNode$ is not an update node in $\Iruall$ or $\Druall$ (checked on line~\ref{ln:pred:updateNodeMaxCheck}), then $\nNode.\updateNodeMax$ is also added to $\Inotify$. 
We need to ensure that $pOp$ does not miss relevant \textsc{Insert} operations linearized after $pOp$  completed its traversal of the $\UALL$ and before $uOp$ is linearized. These  \textsc{Insert} operations might not notify $pOp$, and their announcements are not seen by $pOp$ when it traverses the $\UALL$. We guarantee that $\nNode.\updateNodeMax$ is the INS node with largest key less than $y$ that falls into this category.
For example, consider the execution shown in Figure~\ref{figure:ruall_example_3}. Let $w$, $x$, and $y$ be three keys where $w < x < y$.
An \textsc{Insert}$(x)$ operation, $iOp_x$ is linearized before an \textsc{Insert}$(w)$ operation, $iOp_w$, and both are linearized after a \textsc{Predecessor}$(y)$ operation, $pOp$, has completed its traversal of the $\UALL$. 
Suppose $iOp_w$ notifies $pOp$, but $iOp_x$ does not. Then $w$ is a candidate return value of $pOp$.
Note that $pOp$ does not see the announcement of $iOp_x$ when it traverses the $\UALL$. 
In this execution, $x \in S$ whenever $w \in S$. Since $pOp$ returns its largest candidate return value and $w$ is a candidate return value, 
$pOp$ must determine a candidate return value at least $x$.
The INS node of $iOp_x$ is in the $\UALL$ throughout $iOp_w$'s traversal of the $\UALL$ and, hence, is seen by $iOp_w$.
So, when $iOp_w$ notifies $pOp$, $iOp_w$ will set $\updateNodeMax$ to point to this INS node.
Hence, $x$ is a candidate return value of $pOp$.

\begin{figure}[h]
	\centering
	\includegraphics[scale=0.5]{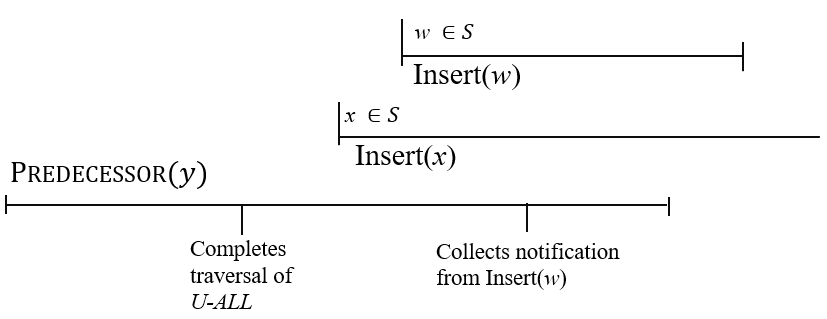}
	\caption{Example execution of a \textsc{Predecessor}$(y)$, \textsc{Insert}$(w)$, and \textsc{Insert}$(x)$, where $w < x < y$.}
	\label{figure:ruall_example_3}
\end{figure}


\noindent\\
\textbf{When the traversal of the relaxed binary trie returns $\bot$:}
Let $k$ be the largest key less than $y$ that is completely present throughout $pOp$'s traversal of the relaxed binary trie, or $-1$ if no such key exists. If $pOp$'s traversal returns $\bot$,
then by the specification of the relaxed binary trie,
there is an $S$-modifying update operation $uOp$ with key $x$, where $k < x < y$, whose update to the relaxed binary trie is concurrent with $pOp$'s traversal of the relaxed binary trie.
The update node created by $uOp$ is encountered by $pOp$ either in the $\UALL$ or in its notify list. This is because either $pOp$ will traverse the $\UALL$ before $uOp$ can remove its update node from the $\UALL$, or $uOp$ will notify $pOp$ before $pOp$ removes its predecessor node from the $\PALL$. 
Unless $uOp$ is a \textsc{Delete}$(x)$ operation whose DEL node is in $\Druall$, $x$ is a candidate return value.
This gives the following observation:
If $r_1 < k$ (on line~\ref{ln:pred:candidate}), then there is a DEL node in $\Druall$ with key $x$ such that $k < x < y$.


When the traversal of the relaxed binary trie returns $\bot$ and $\Druall$ is non-empty, $pOp$ takes additional steps to guarantee it has a candidate return value at least $k$ (by executing lines~\ref{ln:pred:fix_binary_trie_pred} to \ref{ln:pred:set_pred0}).
This is done by using the keys and results of embedded predecessor operations of update operations linearized before the start of $pOp$'s traversal of the relaxed binary trie, and possibly before the start of $pOp$.
First, $pOp$ determines the predecessor nodes created by the first embedded predecessor operations of DEL nodes in $\Druall$. 
If $pOp$ encounters one of these predecessor nodes when it traversed the $\PALL$, $pOp$ sets $\pNode'$ to be the one it encountered the latest
in the $\PALL$ (on line~\ref{ln:pred:pNode'}). Note that $\pNode'$ was announced the earliest among these predecessor nodes and also announced earlier than $\pNode$.
Then $pOp$ traverses $\pNode'.\notifyList$ to determine the update nodes of update operations with key less than $y$ that notified $\pNode'$ (on lines~\ref{ln:pred:L} to \ref{ln:pred:prependL'1}). These update nodes are stored in a local sequence, $L_1$, and appear in the order in which their notifications were added to $\pNode'.\notifyList$.

Next, $pOp$ traverses $\pNode.\notifyList$ to determine the update nodes of update operations with key less than $y$ that notified $\pNode$. Those belonging to notifications whose $\notifyThreshold$ are greater than or equal to the key of the notification
are stored in a local sequence, $L_2$ (on line~\ref{ln:pred:prependL'2}).
The update nodes in $L_2$ appear in the order in which their notifications were added to $\pNode.\notifyList$, and were added to $\pNode.\notifyList$ before $pOp$ completed its traversal of the $\RUALL$. It includes the update nodes of update operations whose notifications were rejected by $pOp$, and may include some INS nodes of \textsc{Insert} operations whose notifications were accepted by $pOp$. 
The local sequence $L$ is $L_1$ followed by $L_2$ (computed on line~\ref{ln:pred:L-L'}). 
On line~\ref{ln:pred:LremoveDEL}, all DEL nodes, $\dNode \in L$ that are not the last update node with key $\dNode.\key$ are removed from $L$. From the resulting sequence of update nodes, a directed graph is locally determined by $pOp$ according to the following definition.

\begin{definition}\label{definition:T_L}
	Let $L$ be the sequence of update nodes computed on line~\ref{ln:pred:LremoveDEL}.
	Let $V = \{ \uNode.\key \mid \uNode \in L \} \cup \{ \dNode.\delpredsecond \mid \dNode \in L \} \cup \{ \dNode.\delpred \mid \dNode \in \Druall \}$.
	Let $E$ be the set of directed edges $(u,v)$ where
	$\dNode.\key = u$ and $\dNode.\delpredsecond = v$ for some $\dNode \in L$.
	Let $T_L = (V,E)$.
\end{definition}

\noindent 
From line~\ref{ln:pred:LremoveDEL}, there is at most one DEL node with key $v$ in $L$, so any vertex $v$ in $T_L$ has at most one outgoing edge. 
Moreover, $v < u$ for any edge $(u,v)$ in $T_L$,
because $v$ is the return value of an embedded predecessor operation with key $u$.

A candidate return value is then computed from $L$, $T_L$, and $\Druall$ (on lines~\ref{ln:pred:X_init} to line~\ref{ln:pred:set_pred0}).
The set $X$ is initialized to be the set of keys of INS nodes in $L$ (on line~\ref{ln:pred:X_init}) together with the set of keys that are the return values of the first embedded predecessor operations of DEL nodes in $\Druall$ (on line~\ref{ln:pred:X_init_INS}).
Each key $w \in X$ is in $S$ sometime between $C_{ann}$ and the start of $pOp$'s traversal of the relaxed binary trie. However,
it may be deleted from $S$ before $pOp$ begins its traversal of the relaxed binary trie. Each edge in $(w,w')$ in $T_L$ represents a key $w$ that is removed from a \textsc{Delete}$(w)$ operation, and whose second embedded predecessor operation returns the key $w'$. So the sinks of $T_L$ represent keys that may not have been deleted before the start of $pOp$'s traversal of the relaxed binary trie.
The keys of DEL nodes in $\Druall$ are removed from $R$ (on line~\ref{ln:pred:R_remove_Druall}). 
The largest remaining key in $R$, which we can guarantee is non-empty, is a candidate return value of $pOp$.

\subsection{Linearizability}\label{section_predecessor_linearization}

This section shows that our implementation of the lock-free binary trie is linearizable.
We first prove basic properties about the $\latest$ lists in Section~\ref{section_latest_properties}.
We then show that our implementation is linearizable with respect to \textsc{Search}, \textsc{Insert}, and \textsc{Delete} operations in Section~\ref{section_ins_del_search_lin}.

Recall that
in a configuration $C$, the predecessor of $y$ is the key $w$ such that  $w \in S$ and there is no key $x \in S$ such that $w < x < y$, otherwise it is $-1$ if there is no key in $S$ smaller than $y$.
We show that if a \textsc{Predecessor}$(y)$ operation returns $w$, then there is a configuration $C$ during its execution interval in which $w$ is the predecessor of $y$. 
Each completed \textsc{Predecessor}$(y)$ operation can be linearized at any such configuration. This shows that our implementation is linearizable.





Recall that our implementation of \textsc{Predecessor} determines a number of candidate return values, and returns the largest of them.
In Section~\ref{section_crv_properties}, we first define three properties, denoted Properties~\ref{prop:largest}, \ref{prop:configC}, and \ref{prop:trie}, that these candidate return values will satisfy.
Additionally, we prove that any implementation of \textsc{Predecessor} whose candidate return values satisfies these properties, 
together with our implementations of \textsc{Search}, \textsc{Insert}, and \textsc{Delete},
results in a linearizable implementation of a lock-free binary trie.  
Additionally, we show that the candidate return values of our implementation satisfy Property~\ref{prop:largest}.
In Section~\ref{section_crv_2}, we prove properties of the candidate return values that are keys of update nodes in the set $\Iuall \cup \Inotify \cup (\Duall - \Druall) \cup (\Dnotify - \Druall)$.
In Section~\ref{section_crv_3}, we prove that Properties~\ref{prop:configC} and \ref{prop:trie} are satisfied.


\subsubsection{Properties of the Latest Lists}\label{section_latest_properties}

In this section, we prove basic facts about the $\latest[x]$ lists, for each $x \in U$. They are used to show the linearizability of \textsc{Insert}, \textsc{Delete}, and \textsc{Search} operations.



\begin{lemma}\label{lemma:latest:length2}
	An update node with key $x$ is only added to the beginning of the $\latest[x]$ list when $\latest[x]$ points to an update node whose $\latestNext$ field is $\bot$.
\end{lemma}

\begin{proof}
	Let $\uNode$ be the update node pointed to by $\latest[x]$.
	New update nodes are only added to the $\latest[x]$ list by updating $\latest[x]$ to point to a different update node, $\uNode'$, using CAS$(\latest[x], \uNode, \uNode')$ on line~\ref{ln:insert:cas_latest_head} (for \textsc{Insert} operations) or \ref{ln:delete:cas_latest_head} (for \textsc{Delete} operations), where $\uNode'.\latestNext$ points to $\uNode$. Immediately before this CAS on line~\ref{ln:insert:set_latestNext_bot} (for \textsc{Insert} operations) or \ref{ln:delete:set_latestNext_bot} (for \textsc{Delete} operations), $\uNode.\latestNext$ is set to $\bot$.
\end{proof}

This implies that the $\latest[x]$ list is a sequence of at most 2 update nodes.

\begin{corollary}\label{coro:latest:length2}
	Either $\latest[x]$ points to an update node whose $\latestNext$ field is $\bot$, or $\latest[x]$ points to an update node whose $\latestNext$ field points to an update node whose $\latestNext$ field is $\bot$.
\end{corollary}

We next prove that the helper function \textsc{FindLatest}$(x)$ correctly returns an update node that was the first activated update node in the $\latest[x]$ list sometime during its execution.

\begin{lemma}\label{lemma:findLatestConfig}
	Let $\tau$ be a completed instance of \textsc{FindLatest}$(x)$.
	Then there is a configuration during $\tau$ in which the update node returned is the first activated node in the $\latest[x]$ list.
\end{lemma}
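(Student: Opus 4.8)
The plan is to case-split on which of the three return statements the instance $\tau$ executes and, in each case, to exhibit a concrete configuration lying within $\tau$'s execution interval. Write $r_1$ for the step of $\tau$ that reads $\latest[x].\head$ into $\ell$ and $C_1$ for the configuration immediately after $r_1$; write $r_2$ for the step reading $\ell.\status$ on line~\ref{ln:findLatest:if} and $C_2$ for the configuration immediately after $r_2$; and, when it occurs, $r_3$ for the step reading $\ell.\latestNext$ into $m$. Throughout I will use the field-transition rules of Figure~\ref{data_records} (namely $\status$ moves only \textsc{Inactive}$\to$\textsc{Active}$\to$\textsc{Stale}, and $\latestNext$ keeps its initial value until it changes, once, to $\bot$), together with Lemma~\ref{lemma:latestNextBot} (so $\uNode.\latestNext=\bot$ forces $\uNode$ to be activated) and Lemma~\ref{lemma:latest:length2} (every $\latest[x]$ has length $1$ or $2$).

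The first ingredient is a ``head-stability'' claim: if $\uNode$ is the node pointed to by $\latest[x].\head$ in a configuration $C$ and $\uNode$ is inactive in $C$, then $\uNode$ remains the head at least until it becomes activated (in particular, in the first configuration, if any, in which it is activated). This holds because $\latest[x].\head$ changes only through a successful $\mathrm{CAS}(\latest[x].\head,\uNode,\uNode')$, as noted in the proof of Lemma~\ref{lemma:latest:length2} (line~\ref{ln:insert:cas_latest_head} or~\ref{ln:delete:cas_latest_head}), and the process performing such a CAS first executes $\uNode.\latestNext\leftarrow\bot$; since $\latestNext$ is write-once-to-$\bot$ and, by Lemma~\ref{lemma:latestNextBot}, $\uNode.\latestNext=\bot$ can hold only once $\uNode$ is activated, no head-changing CAS can succeed while the current head is still inactive.

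For the case analysis, suppose first that $\tau$ returns the node $\ell$ it read on line~\ref{ln:findLatest:return}. If $\ell$ is already activated in $C_1$, then in $C_1$ the node $\ell$ is simultaneously the head and activated, hence the first activated node in $\latest[x]$, and $C_1$ is the desired configuration. If $\ell$ is inactive in $C_1$, note that for $\tau$ to reach line~\ref{ln:findLatest:return} and return $\ell$ it must have read either $\ell.\status\neq\textsc{Inactive}$ at $r_2$ or $\ell.\latestNext=\bot$ at $r_3$; in either situation (the latter via Lemma~\ref{lemma:latestNextBot} plus the write-once property of $\latestNext$), $\ell$ is activated by some step $s$ occurring strictly after $C_1$ and before $\tau$ returns. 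By head-stability $\ell$ is still the head in the configuration immediately after $s$, where it is now activated, so that configuration works. Finally, suppose $\tau$ returns $m=\ell.\latestNext$ on line~\ref{ln:findLatest:if2}. Then $\ell$ is inactive at $r_2$, hence (by monotonicity of $\status$) throughout $[C_1,C_2]$, so by head-stability $\ell$ is still the head in $C_2$; since $\ell.\latestNext$ is $\neq\bot$ at $r_3$ and write-once-to-$\bot$, it equals $m$ throughout $[C_2,r_3]$, so in $C_2$ the list is $[\ell,m]$, which by Lemma~\ref{lemma:latest:length2} has length exactly $2$; hence $m.\latestNext=\bot$ in $C_2$, so by Lemma~\ref{lemma:latestNextBot} $m$ is activated in $C_2$, and sitting directly behind the inactive head $\ell$ it is the first activated node in $\latest[x]$ in $C_2$.

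I expect the main obstacle to be the head-stability claim, and within it the bookkeeping around the ``$\uNode.\latestNext\leftarrow\bot$ before the CAS'' step in \textsc{Insert} and \textsc{Delete}: one must check that this write really does force $\uNode$ to be activated by the time the CAS succeeds, and handle the harmless edge case in which $\uNode$ has already been displaced from $\latest[x]$ when that write is performed (then the CAS fails and does no damage). The same care is needed to justify, in the $m=\bot$ sub-case above, that the step flipping $\ell.\latestNext$ to $\bot$ cannot precede the activation of $\ell$. The remainder is routine tracking of which shared reads occur in which configuration and verifying that $C_1$, $C_2$, or the post-$s$ configuration each lies inside $\tau$'s execution interval, which it does since each is taken immediately after a step of $\tau$ or a step sandwiched between two steps of $\tau$.
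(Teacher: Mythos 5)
Your proof is correct and follows the same case analysis on which return statement \textsc{FindLatest} executes, but you make explicit a ``head-stability'' invariant (an inactive head stays the head until it is activated, because any head-changing CAS is preceded by a $\latestNext\leftarrow\bot$ write that, via Lemma~\ref{lemma:latestNextBot}, can only target an already-activated node) that the paper's proof uses silently. The paper's three-case argument asserts without justification that $\ell$ is the first activated node ``some time between'' the two reads, and in the $m\neq\bot$ case simply declares ``$m$ is an activated update node''; your version derives both facts — in the latter case by combining head-stability with Lemma~\ref{lemma:latest:length2} to force $m.\latestNext=\bot$ at $C_2$ and then applying the contrapositive of Lemma~\ref{lemma:latestNextBot}. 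This buys a self-contained, fully rigorous proof at the cost of stating one auxiliary claim; the paper buys brevity by relying on the reader to supply the head-stability reasoning. Both reach the same witness configurations ($C_1$ or the post-activation configuration for the $\ell$-return cases, $C_2$ for the $m$-return case), and your cautionary note about the unconditional $\dNode.\latestNext\leftarrow\bot$ write preceding a possibly-failing CAS is exactly the bookkeeping the paper's proof of Lemma~\ref{lemma:latestNextBot} glosses over.
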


\begin{proof}
	We prove by induction on the sequence of instances of \textsc{FindLatest}$(x)$ in the order that they are completed. 
	
	Suppose $\tau$ is the first instance of \textsc{FindLatest}$(x)$ that is completed. Note that before $\latest[x]$ is changed by a CAS on line~\ref{ln:insert:cas_latest_head} or line~\ref{ln:delete:cas_latest_head}, an instance of  \textsc{FindLatest}$(x)$ must be previously completed (on line~\ref{ln:insert:find_latest} or \ref{ln:delete:findLatest}). Therefore,  throughout $\tau$, $\latest[x]$ does not change and it points to the dummy DEL node with key $x$. By definition, this dummy DEL node is active, so it is the first activated update node in the $\latest[x]$ list throughout $\tau$. It follows from the code of \textsc{FindLatest}$(x)$ that $\tau$ returns this dummy DEL node.

	So suppose the lemma is true for all instances of \textsc{FindLatest}$(x)$ completed before $\tau$.
	We first show that if $\latest[x]$ changes during $\tau$, it changes from pointing to an active update node to an inactive update node. The pointer $\latest[x]$ only changes by a CAS on line~\ref{ln:insert:cas_latest_head} or \ref{ln:delete:cas_latest_head}.
	These CAS steps change $\latest[x]$ to point from an update node returned by an instance of \textsc{FindLatest}$(x)$ (on line~\ref{ln:insert:find_latest} or line~\ref{ln:delete:findLatest}), to a newly created inactive update node.
	By the induction hypothesis, the update node returned by \textsc{FindLatest}$(x)$ is active. So  $\latest[x]$ only changes from pointing to an active update node to an inactive update node. 
	
	Let $\uNode$ be the update node pointed to by $\latest[x]$ that is read by $\tau$ on line~\ref{ln:findLatest:read_head}.
	Suppose that $\tau$ reads $\uNode.\status = \textsc{Active}$ on line~\ref{ln:findLatest:if}. Then $\uNode$ is the first activated node in the $\latest[x]$ list in some configuration between the read of $\latest[x]$ on line~\ref{ln:findLatest:read_head} and the read of $\uNode.\status$ on line~\ref{ln:findLatest:if}. Since $\tau$ returns $\uNode$, the lemma holds.
	
	Suppose $\tau$ reads that $\uNode.\status = \textsc{Inactive}$ on line~\ref{ln:findLatest:if}. Since $\latest[x]$ only changes when it points to an active update node, $\latest[x]$ still points to $\uNode$ when this read occurs.
	Suppose $\tau$ reads that $\uNode.\latestNext = \uNode' \neq \bot$ on line~\ref{ln:findLatest:if2}. 
	Once $\uNode.\latestNext$ is initialized to point to $\uNode'$, it does not change to any other value except for $\bot$. 
	So in the configuration immediately after $\tau$ reads $\uNode.\status = \textsc{Inactive}$,
	$\uNode.\latestNext$ points to $\uNode'$. Since $\latestNext$ is always initialized to point to an update node returned by \textsc{FindLatest}$(x)$  (which is active by the induction hypothesis), $\uNode'$ is the first activated update node in the $\latest[x]$ list in this configuration.
	Since $\tau$ returns $\uNode'$, the lemma holds.

	So suppose $\tau$ reads that $\uNode.\latestNext = \bot$ on line~\ref{ln:findLatest:read_latestNext}.
	We argue that $\uNode$ was activated sometime before $\uNode.\latestNext$ was set to $\bot$.
	\begin{itemize}
		\item 	Suppose $\uNode.\latestNext$ was changed to $\bot$ on line~\ref{ln:insert:set_bot} (for \textsc{Insert} operations) or line~\ref{ln:delete:set_bot} (for \textsc{Delete} operations). On the preceding line (line~\ref{ln:insert:set_active} for \textsc{Insert} operations, or line~\ref{ln:delete:set_active} for \textsc{Delete} operations), $\uNode.\status$ was set to \textsc{Active}.	
		
		\item Suppose $\uNode.\latestNext$ was changed to $\bot$ on line~\ref{ln:insert:set_latestNext_bot} (for \textsc{Insert} operations) or line~\ref{ln:delete:set_latestNext_bot}  (for \textsc{Delete} operations). 
		Since $\uNode$ was returned by \textsc{FindLatest}$(x)$ on line~\ref{ln:insert:find_latest} or line~\ref{ln:delete:findLatest}, it was activated, by the induction hypothesis.
		
		\item Suppose $\uNode.\latestNext$ was changed to $\bot$ on line~\ref{ln:help_activate:set_bot}. Then $\uNode.\status$ was previously set to \textsc{Active} on line~\ref{ln:help_activate:CAS}.
	\end{itemize}
	So $\uNode$ was activated sometime before $\uNode.\latestNext$ was set to $\bot$. Since $\latest[x]$ points to $\uNode$ until after it is activated, $\uNode$ is the first activated update node in the $\latest[x]$ list in the configuration immediately after its status changed from \textsc{Inactive} to \textsc{Active}. Since $\tau$ returns $\uNode$, the lemma holds.
\end{proof}

The next two lemmas, together with the previous lemmas in this section, argue how update nodes are added to the $\latest[x]$ list. In particular, only inactive update nodes are added to the $\latest[x]$ list. Furthermore, their status must be set to \textsc{Active} and their $\latestNext$ field must be set to $\bot$ before another update node is added. 

\begin{lemma}\label{lemma:latest_change}
	When $\latest[x]$ changes, it changes from pointing to an active INS node to an inactive DEL node, or from an active DEL node to an inactive INS node.
\end{lemma}

\begin{proof}
	The pointer $\latest[x]$ only changes by a CAS on line~\ref{ln:insert:cas_latest_head} or \ref{ln:delete:cas_latest_head}.
	These CAS steps change $\latest[x]$ to point from an update node returned by an instance of \textsc{FindLatest}$(x)$ (on line~\ref{ln:insert:find_latest} or line~\ref{ln:delete:findLatest}), to a newly created inactive update node.
	By Lemma~\ref{lemma:findLatestConfig}, the update node returned by \textsc{FindLatest}$(x)$ is active. The if-statements on lines~\ref{ln:insert:return_early} and \ref{ln:delete:return_early} guarantee that the type of the newly created inactive update node is different from the type of the update node returned by \textsc{FindLatest}$(x)$.
\end{proof}

\begin{lemma}\label{lemma:change_active}
	When the status of an update node, $\uNode$, with key $x$ changes from \textsc{Inactive} to \textsc{Active}, $\uNode$ is the first activated update node in the $\latest[x]$ list.
\end{lemma}

\begin{proof}
	
	We first prove $\latest[x]$ was changed to point to $\uNode$ in some previous configuration.
	Suppose $\uNode.\status$ first changes to \textsc{Active} on line~\ref{ln:insert:set_active} (for \textsc{Insert} operations) or line~\ref{ln:delete:set_active} (for \textsc{Delete} operations).
	Then, in a previous step, a CAS on line~\ref{ln:insert:cas_latest_head} (for \textsc{Insert} operations)  or \ref{ln:delete:cas_latest_head} (for \textsc{Insert} operations) successfully changed $\latest[x]$ to point to $\uNode$ when it was inactive. 
	The only other case is if $\uNode.\status$ first changes to \textsc{Active} on line~\ref{ln:help_activate:CAS} during an instance of \textsc{HelpActivate}$(\uNode)$. Then $\uNode$ is pointed to by $\latest[x]$ when it is read on line~\ref{ln:insert:help_activate} (for \textsc{Insert} operations) or line~\ref{ln:delete:help_activate} (for \textsc{Delete} operations).
	
	By Lemma~\ref{lemma:latest_change}, $\latest[x]$ points to $\uNode$ while its status is \textsc{Inactive}. 
	Therefore, when $\uNode.\status$ changes from \textsc{Inactive} to \textsc{Active}, $\latest[x]$ points to $\uNode$. So $\uNode$ is the first activated update node in the $\latest[x]$ list immediately after this change.
\end{proof}

We prove in the following two lemmas that the helper function \textsc{FirstActivated}$(\uNode)$ correctly identifies whether $\uNode$ is the first activated update node in the $\latest[x]$ list.

\begin{lemma}\label{lemma:firstActivated}
	Let $\tau$ be a completed instance of \textsc{FirstActivated}$(\uNode)$, for some activated update node, $\uNode$, with key $x$. If $\tau$ returned \textsc{True}, there is a configuration during $\tau$ in which $\uNode$ was the first activated update node in the $\latest[x]$ list.
\end{lemma}

\begin{proof}
	Let $\uNode'$ be the update node pointed to by $\latest[x]$ when it was read on line~\ref{ln:firstactivated:head}.
	Suppose $\tau$ returned \textsc{True} because $\uNode' = \uNode$. Since, by assumption, $\uNode$ was activated before the start of $\tau$, $\uNode$ was the first activated node in the $\latest[x]$ list immediately after this read on line~\ref{ln:firstactivated:head}.
	
	So suppose $\tau$ returned \textsc{True} because $\uNode'.\status = \textsc{Inactive}$ and $\uNode = \uNode'.\latestNext$. 
	By Lemma~\ref{lemma:latest_change}, $\latest[x]$ still pointed to  $\uNode'$ when $\tau$ read $\uNode'.\status = \textsc{Inactive}$. 
	Once $\uNode'.\latestNext$ was initialized to point to $\uNode$, it does not change to any other value except for $\bot$. 
	So in the configuration  immediately after $\tau$ read $\uNode'.\status = \textsc{Inactive}$,
	$\uNode'.\latestNext$ pointed to $\uNode$ and $\uNode$ was the first activated update node in the $\latest[x]$ list.

	
\end{proof}

\begin{lemma}\label{lemma:firstActivated_False}
	Let $\tau$ be a completed instance of \textsc{FirstActivated}$(\uNode)$, for some activated update node, $\uNode$, with key $x$. If $\tau$ returned \textsc{False}, there is a configuration during $\tau$ in which $\uNode$ was not the first activated node in the $\latest[x]$ list.
\end{lemma}

\begin{proof}
	Let $\uNode'$ be the update node pointed to by $\latest[x]$ when it was read on line~\ref{ln:firstactivated:head}.
	Suppose $\tau$ returned \textsc{False} because $\uNode' \neq \uNode$ and $\uNode'.\status = \textsc{Active}$. 
	If $\uNode'.\status = \textsc{Active}$ when $\latest[x]$ was read on line~\ref{ln:firstactivated:head}, then $\uNode$ was not the first activated node in the $\latest[x]$ list immediately after this read.
	So suppose $\uNode'.\status = \textsc{Inactive}$ when $\latest[x]$ was read on line~\ref{ln:firstactivated:head}. So $\uNode'.\status$ changed to $\textsc{Active}$ sometime before $\tau$ read that $\uNode'.\status = \textsc{Active}$. 
	By Lemma~\ref{lemma:change_active}, when $\uNode'$ was activated, it was the first activated update node in the $\latest[x]$ list. So $\uNode$ was not the first activated node in the $\latest[x]$ list immediately after $\uNode'$ was activated.
	
	So suppose $\tau$ returned \textsc{False} because $\uNode' \neq \uNode$ and $\uNode'.\latestNext \neq \uNode$. 
	If $\uNode'.\latestNext$ did not point to $\uNode$ when $\latest[x]$ was read on line~\ref{ln:firstactivated:head}, then Corollary~\ref{coro:latest:length2} implies that $\uNode$ was not in the $\latest[x]$ list.
	So suppose $\uNode'.\latestNext$ pointed to $\uNode$ when $\latest[x]$ was read on line~\ref{ln:firstactivated:head}. 
	Because  $\uNode'.\latestNext$ does not point to $\uNode$ when it is read on line~\ref{ln:firstactivated:return} and $\uNode'.\latestNext$ only changes to $\bot$ after it is initialized, 
	$\uNode'.\latestNext = \bot$.
	By Lemma~\ref{lemma:latest:length2}, $\latest[x]$ only changes to point to an update node other than $\uNode'$ when $\uNode'.\latestNext = \bot$. 
	So there is a configuration between the read on line~\ref{ln:firstactivated:head} and the read of $\uNode'.\latestNext$ on line~\ref{ln:firstactivated:return} when the $\latest[x]$ list only contains $\uNode'$. Therefore, $\uNode$ was not the first activated node in the $\latest[x]$ list in this configuration.

\end{proof}

\subsubsection{Linearizability of Insert, Delete, and Search}\label{section_ins_del_search_lin}

A \textsc{Search}$(x)$ operation that returns \textsc{True} is linearized in any configuration during its execution interval in which $x \in S$. The next lemma proves such a configuration exists.

\begin{lemma}
Suppose $op$ is a \textsc{Search}$(x)$ operation that returns \textsc{True}. Then there exists a configuration during $op$ in which $x \in S$.
\end{lemma}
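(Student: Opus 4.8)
The plan is to reduce the claim to Lemma~\ref{lemma:findLatestConfig} together with the defining relationship between the $\latest$ lists and the set $S$ described in Section~\ref{section_full_impl_high_level}. A \textsc{Search}$(x)$ operation consists of a single call to \textsc{FindLatest}$(x)$ followed by a check on the type of the returned update node, and it returns \textsc{True} exactly when that node is an INS node. Since \textsc{FindLatest}$(x)$ is a sub-routine invoked within the execution interval of $op$, any configuration occurring during that instance of \textsc{FindLatest} also occurs during $op$; this is the only place where the ``during $op$'' part of the statement has to be checked, and it is immediate.

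First I would apply Lemma~\ref{lemma:findLatestConfig} to the instance $\tau$ of \textsc{FindLatest}$(x)$ performed by $op$, obtaining a configuration $C$ during $\tau$ (hence during $op$) in which the update node $\uNode$ returned by $\tau$ is the first activated update node in $\latest[x]$. Next, since $op$ returns \textsc{True}, it read $\uNode.\type = \textsc{INS}$ on line~\ref{ln:search:findLatest}'s continuation, and because the $\type$ field is immutable, $\uNode$ is an INS node in $C$. Finally I would invoke the invariant that the first activated update node in $\latest[x]$ is an INS node if and only if $x \in S$; applied in $C$, this yields $x \in S$ in $C$, which is the required configuration.

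I do not expect any real obstacle: the argument is essentially a one-line composition of an already-proved lemma with the $\latest$-list invariant (the same structure as the relaxed-trie Lemma~\ref{lemma:search:true}, but with the extra hop through Lemma~\ref{lemma:findLatestConfig} to account for the more elaborate \textsc{FindLatest} used here). The substantive fact being relied on — that the alternating structure of each $\latest$ list faithfully records the linearization order of $S$-modifying \textsc{Insert} and \textsc{Delete} operations, so that ``first activated node is an INS node'' characterizes membership in $S$ — is part of the linearizability argument developed in Section~\ref{section_ins_del_search_lin}, and I would cite it rather than re-derive it.
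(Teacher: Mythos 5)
Your proposal is correct and matches the paper's proof essentially step for step: both apply Lemma~\ref{lemma:findLatestConfig} to obtain a configuration $C$ during $op$'s instance of \textsc{FindLatest}$(x)$ in which the returned node $\uNode$ is the first activated node in $\latest[x]$, observe that $\uNode$ is an INS node because $op$ returned \textsc{True}, and conclude $x \in S$ in $C$ from the defining relation between the $\latest$ lists and $S$.
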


\begin{proof}
\textsc{Search}$(x)$ begins by calling \textsc{FindLatest}$(x)$ on line~\ref{ln:search:findLatest}, which returns an update node $\uNode$.
By Lemma~\ref{lemma:findLatestConfig}, there is a configuration $C$ during this instance of \textsc{FindLatest}$(x)$ in which $\uNode$ is the first activated update node in the $\latest[x]$ list. Since $op$ returned \textsc{True}, it read that $\uNode.\type = \textsc{INS}$. By definition, $x \in S$ in $C$.
\end{proof}

A \textsc{Search}$(x)$ operation that returns \textsc{False} is linearized in any configuration during its execution interval in which $x \notin S$. 

\begin{lemma}
Suppose $op$ is a \textsc{Search}$(x)$ operation that returns \textsc{False}. Then there exists a configuration during $op$ in which $x \notin S$.
\end{lemma}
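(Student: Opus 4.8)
The plan is to mirror the argument just given for the \textsc{True} case, exploiting the symmetry between INS and DEL nodes. First I would recall that \textsc{Search}$(x)$ does nothing except invoke \textsc{FindLatest}$(x)$ on line~\ref{ln:search:findLatest}, obtaining an update node $\uNode$, and then return \textsc{False} precisely because it reads that $\uNode.\type \neq \textsc{INS}$, i.e.\ $\uNode.\type = \textsc{DEL}$ (the $\type$ field is immutable, so this value does not change). The only nontrivial ingredient is locating a configuration in which $\uNode$ actually reflects the state of $\latest[x]$.

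For that I would apply Lemma~\ref{lemma:findLatestConfig}: since $\uNode$ is the node returned by this completed instance of \textsc{FindLatest}$(x)$, there is a configuration $C$ during that instance (hence during $op$) in which $\uNode$ is the first activated update node in $\latest[x]$. Combining this with the fact that $\uNode$ is a DEL node, and invoking the definition that $x \in S$ in a configuration if and only if the first activated update node in $\latest[x]$ is an INS node, I conclude $x \notin S$ in $C$, which is the desired configuration within $op$'s execution interval.

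I do not expect any real obstacle here; the statement is the exact dual of the preceding lemma, and both reduce to Lemma~\ref{lemma:findLatestConfig} plus the definition of membership in $S$ via the first activated update node. The only points to be careful about are (i) explicitly noting that $\type$ is immutable so the value read by $op$ is the value of $\uNode$ in $C$ as well, and (ii) making sure the quantifier "during $op$" is justified, which follows because the instance of \textsc{FindLatest}$(x)$ (and hence $C$) lies inside $op$'s execution interval.
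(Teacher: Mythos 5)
Your proposal is correct and follows essentially the same route as the paper's own proof: invoke \textsc{FindLatest}$(x)$, apply Lemma~\ref{lemma:findLatestConfig} to obtain a configuration $C$ in which the returned node $\uNode$ is the first activated update node in $\latest[x]$, observe that $\uNode$ is a DEL node, and conclude $x \notin S$ in $C$ by definition. The extra remark about immutability of the $\type$ field is a sound (if unstated in the paper) justification that the type observed is the type in $C$.
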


\begin{proof}
\textsc{Search}$(x)$ begins by calling \textsc{FindLatest}$(x)$ on line~\ref{ln:search:findLatest}, which returns an update node $\uNode$.
By Lemma~\ref{lemma:findLatestConfig}, there is a configuration $C$ during this instance of \textsc{FindLatest}$(x)$ in which $\uNode$ is the first activated update node in the $\latest[x]$ list. Since $op$ returned \textsc{False}, it read that $\uNode.\type = \textsc{DEL}$. By definition, $x \notin S$ in $C$.
\end{proof}

The next lemma shows how to choose the linearization points of \textsc{Insert} operations. 
An $S$-modifying \textsc{Insert}$(x)$ operation is linearized when the status of the update node it created is changed from \textsc{Inactive} to \textsc{Active}. The lemma shows $x\in S$ in this configuration.
An \textsc{Insert}$(x)$ operation that is not $S$-modifying does not update $\latest[x]$ to point to its own update node because it returns early on line~\ref{ln:insert:return_early} or line~\ref{ln:insert:return_help}.
This happens when it sees that the first activated update node in the $\latest[x]$ is already an INS node, or when it unsuccessfully attempts to change $\latest[x]$ to point to its own INS node (by the CAS on line~\ref{ln:insert:cas_latest_head}).
In each of these cases, we prove that there is a configuration during the \textsc{Insert}$(x)$ operation in which $x \in S$, and, therefore, the operation does not need to add $x$ to $S$. 

\begin{lemma}
Let $iOp$ be a completed \textsc{Insert}$(x)$ operation. Then there was a configuration during $iOp$ in which $x \in S$.
\end{lemma}

\begin{proof}
\textsc{Insert}$(x)$ begins by calling \textsc{FindLatest}$(x)$ on line~\ref{ln:insert:find_latest}, which returns an update node, $\dNode$. By Lemma~\ref{lemma:findLatestConfig}, there is a configuration $C$ during this instance of \textsc{FindLatest}$(x)$ in which $\dNode$ is the first activated node in the $\latest[x]$ list.
By Corollary~\ref{coro:latest:length2}, in $C$, either $\latest[x]$ points to $\dNode$, or it points to an inactive update node whose $\latestNext$ field points to $\dNode$. 

If  $iOp$ returns on line~\ref{ln:insert:return_early}, then $\dNode$ is an INS node and $x\in S$. So suppose  $iOp$ does not return on line~\ref{ln:insert:return_early} and $\dNode$ is a DEL node.
So $iOp$ creates a new inactive INS node, $\iNode$. 
Consider the CAS that $iOp$ performs on line~\ref{ln:insert:cas_latest_head}, which attempts to change $\latest[x]$ to point from $\dNode$ to $\iNode$. 
Suppose this CAS is successful, so $iOp$ is an $S$-modifying \textsc{Insert}$(x)$ operation.
The status of $\iNode$ must change to \textsc{Active} by the time $iOp$ sets $\iNode.\status$ to \textsc{Active} on line~\ref{ln:insert:set_active}.
By Lemma~\ref{lemma:change_active},  $\iNode$ is the first activated update node in the $\latest[x]$ list when $\iNode.\status$ changes from \textsc{Inactive} to \textsc{Active}, so $x \in S$ immediately after this change.

So suppose the CAS on line~\ref{ln:insert:cas_latest_head} is unsuccessful. 
Then either $\latest[x]$ pointed to an inactive INS node in $C$ or, between $C$ and this unsuccessful CAS, some other \textsc{Insert}$(x)$ operation changed $\latest[x]$ to point from $\dNode$ to its own INS node.
Let $\iNode'$ be this INS node.
By Lemma~\ref{lemma:latest_change}, $\latest[x]$ does not change to point to a different update node until $\iNode'$ is activated. On line~\ref{ln:insert:help_activate}, $iOp$ reads the update node, $\uNode$, pointed to by $\latest[x]$, and performs an instance of \textsc{HelpActivate}$(\uNode)$. If $\uNode = \iNode'$, then $iOp$ sets $\iNode'.\status$ to \textsc{Active} on line~\ref{ln:help_activate:CAS} before returning. 
If $\uNode \neq \iNode'$, then  $\iNode'$ was activated sometime before $\latest[x]$ was changed to point to $\uNode$.
In either case, in the configuration immediately after $\iNode'.\status$ changed to \textsc{Active}, $\iNode'$ was the first activated update node in the $\latest[x]$ and $x \in S$.
\end{proof}

By symmetric arguments, we can prove a similar lemma for \textsc{Delete}$(x)$ operations. 

\begin{lemma}
Let $dOp$ be a completed \textsc{Delete}$(x)$ operation. Then there was a configuration during $dOp$ in which $x \notin S$.
\end{lemma}

\subsubsection{Candidate Return Values and Their Properties}\label{section_crv_properties}


Recall that our implementation of \textsc{Predecessor}$(y)$ performs a single instance of \textsc{PredHelper}$(y)$ and then returns the result of this instance.
An instance of \textsc{PredHelper}$(y)$ produces a number of candidate return values, and returns the largest of these. In this section, we state properties of the candidate return values of a \textsc{PredHelper}$(y)$ operation, and prove that the value returned by this operation is correct, assuming these properties hold.

Let $pOp$ be a completed instance of \textsc{PredHelper}$(y)$, and let $\pNode$ be the predecessor node it created. 
During $pOp$, it computes six sets of update nodes: 
$\Iruall$ and $\Druall$ are the sets of INS nodes and DEL nodes obtained from $pOp$'s traversal of the $\RUALL$, 
$\Iuall$ and $\Duall$ are the sets of INS nodes and DEL nodes obtained from $pOp$'s traversal of the $\UALL$, and $\Inotify$ and $\Dnotify$ are the sets of INS nodes and DEL nodes obtained from
$pOp$'s traversal of its notify list.
The keys of update nodes in $\Iuall \cup \Inotify \cup (\Duall-\Druall) \cup (\Dnotify - \Druall)$ are \textit{candidate return values} of $pOp$. 

There may be one additional candidate return value from $pOp$'s traversal of the relaxed binary trie:
If $pOp$'s traversal of the relaxed binary trie returns a value $x \neq \bot$, then $x$ is a \textit{candidate return value}. 
If the traversal of the relaxed binary trie returns $\bot$ and $\Druall \neq \emptyset$, then $pOp$ assigns a value $x$ to $r_0$ on line~\ref{ln:pred:set_pred0}. 
If $\Iuall \cup \Inotify \cup (\Duall-\Druall) \cup (\Dnotify - \Druall)$ does not contain an update node with key at least $x$ (i.e. $pOp$ does not already have a candidate return value at least $x$), $x$ is a candidate return value.

\begin{property}\label{prop:largest}


All candidate return values of $pOp$ are less than $y$, and $pOp$ returns its largest candidate return value.
\end{property}


For each candidate return value $w \neq -1$, we show that there is a configuration $C$ during $pOp$ in which $w \in S$. In configuration $C$, if there exists a key $x$, where $w < x < y$, that is also in $S$ (so $w$ is not the predecessor of $y$ in $C$), then $pOp$ will determine a candidate return value at least $x$ and, hence, not return $w$. 
Let $C_T$ be the configuration immediately before $pOp$ begins its traversal of the relaxed binary trie.
How $pOp$ determines a candidate return value at least $x$ is different depending on when $C$ occurs relative to $C_T$ and when the $S$-modifying \textsc{Insert}$(x)$ operation, $iOp$, that last added $x$ to $S$ prior to $C$ was linearized.
Suppose $C$ occurs at or after $C_T$. If $iOp$ is linearized after $C_T$, then Property~2(c) states $pOp$ has a candidate return value at least $x$. If $iOp$ is linearized before $C_T$, Property~3 states that $pOp$ has a candidate return value at least $x$.
Now suppose $C$ occurs before $C_T$. If $x$ is removed from $S$ after $iOp$ is linearized, but before $C_T$, Property~2(b) states $pOp$ has a candidate return value at least $x$. If $x$ is in $S$ in all configurations between when $iOp$ is linearized and $C_T$, Property~3 states $pOp$ has a candidate return value at least $x$. 

\begin{property}\label{prop:configC}
Suppose $w \neq -1$ is a candidate return value of $pOp$. 
Then 
there is a configuration $C$ after $pOp$ is announced but before the end of $pOp$ such that

\begin{enumerate}
	\item[(a)] 
	$w \in S$ in $C$,
	
	
	\item[(b)]
	if $C$ occurs before $C_T$ and
	there exists an $S$-modifying \textsc{Delete}$(x)$ operation linearized between $C$ and $C_T$
	with $w < x < y$,  then $pOp$ has a candidate return value which is at least $x$, and
	
	
	
	\item[(c)]     
	if $C$ occurs after $C_T$ and 
	there exists an $S$-modifying \textsc{Insert}$(x)$ operation 
	linearized between $C_T$ and $C$ with  $w < x < y$, then $pOp$ has a candidate return value which is at least $x$.
	
	
\end{enumerate}
\end{property}


\begin{property}\label{prop:trie}
Suppose an $S$-modifying \textsc{Insert}$(x)$ operation $iOp$ is linearized before $C_T$, $x < y$, 
and there are no $S$-modifying \textsc{Delete}$(x)$ operations linearized after $iOp$ and before $C_T$.
Then $pOp$ has a candidate return value which is at least $x$.
\end{property}

We use Property~\ref{prop:largest}, Property~\ref{prop:configC}, and Property~\ref{prop:trie} to prove the following theorem, which states that the value returned by $pOp$ is a predecessor of $y$ in some configuration during $pOp$. 
If $pOp$ is an instance of \textsc{PredHelper} invoked by a \textsc{Predecessor}$(y)$ operation, this operation can be linearized at any such configuration.

\begin{theorem}\label{thm:predecessor}
If $pOp$ returns $w \in U \cup \{-1\}$, then there exists a configuration  after $pOp$ is announced but before the end of $pOp$  in which $w$ is the predecessor of $y$.
\end{theorem}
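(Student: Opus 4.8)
The plan is to derive the theorem entirely from Properties~\ref{prop:largest}, \ref{prop:configC}, and \ref{prop:trie}, using the fact that $C_T$ — the configuration just before $pOp$ traverses the relaxed binary trie — lies within $pOp$'s execution interval. First I would dispose of the case $w = -1$. By Property~\ref{prop:largest}, every candidate return value of $pOp$ equals $-1$. If some key $x$ with $0 \le x < y$ were in $S$ in $C_T$, then the last $S$-modifying update operation with key $x$ linearized before $C_T$ would be an $S$-modifying \textsc{Insert}$(x)$ operation $iOp$; being the last such operation, no $S$-modifying \textsc{Delete}$(x)$ is linearized after $iOp$ and before $C_T$, so Property~\ref{prop:trie} would give $pOp$ a candidate return value at least $x \ge 0 > -1$, contradicting that all candidate return values are $-1$. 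Hence no key less than $y$ is in $S$ in $C_T$, so $-1$ is the predecessor of $y$ in $C_T$, and $pOp$ is linearized there.

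For $w \in U$, Property~\ref{prop:largest} gives $w < y$ and that $w$ is $pOp$'s largest candidate return value, while Property~\ref{prop:configC}(a) gives a configuration $C$ during $pOp$ with $w \in S$. It suffices to show that no key $x$ with $w < x < y$ lies in $S$ in $C$, since then $w$ is the predecessor of $y$ in $C$ and we linearize $pOp$ at $C$. Suppose instead that such an $x$ exists, and let $iOp$ be the $S$-modifying \textsc{Insert}$(x)$ operation that is the last $S$-modifying update operation with key $x$ linearized before $C$ (so no $S$-modifying update with key $x$ is linearized strictly between $iOp$ and $C$). I would show $pOp$ has a candidate return value at least $x > w$, contradicting the maximality of $w$, by a case analysis on the position of $C$ and of $iOp$ relative to $C_T$:

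\begin{itemize}
\item If $C$ occurs at or before $C_T$ and no $S$-modifying \textsc{Delete}$(x)$ is linearized strictly between $C$ and $C_T$: then $iOp$ is linearized before $C_T$ and no $S$-modifying \textsc{Delete}$(x)$ is linearized after $iOp$ and before $C_T$, so Property~\ref{prop:trie} applies.
\item If $C$ occurs (strictly) before $C_T$ and some $S$-modifying \textsc{Delete}$(x)$ is linearized between $C$ and $C_T$: then Property~\ref{prop:configC}(b) applies with this $x$.
\item If $C$ occurs after $C_T$ and $iOp$ is linearized before $C_T$: since no $S$-modifying update with key $x$ lies between $iOp$ and $C$, and $C_T$ lies between $iOp$ and $C$, the last $S$-modifying update with key $x$ before $C_T$ is $iOp$, so $x \in S$ in $C_T$ and no $S$-modifying \textsc{Delete}$(x)$ follows $iOp$ before $C_T$; Property~\ref{prop:trie} applies.
\item If $C$ occurs after $C_T$ and $iOp$ is linearized at or after $C_T$: then $iOp$ is linearized between $C_T$ and $C$, and Property~\ref{prop:configC}(c) applies.
\end{itemize}

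The case split is exhaustive, and in every branch $pOp$ obtains a candidate return value at least $x > w$, the desired contradiction; hence $w$ is the predecessor of $y$ in $C$. The main obstacle is not this argument — which is essentially bookkeeping on the order of linearization points — but rather establishing Properties~\ref{prop:configC} and \ref{prop:trie} themselves, which is deferred to Sections~\ref{section_crv_1}--\ref{section_crv_3}. Within this proof the only care needed is to align the location of $iOp$ relative to $C_T$ with the right property (Property~\ref{prop:trie} when $iOp$ precedes $C_T$ and its key persists in $S$ up to $C_T$, and Property~\ref{prop:configC}(b)/(c) otherwise) and to verify that the ``no $S$-modifying \textsc{Delete}$(x)$ in the relevant interval'' hypotheses hold in each branch.
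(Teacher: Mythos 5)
Your proof is correct and takes essentially the same approach as the paper's: derive the theorem from Properties~\ref{prop:largest}, \ref{prop:configC}, and \ref{prop:trie}, handle $w=-1$ via Property~\ref{prop:trie}, and for $w\in U$ argue by contradiction that no $x\in S$ with $w<x<y$ can exist in the configuration $C$ supplied by Property~\ref{prop:configC}(a). The only surface difference is your four-way case split keyed on the position of $C$ and of $iOp$ relative to $C_T$, whereas the paper splits first on whether $iOp$ is linearized before or after $C_T$ and then on whether $x$ persists in $S$ up to $C_T$; the two decompositions are logically equivalent and route each case to the same property.
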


\begin{proof}
First suppose that $pOp$ returns $-1$. 
To obtain a contradiction, suppose
that there is a key $x \in S$ in $C_T$, where $x < y$.
Let $iOp$ be the \textsc{Insert}$(x)$ operation that last added $x$ to $S$ before $C_T$. So there are no \textsc{Delete}$(x)$ operations linearized after $iOp$ but before $C_T$. By Property~\ref{prop:trie}, there is a key $x'$ where $x \leq x' < y$ that is a candidate return value of $pOp$. This contradicts Property~\ref{prop:largest}.

Otherwise, $pOp$ returns $w \in U$.
By Property~\ref{prop:largest}, $w$ is a candidate return value of $pOp$.
Let $C$ be any configuration during $pOp$ satisfying Property~\ref{prop:configC}. By Property~\ref{prop:configC}(a), $w \in S$ in $C$. 
Suppose that $w$ is not the predecessor of $y$ in $C$. Then there is a key $x \in S$ in $C$, where $w < x < y$.
Let $iOp$ be the \textsc{Insert}$(x)$ operation that last added $x$ to $S$ before $C$.
If $iOp$ is linearized after $C_T$, then
it follows from Property~\ref{prop:configC}(c) that  $pOp$ has a candidate return value that is at least $x$. This contradicts Property~\ref{prop:largest}.

Therefore $iOp$ is linearized before $C_T$. If $x \in S$ in all configurations from the linearization point of $iOp$ to $C_T$, then Property~\ref{prop:trie} states there is a key $x'$ that is a candidate return value of $pOp$, where $w < x \leq x' < y$. This contradicts Property~\ref{prop:largest}.
So  $x \notin S$ in some configuration between the linearization point of $iOp$ and $C_T$.
Since $x \in S$ in all configurations from the linearization point of $iOp$ to $C$, 
it follows that $x$ is removed from $S$ by a \textsc{Delete}$(x)$ operation $dOp$ linearized sometime between $C$ and $C_T$. By Property~\ref{prop:configC}(b),  $pOp$ has a candidate return value that is at least $x$. This contradicts Property~\ref{prop:largest}.
Therefore, in any case, $w$ is the predecessor of $y$ in $C$.
\end{proof}



For the remainder of Section~\ref{section_predecessor_linearization}, let $\alpha$ be an arbitrary execution of our implementation. 
We prove that every instance of \textsc{PredHelper} that is completed in $\alpha$ satisfies Properties~\ref{prop:largest}, \ref{prop:configC}, and \ref{prop:trie}. The proof is by induction on the sequence of these instances in the order in which they were completed.
Let $pOp$ be an arbitrary completed instance of \textsc{PredHelper}$(y)$ in $\alpha$  and
let $\pNode$ be the predecessor node it created. 
We assume that every instance of \textsc{PredHelper} that is completed prior to  $pOp$ satisfies Properties~\ref{prop:largest}, \ref{prop:configC}, and \ref{prop:trie}, and, hence, Theorem~\ref{thm:predecessor}.

It is easy to show that $pOp$ satisfies Property~\ref{prop:largest}.






\begin{lemma}\label{lemma:prop:largest}
All candidate return values of $pOp$ are less than $y$, and $pOp$ returns its largest candidate return value.
\end{lemma}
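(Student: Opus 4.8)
The plan is to argue directly from the code of \textsc{PredHelper}$(y)$. First I would enumerate the candidate return values of $pOp$: the keys in $K$ (equivalently, the keys of the update nodes in $\Iuall\cup\Inotify\cup(\Duall-\Druall)\cup(\Dnotify-\Druall)$), the value $-1$, and the value $p_0$ --- either the value returned by \textsc{RelaxedPredecessor}$(y)$ on line~\ref{ln:pred:travtrie} when that value is not $\bot$, or else the value recomputed for $p_0$ on line~\ref{ln:pred:set_pred0} when the traversal returns $\bot$ and $\Druall\neq\emptyset$. Since $pOp$ returns $\max\{p_0,p_1\}$ on line~\ref{ln:pred:return}, where $p_1=\max\{K,-1\}$ and, when $p_0=\bot$, the returned value is just $p_1$, the second half of the claim --- that $pOp$ returns its largest candidate return value --- is immediate from the shape of the algorithm. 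So the real work is to show that every candidate return value is strictly less than $y$, which I would split into three cases according to the source of the candidate.

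The first case, the candidates in $K$ together with $-1$, is a routine inspection of loop guards. The sets $\Iuall$ and $\Duall$ are produced by \textsc{TraverseUall}$(y)$, whose while-loop on line~\ref{ln:traverseUall:while} only ever inspects update nodes of key less than $y$. The sets $\Inotify$ and $\Dnotify$ are populated inside the for-loop on line~\ref{ln:pred:newlist}, which only iterates over notify nodes $\nNode\in\pNode.\notifyList$ with $\nNode.\key<y$; and by the way \textsc{NotifyPredOps} fills these fields, $\nNode.\updateNode.\key=\nNode.\key<y$ and $\nNode.\updateNodeMax$ is an INS node whose key is less than $\pNode.\key=y$. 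Hence every update node placed into $\Inotify$ or $\Dnotify$ has key less than $y$, and together with $-1<y$ this gives $p_1<y$. The second case, $p_0$ being the (non-$\bot$) value returned by \textsc{RelaxedPredecessor}$(y)$, follows from the specification of \textsc{RelaxedPredecessor} in Section~\ref{section_relaxed_binary_trie_specification} (its output lies in $\{\bot\}\cup\{k,\dots,y-1\}$); alternatively, Lemma~\ref{lemma:relaxed_trie_kxy} shows that when it returns a key $x\in U$ then $x<y$, and it may also return $-1<y$.

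The third case is $p_0$ recomputed on lines~\ref{ln:pred:R_init}--\ref{ln:pred:set_pred0}, where $p_0=\max\{R\}$. Only lines~\ref{ln:pred:R_init}, \ref{ln:pred:R_INS}, and \ref{ln:pred:replace} insert elements into $R$, and line~\ref{ln:pred:R_remove_Druall} only removes elements, and $R$ is non-empty when line~\ref{ln:pred:set_pred0} is reached (as noted in Section~\ref{section_detailed_algorithm}), so it suffices to show each element inserted on those three lines is less than $y$. On line~\ref{ln:pred:R_init} the inserted values are $\dNode.\delpred$ for $\dNode\in\Druall$; since $\Druall$ is produced by \textsc{TraverseRUall}$(\pNode)$, which only collects update nodes of key less than $\pNode.\key=y$, we have $\dNode.\key<y$, and $\dNode.\delpred$ is the value returned by the first embedded \textsc{PredHelper}$(\dNode.\key)$ of the \textsc{Delete}$(\dNode.\key)$ that created $\dNode$. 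On line~\ref{ln:pred:R_INS} the values are $\uNode.\key$ for INS nodes $\uNode\in L$, and $L$ is built from $L_1$ and $L_2$, which collect only update nodes attached to notify nodes of key less than $y$, so $\uNode.\key<y$. On line~\ref{ln:pred:replace} the value is $\uNode.\delpredsecond$ for a DEL node $\uNode\in L$ with $\uNode.\key<y$, and this is the value returned by the second embedded \textsc{PredHelper}$(\uNode.\key)$ of the \textsc{Delete}$(\uNode.\key)$ that created $\uNode$. For the last two subcases I would observe that the relevant embedded \textsc{PredHelper} instance completes before $pOp$ does --- the first embedded predecessor is computed before its DEL node is even created, and the second is computed before \textsc{NotifyPredOps} is invoked on that DEL node, hence before any notify node referring to it exists, and in both cases before $pOp$ can read the stored value --- so the inductive hypothesis applies to it, and Property~\ref{prop:largest} for that earlier instance gives that its return value is less than its input key, i.e.\ less than $\dNode.\key$ (resp.\ $\uNode.\key$), hence less than $y$.

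The main obstacle I anticipate is exactly this last bookkeeping: justifying that the \textsc{PredHelper} instances producing $\delpred$ and $\delpredsecond$ are \emph{completed before $pOp$} in the fixed execution $\alpha$, so that the induction hypothesis (Property~\ref{prop:largest} for earlier-completing \textsc{PredHelper} instances) legitimately applies. Everything else reduces to reading off loop guards in \textsc{TraverseUall}, \textsc{TraverseRUall}, and the notify-list traversals, together with the field assignments in \textsc{NotifyPredOps}, so once the ordering of the embedded operations is pinned down the case analysis should go through cleanly.
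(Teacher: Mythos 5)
Your proposal follows essentially the same route as the paper's proof: inspect the loop guards in \textsc{TraverseUall}, \textsc{TraverseRUall}, and the notify-list for-loops to bound the keys in $K$ by $y$, invoke the \textsc{RelaxedPredecessor} specification for the direct $p_0$ case, and when $p_0$ is recomputed via $R$, appeal to the induction (on completion order of \textsc{PredHelper} instances) to argue that $\delpred$ and $\delpredsecond$ are outputs of earlier-completed \textsc{PredHelper} calls on keys below $y$ and hence are themselves below $y$. Your write-up is somewhat more explicit than the paper's (you separately track insertions into $R$ via lines~\ref{ln:pred:R_init}, \ref{ln:pred:R_INS}, and \ref{ln:pred:replace}, and you spell out why the embedded instances precede $pOp$ in the completion order), but the decomposition and the key idea are the same.
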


\begin{proof}

The maximum candidate return value of $pOp$ is returned on line~\ref{ln:pred:return}. It is either a key of an update node in $\Iuall \cup \Inotify \cup (\Duall - \Druall) \cup (\Dnotify - \Druall)$, or the value of $pOp$'s local variable $r_0$. It remains to show that these keys are less than $y$.

The update nodes in $\Iuall$ and $\Duall$ are those returned by \textsc{TraverseUAll}$(y)$ on line~\ref{ln:pred:uall2}. By the check on line~\ref{ln:traverseUall:while}, these update nodes have key less $y$. Update nodes in $\Duall$ and $\Dnotify$ have keys less than $y$ by the check in the while loop on line~\ref{ln:pred:newlist}. 
By the specification of \textsc{RelaxedPredecessor}$(y)$, the value $r_0$ returned by \textsc{RelaxedPredecessor}$(y)$ is either less than $y$, or $\bot$. 

When \textsc{RelaxedPredecessor}$(y)$ returns $\bot$, $r_0$ is calculated from the return values of the embedded predecessor operations of DEL nodes in $\Druall$, or from the keys of update nodes in a list $L$. 
DEL nodes in $\Druall$ have keys less than $y$ according to line~\ref{ln:TraverseRUall:keyCheck}. The embedded predecessor operations of DEL nodes in $\Druall$ are the return values of completed instances of \textsc{PredHelper}$(x)$, for some key in $x < y$. By assumption that all completed instances of \textsc{PredHelper} satisfy Property~\ref{prop:largest}, \textsc{PredHelper}$(x)$ returns a value less than $x$, which is also less than $y$. Only notifications of these embedded predecessor operations with input keys less than $y$ are considered (on lines~\ref{ln:pred:L} and \ref{ln:pred:L'}). So $r_0$ is a value less than $y$.
It follows from the code on lines~\ref{ln:pred:L} and line~\ref{ln:pred:L'} that the update nodes that added to $L$ have keys less than $y$.
So all candidate return values of $pOp$ are less than $y$.
\end{proof}



\subsubsection{Properties of $\Iuall \cup \Inotify \cup (\Duall-\Druall) \cup (\Dnotify - \Druall)$}\label{section_crv_2}

In this section, we prove properties of the update nodes in the set $\Iuall \cup \Inotify \cup (\Duall-\Druall) \cup (\Dnotify - \Druall)$. These are update nodes created by $S$-modifying update operations.
Most notably, we prove 
that the candidate return values determined from keys of update nodes in this set satisfy Property~\ref{prop:configC}.



We define several important configurations that occur during $pOp$.
Recall that the update nodes in the $\RUALL$ are arranged in decreasing order of their keys.
During its traversal of the $\RUALL$, when $pOp$ first encounters an update node, it 
atomically reads the pointer to that update node and writes it into $\pNode.\RuallPosition$.
Let $C_{\leq x}$ be the configuration immediately after $pOp$ first encounters an update node with key less than or equal to $x$ and let $C_{<x}$ be the configuration immediately after $pOp$ first encounters an update node with key less than $x$.
Note that, for any two keys $w < x$, 
$C_{<x}$ occurs at or before $C_{\leq w}$ 
and $C_{\leq w}$ occurs at or before $C_{<w}$.
The configuration $C_T$ occurs immediately before $pOp$ starts its traversal of the relaxed binary trie. 
Since $pOp$ performs \textsc{TraverseRUAll} before the start of its traversal of the relaxed binary trie, it follows that for any key $x$, $C_{<x}$ occurs before $C_T$. 
Let $C_{\mathit{notify}}$ be the configuration immediately after $pOp$ reads the head pointer to the first notify node in $\pNode.\notifyList$ (on line~\ref{ln:pred:newlist}). 
Since $pOp$ performs its traversal of the relaxed binary trie before it reads the head pointer to the first notify node in $\pNode.\notifyList$, $C_{T}$ occurs before $C_{\mathit{notify}}$.
A notify node is seen by $pOp$ when it traverses $\pNode.\notifyList$ (on line~\ref{ln:pred:newlist}) if and only if it was added to $\pNode.\notifyList$ before $C_{\mathit{notify}}$.
The order of these configurations is summarized in the following observation.

\begin{observation}\label{lemma:C_x}
Let $w$ and $x$ be any two keys in $U$ where $w < x$. Then the following statements hold:
\begin{enumerate}
	\item[(a)] $C_{<x}$ occurs at or before $C_{\leq w}$.
	\item[(b)] $C_{\leq w}$ occurs at or before $C_{<w}$.
	\item[(c)] $C_{<w}$ occurs before $C_T$.
	\item[(d)] $C_{T}$ occurs before $C_{\mathit{notify}}$.
\end{enumerate}
\end{observation}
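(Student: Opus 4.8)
\textbf{Proof proposal for Observation~\ref{lemma:C_x}.}
The plan is to read everything off two structural facts: first, that $\RUALL$ is a linked list whose update nodes appear in non-increasing key order and that $pOp$ traverses it strictly from the $\infty$-sentinel towards the $-\infty$-sentinel during its single call to \textsc{TraverseRUall}$(\pNode)$ on line~\ref{ln:pred:uall1}, advancing one node per iteration by atomically copying the next pointer into $\pNode.\RuallPosition$ on line~\ref{ln:travRUALL:atomicCopy}; and second, that the phases of \textsc{PredHelper}$(y)$ are executed in code order: \textsc{TraverseRUall}$(\pNode)$ (line~\ref{ln:pred:uall1}), then \textsc{RelaxedPredecessor}$(y)$ (line~\ref{ln:pred:travtrie}), then \textsc{TraverseUall}$(y)$ (line~\ref{ln:pred:uall2}), then the for-loop reading $\pNode.\notifyList$ (line~\ref{ln:pred:newlist}).

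First I would note that all the configurations named are well-defined for a completed $pOp$: because $\RUALL$ always contains the $-\infty$-sentinel and the do-while loop of \textsc{TraverseRUall} runs until $pOp$ reaches it, $pOp$ writes into $\pNode.\RuallPosition$ a pointer to some update node with key less than any fixed threshold; in particular $C_{<x}$, $C_{\le w}$, and $C_{<w}$ each correspond to an actual step of $pOp$. Then for part~(a), since $w < x$, every update node of key at most $w$ also has key less than $x$, so the first node of key $\le w$ that $pOp$ visits is visited no earlier than the first node of key $<x$; hence $C_{<x}$ occurs at or before $C_{\le w}$. Part~(b) is the same argument with thresholds $\le w$ and $<w$: a node of key $<w$ has key $\le w$, so the first node of key $<w$ is visited no earlier than the first of key $\le w$, giving $C_{\le w}$ at or before $C_{<w}$. (In both cases the inequality is non-strict precisely because the ``first visited'' node can coincide for the two thresholds.)

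For part~(c), the step that defines $C_{<w}$ is a step inside \textsc{TraverseRUall}$(\pNode)$, which $pOp$ completes on line~\ref{ln:pred:uall1} before it invokes \textsc{RelaxedPredecessor}$(y)$ on line~\ref{ln:pred:travtrie}; since $C_T$ is the configuration immediately before that invocation, $C_{<w}$ occurs before $C_T$. For part~(d), $C_{\mathit{notify}}$ is the configuration immediately after $pOp$ reads the head of $\pNode.\notifyList$ at the start of the for-loop on line~\ref{ln:pred:newlist}, which $pOp$ reaches only after completing \textsc{RelaxedPredecessor}$(y)$ (line~\ref{ln:pred:travtrie}) and \textsc{TraverseUall}$(y)$ (line~\ref{ln:pred:uall2}); since $C_T$ is immediately before \textsc{RelaxedPredecessor}$(y)$, it occurs before $C_{\mathit{notify}}$.

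I do not expect any real obstacle here — the statement is essentially a bookkeeping lemma. The only points requiring care are (i) justifying that the named configurations exist at all, which hinges on the $-\infty$-sentinel and the loop condition of \textsc{TraverseRUall} on line~\ref{ln:TraverseRUall:whileCheck}, and (ii) keeping the inequalities non-strict (``at or before'') in parts (a) and (b), since the witnessing node for two different thresholds may be the same node and hence the same step of $pOp$.
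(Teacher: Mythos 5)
Your proof is correct and takes essentially the same approach as the paper: parts (a) and (b) follow from the one-node-at-a-time traversal of $\RUALL$ (you phrase it as a subset argument on the threshold conditions, the paper phrases it via the non-increasing key order, but these are the same observation), and parts (c) and (d) follow directly from the code ordering of the phases of \textsc{PredHelper}. Your extra remark justifying that the named configurations actually exist (because the do-while in \textsc{TraverseRUall} runs until the $-\infty$ sentinel) is a reasonable piece of hygiene the paper leaves implicit.
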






The next lemmas considers several cases depending on how $pOp$ learns about each of its candidate return values.
In each case, we show that the candidate return value is in $S$ sometime during $pOp$. Recall that $x \in S$ if and only if the first activated update node in the $\latest[x]$ list is an INS node.

Recall that \textsc{TraverseUall} returns two sets of update nodes, $\Iuall$ and $\Duall$.
The next lemma states that the INS nodes in $\Iuall$  have keys in $S$ sometime during the traversal, while the DEL nodes in $\Duall$ have keys not in $S$ sometime during the traversal. 

\begin{lemma}\label{lemma:traverseUALL}
For each $\uNode \in \Iuall \cup \Duall$, there is a configuration $C$ during $pOp$'s traversal of the $\UALL$ (in its instance of \textsc{TraverseUall}) in which $\uNode$ is the first activated update node in the $\latest[\uNode.\key]$ list. Furthermore, $C$ occurs before $pOp$ encounters any update nodes with key greater than $\uNode.\key$ during this traversal of the $\UALL$.
\end{lemma}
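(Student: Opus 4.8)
The plan is to read off the code of \textsc{TraverseUall}$(y)$ and reduce everything to Lemma~\ref{lemma:firstActivated}. Fix an update node $\uNode \in \Iuall \cup \Duall$. During $pOp$'s instance of \textsc{TraverseUall}$(y)$, the node $\uNode$ is placed into the local set $I$ or $D$ only after $pOp$ has reached $\uNode$ along the pointer chain of the $\UALL$, read $\uNode.\status \neq \textsc{Inactive}$, and then obtained \textsc{True} from the call \textsc{FirstActivated}$(\uNode)$ on line~\ref{ln:traverseUALL:findLatest}. Since the $\status$ field only moves from \textsc{Inactive} to \textsc{Active} to \textsc{Stale}, reading $\uNode.\status \neq \textsc{Inactive}$ certifies that $\uNode$ is an activated update node, so the precondition of Lemma~\ref{lemma:firstActivated} is satisfied. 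That lemma then produces a configuration $C$, lying within this invocation of \textsc{FirstActivated}$(\uNode)$ and hence within $pOp$'s traversal of the $\UALL$, in which $\uNode$ is the first activated update node in $\latest[\uNode.\key]$. This gives the first sentence of the lemma.

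For the ``furthermore'' clause, I would locate $C$ relative to $pOp$'s position in the $\UALL$. The invocation \textsc{FirstActivated}$(\uNode)$ containing $C$ occurs while $pOp$ is positioned at $\uNode$: strictly after $pOp$ reads the pointer to $\uNode$ out of the $\UALL$ and strictly before $pOp$ advances to $\uNode.\nextptr$. Because $pOp$ follows the pointer chain of the $\UALL$ from its low-key end and the update nodes appear along this chain in non-decreasing key order, every node $pOp$ visited before $\uNode$ has key at most $\uNode.\key$, and every node $pOp$ visits after $\uNode$ has key at least $\uNode.\key$. Hence the first update node with key strictly greater than $\uNode.\key$ that $pOp$ reaches during this traversal is visited only after the call \textsc{FirstActivated}$(\uNode)$ completes, and therefore after $C$; this is exactly the required claim.

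The one point that needs care — and the place I expect the ``real'' work to sit, though it is routine — is justifying that $pOp$'s traversal of the $\UALL$ visits keys in non-decreasing order despite concurrent insertions and removals. I would support this with the sorted-order invariant of the lock-free sorted linked list used to implement the $\UALL$: an update node is inserted so that it follows every update node with the same key and precedes every update node with larger key, and removals do not reorder the surviving nodes, so any chain of $\nextptr$ pointers followed from the head meets non-decreasing keys. This is the same style of reasoning already needed for traversals of the $\RUALL$, so I would state it once as an invariant of the $\UALL$/$\RUALL$ and reuse it. Everything else in the proof is a direct appeal to Lemma~\ref{lemma:firstActivated} together with inspection of the \textsc{TraverseUall} pseudocode.
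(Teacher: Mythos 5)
Your proof is correct and follows essentially the same route as the paper's: identify the iteration where \textsc{FirstActivated}$(\uNode)$ returns \textsc{True}, apply Lemma~\ref{lemma:firstActivated} to obtain the configuration $C$ inside that call, and use the sorted-by-increasing-key invariant of the $\UALL$ to conclude that $C$ precedes any visit to a node with larger key. Your extra care in flagging the sorted-traversal invariant under concurrency as the point needing an auxiliary argument is a reasonable elaboration, but the overall structure and key lemma used are identical to the paper's proof.
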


\begin{proof}
Consider an update node $\uNode \in \Iuall \cup \Duall$.
It follows from the code that there is an iteration during \textsc{TraverseUall} where $\textsc{FirstActivated}(\uNode) = \textsc{True}$ on line~\ref{ln:traverseUALL:findLatest}. By Lemma~\ref{lemma:firstActivated}, there is a configuration $C$ during this instance of $\textsc{FirstActivated}$ in which $\uNode$ is the first activated update node in the $\latest[\uNode.\key]$ list. Since $\UALL$ is sorted by increasing key, $C$ occurs before $pOp$ encounters any update nodes with key greater than $\uNode.\key$ during this instance of \textsc{TraverseUall}.

\end{proof}



An \textsc{Insert}$(x)$ operation may notify $pOp$ about the INS node, $\iNode$, it created. In this case, $\iNode$ is added to $\Inotify$ on line~\ref{ln:pred:I2}.
We show that $x \in S$ some time after $pOp$ begins accepting notifications with key $x$, but before $pOp$ begins collecting its notifications.

\begin{lemma}\label{lemma:I2_1}		

Consider an INS node $\iNode \in \Inotify$ with key $x$. Suppose the  \textsc{Insert}$(x)$ operation that created $\iNode$ notified $pOp$. Then there is a configuration between $C_{\leq x}$ and $C_{\mathit{notify}}$ in which $x \in S$. Furthermore, $\iNode$ is the first activated update node in the $\latest[x]$ list in this configuration.
\end{lemma}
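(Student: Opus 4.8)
The plan is to exhibit the desired configuration as a point during the \textsc{FirstActivated}$(\iNode)$ check that the notifying operation performs inside \textsc{SendNotification} just before it links its notify node into $pOp$'s notify list, and then to trap that configuration between $C_{<x}$ and $C_{\mathit{notify}}$ using the notify node's $\notifyThreshold$ field. First I would unpack the hypotheses. Since $\iNode \in \Inotify$ and the \textsc{Insert}$(x)$ operation $iOp$ that created $\iNode$ is the one that notified $pOp$, the node $\iNode$ was placed in $\Inotify$ on line~\ref{ln:pred:I2}: that is, during the for-loop on line~\ref{ln:pred:newlist}, $pOp$ examined a notify node $\nNode \in \pNode.\notifyList$ with $\nNode.\updateNode = \iNode$, with $\nNode.\updateNode.\type = \text{INS}$, and with $\nNode.\notifyThreshold \leq \nNode.\key$; moreover $\nNode$ was appended to $\pNode.\notifyList$ by $iOp$'s call to \textsc{NotifyPredOps}$(\iNode)$. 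Since notify nodes built in \textsc{NotifyPredOps} inherit the key of the update node they describe, $\nNode.\key = \iNode.\key = x$, so $\nNode.\notifyThreshold \leq x$. Finally, \textsc{NotifyPredOps}$(\iNode)$ appends $\nNode$ to $\pNode.\notifyList$ only via the successful CAS on line~\ref{ln:sendNotification:CAS} inside \textsc{SendNotification}$(\nNode,\pNode)$.

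Next I would extract the configuration $C^*$ and bound it. Immediately before that successful CAS, \textsc{SendNotification} evaluated \textsc{FirstActivated}$(\iNode)$ on line~\ref{ln:sendNotification:firstActivated}, and this returned \textsc{True}, for otherwise the call would have returned without appending $\nNode$. By Lemma~\ref{lemma:firstActivated} there is a configuration $C^*$ during this call in which $\iNode$ is the first activated update node in $\latest[x]$, so (as $\iNode$ is an INS node) $x \in S$ in $C^*$. For the upper bound: $pOp$ saw $\nNode$ in $\pNode.\notifyList$, so $\nNode$ was linked in before $C_{\mathit{notify}}$, and since $C^*$ precedes the CAS that linked it in, $C^*$ occurs before $C_{\mathit{notify}}$. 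For the lower bound: when \textsc{NotifyPredOps}$(\iNode)$ created $\nNode$ it set $\nNode.\notifyThreshold \leftarrow \pNode.\RuallPosition.\key$ on line~\ref{ln:notify:notifyThreshold} at some configuration $C_{\mathit{create}}$, and $C^*$ follows $C_{\mathit{create}}$ since \textsc{SendNotification} runs after $\nNode$ is built. Because only $pOp$ writes $\pNode.\RuallPosition$, and $pOp$ copies successive pointers of the $\RUALL$ (whose keys are non-increasing) into it, $\pNode.\RuallPosition.\key$ is non-increasing over time; hence $\pNode.\RuallPosition.\key \leq x$ at $C_{\mathit{create}}$ implies $C_{\mathit{create}}$ is at or after $C_{\leq x}$, and---modulo the case analysis below---at or after $C_{<x}$. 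Combining, $C^* \in [C_{<x}, C_{\mathit{notify}})$ with $x \in S$ in $C^*$, which is what we want.

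The one delicate point is the boundary case $\nNode.\notifyThreshold = x$, where the monotonicity argument yields only $C_{\mathit{create}} \geq C_{\leq x}$ and one must still rule out $C^*$ lying in the window between $C_{\leq x}$ and $C_{<x}$ during which $pOp$ is scanning the key-$x$ segment of the $\RUALL$ (recall from Observation~\ref{lemma:C_x} that $C_{\leq x}$ occurs at or before $C_{<x}$). I expect this to be handled by tracking $\iNode$ itself in the $\RUALL$: $\iNode$ is inserted into the $\RUALL$ before it is activated, hence before $C^*$, and is removed only after $\iNode.\done$ is set (which is no earlier than the completion of \textsc{NotifyPredOps}$(\iNode)$); comparing where $pOp$'s $\RUALL$ scan stood at $C_{\mathit{create}}$ with the list position of $\iNode$, and using that $\iNode$ is still the first activated node of $\latest[x]$ at $C^*$, should force $C^*$ to occur at or after $C_{<x}$. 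The remaining steps are routine orderings of configurations along $pOp$'s and $iOp$'s code paths.
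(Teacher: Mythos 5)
Your proof follows the same route as the paper's: both locate the witnessing configuration inside the \textsc{FirstActivated}$(\iNode)$ call on line~\ref{ln:sendNotification:firstActivated} that precedes the successful CAS on line~\ref{ln:sendNotification:CAS}, invoke Lemma~\ref{lemma:firstActivated} to get $x \in S$ there, use membership of $\nNode$ in $\pNode.\notifyList$ to bound it above by $C_{\mathit{notify}}$, and use $\nNode.\notifyThreshold$ together with the monotonicity of $\pNode.\RuallPosition.\key$ to bound it below via the $\RUALL$.

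Where you differ from the paper, and to your credit, is that you correctly flag the boundary case $\nNode.\notifyThreshold = x$. The acceptance test on line~\ref{ln:pred:I2} is $\nNode.\notifyThreshold \leq \nNode.\key$, so the monotonicity argument places $C_{\mathit{create}}$ (and hence $C^*$) only at or after $C_{\leq x}$, not necessarily after $C_{< x}$. The paper's own proof asserts at this point that $\pNode.\RuallPosition$ ``is a pointer to an update node with key less than $x$'' and ``hence after $C_{< x}$,'' which is not what the code gives; this appears to be an imprecision in the paper, and you were right not to accept it. However, your proposed repair --- tracking $\iNode$'s own position in the $\RUALL$ relative to $pOp$'s scan at $C_{\mathit{create}}$ --- is left as a sketch and is not obviously forced to succeed: the $\RUALL$ can contain several update nodes with key exactly $x$, and the relative ordering of $\iNode$ and the node $\pNode.\RuallPosition$ points to is not pinned down by what you've said. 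The more economical resolution, consistent with how the lemma is actually used later (the proof of Lemma~\ref{lemma:prop:configC} only needs a configuration after $C_{\leq w}$, and the sibling Lemmas~\ref{lemma:D1} and~\ref{lemma:D2} are stated with $C_{\leq x}$), is that the bound $C_{<x}$ in the statement of this lemma should read $C_{\leq x}$; with that change your argument, like the paper's, goes through without the extra $\RUALL$-position bookkeeping.
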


\begin{proof}
Let $iOp$ be the \textsc{Insert}$(x)$ operation that created $\iNode$. 
The INS node $\iNode$ is added to $\Inotify$ on line~\ref{ln:pred:I2} because $iOp$ notifies $pOp$ about its own operation.
So $iOp$ or a \textsc{Delete}$(x)$ operation helping $iOp$ (on line~\ref{ln:delete:help_notify}) notified $pOp$ by adding a notify node $\nNode$ into $\pNode.\notifyList$ where $\nNode.\updateNode = \iNode$. Let $uOp$ be the update operation that successfully added $\nNode$ into $\pNode.\notifyList$ using CAS on line~\ref{ln:sendNotification:CAS}.
In the line of code prior, $uOp$ successfully checks that $\iNode$ is the first activated update node in the $\latest[x]$ list during \textsc{FirstActivated}$(\iNode)$ on line~\ref{ln:sendNotification:firstActivated}.
By Lemma~\ref{lemma:firstActivated}, there is a configuration $C$ during \textsc{FirstActivated}$(\iNode)$ in which $x \in S$. 
Furthermore, $x \in S$ in all configurations from when $uOp$ is linearized to $C$. In particular, $x \in S$ in the configuration $uOp$ reads that $\pNode$ is in $\PALL$ on line~\ref{ln:notify:for_loop}.
Since $pOp$ is active when it has a predecessor node in $\UALL$, $x \in S$ sometime during $pOp$ and when $uOp$ is traversing $\PALL$.

Finally, since $\iNode \in \Inotify$, 
it follows from line~\ref{ln:pred:notifyThresholdINS} that $\nNode.\notifyThreshold \leq \nNode.\key$.
This means that when $uOp$ reads $\pNode.\RuallPosition$ on line~\ref{ln:notify:notifyThreshold}, $\pNode.\RuallPosition$ is a pointer to an update node with key less than or equal $x$, and hence the read occurs after $C_{\leq x}$. This read occurs before $uOp$'s instance of \textsc{FirstActivated}$(\iNode)$
(the one it performed on line~\ref{ln:sendNotification:firstActivated}),
and hence before $C$. So $C$ occurs sometime after $C_{\leq x}$.
\end{proof}


Alternatively, $pOp$ may be notified about $\iNode$ by
an \textsc{Insert}$(w)$ or \textsc{Delete}$(w)$ operation, $uOp$, for some $w < x$, that included a pointer to $\iNode$ in the notify node it added to $pOp$'s notify list.
This happens when 
$\iNode$ has the largest key less than $y$ among the INS nodes returned by $uOp$'s instance of \textsc{TraverseUall} on line~\ref{ln:notify:travuall}.
Later, when $pOp$ reads the notify node $uOp$ added to $\pNode.\notifyList$, $pOp$ will add $\iNode$ to $\Inotify$ on line~\ref{ln:pred:updateNodeMax}.


\begin{lemma}\label{lemma:I2_2}		
Consider an INS node, $\iNode \in \Inotify$, with key $x$. Suppose an update operation with key $w$ notified $pOp$ about $\iNode$, where $w < x < y$. Then there is a configuration between $C_{\leq x}$ and $C_{\mathit{notify}}$ in which $x \in S$. Furthermore, $\iNode$ is the first activated update node in the $\latest[x]$ list in this configuration.
\end{lemma}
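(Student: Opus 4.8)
The plan is to exhibit a single configuration in which $x \in S$ and show it lies in $[C_{<x},C_{\mathit{notify}}]$, squeezing from the right using the structure of \textsc{NotifyPredOps} and from the left using the $\RUALL$-traversal bookkeeping. First I would unpack how $\iNode$ enters $\Inotify$. Since the \textsc{Insert}$(x)$ operation $iOp$ that created $\iNode$ does not itself notify $pOp$ (that case is handled by Lemma~\ref{lemma:I2_1}), the only way $\iNode$ joins $\Inotify$ is on line~\ref{ln:pred:updateNodeMax}; hence $pOp$ reads in $\pNode.\notifyList$ a notify node $\nNode$ with $\nNode.\updateNodeMax = \iNode$ for which the guard on line~\ref{ln:pred:updateNodeMaxCheck} held, i.e.\ $\nNode.\notifyThreshold = -\infty$ and $\uNode := \nNode.\updateNode \notin \Iruall \cup \Druall$, where $\uNode.\key = w < x < y$. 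Let $vOp$ be the operation that built $\nNode$ inside an instance of \textsc{NotifyPredOps}$(\uNode)$. Because $\nNode.\updateNodeMax$ is the INS node with the largest key less than $\pNode.\key = y$ in the set $I$ returned by the \textsc{TraverseUall}$(\infty)$ call on line~\ref{ln:notify:travuall}, we have $\iNode \in I$; and $vOp$ runs this instance of \textsc{NotifyPredOps} only after $\uNode$ has been activated, hence after the \textsc{Insert}$(w)$ operation that created $\uNode$ is linearized.

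With this in hand, applying the reasoning of Lemma~\ref{lemma:traverseUALL} (equivalently Lemma~\ref{lemma:firstActivated}) to $vOp$'s traversal of the $\UALL$ and the node $\iNode \in I$ gives a configuration $C_1$ during that traversal in which $\iNode$ is the first activated update node in $\latest[x]$, so $x \in S$ in $C_1$. The right endpoint is then immediate: $vOp$ finishes this $\UALL$ traversal, builds $\nNode$, and inserts $\nNode$ into $\pNode.\notifyList$ with the successful CAS on line~\ref{ln:sendNotification:CAS}; since $pOp$ reads $\nNode$, that CAS precedes $C_{\mathit{notify}}$, and therefore so does $C_1$.

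The hard part will be the left endpoint: $C_1$ occurs at or after $C_{<x}$. The approach is to show that $iOp$ is linearized no earlier than $C_{<x}$, which suffices since $iOp$ is linearized by $C_1$. I would argue this by a case analysis on $pOp$'s traversal of the $\RUALL$, using $\uNode \notin \Iruall \cup \Druall$, $\nNode.\notifyThreshold = -\infty$, and Observation~\ref{lemma:C_x}. Since $\uNode$ is inserted into the $\RUALL$ before it is activated and $pOp$ did not add $\uNode$ to $\Iruall \cup \Druall$, then at the moment $pOp$'s $\RuallPosition$ first reaches a node with key at most $w$ (configuration $C_{\leq w}$, which is at or after $C_{<x}$ by Observation~\ref{lemma:C_x}(a)) the node $\uNode$ is either not yet in the $\RUALL$ or is present but still \textsc{Inactive}; the remaining possibility, that $\uNode$ is activated but no longer the first activated node in $\latest[w]$, is impossible because then $\uNode$ could never again be the first activated node, contradicting the successful \textsc{FirstActivated}$(\uNode)$ test that precedes $vOp$'s CAS on lines~\ref{ln:sendNotification:firstActivated}--\ref{ln:sendNotification:CAS} (which occurs before $pOp$ removes $\pNode$ from the $\PALL$). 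In either remaining case the \textsc{Insert}$(w)$ creator of $\uNode$ is linearized after $C_{\leq w} \ge C_{<x}$; since $vOp$'s \textsc{NotifyPredOps}$(\uNode)$, and hence $C_1$, occurs after that linearization, $C_1$ occurs after $C_{<x}$. (Intuitively this matches the description in the text: $\updateNodeMax$ only captures \textsc{Insert} operations linearized after $pOp$'s traversal of the $\UALL$, which itself is after $C_T$ and hence after $C_{<x}$ by Observation~\ref{lemma:C_x}(c).)

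The main obstacle is making this $\RUALL$ case analysis airtight. I must handle the same-key ordering inside the $\RUALL$, so that ``$\uNode$ not yet visited by $pOp$'' is correctly translated into ``$\uNode$ is inserted after $pOp$ atomically read the next-pointer of $\uNode$'s slot'', which by the atomic copy on line~\ref{ln:travRUALL:atomicCopy} lands at or after $C_{<x}$ (this is precisely the phenomenon illustrated by Figure~\ref{figure:ruall_example}); and I must carefully rule out or absorb the degenerate case in which $\uNode$ stops being the first activated node in $\latest[w]$ during $vOp$'s notification. If an auxiliary lemma characterising which update nodes $pOp$'s atomic-copy traversal of the $\RUALL$ is guaranteed to examine — as a function of their insertion time — is already available, the left-endpoint argument reduces to a short computation; otherwise I would state and prove that lemma first.
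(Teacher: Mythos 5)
Your proof takes essentially the same approach as the paper's. Both proofs unpack the $\updateNodeMax$ path (line~\ref{ln:pred:updateNodeMax}), extract a configuration $C_1$ where $x\in S$ from the notifier's $\textsc{TraverseUall}$ call via Lemma~\ref{lemma:traverseUALL}/Lemma~\ref{lemma:firstActivated}, bound the right end by the successful CAS in $\textsc{SendNotification}$ preceding $C_{\mathit{notify}}$, and bound the left end by showing the operation whose update node is $\nNode.\updateNode$ (key $w$) is linearized after $C_{\leq w}$, which lies at or after $C_{<x}$ by Observation~\ref{lemma:C_x}. The one place you diverge is the left endpoint: the paper simply cites Lemma~\ref{lemma:D0} when $\nNode.\updateNode$ is a DEL node and gives a one-line parallel argument (``otherwise $\uNode\in\Iruall$'') for the INS case, whereas you attempt to re-derive the $\RUALL$ case analysis from scratch and explicitly flag that making it airtight needs an auxiliary characterisation lemma — you would do better to simply invoke Lemma~\ref{lemma:D0} for the DEL branch as the paper does. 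Also note that in the sentence ``the approach is to show that $iOp$ is linearized no earlier than $C_{<x}$,'' the target is misstated; what you actually (and correctly) end up showing is that $C_1$ lies after $C_{<x}$, which is what is needed and is what the paper shows.
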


\begin{proof}
From the code, $\iNode$ is added to $\Inotify$ on line~\ref{ln:pred:updateNodeMax}. So there exists a notify node, $\nNode$, in $\pNode.\notifyList$ where $\iNode = \nNode.\updateNodeMax$. 
Let $uOp$ be the update operation with key $w$ that created $\nNode$, and hence is the update operation that notified $pOp$ about $\iNode$. Let $\uNode$ be the update node created by $uOp$.

By the code on line~\ref{ln:pred:updateNodeMax}, $\nNode.\notifyThreshold = -\infty$ and $\uNode \notin \Iruall \cup \Druall$.
We first argue that $uOp$ is linearized sometime at or after $C_{\leq x}$. Suppose, for contradiction, that $uOp$ is linearized before $C_{\leq x}$. Since $\nNode.\notifyThreshold = -\infty$, $uOp$ reads that $\pNode.\RuallPosition$ points to an update node with key $-\infty$ on line~\ref{ln:notify:notifyThreshold}, which means $pOp$ has completed its traversal of the $\RUALL$ by this point.
Later, on line~\ref{ln:sendNotification:firstActivated}, $uOp$'s instance of \textsc{FirstActivated}$(\uNode)$ returns \textsc{True}. By Lemma~\ref{lemma:firstActivated}, $\uNode$ is the first activated update node in the $\latest[w]$ list sometime during this instance of \textsc{FirstActivated}.
Therefore, $\uNode$ is first activated update node in the $\latest[w]$ list and is in the $\RUALL$ from $C_{\leq x}$ to at least the end of $pOp$'s traversal of the $\RUALL$. By Lemma~\ref{lemma:firstActivated_False}, any instance of \textsc{FirstActivated}$(\uNode)$ performed by $pOp$ returns \textsc{True} during its traversal of the $\RUALL$. It follows from the code of \textsc{TraverseRUall} that $pOp$ will encounter $\uNode$ when it traverses the $\RUALL$ and add it to $\Iruall \cup \Druall$, a contradiction.

By definition, $\iNode = \nNode.\updateNodeMax$ is the update node with largest key less than $y$ returned by $uOp$'s instance of \textsc{TraverseUall} on line~\ref{ln:notify:travuall}, which occurs sometime between $C_{\leq x}$ and $C_{\mathit{notify}}$.
It follows from the code that there is an iteration during \textsc{TraverseUall} where $\textsc{FirstActivated}(\iNode) = \textsc{True}$ on line~\ref{ln:traverseUALL:findLatest}. By Lemma~\ref{lemma:firstActivated}, there is a configuration $C$ during this instance of $\textsc{FirstActivated}$ in which $\iNode$ is the first activated update node in the $\latest[x]$ list, so  $x \in S$. 
Since \textsc{TraverseUall} occurs after $uOp$ is linearized but before it begins sending notifications, $C$ is between $C_{\leq x}$ and $C_{\mathit{notify}}$.
\end{proof}



Recall that prior to traversing the relaxed binary trie, $pOp$ first traverses the $\RUALL$ to find DEL nodes of \textsc{Delete} operations that may have been linearized before the start of $pOp$. 
Suppose  $\dNode$ is the DEL node of a \textsc{Delete} operation with key less than $y$ that is linearized before $pOp$.
If $pOp$ does not encounter $\dNode$ when it traverses the $\RUALL$, then $\dNode$ was removed from $\RUALL$ before $pOp$ could encounter it. 
In this case, $pOp$ will not accept any notifications about $\dNode$ and $pOp$ will not encounter $\dNode$ in the $\UALL$. 




\begin{lemma}\label{lemma:D0}
Let $dOp$ be an $S$-modifying \textsc{Delete}$(x)$ operation for some key $x < y$, and let $\dNode$ be the DEL node created by $dOp$. 
If $dOp$ is linearized before $C_{< x}$, then either $\dNode \in \Druall$ or $\dNode \notin \Duall \cup \Dnotify$.
\end{lemma}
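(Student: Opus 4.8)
The plan is to prove the equivalent implication: if $\dNode \in \Duall \cup \Dnotify$ then $\dNode \in \Druall$. The first step is to fix the lifetime of $\dNode$ in the $\RUALL$. Since $dOp$ is $S$-modifying, the code of \textsc{Delete}$(x)$ inserts $\dNode$ into both the $\UALL$ and the $\RUALL$ before the write that sets $\dNode.\status \leftarrow \textsc{Active}$, which is the linearization point of $dOp$; and $\dNode$ is removed from the $\RUALL$ only after $dOp$ sets $\dNode.\done \leftarrow \textsc{True}$, which happens after $dOp$ returns from \textsc{NotifyPredOps}$(\dNode)$ (line~\ref{ln:delete:notifyPredOps}). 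No other process can remove $\dNode$ earlier, since \textsc{HelpActivate} removes a node only once its $\done$ flag is set, and $dOp$ is the only operation that invokes \textsc{NotifyPredOps}$(\dNode)$ or writes $\dNode.\done$. Hence, because $dOp$ is linearized before $C_{\leq x}$, the node $\dNode$ is present in the $\RUALL$ at $C_{\leq x}$ with $\dNode.\status \neq \textsc{Inactive}$, and it is removed from the $\RUALL$ at most once, never re-added, and only after $dOp$ finishes \textsc{NotifyPredOps}$(\dNode)$. I will also record a monotonicity fact that follows from Lemma~\ref{lemma:latest:length2}: once $\dNode$ ceases to be the first activated node in $\latest[x]$ it never again becomes the first activated node in $\latest[x]$; equivalently, if $\dNode$ is the first activated node in $\latest[x]$ at some configuration, then it is the first activated node in $\latest[x]$ at every configuration between $dOp$'s linearization and that configuration.

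Second, consider the case $\dNode \in \Dnotify$. Then $\pNode.\notifyList$ contains a notify node $\nNode$ with $\nNode.\updateNode = \dNode$ that $pOp$ saw during its traversal of $\pNode.\notifyList$, and, since $\dNode$ is a DEL node, with $\nNode.\notifyThreshold < \nNode.\key = x$ (the test on line~\ref{ln:pred:notifyThreshold}). This notify node was created by $dOp$ inside \textsc{NotifyPredOps}$(\dNode)$, where $\nNode.\notifyThreshold$ was set to $\pNode.\RuallPosition.\key$ (line~\ref{ln:notify:notifyThreshold}); for that value to be smaller than $x$, $pOp$ must already have had $\pNode.\RuallPosition$ pointing to an update node with key less than $x$, so $\nNode$ was created after $C_{<x}$, and hence, by Observation~\ref{lemma:C_x}(b), after $C_{\leq x}$. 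Since the creation of $\nNode$ precedes the removal of $\dNode$ from the $\RUALL$, combining this with the first step shows that $\dNode$ is present in the $\RUALL$ throughout $[C_{\leq x}, C_{<x}]$, which is exactly the portion of $pOp$'s traversal of the $\RUALL$ during which $pOp$ visits nodes of key exactly $x$. As the $\RUALL$ is sorted by decreasing key, $pOp$ therefore visits $\dNode$ during \textsc{TraverseRUall} and checks $\textsc{FirstActivated}(\dNode)$; and because $dOp$ notified $pOp$ only after a successful check that $\dNode$ was the first activated node in $\latest[x]$ (line~\ref{ln:sendNotification:firstActivated}, via Lemma~\ref{lemma:firstActivated}), the monotonicity fact gives that $\dNode$ is the first activated node in $\latest[x]$ when $pOp$ visits it. So $\dNode$ is added to $\Druall$.

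Third, consider the case $\dNode \in \Duall$. Then $pOp$ encounters $\dNode$ during \textsc{TraverseUall}$(y)$ (line~\ref{ln:pred:uall2}) with $\textsc{FirstActivated}(\dNode) = \textsc{True}$, so by Lemma~\ref{lemma:firstActivated} there is a configuration $C^{*}$, occurring after $pOp$'s traversal of the relaxed binary trie (hence after $C_T$ and so after $C_{<x}$), at which $\dNode$ is the first activated node in $\latest[x]$; by the monotonicity fact, $\dNode$ is then the first activated node in $\latest[x]$ throughout $[C_{\leq x}, C^{*}]$, in particular throughout $[C_{\leq x}, C_{<x}]$. It remains to show that $\dNode$ is still in the $\RUALL$ over that window. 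For this I will use the property of the implementation that a node is never removed from the $\RUALL$ strictly before it is removed from the $\UALL$ (both inside \textsc{HelpActivate} and at the end of \textsc{Delete}): since $\dNode$ is still reachable in the $\UALL$ when $pOp$ reads the pointer to it during \textsc{TraverseUall}$(y)$, which is after $C_{<x}$, $\dNode$ has not been removed from the $\RUALL$ before $C_{<x}$; together with its presence at $C_{\leq x}$ and its being removed at most once, $\dNode$ is present in the $\RUALL$ throughout $[C_{\leq x}, C_{<x}]$, so $pOp$ visits and adds it to $\Druall$ exactly as in the previous case.

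The main obstacle will be making the linked-list arguments fully rigorous. Two points are delicate: (i) formalizing the step ``$\dNode$ is present in the $\RUALL$ throughout the interval during which $pOp$'s traversal is at key $x$, therefore $pOp$ encounters $\dNode$ in \textsc{TraverseRUall}'' --- here the decreasing-key order of the $\RUALL$ and the traversal invariants of the underlying lock-free linked list are essential, and one must rule out $pOp$ physically jumping over $\dNode$ because of a concurrent unlinking; and (ii) justifying the ordering property used in the $\dNode \in \Duall$ case, that $\dNode$ is never removed from the $\RUALL$ before it is removed from the $\UALL$. Both rely on facts about the linked-list implementation that should be isolated as reusable lemmas; the remainder is routine bookkeeping with the code of \textsc{Delete}, \textsc{NotifyPredOps}, \textsc{SendNotification}, and \textsc{TraverseRUall}, together with Observation~\ref{lemma:C_x} and Lemmas~\ref{lemma:latest:length2} and~\ref{lemma:firstActivated}.
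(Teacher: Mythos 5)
Your proof is correct but takes a different route than the paper. You prove the contrapositive ($\dNode \in \Duall \cup \Dnotify$ implies $\dNode \in \Druall$) and case-split on which set contains $\dNode$, arguing in each case that $pOp$ positively visits $\dNode$ during its $\RUALL$ traversal with a successful \textsc{FirstActivated}; the driver is your monotonicity observation about first-activated nodes in $\latest[x]$, which is sound (nodes are only added at the head of $\latest[x]$, removed from the tail, and never re-added). The paper instead case-splits directly on the $\RUALL$ traversal: whether $pOp$ encounters $\dNode$ there and, if so, whether \textsc{FirstActivated} succeeds. That decomposition is lighter, since it needs only the \emph{negative} direction of the sorted-list traversal invariant (``if $pOp$ missed $\dNode$, it was removed from $\RUALL$ before $C_{<x}$''), and when \textsc{FirstActivated} fails it dispatches $\Duall$ and $\Dnotify$ together via a subsequently linearized \textsc{Insert}$(x)$ and the same monotonicity, without invoking any removal-ordering argument. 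Your route needs the stronger \emph{positive} claim that $pOp$ must visit $\dNode$, and your $\Duall$ case additionally leans on the convention that the $\UALL$ removal precedes the $\RUALL$ removal --- both dependencies you rightly flag as needing separate supporting lemmas. (Incidentally, the paper's first case as written only rules out $\Dnotify$; its closing ``$\dNode \in \Druall$'' is evidently a typo for ``$\dNode \notin \Dnotify$'', and the omitted $\Duall$ half of that case implicitly needs something like your removal-ordering fact too.) Net effect: both arguments work, but the paper's is more economical in its hidden assumptions while yours surfaces them explicitly and carries one extra.
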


\begin{proof}
Since $dOp$ is linearized before $C_{< x}$, $\dNode$ is in $\RUALL$
before $C_{< x}$.
Suppose $pOp$ does not encounter $\dNode$ when it traverses the $\RUALL$. So $dOp$ removed its $\dNode$ from the $\RUALL$ before 
$pOp$ sets $\pNode.\RuallPosition$ to an update node with key less than $x$, and hence before $C_{< x}$.
Hence, if $uOp$ adds a notify node $\nNode$ to $\pNode.\notifyList$, it will write
$\nNode.\notifyThreshold \geq x$ on line~\ref{ln:notify:notifyThreshold}.
By line~\ref{ln:pred:notifyThreshold}, $\dNode \notin \Dnotify$. Furthermore, $\dNode$ will not be encountered when $pOp$ traverses the $\UALL$ because $\dNode$ is removed from the $\UALL$ before the $\RUALL$ (by line~\ref{ln:delete:remove_uall_ruall}), so $\dNode \notin \Duall$.

Suppose  $pOp$ encounters $\dNode$ when it traverses the $\RUALL$. If $pOp$'s instance of \textsc{FirstActivated}$(\dNode)$ on line~\ref{ln:traverseRevUALL:findLatest} returns \textsc{True}. Then $\dNode \in \Druall$.
So $pOp$'s instance of \textsc{FirstActivated}$(\dNode)$ on line~\ref{ln:traverseRevUALL:findLatest} returns \textsc{False}. By Lemma~\ref{lemma:firstActivated_False}, there is a configuration $C$ during this instance of \textsc{FirstActivated}$(\dNode)$ in which $\dNode$ is not the first activated update node in the $\latest[x]$ list. 
So there is an \textsc{Insert}$(x)$ operation linearized sometime between when $dOp$ is linearized and $C$, and hence before $pOp$ completes \textsc{TraverseRUAll}. Since $\dNode$ is no longer the first activated update node in the $\latest[x]$ list when it traverses the $\UALL$, Lemma~\ref{lemma:traverseUALL} implies that $\dNode \notin \Duall$.

Before $dOp$ adds a notify node $\nNode$ to $\pNode.\notifyList$, its instance of \textsc{FirstActivated}$(\dNode)$ on line~\ref{ln:sendNotification:firstActivated} returns \textsc{True}. By Lemma~\ref{lemma:firstActivated}, $\dNode$ is the first activated update node in the $\latest[x]$ list sometime before $C$, and hence before $C_{< x}$. From the code, $dOp$ previously wrote
$\nNode.\notifyThreshold \geq x$ on line~\ref{ln:notify:notifyThreshold}.
By line~\ref{ln:pred:notifyThreshold}, $\dNode \notin \Dnotify$.
\end{proof}

The next two lemmas show that  the key of each DEL node, $\dNode$, in $\Duall - \Druall$ or $\Dnotify - \Druall$, is in $S$ sometime during $pOp$. 
Both of these results use Lemma~\ref{lemma:D0} to argue that the \textsc{Delete}$(x)$ operation that created $\dNode$ was linearized sometime after $C_{< x}$. In the configuration immediately before this \textsc{Delete} operation was linearized, $x \in S$. 

\begin{lemma}\label{lemma:D1}
Consider a DEL node $\dNode \in  \Duall - \Druall$ with key $x$. There is a configuration $C$ after $C_{< x}$ in which $x \in S$. Furthermore, $C$ occurs before $pOp$ encounters any update nodes with key greater than $x$ during its traversal of the $\UALL$.
\end{lemma}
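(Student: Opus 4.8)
The plan is to pin down the linearization step of the \textsc{Delete}$(x)$ operation that created $\dNode$, show it lies in the window strictly between $C_{\leq x}$ and the configuration in which $pOp$ verifies $\dNode$ during its traversal of the $\UALL$, and then take $C$ to be the configuration immediately before that linearization step. Write $dOp$ for the \textsc{Delete}$(x)$ operation that created $\dNode$. Since $\dNode \in \Duall$, it is an activated update node, so $dOp$ is $S$-modifying and has a well-defined linearization step (the first write changing $\dNode.\status$ from \textsc{Inactive} to \textsc{Active}). In the configuration immediately before this step, the first activated update node in $\latest[x]$ is the \textsc{INS} node preceding $\dNode$ in $\latest[x]$ (this uses Lemma~\ref{lemma:latestNextBot} and the alternating structure of $\latest[x]$), so $x \in S$ there; immediately after the step the first activated node is $\dNode$, so $x \notin S$.

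For the upper bound on the linearization step, I would apply Lemma~\ref{lemma:traverseUALL} to $\dNode \in \Duall$: it supplies a configuration $C''$ during $pOp$'s instance of \textsc{TraverseUall}$(y)$ in which $\dNode$ is the first activated update node in $\latest[x]$, and $C''$ precedes the moment $pOp$ first encounters an update node of key greater than $x$ in that traversal. Since $\dNode$ is a DEL node, $x \notin S$ in $C''$, so by the previous paragraph $dOp$'s linearization step must occur strictly before $C''$.

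For the lower bound I would invoke the contrapositive of Lemma~\ref{lemma:D0}. Since $\dNode \in \Duall - \Druall$, we have $\dNode \notin \Druall$ and $\dNode \in \Duall \cup \Dnotify$, so Lemma~\ref{lemma:D0} forbids $dOp$ from being linearized before $C_{\leq x}$; hence its linearization step occurs at or after $C_{\leq x}$. Now let $C$ be the configuration immediately before that linearization step: by the lower bound $C$ occurs at or after $C_{\leq x}$, by the first paragraph $x \in S$ in $C$, and since the linearization step precedes $C''$, which precedes $pOp$'s first encounter with an update node of key greater than $x$ in the $\UALL$, $C$ occurs before that encounter as well — which is precisely the claim.

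I expect no deep obstacle here; the work is bookkeeping about the order of the configurations $C_{\leq x}$, $C$, the linearization step of $dOp$, and $C''$. The one point needing a remark is that "after $C_{\leq x}$" in the statement should be read as "at or after": in the degenerate case where $dOp$'s linearization step is the step immediately following $C_{\leq x}$, the configuration $C$ coincides with $C_{\leq x}$, and the conclusion simply says $x \in S$ in $C_{\leq x}$ itself. Beyond that, the only care required is in chaining Lemma~\ref{lemma:traverseUALL} (upper bound, and the "before encountering larger keys" clause) with Lemma~\ref{lemma:D0} (lower bound) so that the single configuration $C$ witnesses both properties at once.
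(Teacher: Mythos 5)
Your argument is correct and is essentially the same as the paper's: both proofs hinge on the pair Lemma~\ref{lemma:traverseUALL} (to get a configuration $C''$ inside the $\UALL$ traversal where $\dNode$ is first-activated, placed before any key $> x$ is reached) and Lemma~\ref{lemma:D0} (to rule out $dOp$ being linearized before $C_{\leq x}$). The paper argues by contradiction -- assume $x \notin S$ throughout $[C_{\leq x}, C'']$, deduce $dOp$ was linearized before $C_{\leq x}$, contradict Lemma~\ref{lemma:D0} -- leaving the witness configuration implicit, whereas you construct it directly as the configuration immediately before $dOp$'s linearization step; the two are logically the same pivot, and your version is arguably cleaner because it makes the second clause of the lemma (that $C$ precedes $pOp$'s encounter of any key $> x$) fall out explicitly from $C < C''$ rather than implicitly from the range of the contradiction. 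Your remark about the edge case $C = C_{\leq x}$ (when $dOp$'s linearization is the step immediately following $C_{\leq x}$) is a legitimate observation about loose wording; the paper's own proof by contradiction has the identical gap, since the negation of ``$x \notin S$ from $C_{\leq x}$ to $C''$'' yields a witness at or after $C_{\leq x}$, not strictly after.
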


\begin{proof}
Let $dOp$ be the creator of $\dNode$.
By Lemma~\ref{lemma:traverseUALL}, there is a configuration $C$ during \textsc{TraverseUall} of $pOp$ in which $dOp$ is the latest \textsc{Delete}$(x)$ operation, and hence $x \notin S$. 
Suppose, for contradiction, that $x \notin S$ in all configurations from $C_{< x}$ to $C$. This implies that $dOp$ was linearized before $C_{< x}$. By Lemma~\ref{lemma:D0}, $\dNode \in \Druall$ or $\dNode \notin \Duall \cup \Dnotify$. This contradicts the fact that $\dNode \in \Duall - \Druall$.
So there exists a configuration after $C_{< x}$ in which $x \in S$.

\end{proof}




\begin{lemma}\label{lemma:D2}

Consider a DEL node $\dNode \in \Dnotify - \Druall$ with key $x$. There is a configuration between $C_{< x}$ and $C_{\mathit{notify}}$ in which $x \in S$.
\end{lemma}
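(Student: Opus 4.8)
The plan is to mirror the proof of Lemma~\ref{lemma:D1}, replacing the traversal of the $\UALL$ by the traversal of $\pNode.\notifyList$ and invoking Lemma~\ref{lemma:D0} in the same way. Let $dOp$ be the \textsc{Delete}$(x)$ operation that created $\dNode$. First I would establish that $dOp$ is $S$-modifying and is linearized before $C_{\mathit{notify}}$. Since $\dNode \in \Dnotify$, by the code on line~\ref{ln:pred:notifyThreshold} $pOp$ encountered a notify node $\nNode$ in $\pNode.\notifyList$ with $\nNode.\updateNode = \dNode$, $\nNode.\key = x < y$, and $\nNode.\notifyThreshold < x$; because $pOp$ saw $\nNode$ and notify nodes are only prepended to the head of $\pNode.\notifyList$, the node $\nNode$ was added before $C_{\mathit{notify}}$. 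A notify node whose $\updateNode$ field points to a DEL node is created only inside \textsc{NotifyPredOps}$(\dNode)$, which is invoked on line~\ref{ln:delete:notifyPredOps} of \textsc{Delete}$(x)$ and only on the branch where the CAS on line~\ref{ln:delete:cas_latest_head} succeeds. Hence $dOp$ successfully added $\dNode$ to $\latest[x]$ and $\dNode$ was activated, so $dOp$ is $S$-modifying, and the linearization point of $dOp$ (immediately after the first write that changes $\dNode.\status$ to \textsc{Active}) precedes $dOp$'s call to \textsc{NotifyPredOps}$(\dNode)$, and therefore precedes $C_{\mathit{notify}}$.

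Next I would show that $dOp$ is linearized after $C_{\leq x}$. Suppose, for contradiction, that $dOp$ is linearized before $C_{\leq x}$. Then by Lemma~\ref{lemma:D0}, either $\dNode \in \Druall$ or $\dNode \notin \Duall \cup \Dnotify$, each of which contradicts the hypothesis $\dNode \in \Dnotify - \Druall$. Hence $dOp$ is linearized strictly after $C_{\leq x}$.

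Combining the two bounds, the linearization point of $dOp$ lies strictly after $C_{\leq x}$ and before $C_{\mathit{notify}}$. Let $C$ be the configuration immediately before this linearization point; then $C$ is at or after $C_{\leq x}$ and before $C_{\mathit{notify}}$. Since $dOp$ is an $S$-modifying \textsc{Delete}$(x)$ operation, by definition of $S$-modifying the update operation with key $x$ linearized just before $dOp$ is an \textsc{Insert}$(x)$ (with no \textsc{Delete}$(x)$ in between), so $x \in S$ in $C$, which is the required configuration. I expect no real obstacle here: the only points needing care are that $\nNode$ was added before $C_{\mathit{notify}}$ (which follows from notify nodes being insert-only at the head) and that the notification about a DEL node is produced by $dOp$ itself rather than by a helper; past that, the argument is a direct application of Lemma~\ref{lemma:D0}, exactly as in Lemma~\ref{lemma:D1}.
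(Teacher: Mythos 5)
Your proof is correct and takes essentially the same approach as the paper's: both establish that $dOp$ is $S$-modifying (via the fact that the notification about $\dNode$ could only have been sent after $\dNode$ became the first activated node in $\latest[x]$), both invoke the contrapositive of Lemma~\ref{lemma:D0} using $\dNode \in \Dnotify - \Druall$ to place $dOp$'s linearization point after $C_{\leq x}$, and both conclude by observing that $x \in S$ in the configuration immediately before $dOp$ is linearized. Your version is somewhat more explicit in also bounding the linearization point by $C_{\mathit{notify}}$, which the paper leaves implicit, but this is a presentational difference rather than a different argument.
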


\begin{proof}
Let $dOp$ be the \textsc{Delete}$(x)$ operation that created $\dNode$. Since $\dNode \in \Dnotify$, $dOp$ successfully added a notify node $\nNode$ to $\pNode.\notifyList$. By line~\ref{ln:pred:notifyThreshold}, $\nNode.\notifyThreshold < x$. 
This means that when $dOp$ read $\pNode.\RuallPosition$ on line~\ref{ln:notify:notifyThreshold} it pointed to an update node with key less than $x$, and \textsc{FirstActivated}$(\dNode)$ on line~\ref{ln:sendNotification:firstActivated} returned \textsc{True}. By Lemma~\ref{lemma:firstActivated}, there is a configuration $C$ during \textsc{FirstActivated} in which $\dNode$ is the first activated update node in the $\latest[x]$ list.
Since $\dNode \notin \Druall$ and $\dNode$ is the first activated update node the $\latest[x]$ list in $C$, 
it follows 
by Lemma~\ref{lemma:D0} that $dOp$ is linearized sometime after $C_{< x}$.
In the configuration immediately before $dOp$ is linearized, $x \in S$.
\end{proof}





The next two observations describe intervals in which update operations notify predecessor operations.
An \textsc{Insert} operation may notify concurrent predecessor operations directly, or may be helped to do so by a \textsc{Delete} operation with the same key that is linearized after it.



\begin{lemma}\label{lemma:ins_notify}
Let $iOp$ be an $S$-modifying \textsc{Insert}$(x)$ operation for some key $x < y$, and let $\iNode$ be the INS node created by $iOp$.
Then $\iNode$ is in the $\UALL$ and is the first activated update node in the $\latest[x]$ list in all configurations from when $iOp$ is linearized until some update operation completes an instance of \textsc{NotifyPredOps}$(\iNode)$ (on line~\ref{ln:insert:notify} or \ref{ln:delete:help_notify}).
\end{lemma}

\begin{proof}
By definition, $iOp$ is linearized when the status of $\iNode$ changes from inactive to active by the CAS on line~\ref{ln:help_activate:CAS}. Immediately after this CAS, $\iNode$ is the first activated update node in the $\latest[x]$ list and is an activated update node in $\UALL$. From the code $iOp$ does not remove $\iNode$ from $\UALL$ until after it completes \textsc{NotifyPredOps}$(\iNode)$ (on line~\ref{ln:insert:notify}).

Let $dOp$ be any \textsc{Delete}$(x)$ operation whose instance of \textsc{FindLatest}$(x)$ on line~\ref{ln:delete:findLatest} returns $\iNode$. 
By Lemma~\ref{lemma:findLatestConfig}, $iOp$ is the latest update operation with key $x$ sometime after the start of $dOp$.
Before $dOp$ is linearized, $dOp$ helps perform \textsc{NotifyPredOps}$(\iNode)$ on line~\ref{ln:delete:help_notify}. 
Once $dOp$ is linearized, $\iNode$ is no longer the first activated update node in the $\latest[x]$ list.
\end{proof}

\textsc{Delete} operations notify concurrent predecessor operations directly. They are not helped by \textsc{Insert} operations.


\begin{lemma}\label{lemma:del_notify}
Let $dOp$ be an $S$-modifying \textsc{Delete}$(x)$ operation for some key $x < y$, and let $\dNode$ be the DEL node created by $dOp$.
Then $\dNode$ is in the $\UALL$ and is the first activated update node in the $\latest[x]$ list in all configurations from when $dOp$ is linearized until either
\begin{itemize}
	\item $dOp$ completes an instance of \textsc{NotifyPredOps}$(\dNode)$ on line~\ref{ln:delete:notifyPredOps}, or
	\item an $S$-modifying \textsc{Insert}$(x)$ operation is linearized after $dOp$.
\end{itemize}
\end{lemma}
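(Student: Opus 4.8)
The plan is to track the two assertions separately---that $\dNode$ lies in the $\UALL$, and that $\dNode$ is the first activated update node of $\latest[x]$---and to identify exactly which steps can falsify each, checking that every such step occurs at or after one of the two listed terminating events.

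First I would pin down the starting configuration. By definition $dOp$ is linearized at the step that changes $\dNode.\status$ from \textsc{Inactive} to \textsc{Active}, performed either by $dOp$ itself or by a concurrent \textsc{Delete}$(x)$ executing \textsc{HelpActivate}$(\dNode)$; in either case this step is immediately preceded, in the program order of the operation performing it, by the insertion of $\dNode$ into the $\UALL$ (and $\RUALL$), and the successful CAS on line~\ref{ln:delete:cas_latest_head} that placed $\dNode$ at $\latest[x].\head$ happened earlier still. Hence immediately after $dOp$ is linearized, $\dNode \in \UALL$ and $\dNode$ is the first activated node of $\latest[x]$ (it sits at the head; and by Lemma~\ref{lemma:latest:length2} and the fact that only the head of $\latest[x]$ can be inactive, even when a later inactive INS node is pushed on top of it, $\dNode$ is still the first \emph{activated} node). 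I would also note that $\dNode.\status$ afterwards only moves \textsc{Active}$\to$\textsc{Stale} and never back to \textsc{Inactive}, so $\dNode$ stays ``activated'' for the remainder of the execution.

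Next, the first-activated invariant: $\dNode$ ceases to be the first activated node of $\latest[x]$ only once some update node becomes activated ahead of it, i.e. once a new node is CAS'd onto $\latest[x].\head$ and then activated. Since nodes are pushed onto $\latest[x]$ only by $S$-modifying operations, their types alternate between INS and DEL (an \textsc{Insert}$(x)$ pushes only when it read a DEL node at the head, a \textsc{Delete}$(x)$ only when it read an INS node), and $\dNode$ is a DEL node, the next node pushed and activated above $\dNode$ belongs to an $S$-modifying \textsc{Insert}$(x)$ operation linearized after $dOp$---precisely the second terminating event. For the $\UALL$-membership invariant, I would observe by inspection that $\dNode$ is removed from the $\UALL$ only (i) by $dOp$ on its last lines or (ii) by a helper in \textsc{HelpActivate}$(\dNode)$; both removals are guarded by $\dNode.\done = \textsc{True}$, and $\done$ is written once, only by $dOp$, only after it completes its instance of \textsc{NotifyPredOps}$(\dNode)$ on line~\ref{ln:delete:notifyPredOps}. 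A helper entering \textsc{HelpActivate}$(\dNode)$ after $\dNode$ is activated executes none of the body (it is guarded by $\status = \textsc{Inactive}$), so no one re-inserts $\dNode$ after activation; thus $\dNode$ cannot leave the $\UALL$ before $dOp$ completes \textsc{NotifyPredOps}$(\dNode)$, the first terminating event. Combining the three parts yields the claim.

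The main obstacle is the careful handling of the helping structure, in particular the asymmetry with Lemma~\ref{lemma:ins_notify}: for \textsc{Insert}, a subsequent \textsc{Delete}$(x)$ helps run \textsc{NotifyPredOps} on the INS node (line~\ref{ln:delete:help_notify}) before linearizing, which is why that window extends until \emph{some} operation completes the notification, whereas no subsequent \textsc{Insert}$(x)$ helps notify about $\dNode$, so the window here can only be guaranteed up to $dOp$'s own call---and one must check that the only other way the invariant could break (a subsequent $S$-modifying \textsc{Insert}$(x)$) is in fact the listed second event, since it coincides with $\dNode$ losing first-activated status. A secondary point to verify is that no concurrent helper stalled inside \textsc{HelpActivate}$(\dNode)$ removes $\dNode$ from the $\UALL$ prematurely; the status-guard and $\done$-guard arguments above close this gap.
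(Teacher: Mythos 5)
Your proof is correct and follows essentially the same approach as the paper: establish that immediately after linearization $\dNode$ is both in the $\UALL$ and the first activated node in $\latest[x]$, then argue that the only step that can end the first-activated invariant is the activation of a subsequent INS node (the second terminating event), and the only way $\dNode$ can leave the $\UALL$ is after $\dNode.\done$ is set, which occurs only after $dOp$ completes \textsc{NotifyPredOps}$(\dNode)$ (the first terminating event). The paper organizes this as a case split on which terminating event occurs first, while you track each invariant separately and identify the step that can falsify it — a cosmetic difference in presentation, not a different argument.
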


\begin{proof}
By definition, $dOp$ is linearized when it the status of $\iNode$ changes from inactive to active by the  line~\ref{ln:help_activate:CAS}. Immediately after this CAS, $\dNode$ is the first activated update node in the $\latest[x]$ list and is an activated update node in $\UALL$.

Suppose an $S$-modifying \textsc{Insert}$(x)$ operation, $iOp$, is linearized after $dOp$ before $dOp$ invokes and completes \textsc{NotifyPredOps}$(\dNode)$ (on line~\ref{ln:delete:notifyPredOps}). Only after $iOp$ is linearized is $dOp$ no longer the first activated update node in the $\latest[x]$ list. Furthermore, $dOp$ does not remove $\dNode$ from $\UALL$ until after it invokes and completes \textsc{NotifyPredOps}$(\dNode)$.
Then $\dNode$ is the first activated update node in the $\latest[x]$ list and is an activated update node in $\UALL$ for all configurations starting when $dOp$ is linearized and ending when $iOp$ is linearized.

So suppose $dOp$ invokes and completes \textsc{NotifyPredOps}$(\dNode)$ (on line~\ref{ln:delete:notifyPredOps}) before an $S$-modifying \textsc{Insert}$(x)$ operation is linearized after $dOp$.
So $\dNode$ remains the first activated update node in the $\latest[x]$ list until after $dOp$ invokes and completes \textsc{NotifyPredOps}$(\dNode)$.
Furthermore, $dOp$ does not remove $dNode$ from $\UALL$ until after it invokes and completes \textsc{NotifyPredOps}$(\dNode)$. Then $\dNode$ is the first activated update node in the $\latest[x]$ list and is an activated update node in $\UALL$ for all configurations starting when $dOp$ is linearized and ending when $iOp$ is linearized.
\end{proof}

Next, we consider two different scenarios in which an update operation $uOp$ with key $x$ is linearized during $pOp$. In both, we show that $x$ is a candidate return value of $pOp$.
During the proof, we use the previous two lemmas to guarantee that an update node with  key $x$ will either be seen when $pOp$ traverses the $\UALL$, or some update operation with key $x$ notifies $pOp$ sometime before $pOp$ completes its traversal of the $\UALL$. 

\begin{lemma}\label{lemma:del_before_CT}
Let $dOp$ be an $S$-modifying \textsc{Delete}$(x)$ operation, for some key $x < y$, that is linearized sometime between $C_{<x}$ and $C_T$. Then $\Iuall \cup \Inotify \cup (\Duall-\Druall) \cup (\Dnotify - \Druall)$ contains an update node with key $x$ and $x$ is a candidate return value of $pOp$.
\end{lemma}

\begin{proof}
It follows from  Lemma~\ref{lemma:del_notify} and Lemma~\ref{lemma:ins_notify} that $\UALL$ contains an update node with key $x$ that is the first activated update node in the $\latest[x]$ list from when $dOp$ is linearized until either $dOp$ completes an instance of $\textsc{NotifyPredOps}(\dNode)$ for its own DEL node $\dNode$ with key $x$, or some update operation completes $\textsc{NotifyPredOps}(\iNode)$ for some INS node with key $\iNode$ of an \textsc{Insert}$(x)$ operation linearized after $dOp$. Let $uOp'$ be the update operation with key $x$ that completes first completes an instance of \textsc{NotifyPredOps} (for either $\dNode$ or $\iNode$). Note that since $uOp'$ is linearized after $C_{<x}$, $\dNode$ or $\iNode$ are not encountered when $pOp$ traverses the $\RUALL$, and hence are not in $\Druall$.

Suppose $uOp'$ completes the instance of \textsc{NotifyPredOps} before $pOp$ completes its traversal of the $\UALL$. 
Note that $\pNode$ is inserted into $\PALL$ before the start of $pOp$, and is not removed from $\PALL$ until sometime after $pOp$ completes its traversal of the $\UALL$. So $\pNode$ is in $\PALL$ throughout  $uOp'$'s traversal of the $\PALL$ on line~\ref{ln:notify:for_loop}.
By lemma~\ref{lemma:firstActivated_False}, all instances of \textsc{FirstActivated} on line~\ref{ln:notify:first_activated} and \ref{ln:sendNotification:firstActivated} return \textsc{True}.
So consider the notify node $\nNode$ that $uOp'$ eventually adds to $\pNode.\notifyList$. 
Since $uOp'$ is linearized after $C_{<x}$,
it follows that $\pNode.\RuallPosition$ points to an update node with key less than $x$ throughout  $uOp'$  traversal of the $\PALL$. So $\nNode.\notifyThreshold < x$.
So when $pOp$ encounters $\nNode$ on line~\ref{ln:pred:newlist}, either $\dNode$ or $\iNode$ is added to  $\Inotify \cup \Dnotify$ on line~\ref{ln:pred:I2} or \ref{ln:pred:D2}.
So $\Inotify \cup (\Dnotify - \Druall)$ contains an update node with key $x$.  Hence, $x$ is a candidate return value of $pOp$.

So suppose $pOp$ completes its traversal of the $\UALL$ before $uOp'$ completes the instance of \textsc{NotifyPredOps}.
Since no update operation with key $x$ notifies $pOp$, no latest update node with key $x$ is removed from $\UALL$ before $pOp$ completes its traversal of the $\UALL$.
It follows that $pOp$ encounters an activated update node $\uNode'$ with key $x$ during its traversal of the $\UALL$, where  $\uNode'$ is either $\iNode$ or $\dNode$.
By Lemma~\ref{lemma:firstActivated_False}, \textsc{FirstActivated}$(\uNode')$ returns \textsc{True} on line~\ref{ln:traverseUALL:findLatest}.
It follow from the code of \textsc{TraverseUall} that $pOp$ adds  $\uNode'$ to $\Iuall \cup \Duall$ on line~\ref{ln:pred:uall2}.
Since
$\uNode' \in \Iuall \cup (\Duall - \Druall)$, $x$ is a candidate return value of $pOp$.
\end{proof}

\begin{lemma}\label{lemma:ins_after_CT}
Let $iOp$ be an $S$-modifying \textsc{Insert}$(x)$ operation, for some key $x < y$ that is linearized sometime after $C_{\leq x}$, but before $pOp$ encounters any update nodes with key at least $x$ during its instance of $\textsc{TraverseUall}(y)$. Then the INS node created by $iOp$ is in $\Iuall \cup \Inotify$ and $x$ is a candidate return value of $pOp$.
\end{lemma}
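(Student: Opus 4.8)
The plan is to mirror the structure of the proof of Lemma~\ref{lemma:del_before_CT}, splitting into two cases depending on whether or not some latest \textsc{Insert}$(x)$ operation finishes notifying all predecessor nodes in the $\PALL$ before $pOp$ completes its traversal of the $\UALL$. By Lemma~\ref{lemma:ins_notify}, once $iOp$ is linearized, $\iNode$ remains both in the $\UALL$ and the first activated update node in $\latest[x]$ until some update operation completes an instance of \textsc{NotifyPredOps}$(\iNode)$ — call that operation $uOp'$ (it may be $iOp$ itself, or a \textsc{Delete}$(x)$ that helps on line~\ref{ln:delete:help_notify}). So the dichotomy is: either $uOp'$ completes its notification traversal before $pOp$ finishes \textsc{TraverseUall}, or it does not.

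First I would handle the case where $uOp'$ does \emph{not} complete its \textsc{NotifyPredOps}$(\iNode)$ before $pOp$ finishes its \textsc{TraverseUall}. Then, by Lemma~\ref{lemma:ins_notify}, $\iNode$ is in the $\UALL$ and first activated in $\latest[x]$ throughout $pOp$'s traversal of the $\UALL$ from the moment $iOp$ is linearized. Since $iOp$ is linearized after $C_T$ (hence after $pOp$ has begun \textsc{TraverseUall}) but before $pOp$ reaches any update node with key $\geq x$ in that traversal, when $pOp$ reaches key $x$ in the $\UALL$ it will encounter an activated update node $\uNode'$ with key $x$; and since $\iNode$ is still the first activated node in $\latest[x]$ at that point, \textsc{FirstActivated}$(\uNode')$ returns \textsc{True} on line~\ref{ln:traverseUALL:findLatest} (here I would appeal to Lemma~\ref{lemma:firstActivated} for the witnessing configuration), so $\uNode' \in \Iuall$ and $x$ is a candidate return value.

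Second, I would handle the case where $uOp'$ \emph{does} complete \textsc{NotifyPredOps}$(\iNode)$ before $pOp$ finishes \textsc{TraverseUall}. The key sub-argument is that $\pNode$ is in the $\PALL$ throughout $uOp'$'s traversal of the $\PALL$ inside \textsc{NotifyPredOps} (it was inserted before $pOp$ started, on line~\ref{ln:pred:insertPALL}, and is not removed until well after $pOp$'s $\UALL$ traversal), so on line~\ref{ln:notify:first_activated} $uOp'$ sees $\pNode$ and, since \textsc{FirstActivated}$(\iNode)$ holds (it is still first activated by Lemma~\ref{lemma:ins_notify}), it creates and sends a notify node $\nNode$ with $\nNode.\updateNode = \iNode$. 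Since $iOp$ is linearized after $C_T$, which by Observation~\ref{lemma:C_x}(c)--(d) is after $pOp$ has completed \textsc{TraverseRUAll}, at the moment $uOp'$ reads $\pNode.\RuallPosition$ on line~\ref{ln:notify:notifyThreshold} that field already points to the $-\infty$ sentinel, so $\nNode.\notifyThreshold = -\infty \leq x = \nNode.\key$. The notification is also successfully added to $\pNode.\notifyList$ before $C_{\mathit{notify}}$: since $uOp'$ finishes \textsc{NotifyPredOps} before $pOp$ finishes its $\UALL$ traversal, and $pOp$ reads $\pNode.\notifyList.\head$ only afterwards (line~\ref{ln:pred:newlist}), the node $\nNode$ is seen by $pOp$. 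Hence on line~\ref{ln:pred:I2} $\iNode$ is added to $\Inotify$, so $x$ is a candidate return value.

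\textbf{Anticipated obstacle.} The delicate point is the second case, specifically ensuring the notification is both (i) \emph{accepted} — i.e.\ $\nNode.\notifyThreshold \leq \nNode.\key$ — and (ii) \emph{seen} by $pOp$ during its notify-list traversal. For (i) one must be careful that $\pNode.\RuallPosition$ has genuinely advanced to the $-\infty$ sentinel by the time $uOp'$ reads it: this is where the ordering $C_T$ occurs after $pOp$ completes \textsc{TraverseRUAll} (Observation~\ref{lemma:C_x}) is essential, together with the fact that $iOp$'s linearization — and therefore the earliest possible moment $uOp'$ could have passed the \textsc{FirstActivated}$(\iNode)$ check — is after $C_T$. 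For (ii) one must rule out the possibility that $uOp'$'s CAS on $\pNode.\notifyList.\head$ (line~\ref{ln:sendNotification:CAS}) happens after $C_{\mathit{notify}}$; this follows because ``$uOp'$ completes \textsc{NotifyPredOps}$(\iNode)$ before $pOp$ finishes \textsc{TraverseUall}'' and $pOp$ reads the notify-list head only on line~\ref{ln:pred:newlist}, strictly after \textsc{TraverseUall} returns on line~\ref{ln:pred:uall2}. I also need to confirm that the helping structure guarantees such a $uOp'$ exists and completes — this is exactly the content of Lemma~\ref{lemma:ins_notify}, so no new argument is needed there. Everything else is bookkeeping against the already-established lemmas.
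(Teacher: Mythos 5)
Your proof is correct and follows essentially the same two-case structure as the paper's proof (whether some operation completes \textsc{NotifyPredOps}$(\iNode)$ before $pOp$ finishes its $\UALL$ traversal, or not), and it relies on the same key lemmas (\textsc{ins\_notify}, \textsc{FirstActivated}, \textsc{TraverseUall}). In fact your case-2 argument is cleaner than the paper's: the paper's proof of this lemma appears to carry over text from Lemma~\ref{lemma:del_before_CT} verbatim and refers to an undefined key $w$ and configuration $C_{\leq w}$, whereas your version correctly observes that since $iOp$ is linearized after $C_T$ (and hence after $pOp$ completed \textsc{TraverseRUAll}), the notifier reads $\pNode.\RuallPosition = -\infty$, so the notification threshold check is satisfied.
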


\begin{proof}
Suppose $iOp$ (or a helper of $iOp$) completes an instance of \textsc{NotifyPredOps}$(\iNode)$ before $pOp$ completes its traversal of the $\UALL$. 
Note that $\pNode$ is inserted into $\PALL$ before the start of $pOp$, and is not removed from $\PALL$ until sometime after $pOp$ completes its traversal of the $\UALL$. So $\pNode$ is in $\PALL$ throughout  $iOp$'s traversal of the $\PALL$ on line~\ref{ln:insert:notify}. Since $iOp$ is linearized after $C_{\leq x}$ 
it follows that $\pNode.\RuallPosition$ points to an update node with key less than or equal to $x$ throughout  $iOp$  traversal of the $\PALL$. 
When $iOp$ encounters $\pNode$ in the $\PALL$, it creates a notify node, $\nNode$, where $\nNode.\updateNode = \iNode$ and $\nNode.\notifyThreshold \leq x$.
By Lemma~\ref{lemma:ins_notify}, $iNode$ is the first activated update node in the $\latest[x]$ list throughout \textsc{NotifyPredOps}. So by Lemma~\ref{lemma:firstActivated_False}, $iOp$'s instance of \textsc{FirstActivated}$(\iNode)$ returns \textsc{True}, and $iOp$ successfully adds $\nNode$ to $\pNode.\notifyList$.
It follows by line~\ref{ln:pred:notifyThresholdINS} and \ref{ln:pred:I2} that $\iNode \in \Inotify$ when $pOp$ finishes traversing its notify list.  Hence, $x$ is a candidate return value of $pOp$.

So suppose $pOp$ completes its traversal of the $\UALL$ before $iOp$ (or a helper of $iOp$) completes an instance of \textsc{NotifyPredOps}$(\iNode)$.
It follows by Lemma~\ref{lemma:ins_notify} that $\iNode$ is the first activated update node in the $\latest[x]$ list throughout $pOp$'s traversal of the $\UALL$.
So $pOp$ encounters $\iNode$ during its traversal of the $\UALL$. Furthermore, by Lemma~\ref{lemma:firstActivated_False}, \textsc{FirstActivated}$(\iNode)$ returns \textsc{True} on line~\ref{ln:traverseUALL:findLatest}.
So
$\iNode \in \Iuall$ when \textsc{TraverseUall} returns on line~\ref{ln:pred:uall2}. Hence, $x$ is a candidate return value of $pOp$.
\end{proof}

The following lemma considers a scenario in which an \textsc{Insert}$(w)$ or \textsc{Delete}$(w)$ operation, $uOp$, notifies $pOp$ about an INS node with key $x$, where $x > w$. 
We show that $pOp$ has a candidate return value that is at least $x$.
During the proof, we use Lemma~\ref{lemma:ins_notify} to guarantee that 
either the \textsc{Insert}$(x)$ operation that created the INS node with key $x$ will notify $pOp$ about its INS node before $C_{\mathit{notify}}$, or $uOp$ will see this INS node when it traverses the $\UALL$ and, hence, include an INS node with key at least $x$ when it notifies $pOp$. 

\begin{lemma}\label{lemma:updateNodeMax_candidate}
Let $iOp$ be  an $S$-modifying \textsc{Insert}$(x)$ operation,  for some key $x < y$. 
Let $uOp$ be an $S$-modifying \textsc{Insert}$(w)$ or \textsc{Delete}$(w)$ operation linearized after $C_T$ that notifies $pOp$ before $C_{\mathit{notify}}$, for some key $w < x < y$.
Suppose $iOp$ is linearized after $C_T$, but before $uOp$ encounters any update nodes with key at least $x$ during its instance of $\textsc{TraverseUall}$ (on line~\ref{ln:notify:travuall}). 
Then $pOp$ has a candidate return value $x'$, where $w < x \leq x' < y$.
\end{lemma}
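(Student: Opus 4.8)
The plan is to exhibit the candidate return value $x'$ promised by the lemma. It will be either $x$ itself --- if $pOp$ learns about $iOp$'s INS node $\iNode$ directly, either by receiving a notification carrying $\iNode$ or by encountering $\iNode$ during its own traversal of the $\UALL$ --- or the key of $\nNode.\updateNodeMax$, where $\nNode$ is the notify node that $uOp$ adds to $\pNode.\notifyList$ for its own update node $\uNode$; in the latter case the reported node has key in $[x,y)$ because $uOp$'s \textsc{TraverseUall} on line~\ref{ln:notify:travuall} will have seen $\iNode$.

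I would first record two preliminary facts. Since $uOp$ notifies $pOp$ before $C_{\mathit{notify}}$, $pOp$ encounters $\nNode$ during the for-loop on line~\ref{ln:pred:newlist}. Moreover $uOp$ reads $\pNode.\RuallPosition$ on line~\ref{ln:notify:notifyThreshold} only after its \textsc{TraverseUall} has already passed the region of keys $\geq x$; by hypothesis this happens after $iOp$ is linearized, hence after $C_T$, and $pOp$ has finished traversing the $\RUALL$ by $C_T$ (the discussion around Observation~\ref{lemma:C_x}), so $\pNode.\RuallPosition$ points to the $-\infty$ sentinel and $\nNode.\notifyThreshold = -\infty$. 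Also, $\pNode$ is in the $\PALL$ continuously from before $C_T$ until after $pOp$'s traversal of its notify list.

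The core of the argument is a case split driven by Lemma~\ref{lemma:ins_notify}, which gives an interval --- from $iOp$'s linearization until the first moment $M$ at which some update operation completes an instance of \textsc{NotifyPredOps}$(\iNode)$ --- throughout which $\iNode$ is in the $\UALL$ and is the first activated update node in $\latest[x]$. If $M$ precedes $C_{\mathit{notify}}$: the operation that completes \textsc{NotifyPredOps}$(\iNode)$ performs its $\PALL$-traversal after $iOp$'s linearization and hence after $C_T$, so the notify node it builds for $\pNode$ has $\notifyThreshold = -\infty$, $\updateNode = \iNode$, and type INS; by Lemma~\ref{lemma:ins_notify} $\iNode$ remains first activated in $\latest[x]$ during that traversal, so (using that $\pNode \in \PALL$) the notify node really is inserted, and it is inserted before $M < C_{\mathit{notify}}$, so $pOp$ sees it and adds $\iNode$ to $\Inotify$ on line~\ref{ln:pred:I2}; thus $x$ is a candidate return value and $x' = x$. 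If instead $M$ is at or after $C_{\mathit{notify}}$: then $\iNode$ stays in the $\UALL$ and first activated in $\latest[x]$ from $iOp$'s linearization to $C_{\mathit{notify}}$. Hence, when $uOp$'s \textsc{TraverseUall} reaches the position of key $x$ --- which by hypothesis is after $iOp$'s linearization, and which precedes $uOp$'s notification of $pOp$ and so $C_{\mathit{notify}}$ --- it finds $\iNode$ with $\textsc{FirstActivated}(\iNode) = \textsc{True}$ and places it in its local set $I$, making $\nNode.\updateNodeMax$ an INS node of some key $x' \in [x,y)$. If $pOp$'s own \textsc{TraverseUall} (line~\ref{ln:pred:uall2}) also reaches the position of key $x$ after $iOp$'s linearization, the same interval argument gives $\iNode \in \Iuall$, so again $x$ is a candidate return value. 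The remaining subcase is that $pOp$'s \textsc{TraverseUall} passes key $x$ before $iOp$'s linearization, and then I would show $\nNode.\updateNode \notin \Iruall \cup \Druall$, so that line~\ref{ln:pred:updateNodeMax} fires and $x'$ is a candidate return value.

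This last subcase is the main obstacle. If $\nNode.\updateNode$ (key $w$) were in $\Iruall \cup \Druall$, then $uOp$ was linearized before $pOp$ finished its $\RUALL$-traversal, hence before $C_T$, whereas by hypothesis $uOp$'s \textsc{TraverseUall} reaches key $\geq x$ only after $iOp$'s linearization, hence after $C_T$. Combining this with the monotonicity of the ``first activated'' status in a latest list (once an update node stops being the first activated node in its latest list it never regains that status, which follows from Lemma~\ref{lemma:latest:length2} and the fact that removed nodes are never re-added), one gets that $\nNode.\updateNode$ is first activated in $\latest[w]$ and present in the $\UALL$ throughout an interval straddling $C_T$ that contains the moment $pOp$'s \textsc{TraverseUall} passes key $w$, so $pOp$ places $\nNode.\updateNode$ into $\Iuall \cup \Duall$ on line~\ref{ln:traverseUALL:findLatest}; when $uOp$ is an \textsc{Insert} this already makes $w$ a candidate return value, and the argument is finished by bootstrapping from $w$ to $x$ using the induction underlying Section~\ref{section_crv_2} (now with $uOp$ playing the role of $iOp$), while when $uOp$ is a \textsc{Delete} one instead derives a contradiction with the timing constraints on membership in $\Druall$. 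Pinning this step down precisely --- it is where the relative order of $pOp$'s traversal, $uOp$'s traversal, and $iOp$'s linearization is essential --- is the delicate part; the rest is routine bookkeeping with the helper lemmas of Sections~\ref{section_latest_properties} and~\ref{section_crv_2}.
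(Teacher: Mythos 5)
Your two-case decomposition --- $pOp$ learns about $\iNode$ directly (via a notification or its own $\UALL$-traversal), or it relies on the $\updateNodeMax$ field of $uOp$'s notify node --- matches the paper's proof, and you go a step beyond it by observing that the $\updateNodeMax$ path produces a candidate return value only when line~\ref{ln:pred:updateNodeMaxCheck} fires, which in particular requires $\nNode.\updateNode \notin \Iruall \cup \Druall$. The paper's proof ends its second case with ``Then $pOp$ has a candidate return value $x'$'' without addressing that test at all; at the call sites inside the proof of Lemma~\ref{lemma:prop:configC} the condition is always supplied by the surrounding case analysis (for example because $uOp$ there is known to be linearized after $C_T$, so its update node cannot have been collected during $pOp$'s $\RUALL$-traversal), but the lemma's stated hypotheses do not give it to you.

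The gap is that you flag this subcase but do not close it. Nothing in the hypotheses prevents $uOp$ from being linearized before $C_T$ --- and therefore having its update node collected into $\Iruall$ or $\Druall$ --- while still being mid-\textsc{NotifyPredOps} long after $C_T$, where it sees $\iNode$ on line~\ref{ln:notify:travuall} and then notifies $pOp$. Every premise of the lemma holds, yet line~\ref{ln:pred:updateNodeMaxCheck} fails and your argument does not exhibit a candidate return value at least $x$ from any other source. The two fixes you gesture at do not hold up: ``bootstrapping from $w$ to $x$ using the induction underlying Section~\ref{section_crv_2} (now with $uOp$ playing the role of $iOp$)'' has no concrete referent --- applying Lemma~\ref{lemma:prop:configC} to $uOp$'s update node only yields a candidate near $w$, not one at least $x$ --- and the asserted ``contradiction with the timing constraints on membership in $\Druall$'' for the \textsc{Delete} case is never produced. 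You have correctly located the delicate step, but it is genuinely unresolved; the clean repair is to strengthen the lemma's hypotheses to include that $uOp$'s update node is not in $\Iruall \cup \Druall$ (which holds at every call site), rather than trying to derive it.
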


\begin{proof}
Let $\iNode$ be the INS node created by $iOp$, and let $\uNode$ be the update node created by $uOp$.
By Lemma~\ref{lemma:ins_notify}, $\iNode$ is the first activated update node in the $\latest[x]$ list in all configurations from when $iOp$ is linearized to when it first completes \textsc{NotifyPredOps}$(\iNode)$. It follows by Lemma~\ref{lemma:firstActivated_False} that all instances of \textsc{FirstActivated}$(\iNode)$ during this instance of \textsc{NotifyPredOps}$(\iNode)$ return \textsc{True}. 
Note that $\pNode$ is a predecessor node in $\PALL$ in all configurations from $C_T$ to $C_{\mathit{notify}}$. So it follows from the code of $\textsc{NotifyPredOps}$ that $iOp$ will successfully add a notify node, $\nNode$, to $\pNode.\notifyList$. 
Furthermore, since $iOp$ is linearized after $C_T$, $\nNode.\notifyThreshold$ is set to $-\infty$ on line~\ref{ln:notify:notifyThreshold}.
So $\iNode \in \Inotify$ after $pOp$ performs line~\ref{ln:pred:I2}. Then $pOp$ has a candidate return value $x$.

So suppose does not attempt to notify $pOp$ before $C_{\mathit{notify}}$. Then  Lemma~\ref{lemma:ins_notify} implies that $iOp$ is the first activated update node in the $\latest[x]$ list in all configurations from when $iOp$ is linearized to $C_{\mathit{notify}}$. Since $iOp$ is linearized before $uOp$ encounters any update nodes with key greater than or equal to $x$ during its instance of \textsc{TraverseUall} (on line~\ref{ln:notify:travuall}), it follows that $uOp$ encounters $\iNode$ during its traversal of the $\UALL$. Furthermore, by  Lemma~\ref{lemma:firstActivated_False}, $uOp$'s instance of \textsc{FirstActivated}$(\iNode)$ during \textsc{TraverseUall} (on line~\ref{ln:traverseUALL:findLatest}) returns \textsc{True}. This implies that when $iOp$ notifies $pOp$ by adding a notify node $\nNode$ into $\pNode.\notifyList$, $\nNode.\updateNodeMax$ contains a pointer to an INS node with key $x'$, where $w < x \leq x' < y$. 
Since $uOp$ is linearized after $C_T$, $\nNode.\notifyThreshold$ is set to $-\infty$ on line~\ref{ln:notify:notifyThreshold} and $pOp$ does not encounter $\uNode$ during its traversal of the $\RUALL$. So by line~\ref{ln:pred:updateNodeMax}, an INS node with key $x'$ is added to $\Inotify$. Then $pOp$ has a candidate return value $x'$.

\end{proof}

Recall that the keys of update nodes in $\Iuall \cup \Inotify \cup (\Duall-\Druall) \cup (\Dnotify - \Druall)$ are candidate return values. We show that Property~\ref{prop:configC} is satisfied by these values.
\begin{lemma}\label{lemma:prop:configC}
Let $w$ be the key of an update node in $\Iuall \cup \Inotify \cup (\Duall-\Druall) \cup (\Dnotify - \Druall)$. 
Then there is a configuration $C$ after $pOp$ is announced but before the end of $pOp$ such that
\begin{enumerate}
	\item[(a)] 
	$w \in S$ in $C$,
	
	\item[(b)]
	if $C$ occurs before $C_T$ and
	there exists an $S$-modifying \textsc{Delete}$(x)$ operation linearized between $C$ and $C_T$
	with $w < x < y$,  then $pOp$ has a candidate return value which is at least $x$, and
	
	
	\item[(c)]     
	if $C$ occurs at or after $C_T$ and 
	there exists an $S$-modifying \textsc{Insert}$(x)$ operation 
	linearized between $C_T$ and $C$ with  $w < x < y$, then $pOp$ has a candidate return value which is at least $x$.
	
\end{enumerate}
\end{lemma}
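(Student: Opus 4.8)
The plan is to do a case analysis on which of the sets $\Iuall$, $\Inotify$, $\Duall-\Druall$, $\Dnotify-\Druall$ the key $w$ comes from (or, for the last possibility, that $w$ is the value returned by $pOp$'s instance of \textsc{RelaxedPredecessor}$(y)$), deriving part~(a) from the already‑proved ``$w\in S$ somewhere during $pOp$'' lemmas, and reducing parts~(b) and~(c) to Lemmas~\ref{lemma:del_before_CT}, \ref{lemma:ins_after_CT}, and~\ref{lemma:updateNodeMax_candidate}.

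\emph{Locating $C$ and part~(a).} For $w$ a key in $\Iuall$ or $\Duall-\Druall$, Lemmas~\ref{lemma:traverseUALL} and~\ref{lemma:D1} give a configuration $C$ with $w\in S$ that lies inside $pOp$'s traversal of the $\UALL$ (hence after $C_T$), that occurs at or after $C_{\leq w}$, and that precedes $pOp$'s first encounter during that traversal of an update node with key larger than $w$. For $w\in\Inotify$ I first check whether the corresponding INS node is also in $\Iruall$: if it is, $pOp$ saw it as the first activated node of $\latest[w]$ during its $\RUALL$ traversal, yielding $C$ with $w\in S$ between $C_{\leq w}$ and $C_T$; if it is not, Lemma~\ref{lemma:I2_1} or~\ref{lemma:I2_2} supplies $C$ with $w\in S$ between $C_{<w}$ and $C_{\mathit{notify}}$, and I record its position -- inside the notifying operation's \textsc{TraverseUall} in the Lemma~\ref{lemma:I2_2} case, inside its \textsc{SendNotification} (hence after its \textsc{TraverseUall}) in the Lemma~\ref{lemma:I2_1} case. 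For $w\in\Dnotify-\Druall$, Lemma~\ref{lemma:D2} gives $C$ with $w\in S$ between $C_{\leq w}$ and $C_{\mathit{notify}}$; for $w$ returned by \textsc{RelaxedPredecessor}$(y)$ with $w\neq-1$, Lemma~\ref{lemma:relaxed_trie_kxy} gives $C$ with $w\in S$ during the trie traversal, hence after $C_T$. In all cases $C$ occurs at or before $C_{\mathit{notify}}$, and using Observation~\ref{lemma:C_x} we get $C\geq C_{\leq w}$ whenever $C$ precedes $C_T$.

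\emph{Parts~(b) and~(c), the routine sub-cases.} For part~(b), $C$ precedes $C_T$ and an $S$-modifying \textsc{Delete}$(x)$ with $w<x<y$ is linearized in $(C,C_T)$; since $C\geq C_{\leq w}$ it is linearized between $C_{\leq w}$ and $C_T$, so Lemma~\ref{lemma:del_before_CT} makes $x$ a candidate return value. For part~(c), $C$ is at or after $C_T$ and an $S$-modifying \textsc{Insert}$(x)$, $iOp$, with $w<x<y$ is linearized in $(C_T,C)$. If $C$ lies inside $pOp$'s own $\UALL$ traversal (the $\Iuall$ and $\Duall-\Druall$ cases), then since $x>w$ and $C$ precedes $pOp$'s encounter of key-exceeding-$w$ nodes, $iOp$ is linearized before $pOp$ encounters any node with key $\geq x$ there, and Lemma~\ref{lemma:ins_after_CT} applies. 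If $C$ lies inside a notifying operation $uOp$'s \textsc{TraverseUall} (the Lemma~\ref{lemma:I2_2} sub-case of $\Inotify$, and $\Dnotify-\Druall$), then $iOp$ is linearized before $uOp$ encounters a key-$\geq x$ node there and $uOp$ notifies $pOp$ before $C_{\mathit{notify}}$, so Lemma~\ref{lemma:updateNodeMax_candidate} applies.

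\emph{The main obstacle} is part~(c) in the remaining sub-case: $w\in\Inotify$ with the INS node not in $\Iruall$ and $C$ inside the notifying operation's \textsc{SendNotification} call. Since $C$, and hence the creation of the relevant notify node $\nNode$, occurs after $C_T$ -- by which time $pOp$ has finished traversing the $\RUALL$ -- we have $\nNode.\notifyThreshold=-\infty$, so the test on line~\ref{ln:pred:updateNodeMaxCheck} passes (the node is not in $\Iruall$) and $pOp$ puts $\nNode.\updateNodeMax$ into $\Inotify$. It remains to show $iOp$'s INS node is observed by $pOp$. By Lemma~\ref{lemma:ins_notify} this node stays in the $\UALL$ and first activated in $\latest[x]$ until some operation completes \textsc{NotifyPredOps} on it; because $iOp$ is linearized after $C_T$, that \textsc{NotifyPredOps} runs while $\pNode$ is announced and, on reaching $\pNode$, writes notify-threshold $-\infty$. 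The argument then splits on whether $pOp$ reaches this node's position in the $\UALL$ before it is removed: in one branch the node is still present and first activated, so $x\in\Iuall$; in the other the removal -- and therefore the completion of, and notification sent by, \textsc{NotifyPredOps} -- happens first, so a notify node with $\updateNode$ equal to $iOp$'s node and threshold $-\infty$ lies in $\pNode.\notifyList$ before $C_{\mathit{notify}}$, giving $x\in\Inotify$ via line~\ref{ln:pred:I2}. Either way $x$ is a candidate. The delicate bookkeeping -- and the step I expect to require the most care -- is verifying that this dichotomy is exhaustive and that the relevant removal/notification really does precede the point at which $pOp$ would otherwise miss the node, using that $pOp$ traverses the whole $\UALL$ up to key $y$ and that an update node leaves the $\UALL$ only after \textsc{NotifyPredOps} for it has completed.
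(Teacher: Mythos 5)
Your proposal follows the paper's high-level decomposition (case analysis on which set $w$ came from, then reduce parts (b) and (c) to Lemmas~\ref{lemma:del_before_CT}, \ref{lemma:ins_after_CT}, and \ref{lemma:updateNodeMax_candidate}), and the routine sub-cases are handled essentially as the paper does. But there is a genuine gap in the sub-case you flag as the ``main obstacle'' ($w\in\Inotify$ via Lemma~\ref{lemma:I2_1}, i.e. the $\textsc{Insert}(w)$ operation $iOp_w$ itself notified $pOp$), and it comes from the choice of $C$.

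You take $C$ to be the configuration inside $iOp_w$'s \textsc{SendNotification} supplied by Lemma~\ref{lemma:I2_1}. That configuration lies \emph{after} $iOp_w$'s own \textsc{TraverseUall}, and can be arbitrarily close to $C_{\mathit{notify}}$ (e.g.\ if $iOp_w$ visits $\pNode$ last among many predecessor nodes). With that $C$, an $S$-modifying $\textsc{Insert}(x)$ ($w<x<y$) linearized in $(C_T,C)$ may be linearized after both $pOp$ and $iOp_w$ have already traversed past key $x$ in the $\UALL$. Then: (i) $iOp_w$ never sees $\iNode_x$, so $\nNode.\updateNodeMax$ does not rescue you; (ii) $pOp$ may have encountered $\iNode_x$ in the $\UALL$ while it was still \textsc{Inactive} and skipped it (line~\ref{ln:traverseUALL:findLatest} requires $\status\neq\textsc{Inactive}$); (iii) $\iNode_x$ need not be removed from the $\UALL$, nor need $\textsc{NotifyPredOps}(\iNode_x)$ reach $\pNode$, before $C_{\mathit{notify}}$. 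So your dichotomy ``either $pOp$ sees the node in the $\UALL$, or it was removed first and therefore already notified'' is not exhaustive: the node can be present-but-inactive when $pOp$ passes and still be in the $\UALL$, unnotified, at $C_{\mathit{notify}}$. The paper sidesteps this entirely by choosing $C$ to be the configuration immediately after $iOp_w$ is linearized. That $C$ still satisfies part (a), still lies after $C_T$ (you need the small side-argument, which the paper implicitly uses, that if $w\notin S$ anywhere in $[C_{\leq w},C_T]$ then $iOp_w$ must be linearized after $C_T$), and -- crucially -- precedes $iOp_w$'s \textsc{TraverseUall}, so any $\textsc{Insert}(x)$ linearized in $(C_T,C)$ is linearized before $iOp_w$ encounters any key $\geq x$ there, and Lemma~\ref{lemma:updateNodeMax_candidate} applies directly. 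Replacing your choice of $C$ with the paper's in this sub-case (and symmetrically using ``just before $dOp_w$ is linearized'' in the $\Duall-\Druall$ and $\Dnotify-\Druall$ cases, as the paper does) closes the gap.
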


\begin{proof}

First suppose $w$ is the key of an INS node $\iNode \in \Iuall$. Let $iOp_w$ be the \textsc{Insert}$(w)$ operation that created $\iNode$.  
By Lemma~\ref{lemma:traverseUALL}, there is a configuration $C$ during $pOp$'s traversal of $\UALL$ in which $w \in S$. Furthermore, $C$ occurs after $C_T$ but before $pOp$ encounters any update nodes with key $x$ during its traversal of the $\UALL$.
Suppose, for contradiction, that there is an \textsc{Insert}$(x)$ operation, $iOp_x$, linearized between $C_T$ and $C$, where $w < x < y$.
By Lemma~\ref{lemma:ins_after_CT}, $x$ is a candidate return value of $pOp$.

Now consider an update node in $\uNode \in \Inotify \cup (\Duall-\Druall) \cup (\Dnotify - \Druall)$, where $\uNode.\key = w < y$.
Suppose there is a configuration $C$ between $C_{\leq w}$ and $C_T$ in which $w \in S$. 
Suppose, for contradiction, that there is a \textsc{Delete}$(x)$ operation, $dOp_x$, linearized between $C$ and $C_T$, where $w < x < y$. 
By Lemma~\ref{lemma:del_before_CT},  $x$ is a candidate return value of $pOp$.

So there is no configuration between $C_{\leq w}$ and $C_T$ in which $w \in S$. 
By Lemma~\ref{lemma:I2_1}, Lemma~\ref{lemma:I2_2}, Lemma~\ref{lemma:D1}, and Lemma~\ref{lemma:D2}, there is a configuration $C$ during $pOp$ after $C_{\leq w}$ in which $w \in S$. So this configuration occurs after $C_T$.
\begin{itemize}
	\item Suppose $\uNode \in \Inotify$. Let $iOp_w$ be the \textsc{Insert}$(w)$ operation that created $\uNode$. Let $C$ be the configuration immediately after $iOp_w$ is linearized. Note that Lemma~\ref{lemma:I2_1}, by Lemma~\ref{lemma:I2_2}, $\uNode$ is the first activated update node in the $\latest[w]$ list in $C$. Since $w \in S$ in all configurations from when $iOp_w$ is linearized to $C$,  $iOp_w$ is linearized after $C_T$ and $C$ occurs after $C_T$.
	
	First, suppose that $\uNode \in \Inotify$ because $iOp_w$ notified $pOp$.
	Suppose, for contradiction, that there is an \textsc{Insert}$(x)$ operation, $iOp_x$, linearized between $C_T$ and $C$, where $w < x < y$.
	So $iOp_x$ is linearized before $iOp_w$. Since $iOp_w$ notifies $pOp$, it follows from Lemma~\ref{lemma:updateNodeMax_candidate} that $pOp$ has a candidate return value whose value is at least $x$.
	
	Next, suppose  $\uNode \in \Inotify$ because some \textsc{Insert}$(w')$ or \textsc{Delete}$(w')$ operation, $uOp'$, included $\uNode$ in its notification to $pOp$, where $w' < w < y$. In other words,  $uOp'$ added a notify node, $\nNode$, to $\pNode.\notifyList$ where $\nNode.\updateNodeMax = \uNode$.
	Suppose, for contradiction, that there is an \textsc{Insert}$(x)$ operation, $iOp_x$, linearized between $C_T$ and $C$, where $w < x < y$. 
	Since $\nNode.\updateNodeMax = \uNode$, $\uNode$ was a node returned by $uOp'$'s instance of \textsc{TraverseUall} on line~\ref{ln:notify:travuall}.
	By the code of \textsc{TraverseUall},  \textsc{FirstActivated}$(\uNode)$ returned \textsc{True} (on line~\ref{ln:traverseUALL:findLatest}). By Lemma~\ref{lemma:firstActivated}, there is a configuration $C'$ during this instance of \textsc{FirstActivated}$(\uNode)$ in which $\uNode$ is the first activated update node in the $\latest[w]$ list. Note that $C'$ occurs after $C$.
	Since $\UALL$ is sorted by increasing key, $C'$ occurs before $uOp$ encounters any update node with key greater than or equal to $x$ during its instance of \textsc{TraverseUall}. Since $iOp$ is linearized between $C_T$ and $C$, and hence before $C'$, it follows by Lemma~\ref{lemma:updateNodeMax_candidate} that $pOp$ has a candidate return value whose value is at least $x$.
	
	
	
	\item Suppose $\uNode \in (\Duall-\Druall)$.
	Let $dOp_w$ be the \textsc{Delete}$(w)$ operation that created $\uNode$. 
	Let $C$ be the configuration immediate before  $dOp_w$  is linearized. By Lemma~\ref{lemma:D0}, $C$ is after $C_{< w}$. Since $w \in S$ immediately before $C$, $C$ is after $C_T$.
	
	Suppose, for contradiction, that there is an \textsc{Insert}$(x)$ operation, $iOp_x$, linearized between $C_T$ and $C$, where $w < x < y$. Since $\uNode \in (\Duall-\Druall)$ and $pOp$ encounters $\uNode$ when it traverses theh $\UALL$, $C$ occurs before $pOp$ encounters any update nodes with key greater than $w$. It follows by Lemma~\ref{lemma:ins_after_CT} that  $x$ is a candidate return value of $pOp$.
	
	\item Suppose $\uNode \in (\Dnotify-\Druall)$. Let $dOp_w$ be the \textsc{Delete}$(x)$ operation that created $\uNode$. 
	Let $C$ be the configuration immediate before  $dOp_w$ is linearized. By Lemma~\ref{lemma:D0}, $C$ is after $C_{< w}$. Since $w \in S$ immediately before $C$, $C$ is after $C_T$.
	
	Suppose, for contradiction, that there is an \textsc{Insert}$(x)$ operation, $iOp_x$, linearized between $C_T$ and $C$, where $w < x < y$.  So $iOp_x$ is linearized before $iOp_w$. Since $\uNode \in (\Dnotify-\Druall)$ and $dOp_w$ notifies $pOp$, it follow from Lemma~\ref{lemma:updateNodeMax_candidate} that $pOp$ has a candidate return value whose value is at least $x$.
	
\end{itemize}
\end{proof}


\subsubsection{Our implementation satisfies Property~\ref{prop:configC} and Property~\ref{prop:trie} }\label{section_crv_3}

Besides the candidate return values considered in the previous section,
$pOp$ may compute
one additional candidate return value, which is stored in $pOp$'s local variable $r_0$.
We show that this value satisfies Property~\ref{prop:configC} and Property~\ref{prop:trie}. 

If $pOp$'s traversal of the relaxed binary trie returns a value $w \neq \bot$, then $r_0$ is set to $w$ and is a candidate return value. 
It this case,
it is straightforward to show that $w$ satisfies Property~\ref{prop:configC}.

\begin{lemma}
Suppose $pOp$'s instance of \textsc{RelaxedPredecessor}$(y)$ returns $w \neq \bot$.
Then there is a configuration $C$ during $pOp$ such that
\begin{enumerate}
	\item
	$w \in S$ in $C$,
	
	\item    
	$C$ occurs after $C_T$, and
	
	\item if there exists an \textsc{Insert}$(x)$ operation 
	linearized between $C_T$ and $C$ with  $w < x < y$, then $pOp$ has a candidate return value which is at least $x$.
	
\end{enumerate}
\end{lemma}

\begin{proof}
By Lemma~\ref{lemma:relaxed_trie_kxy}, $w \in S$ in a configuration $C$ during \textsc{RelaxedPredecessor}$(y)$. So $C$ occurs after $C_T$, but before $pOp$ begins its traversal of the $\UALL$. Suppose there is an \textsc{Insert}$(x)$ operation, $iOp_x$, linearized between $C_T$ and $C$, where $w < x < y$.
By Lemma~\ref{lemma:ins_after_CT}, $x$ is a candidate return value of $pOp$.
\end{proof}

If the traversal of the relaxed binary trie returns $\bot$ and $\Druall \neq \emptyset$, then $pOp$ assigns a value $x$ to $r_0$ on line~\ref{ln:pred:set_pred0}. If, $\Iuall \cup \Inotify \cup (\Duall-\Druall) \cup (\Dnotify - \Druall)$ only contains update nodes with key less than $x$,
$x$ is a candidate return value. 
Showing that Property~\ref{prop:configC} is satisfied by this value is more complicated.
We begin by proving a few facts about when $pOp$'s traversal of the relaxed binary trie returns $\bot$.

The next lemma  considers a case when there is an $S$-modifying update operation $uOp$ with key $x$ that is updating the relaxed binary trie while $pOp$ is traversing the relaxed binary trie. 
We show that $pOp$ will encounter an update node with key $x$ when $pOp$ traverses the $\UALL$ or when $pOp$ traverses its own notify list.
To do this, we show that $pOp$ will either traverse the $\UALL$ before $uOp$ can remove its update node from the $\UALL$, or $uOp$ will notify $pOp$ before $pOp$ removes its predecessor node from the $\PALL$.


\begin{lemma}\label{lemma:dangerOverlap}
Let $uOp$ be an $S$-modifying update operation with key $x$, where $x < y$. 
Suppose that there exists a configuration during $pOp$'s traversal of the relaxed binary trie in which $uOp$ is the latest $S$-modifying update operation with key $x$ that has been linearized and $uOp$ has not yet completed updating the relaxed binary trie.
Then $\Iuall \cup \Inotify \cup \Duall \cup \Dnotify$ contains an update node with key $x$ that was created by $uOp$ or an update operation with key $x$ linearized after $uOp$.
\end{lemma}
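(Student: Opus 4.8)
The plan is to fix a configuration $C_1$ during $pOp$'s traversal of the relaxed binary trie at which $uOp$ is the latest update operation with key $x$ and has not yet completed its call to \textsc{InsertBinaryTrie} or \textsc{DeleteBinaryTrie}; let $\uNode$ be the update node created by $uOp$. First I would record the timing facts that drive the argument. Since $uOp$ is the latest update operation with key $x$ in $C_1$, $\uNode$ is activated by $C_1$, so $uOp$ is linearized before $C_1$; from the code, $\uNode$ is inserted into the $\UALL$ before $uOp$ is linearized, and it is removed from the $\UALL$ only after $uOp$ sets $\uNode.\done \leftarrow \textsc{True}$, which happens strictly after $uOp$ finishes updating the relaxed binary trie and completes its \emph{own} instance of \textsc{NotifyPredOps}$(\uNode)$ (a helper may remove $\uNode$ only once $\done$ is set, by the guard on line~\ref{ln:help_activate:check_remove}). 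Because the trie update is incomplete at $C_1$, $\uNode$ is in the $\UALL$ at $C_1$, and $C_1$ precedes the start of $pOp$'s \textsc{TraverseUall}$(y)$, so $\uNode$ is in the $\UALL$ when $pOp$ begins that traversal. I would also note that $\pNode$ lies in the $\PALL$ from before $pOp$'s trie traversal until after $pOp$ scans $\pNode.\notifyList$, and that $uOp$'s own \textsc{NotifyPredOps}$(\uNode)$ runs after $C_1$, hence after $pOp$ has finished \textsc{TraverseRUall}; consequently every notify node created by $uOp$'s own call has $\notifyThreshold = -\infty$, which passes the $\notifyThreshold$ tests near lines~\ref{ln:pred:I2} and \ref{ln:pred:notifyThreshold}.

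With these facts, I would split on whether $\uNode$ is removed from the $\UALL$ before or after $pOp$ finishes \textsc{TraverseUall}$(y)$. If it is not removed before, then $\uNode$ is present in the $\UALL$ throughout that traversal, so $pOp$ encounters $\uNode$ and evaluates \textsc{FirstActivated}$(\uNode)$: if it returns \textsc{True} then $\uNode \in \Iuall \cup \Duall$ and we are done (its key is $x<y$); if it returns \textsc{False} then, by Lemma~\ref{lemma:firstActivated_False}, some $S$-modifying update operation with key $x$ was linearized after $C_1$, and since its node is appended after $\uNode$ in the $\UALL$ the same analysis applies to it. If $\uNode$ is removed before $pOp$ finishes \textsc{TraverseUall}$(y)$, then removal occurs after $uOp$ completes its own \textsc{NotifyPredOps}$(\uNode)$, which visits $\pNode$; if $\uNode$ is still the first activated node in $\latest[x]$ when $uOp$ reaches $\pNode$, then by Lemma~\ref{lemma:firstActivated} $uOp$ successfully adds to $\pNode.\notifyList$ a notify node with $\updateNode = \uNode$ and $\notifyThreshold = -\infty$, which $pOp$ reads and accepts, so $\uNode \in \Inotify \cup \Dnotify$; otherwise Lemma~\ref{lemma:firstActivated_False} again exhibits a later $S$-modifying update operation with key $x$. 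In each case the ``done'' outcome meets the conclusion, and the ``otherwise'' outcome hands the argument to a strictly later update operation with key $x$, with Lemmas~\ref{lemma:ins_notify} and \ref{lemma:del_notify} controlling how long that later operation's node stays present in the $\UALL$ and first activated in $\latest[x]$.

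The main obstacle I anticipate is making this hand-off rigorous, since the required update node is not produced by a single operation but by a baton passed along the finite sequence of $S$-modifying update operations with key $x$ linearized after $C_1$ (finite because $pOp$ is a completed operation). The delicate point is that a \textsc{Delete}$(x)$ node can cease to be the first activated node in $\latest[x]$ while still sitting in the $\UALL$ --- exactly the caveat in Lemma~\ref{lemma:del_notify} --- so one cannot conclude ``$pOp$ sees it, hence \textsc{FirstActivated} succeeds'' directly. I would therefore structure the whole proof as an induction on the number of $S$-modifying update operations with key $x$ linearized between $C_1$ and the configuration in which $pOp$ reads the head of $\pNode.\notifyList$: the base case (none) is the clean single-operation argument above, and the inductive step shows that either $\uNode$ is caught in $\Iuall \cup \Duall$ or notified into $\Inotify \cup \Dnotify$ before the next such operation is linearized, or else that next operation has a witnessing configuration still preceding the end of $pOp$'s bookkeeping, so the induction hypothesis applies to it. This mirrors the structure of Lemmas~\ref{lemma:del_before_CT} and \ref{lemma:ins_after_CT} but drops their restriction on when $uOp$ is linearized, which is precisely why an induction rather than a direct appeal is needed here.
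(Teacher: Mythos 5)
Your proposal matches the paper's proof in all essentials: both hinge on Lemmas~\ref{lemma:ins_notify} and \ref{lemma:del_notify} to guarantee that a latest update node with key $x$ remains in the $\UALL$ and first-activated in $\latest[x]$ until some update operation with key $x$ completes \textsc{NotifyPredOps}, and both case-split on whether $pOp$'s $\UALL$ traversal or that notification completes first, placing the relevant update node in $\Iuall \cup \Duall$ or $\Inotify \cup \Dnotify$ respectively. The explicit induction you propose for the baton-pass between successive latest update operations with key $x$ is a legitimate tightening of a step the paper compresses --- the paper simply asserts that \textsc{FirstActivated}$(\uNode')$ returns \textsc{True} for the latest node $pOp$ encounters, which implicitly rests on the same finite-chain reasoning you spell out.
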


\begin{proof}
Since $uOp$ is an $S$-modifying update operation, the update node it created is the first activated update node in the $\latest[x]$ list when it was linearized. 
Lemma~\ref{lemma:ins_notify} and Lemma~\ref{lemma:del_notify} imply that, from when $uOp$ is linearized until some instance of $\textsc{NotifyPredOps}(\uNode)$ is completed, there is an update node in the $\UALL$ with key $x$ that is the first activated update node in the $\latest[x]$ list, where $\uNode$ is some update node with key $x$.
Note that $\uNode$ is either the update node created by $uOp$, or if $uOp$ is a \textsc{Delete}$(x)$ operation, it may be the INS node of the first $S$-modifying \textsc{Insert}$(x)$ operation linearized after $uOp$.
We consider two cases, depending on whether $pOp$ completes its traversal of the $\UALL$ on line~\ref{ln:pred:uall2} first, or if the instance of  $\textsc{NotifyPredOps}(\uNode)$ is completed first.

Suppose $pOp$ completes its traversal of the $\UALL$ before some update operation completes an instance of $\textsc{NotifyPredOps}(\uNode)$.
So there is an update node in the $\UALL$ with key $x$ that is the first activated update node in the $\latest[x]$ list throughout $pOp$'s traversal of the $\UALL$. 
So $pOp$ encounters one such update node, $\uNode'$, with key $x$ when traversing the $\UALL$ and performs \textsc{FirstActivated}$(\uNode')$ on line~\ref{ln:traverseUALL:findLatest}.
By Lemma~\ref{lemma:firstActivated_False}, \textsc{FirstActivated}$(\uNode')$  returns \textsc{True}.  So $\uNode' \in \Iuall \cup \Duall$.

So suppose some update operation completes $\textsc{NotifyPredOps}(\uNode)$ before  $pOp$ completes its traversal of the $\UALL$, and hence, before $pOp$ removes $\pNode$ from the $\PALL$. 
This instance of \textsc{NotifyPredOps}$(\uNode)$ occurs after $uOp$ is linearized, and hence after $\pNode$ is added to the $\PALL$ and after $\pNode$ sets $\pNode.\RuallPosition$ to point to the sentinel node in the $\RUALL$ with key $-\infty$. 
Throughout the instance of $\textsc{NotifyPredOps}(\uNode)$, $\uNode$ is the first activated update node in the $\latest[x]$ list. By Lemma~\ref{lemma:firstActivated_False}, any instance of \textsc{FirstActivated}$(\uNode)$ that is invoked returns \textsc{True}. It follows from the code of $\textsc{NotifyPredOps}(\uNode)$ that a successful CAS adds a notify node $\nNode$, to $\pNode.\notifyList$, where $\nNode.\updateNode = \uNode$ and $\nNode.\notifyThreshold = -\infty$.
When $pOp$ traverses its notify list (on line~\ref{ln:pred:newlist}), it will add $\uNode$ to $\Inotify \cup \Dnotify$ (either on line~\ref{ln:pred:I2} or \ref{ln:pred:D2}).
\end{proof}

For the remainder of this section, let $k$ be the largest key less than $y$ that is completely present throughout $pOp$'s traversal of the relaxed binary trie, and $-1$ if no such key exists.
The following lemma follows from Lemma~\ref{lemma:dangerOverlap} and the specification of the relaxed binary trie (Lemma~\ref{lemma_relaxed_trie_k}).

\begin{lemma}\label{lemma:d0_non_empty}
Suppose $pOp$'s instance of \textsc{RelaxedPredecessor}$(y)$ returns $\bot$. If $\Iuall \cup \Inotify \cup (\Duall - \Druall) \cup (\Dnotify - \Druall)$ 
only contains update nodes with key less than $k$,
then $\Druall$ contains an update node with key $x$, where $k < x < y$.
\end{lemma}

\begin{proof}
Since $pOp$'s instance of \textsc{RelaxedPredecessor}$(y)$ returns $\bot$,
Lemma~\ref{lemma_relaxed_trie_k} implies that there exists an $S$-modifying update operation $uOp$ with key $x$, where $k < x < y$, such that,
at some point during $pOp$'s traversal of the relaxed binary trie, $uOp$ is the last linearized $S$-modifying update operation with key $x$ and $uOp$ has not yet completed updating the relaxed binary trie.
By Lemma~\ref{lemma:dangerOverlap},  $\Iuall \cup \Inotify \cup \Duall \cup \Dnotify$ contains an update node with key $x$.
By assumption, $\Iuall \cup \Inotify \cup (\Duall - \Druall) \cup (\Dnotify - \Druall)$ does not contain an update node with key $x$, so $\Druall$ contains an update node with key $x$.






\end{proof}

Recall that $R$ is a set of keys and $L$ is a list of update nodes that are computed by $pOp$ on lines~\ref{ln:pred:fix_binary_trie_pred} to \ref{ln:pred:R_remove_Druall} in order to compute $r_0$
on \ref{ln:pred:set_pred0}.
We consider the values of these local variables at various points during this computation by $pOp$.
If $pOp$ assigns a value to $\pNode'$ on line~\ref{ln:pred:pNode'}, 
let $C_{ann}$ be the configuration immediately before the predecessor node $\pNode'$ was announced;
otherwise let $C_{ann}$ be the configuration immediately before $\pNode$ was announced. 

We next state and prove several facts about $R$, $L$, $\pNode'$, and $C_{ann}$.
From lines~\ref{ln:pred:travPALL} to \ref{ln:pred:travPALLend}, $pOp$ traverses the $\PALL$ starting from $\pNode$ to determine a list of predecessor nodes, $Q$. Predecessor nodes are only added to the head of $\PALL$. 
$Q$ is a sequence of predecessor nodes sorted in the order in which they were announced.
Furthermore, every predecessor node in $Q$ was announced before $\pNode$. 
By line~\ref{ln:pred:pNode'}, $\pNode'$ is a predecessor node in $Q$, so it was announced earlier than $\pNode$.

\begin{observation}\label{obs:pNode'_before_pNode}
If $pOp$ assigns a value to $\pNode'$ on line~\ref{ln:pred:pNode'}, then $\pNode'$ is announced before $\pNode$. 
\end{observation}

\begin{lemma}\label{obs:first_pred_Cann}
The first embedded predecessor operations of the DEL nodes in $\Druall$ are announced after $C_{ann}$ and are completed sometime before $C_T$.
\end{lemma}

\begin{proof}
Let $\pNode''$ be the predecessor node of the first embedded predecessor operation of a DEL node, $\dNode$, in $\Druall$. Recall that $Q$ is the sequence of predecessor nodes that $pOp$ encounters when it traverses the $\PALL$ starting from $\pNode$ (from lines~\ref{ln:pred:travPALL} to \ref{ln:pred:travPALLend})

Suppose $\pNode'' \notin Q$. Suppose, for contradiction, that $\pNode''$ is announced before $\pNode$.
Since $\dNode \in \Druall$, $pOp$ encounters $\dNode$ when it traverses the $\RUALL$.
Since $dOp$ removes $\dNode$ from the $\RUALL$ before $\pNode''$ is removed from the $\PALL$ (by  lines~\ref{ln:delete:remove_uall_ruall} and \ref{ln:delete:remove_pall}),
$\pNode''$ is still in the $\PALL$ when $pOp$  encounters $\dNode$ in the $\RUALL$.
This implies that when $pOp$ traverses the $\PALL$ starting from $\pNode$ (which occurs before its traversal of the $\RUALL$), it will encounter $\pNode''$, contradicting the fact that $\pNode'' \notin Q$.
So $\pNode''$ is announced after $\pNode$. If $C_{ann}$ is immediately before $\pNode$ is announced, then $\pNode''$ is announced after $C_{ann}$. If $C_{ann}$ is immediately before $\pNode'$ is announced, then Observation~\ref{obs:pNode'_before_pNode} implies $\pNode''$ is announced after $C_{ann}$.

So $\pNode'' \in Q$. Let $\mathit{predNodes}$ denote the set of predecessor nodes of the first embedded predecessor operations of DEL nodes in $\Druall$. This set is computed by $pOp$ on line~\ref{ln:pred:predNodes}.
Since $\pNode'' \in Q$, $pOp$ assigns a value to $\pNode'$ on line~\ref{ln:pred:pNode'}. 
By line~\ref{ln:pred:pNode'}, $\pNode'$ is the predecessor node in $\mathit{predNodes}$ that occurs the earliest in $Q$, and hence was announced the earliest. Since $C_{ann}$ is immediately before  $\pNode'$ was announced, $\pNode''$ is announced after $C_{ann}$.

From the code, this embedded predecessor operation completes before $\dNode$ is added to the $\RUALL$. Since $pOp$ added $\dNode$ to $\Druall$ while it traversed the $\RUALL$, $\dNode$ was added to the $\RUALL$ before $pOp$ began its traversal of the relaxed binary trie. Therefore, $dOp$'s first  embedded predecessor operation completes before $C_T$.
\end{proof}

\begin{observation}\label{obs:pNode'_always_pNode}
In all configurations after $C_{ann}$ but before $C_{\mathit{notify}}$, either $\pNode'$ or $\pNode$ is in the $\PALL$.
\end{observation}

\begin{proof}
By definition, $C_{ann}$ is immediately before $\pNode'$ is announced in the $\PALL$. By definition, $\pNode'$ is the predecessor node of an embedded predecessor operation performed by a \textsc{Delete} operation, $dOp$, whose DEL node, $\dNode$, is in $\Druall$.
Since $dOp$ adds $\pNode'$ into the $\PALL$ before adding $\dNode$ into the $\RUALL$, and does not remove $\pNode'$ from the $\PALL$ until $\dNode$ is removed from the $\RUALL$, $\pNode'$ is in the $\PALL$ in the configuration immediately after $pOp$ first encounters $\dNode$ in $\Druall$, and hence after $\pNode$ is announced. So $\pNode'$ is in the $\PALL$ from $C_{ann}$ to when $pOp$ is announced. Furthermore, $\pNode$ is in the $\PALL$ from when $pOp$ is announced until at least $C_{\mathit{notify}}$, because $pOp$ does not remove $\pNode$ from the $\PALL$ until after $C_{\mathit{notify}}$.
\end{proof}


From Lemma~\ref{obs:L_INS_inS} to \ref{lemma:orderL}, we prove properties of $L$ immediately after it is initialized on line~\ref{ln:pred:L-L'}. After this line of code, update nodes may only be removed from $L$ (on line~\ref{ln:pred:LremoveDEL}).

The key of an INS node, $\iNode$,  in $L$ is in $S$ sometime between $C_{ann}$ and $C_T$. 

\begin{lemma}\label{obs:L_INS_inS}
Consider any INS node, $\iNode$, in $L$, where $\iNode.\key = x$. There is a configuration $C$ between $C_{ann}$ and $C_T$ where
\begin{itemize}
	\item $x \in S$ in $C$, and
	\item $\iNode$ is the first activated update node in the $\latest[x]$ list in $C$.
\end{itemize} 
\end{lemma}

\begin{proof}
Let $iOp$ be the \textsc{Insert}$(x)$ operation with key $x$ that created $\iNode$.
Since $\iNode \in L$, $uOp$ successfully notifies either $\pNode$ or $\pNode'$. 
So $iOp$ encounters at least one of these predecessors nodes when it traverses the $\PALL$. 
The predecessor node among $\pNode$ and $\pNode'$ that $iOp$ encounters is announced in the $\PALL$. By the definition of $C_{ann}$ and Observation~\ref{obs:pNode'_before_pNode}, this predecessor node is announced in the $\PALL$ after $C_{ann}$.
Immediately before $uOp$ successfully adds a notify node containing a pointer to $\iNode$ to the notify list of $\pNode$ or $\pNode$ (on line~\ref{ln:sendNotification:CAS}), it performs \textsc{FirstActivated}$(\uNode)$ (on line~\ref{ln:sendNotification:firstActivated}), which returns \textsc{True}. By
Lemma~\ref{lemma:firstActivated}, $\iNode$ is the first activated update node in some configuration $C$ during this instance of \textsc{FirstActivated}$(\iNode)$. Note that $C$ occurs after $C_{ann}$.

Suppose $iOp$ is linearized before $\pNode$ is announced. Note that $x \in S$ in all configurations from when it is linearized to $C$.
Suppose $pOp$ assigns a value to $\pNode'$ on line~\ref{ln:pred:pNode'}. If $iOp$ is linearized before $C_{ann}$, then $x \in S$ in $C_{ann}$, which is before $C_T$. Otherwise $iOp$ is linearized after $C_{ann}$, and $x \in S$ immediately after $iOp$ is linearized, which is between $C_{ann}$ and $C_T$.

Suppose $iOp$ is linearized after $\pNode$ is announced.
Then $iOp$ encounters $\pNode$ (before encountering $\pNode'$) when traversing the $\PALL$ on line~\ref{ln:notify:for_loop}. Then $iOp$ must successfully add a notify node, $\nNode$, containing a pointer to $\iNode$ to $\pNode.\notifyList$ (otherwise $iOp$ will also not successfully notify $\pNode'$). It follows from line~\ref{ln:pred:removeL'1} that $\iNode \notin L_1$ and $\iNode \in L_2$. By line~\ref{ln:pred:thresholdL'2}, $\nNode.\notifyThreshold \geq x$. Since $\nNode.\notifyThreshold \neq -\infty$, $pOp$ has not completed its traversal of the $\RUALL$ (which is before $C_T$) when $iOp$ reads $\pNode.\RuallPosition$. This read is after $C$. So $C$ is before $C_T$.


\end{proof}	

We prove a similar lemma for a DEL node, $\dNode$, in $L$. Additionally, we prove that the second embedded predecessor operation of $\dNode$ is complete before $C_T$. 

\begin{lemma}\label{lemma:LcompletePred2}
Consider any DEL node, $\dNode$, in $L$, where $\dNode.\key = x$. There is a configuration $C$ between $C_{ann}$ and $C_T$ where
\begin{itemize}
	\item $x \notin S$ in $C$,
	\item $\dNode$ is the first activated update node in the $\latest[x]$ list in $C$, and
	\item the second embedded predecessor operation of $\dNode$ is complete before $C_T$.
\end{itemize} 
\end{lemma}

\begin{proof}
We consider the values of the lists $L_1$ and $L_2$ immediately before line~\ref{ln:pred:L-L'}, which is immediately before $L$ is initialized as the concatenation of $L_1$ and $L_2$. It follows from line~\ref{ln:pred:removeL'1} that update nodes added to $L_2$ are removed from $L_1$, so $\dNode$ is in exactly one of $L_1$ and $L_2$.
Since $\dNode \in L$, $dOp$ successfully notifies either $\pNode$ or $\pNode'$. 
So $dOp$ encounters at least one of these predecessors nodes when it traverses the $\PALL$. 
The predecessor node among $\pNode$ and $\pNode'$ that $dOp$ encounters is announced in the $\PALL$. By the definition of $C_{ann}$ and Observation~\ref{obs:pNode'_before_pNode}, this predecessor node is announced in the $\PALL$ after $C_{ann}$.

Suppose $\dNode \in L_2$. Prior to $dOp$ traversing the $\PALL$ to notify predecessor operations, $dOp$ completes its second embedded predecessor operation (on line~\ref{ln:delete:second_em_pred}). When $dOp$ encounters $\pNode$ in the $\PALL$, it must be after $\pNode$ is announced, which is after $C_{ann}$.
Immediately before $dOp$ creates a new notify node, $\nNode$, containing a pointer to $\dNode$, it performs \textsc{FirstActivated}$(\dNode)$ (on line~\ref{ln:notify:first_activated}), which returns \textsc{True}. By
Lemma~\ref{lemma:firstActivated}, there is a configuration $C$ during this instance of \textsc{FirstActivated}$(\dNode)$ in which $x \notin S$.
By line~\ref{ln:pred:thresholdL'2}, $\dNode$ is only added to $L_2$ when the notify threshold of $\nNode$ is greater than or equal to $x$. 
So
$dOp$ read that $\pNode.\RuallPosition$ points to an update node with key greater than or equal to $x$ prior to notifying $\pNode$, which is before $C_T$. 
So $C$ occurs before $C_T$
and $dOp$'s second embedded predecessor operation completes before $C_T$.

So suppose $\dNode \in L_1$, which means $dOp$ successfully notified $\pNode'$. Suppose, for contradiction, that $dOp$ encounters $\pNode$ prior to encountering $\pNode'$ when traversing the $\PALL$. Since $dOp$ successfully notified $\pNode'$ (on line~\ref{ln:sendNotification:CAS}), its instance of \textsc{FirstActivated}$(\dNode)$ returned \textsc{True} on line~\ref{ln:sendNotification:firstActivated}. By Lemma~\ref{lemma:firstActivated}, there is a configuration during this instance of \textsc{FirstActivated}$(\dNode)$ in which $\dNode$ is the first activated update node in the $\latest[x]$ list. This implies that $\dNode$ is the first activated update node in the $\latest[x]$ list throughout its attempt to notify $\pNode$. By Lemma~\ref{lemma:firstActivated_False}, prior to notifying $\pNode$, \textsc{FirstActivated}$(\dNode)$ returned \textsc{True} on line~\ref{ln:sendNotification:firstActivated}, and hence $dOp$ successfully notifies $\pNode$. By line~\ref{ln:pred:removeL'1}, $\dNode$ is removed from $L_1$, a contradiction.
So $dOp$ does not encounter $\pNode$ when traversing the $\PALL$. 
So $dOp$ began its traversal of the $\PALL$ before $\pNode$ is announced in the $\PALL$, and hence before $C_T$. So $dOp$ completes its second embedded predecessor operation before $C_T$.

After $dOp$ encounters $\pNode'$ but before $dOp$ creates a new notify node, $\nNode$, containing a pointer to $\dNode$, it performs \textsc{FirstActivated}$(\dNode)$ (on line~\ref{ln:notify:first_activated}), which returns \textsc{True}. By
Lemma~\ref{lemma:firstActivated}, there is a configuration $C'$ during this instance of \textsc{FirstActivated}$(\dNode)$ in which $\dNode$ is the first activated update node in the $\latest[x]$ list. Between when $dOp$ is linearized and $C'$, $x \notin S$. In particular, in the configuration $C$ immediately after $dOp$ begins its traversal of the $\PALL$ (which is between $C_{ann}$ and $C_T$), $x \notin S$. 
\end{proof}

The next lemma describes a case when the INS nodes of \textsc{Insert}$(x)$ operations are in $L$. 

\begin{lemma}\label{lemma:L}
Let $iOp$ be an $S$-modifying \textsc{Insert}$(x)$ operation, where $x < y$, that is linearized between $C_{ann}$ and $C_T$.
If $\Iuall \cup \Inotify$ does not contain an INS node with key $x$, 
then $L$ contains the INS node created by $iOp$ immediately after line~\ref{ln:pred:L-L'}.
\end{lemma}

\begin{proof}
Let $\iNode$ be the update node created by $iOp$. 
If $iOp$ does not notify $\pNode'$ or  $\pNode$ by the time $pOp$ completes its traversal of the $\UALL$, then it follows from Lemma~\ref{lemma:ins_notify} that $\iNode$ is in the $\UALL$ and is the first activated update node in the $\latest[x]$ list throughout $pOp$'s traversal of the $\UALL$.
So when $pOp$ encounters $\iNode$ during \textsc{TraverseUall}, Lemma~\ref{lemma:firstActivated_False} implies that $pOp$'s instance of \textsc{FirstActivated}$(\iNode)$ on line~\ref{ln:traverseUALL:findLatest} return \textsc{True}, so
$\iNode \in \Iuall$. 

So $iOp$ performs a complete instance of  \textsc{NotifyPredOps}$(\iNode)$ before $pOp$ completes a traversal of the $\UALL$. By Lemma~\ref{lemma:ins_notify}, $\iNode$ is the first activated update node in the $\latest[x]$ list throughout  \textsc{NotifyPredOps}$(\iNode)$. So Lemma~\ref{lemma:firstActivated_False}, all instances of \textsc{FirstActivated}$(\iNode)$ during \textsc{NotifyPredOps}$(\iNode)$ (on line~\ref{ln:notify:first_activated} or line~\ref{ln:sendNotification:firstActivated}) return \textsc{True}. 

By Observation~\ref{obs:pNode'_always_pNode},
either $\pNode'$ or  $\pNode$ is announced in the $\PALL$ from $C_{ann}$ to $C_{\mathit{notify}}$.
So $iOp$ must notify $\pNode'$ or  $\pNode$ before $pOp$ completes its traversal of the $\UALL$. Suppose $iOp$ encounters $\pNode$ when traversing the $\PALL$, and hence notifies $\pNode$.
Let $\nNode$ be the notify node that $iOp$ adds to $\pNode.\notifyList$.
Since $\iNode \notin \Inotify$, $\iNode$ is not added to $\Inotify$ on line~\ref{ln:pred:I2}.
So by line~\ref{ln:pred:notifyThresholdINS}, $\nNode.\notifyThreshold > x$.
It follows by line~\ref{ln:pred:prependL'2} that $\iNode$ is added to $L_2$.
Otherwise $iOp$ does not notify $\pNode$. So it encounters $\pNode'$ when traversing the $\PALL$ and notifies $\pNode'$. Then $\iNode$ is added to $L_1$ on line~\ref{ln:pred:prependL'1}. In either case, $L$ contains $\iNode$ immediately after line~\ref{ln:pred:L-L'}.
\end{proof}

The next lemma describes a case when the DEL nodes of \textsc{Delete}$(x)$ operations are in $L$. 

\begin{lemma}\label{lemma:L_DEL}
Let $dOp$ be an $S$-modifying \textsc{Delete}$(x)$ operation, where $x < y$, that is linearized between $C_{ann}$ and $C_T$, and let $\dNode$ be the DEL node it created.
If
\begin{itemize}
	\item no $S$-modifying \textsc{Insert}$(x)$ operation is linearized after $dOp$ but before $C_T$,
	\item $\dNode \notin \Druall$, and
	\item $\Iuall \cup \Inotify \cup (\Duall - \Druall) \cup (\Dnotify - \Druall)$ does not contain an update node with key $x$,
\end{itemize}
then $\dNode \in L$ immediately after line~\ref{ln:pred:L-L'}.
\end{lemma}

\begin{proof}
By Lemma~\ref{lemma:del_before_CT}, $dOp$ is linearized before $C_{< x}$, but after $C_{ann}$. Since no $S$-modifying \textsc{Insert}$(x)$ operation is linearized after $dOp$ but before $C_T$, $\dNode$ is the first activated update node in the $\latest[x]$ list from when it is linearized to $C_T$.
By Lemma~\ref{lemma:firstActivated_False}, any instance of \textsc{FirstActivated}$(\dNode)$ contained between when $dOp$ is linearized to $C_T$ returns \textsc{True}.
It follows that $pOp$ does not encounter $\dNode$ when it traverses the $\RUALL$ (otherwise 
\textsc{FirstActivated}$(\dNode)$ returns \textsc{True} on line~\ref{ln:traverseRevUALL:findLatest}
and $\dNode$ would be added to $\Druall$ on line~\ref{ln:TraverseRUall:addD}).
Since $\dNode \notin \Druall$, $\dNode$ is removed from the $\RUALL$ before $pOp$ encounters $dNode$ in the $\RUALL$. So $dOp$'s instance of \textsc{NotifyPredOps}$(\dNode)$  (on line~\ref{ln:delete:notifyPredOps}) is completed before $pOp$ first encounters an update node with key less than $x$ in the $\RUALL$. 
All instances of \textsc{FirstActivated}$(\dNode)$ during \textsc{NotifyPredOps}$(\dNode)$ returns \textsc{True}.
By Observation~\ref{obs:pNode'_always_pNode}, either $\pNode'$ or $\pNode$ is in the $\PALL$ from $C_{ann}$ to $C_{\mathit{notify}}$. 
So $dOp$ must notify $\pNode'$ or  $\pNode$ during its instance of \textsc{NotifyPredOps}$(\dNode)$.
Suppose $dOp$ encounters $\pNode$ when traversing the $\PALL$ on line~\ref{ln:notify:for_loop}, and hence notifies $\pNode$. Consider the notify node, $\nNode$, that $dOp$ adds to $\pNode.\notifyList$.
Since $\pNode.\RuallPosition$ points to an update node with key greater than or equal to $x$ throughout \textsc{NotifyPredOps}$(\dNode)$, $\nNode.\notifyThreshold \geq x$. It follows that $\dNode$ is added to $L_2$ on line~\ref{ln:pred:prependL'2}.
Otherwise $dOp$ does not notify $\pNode$. So it encounters $\pNode'$ when traversing the $\PALL$ and notifies $\pNode'$. Then $\dNode$ is added to $L_1$ on line~\ref{ln:pred:prependL'1}. In either case,  $\dNode \in L$ immediately after line~\ref{ln:pred:L-L'}.

\end{proof}

In the next lemma, we consider the order in which update operations with the same key notify predecessor operations. 
We guarantee that once an update operation adds a notify node (containing a pointer to its own update node) to the notify list of a predecessor node, no update operation with the same key that was linearized earlier can add a notify node (containing a pointer to its own update node) to this notify list.
Note that a notify list may contain several notify nodes containing a pointer to the same INS node due to helping \textsc{Delete} operations.

\begin{lemma}\label{lemma:notifyList}
Let $\uNode_1$ and $\uNode_2$ be two update nodes with the same key $x$, and suppose the update operation that created $\uNode_1$ is linearized before the update operation that created $\uNode_2$.
Then any notify node that contains a pointer to $\uNode_2$ is added to $\pNode.\notifyList$ after any notify node that contains a pointer to $\uNode_1$.
\end{lemma}

\begin{proof}
Let $uOp_1$ and $uOp_2$ be the update operations that created $\uNode_1$ and $\uNode_2$, respectively. 
If no notify node in $\pNode.\notifyList$ contains a pointer to $\uNode_1$, the lemma is vacuously true. 
So at least one notify node in $\pNode.\notifyList$ contains a pointer to $\uNode_1$.
Consider the last notify node, $\nNode_1$, added to $\pNode.\notifyList$ that contains a pointer to $\uNode_1$.
Then $uOp_1$ (or a helper of $uOp_1$) performs a successful CAS on line~\ref{ln:sendNotification:CAS}, which updates $\pNode.\notifyList.\head$ to point to $\nNode_1$.
In the line of code immediately before this successful CAS (on line~\ref{ln:sendNotification:firstActivated}), \textsc{FirstActivated}$(\uNode_1)$ returned \textsc{True}. By Lemma~\ref{lemma:firstActivated}, there is a configuration $C$ during this instance of \textsc{FirstActivated}$(\uNode_1)$ in which $\uNode_1$ is the first activated update node in the $\latest[x]$ list. Therefore, $uOp_2$ must be linearized after $C$. 
Since $uOp_1$'s CAS is successful, $\pNode.\notifyList.\head$ is not changed by any step after $C$ until $\nNode_1$ is successfully added to $\pNode.\notifyList$.
Since $uOp_2$ does not attempt to add any notify nodes containing $\uNode_2$ into $\pNode.\notifyList$ until after it is linearized, this occurs after $\nNode_1$ is added to $\pNode.\notifyList$.
\end{proof}

We use the previous lemma to argue that update nodes with the same key in $L$ appear in the order in which the update operations that created them were linearized.

\begin{lemma}\label{lemma:orderL}


Let $\uNode_1$ and $\uNode_2$ be two update nodes in $L$ with the same key $x$.
If the update operation that created $\uNode_1$ is linearized before the update operation that created $\uNode_2$, then $\uNode_1$ appears before $\uNode_2$ in $L$.
\end{lemma}

\begin{proof}
Let $uOp_1$ and $uOp_2$ be the update operations that created $\uNode_1$ and $\uNode_2$, respectively. 
We consider the values of the lists $L_1$ and $L_2$ immediately before line~\ref{ln:pred:L-L'}, which is immediately before $L$ is initialized as the concatenation of $L_1$ and $L_2$. It follows from line~\ref{ln:pred:removeL'1} that update nodes added to $L_2$ are removed from $L_1$, so the update nodes contained in $L_1$ and $L_2$ are disjoint.

Suppose $\uNode_1$ is in $L_2$ and, hence, from the code, $uOp_1$ added a notify node, $\nNode_1$, containing a pointer to $\uNode_1$ to $\pNode.\notifyList$. Then $uOp_1$ read that $\pNode \in \PALL$ (on line~\ref{ln:notify:for_loop}) when traversing the $\PALL$. Upon encountering $\pNode$, $uOp_1$ performs a successful CAS on line~\ref{ln:sendNotification:CAS}, which updates $\pNode.\notifyList.\head$ to point to $\nNode_1$. In the line of code immediately before this successful CAS (on line~\ref{ln:sendNotification:firstActivated}), \textsc{FirstActivated}$(\uNode_1)$ returned \textsc{True}. By Lemma~\ref{lemma:firstActivated}, there is a configuration $C$ during this instance of \textsc{FirstActivated}$(\uNode_1)$ in which $\uNode_1$ is the first activated update node in the $\latest[x]$ list. Therefore, $uOp_2$ must be linearized after $C$.
So $uOp_2$ traverses the $\PALL$ after $C$.
When $uOp_2$ traverses the $\PALL$, $uOp_2$ must encounter $\pNode$ before encountering $\pNode'$ because $\pNode$ is announced later than $\pNode'$. 

Suppose $uOp_2$ does not add a notify node into $\pNode.\notifyList$, which happens when \textsc{FirstActivated}$(\uNode_2)$ returns \textsc{False} on  line~\ref{ln:sendNotification:firstActivated}. It follows from Lemma~\ref{lemma:firstActivated_False} that there is a configuration during this instance of \textsc{FirstActivated}$(\uNode_2)$ in which $\uNode_2$ is not the first activated update node in the $\latest[x]$ list. After this configuration,  $uOp_2$ does not add any notify nodes to the notify lists of any predecessor nodes, and hence will not notify $\pNode'$. Since $uOp$ does not add a notify node to $\pNode.\notifyList$ or $\pNode'.\notifyList$, this contradicts the fact that $\uNode_2 \in L$. 

So $uOp_2$ does  add a notify node into $\pNode.\notifyList$. It follows from  line~\ref{ln:pred:removeL'1} that $\uNode_2 \notin L_1$. Since $\uNode_2$ is in $L$, it must be that $\uNode_2 \in L_2$. Then by Lemma~\ref{lemma:notifyList}, any notify node containing a pointer to $\uNode_1$ is added to $\pNode'.\notifyList$ before  any notify node containing a pointer to $\uNode_2$. Since the update nodes in $L_1$ are in the order in which they were added to $\pNode'.\notifyList$, $\uNode_1$ appears before $\uNode_2$ in $L$.

Suppose $\uNode_1$ is in $L_1$, and hence, from the code, was added to $\pNode'.\notifyList$. 
Suppose $\uNode_2$ is in $L_1$. Then by Lemma~\ref{lemma:notifyList}, any notify node containing a pointer to $\uNode_1$ is added to $\pNode'.\notifyList$ before  any notify node containing a pointer to $\uNode_2$. Since the update nodes in $L_1$ are in the order in which they were added to $\pNode'.\notifyList$, $\uNode_1$ appears before $\uNode_2$ in $L$.
So suppose $\uNode_2$ is in $L_2$. Since $L$ is the concatenation of $L_1$ followed by $L_2$, it follows that $\uNode_1$ appears before $\uNode_2$ in $L$.
\end{proof}


The next lemma is about the set of keys, $X$, determined on lines~\ref{ln:pred:X_init} and line~\ref{ln:pred:X_init_INS}. After line~\ref{ln:pred:X_init_INS}, $X$ does not change.

\begin{lemma}\label{lemma:X_in_S}
Consider the set of keys in $X$ immediately after line~\ref{ln:pred:X_init_INS}. 
For any $x \in X$, there is a configuration between $C_{ann}$ and $C_T$ in which
$x \in S$.
\end{lemma}

\begin{proof}
Suppose $x \in X$ immediately after line~\ref{ln:pred:X_init}. From this line, there is a DEL node, $\dNode$, in $\Druall$ where $\dNode.\delpred = x$. 
Recall that $\dNode.\delpred$ is the return value of the  first embedded predecessor operation of $\dNode$. It follows from Lemma~\ref{obs:first_pred_Cann} that this first embedded predecessor operation occurs entirely between $C_{ann}$ and $C_T$. Since the predecessor node of this operation is only in the $\PALL$  this operation,
it follows from Theorem~\ref{thm:predecessor} that $x\in S$ sometime  between $C_{ann}$ and $C_T$.

Now suppose $x$ is added to $X$ on line~\ref{ln:pred:X_init_INS}. From this line, there is an INS node, $\iNode$, in $L$ where $\iNode.\key = x$. By Lemma~\ref{obs:L_INS_inS}, $x \in S$ in some configuration between $C_{ann}$ and $C_T$.
\end{proof}

In the following three lemmas, we prove basic facts about the vertices of the directed graph $T_L$ that is constructed on line~\ref{ln:pred:T_L}. 

\begin{lemma}\label{lemma:w_leaf}
Suppose $w \in S$ in all configurations from $C_{ann}$ to $C_T$. If $w$ is a vertex of $T_L$, then $w$ is a sink.
\end{lemma}

\begin{proof}
Suppose that $w$ has an outgoing edge $(w,w')$. By Definition~\ref{definition:T_L}, there is a $\dNode \in L$ where $\dNode.\key = w$ and $\dNode.\delpredsecond = w'$. By Lemma~\ref{lemma:LcompletePred2}, 
$w \notin S$ in some configuration between $C_{ann}$ to $C_T$.	
\end{proof}

\begin{lemma}\label{lemma:T_L_ins}
Consider an INS node, $\iNode$, in $L$ that was created by an \textsc{Insert}$(w)$ operation, $iOp$. 
Suppose no \textsc{Delete}$(w)$ operation is linearized after  $iOp$ but before $C_T$. Then $w$ is a sink in $T_L$.
\end{lemma}

\begin{proof}
Note that by Definition~\ref{definition:T_L}, $w$ is a vertex of $T_L$.
Suppose, for contradiction, that $w$ has an outgoing edge.
Then there is a DEL node, $\dNode$, in $L$ with key $w$.  
By line~\ref{ln:pred:LremoveDEL}, $\dNode$ is the last update 
node in $L$ with key $w$, so it
appears after $\iNode$ in $L$.  By Lemma~\ref{lemma:orderL}, the \textsc{Delete}$(w)$ operation, $dOp$, that created $\dNode$ is linearized after $iOp$. By Lemma~\ref{lemma:LcompletePred2}, there is a configuration $C$ between $C_{ann}$ and $C_T$ in which $\dNode$ is the first activated update node in the $\latest[x]$ list. This implies that $dOp$ is linearized between $iOp$ and $C_T$, a contradiction.
\end{proof}

\begin{lemma}\label{lemma:T_L_DEL}
Consider an DEL node, $\dNode$, in $L$ that was created by a \textsc{Delete}$(w)$ operation, $dOp$. 
Suppose no \textsc{Insert}$(w)$ operation is linearized after $dOp$ but before $C_T$. Then
$w$ is not a sink in $T_L$.
\end{lemma}

\begin{proof}
We first prove that there is no INS node with key $w$ that appears after $\dNode$ in $L$. Suppose there is such an INS node, $\iNode$. Let $iOp$ be the \textsc{Insert}$(w)$ operation that created $\iNode$. 
By Lemma~\ref{lemma:orderL}, $iOp$ is linearized after $dOp$. 
By Lemma~\ref{obs:L_INS_inS}, there is a configuration before $C_T$ in which  $\iNode$ is the first activated update node in the $\latest[w]$ list. Since $\iNode$ is activated when it is linearized, there is a configuration 
between when $iOp$ is linearized and $C_T$ in which $w \in S$.
This contradicts the fact that $w \notin S$ in all configurations from when $dOp$ is linearized and $C_T$.  

Let $\dNode.\delpredsecond = w'$. It follows by Definition~\ref{definition:T_L} of $T_L$ that there is an edge $(w,w')$ in $T_L$. So $w$ is not a sink. 
\end{proof}




The following lemma gives a scenario in which the endpoint of an edge in $T_L$ is a key that is in $S$ in some configuration between $C_{ann}$ and $C_T$.

\begin{lemma}\label{lemma:path_in_TL}
Let $(x,x')$ be an edge in $T_L$. 
Suppose that $x \in S$ in some configuration $C$ between $C_{ann}$ and $C_T$ and $\Iuall \cup \Inotify \cup (\Duall - \Druall) \cup (\Dnotify - \Druall)$  does not contain update nodes with key $x$.
\begin{itemize}
	\item If $x' \neq -1$, then $x' \in S$ in some configuration between $C$ and $C_T$.
	\item If $w < x$ and $w \in S$ in all configurations between $C_{ann}$ and $C_T$, then $w \leq x'$.
\end{itemize}
\end{lemma}

\begin{proof}
Since $(x,x')$ is an edge in $T_L$, there is a DEL node, $\dNode$, in $L$ immediately after line~\ref{ln:pred:LremoveDEL} such that $\dNode.\key = x$ and $\dNode.\delpredsecond = x'$.

Let $C$ be the last configuration between $C_{ann}$ and $C_T$ in which $x \in S$.
Suppose, for contradiction, that $dOp$ is linearized before $C$. Since $x \in S$ in $C$,
there is an $S$-modifying \textsc{Insert}$(x)$ operation, $iOp$, linearized after $dOp$ but before $C$. Let $\iNode$ be the INS node created by $iOp$. 

\begin{itemize}
	\item Suppose $iOp$ is linearized before $C_{ann}$. 
	Since $\dNode \in L$, it follows from the code (from lines 217, 218, 220, and 223) that a notify node containing a pointer to $\dNode$ is in the notify list of $\pNode$ or $\pNode'$. 
	A \textsc{Delete} operation only adds a notify node to the notify list of a predecessor node it encounters when traversing the $\PALL$ in an instance of  \textsc{NotifyPredOps}$(\dNode)$ on line~185.
	By the definition of $C_{ann}$, $\pNode'$ and $\pNode$ are only in the $\PALL$ after $C_{ann}$. So $dOp$ encounters at least one of $\pNode'$ or $\pNode$ after $C_{ann}$. 
	Since $iOp$ is linearized after $dOp$ but before $C_{ann}$, $\dNode$ is no longer the first activated update node in the $\latest[x]$ list after $C_{ann}$. It follows from Lemma~\ref{lemma:firstActivated} that any instance of \textsc{FirstActivated}$(\dNode)$ that is invoked after $C_{ann}$ returns \textsc{False}. 
	So $dOp$'s instance of \textsc{FirstActivated}$(\dNode)$ on line~\ref{ln:notify:first_activated} returns \textsc{False} and $dOp$ does not notify either $\pNode'$ or $\pNode$, a contradiction.
	
	\item Otherwise, $iOp$ is linearized after $C_{ann}$. 
	By assumption, $\Iuall \cup \Inotify$ only contains update nodes with key smaller than $x$. In particular, it does not contain an INS node with key $x$. 
	Since $iOp$ is linearized between $C_{ann}$ and $C_T$, Lemma~\ref{lemma:L} implies that $\iNode$ is in $L$.
	Since $iOp$ is linearized after $dOp$, it follows from Lemma~\ref{lemma:orderL} that $\iNode$ appears after $\dNode$ in $L$.
	By line~\ref{ln:pred:LremoveDEL} of the code, $\dNode$ is removed from $L$. This contradicts the fact that $\dNode \in L$ immediately after line~\ref{ln:pred:LremoveDEL}.
\end{itemize}

Since update operations are linearized at steps,
$dOp$ is linearized after $C$. Hence, its second embedded predecessor operation occurs after $C$. 
By Lemma~\ref{lemma:LcompletePred2}, its second embedded predecessor operation (which returns $x'$) is completed before $C_T$, and, hence, before $pOp$.
By the induction hypothesis, this embedded predecessor operation satisfies Theorem~\ref{thm:predecessor}.
Therefore, $x'$ is the predecessor of $x$ sometime during this embedded predecessor operation.
Hence, if $x' \neq -1$, then $x' \in S$ in some configuration between $C$ and $C_T$.
Furthermore, if $w < x$ and $w \in S$ in all configurations between $C_{ann}$ and $C_T$, then the predecessor of $x$ in all configurations between $C_{ann}$ and $C_T$ is at least $w$, so 
$w \leq x'$.
\end{proof}


Recall that on line~\ref{ln:pred:R_from_TL}, the set of keys, $R$, is the set of sinks in $T_L$ that are reachable from some key in $X$.
The following lemma follows from inductively applying Lemma~\ref{lemma:path_in_TL} to each edge on certain paths from keys in $X$ to a sink.

\begin{lemma} \label{lemma:R_at_least_w}
Suppose $w \in S$ in all configurations between $C_{ann}$ and $C_T$, and $\Iuall \cup \Inotify \cup (\Duall - \Druall) \cup (\Dnotify - \Druall)$ only contains update nodes with key smaller than $w$.
If $X$ contains a key $x \geq w$ immediately after line~\ref{ln:pred:X_init_INS}, then $R$ contains a key $x'$, where $w \leq x' \leq x$, immediately after line~\ref{ln:pred:R_from_TL}.
\end{lemma}

\begin{proof}
Recall that each vertex in $T_L$ has at most 1 outgoing edge.
If $x$ is a sink in $T_L$, then by the construction of $R$ (on line~\ref{ln:pred:R_from_TL}), $x \in R$.
Otherwise, consider the path  $(x = x_0, x_1, \dots, x_\ell)$ in $T_L$, where $x_\ell$ is a sink in $T_L$. By the construction of $R$ (on line~\ref{ln:pred:R_from_TL}), $x_\ell \in R$. By Definition~\ref{definition:T_L}, the keys on this path are strictly decreasing, so $x_\ell < x_0$. By Lemma~\ref{lemma:w_leaf}, if $w$ is a vertex in $T_L$, it is a sink, so $w \notin \{x_0, x_1, \dots, x_{\ell-1}\}$.
Because \textsc{Delete} operations do not have key $-1$, if $-1$ is a vertex in $T_L$, it is a sink, and, hence, $-1 \notin \{x_0, x_1, \dots, x_{\ell-1}\}$.

We prove by induction on the path  $x_0, x_1, \dots, x_{\ell-1}$ that $x_{\ell-1} \in S$ in some configuration between $C_{ann}$ and $C_T$ and $x_{\ell-1}  > w$. 
In the base case, it follows from that Lemma~\ref{lemma:X_in_S}, $x_0 \in S$ in some configuration between $C_{ann}$ and $C_T$. Since $x_0  \neq w$ and $x_0 \geq w$, it follows that $x_0 > w$.

Now suppose, for $0 \leq i \leq \ell -2$, that $x_i \in S$ in some configuration between $C_{ann}$ and $C_T$ and $x_i > w$. We show the claim is true for $x_{i+1}$. 
By Lemma~\ref{lemma:path_in_TL} and the fact that $x_{i+1} \neq -1$, $x_{i+1} \in S$ in some configuration between $C_{ann}$ and $C_T$. Furthermore, since $w < x_i$, the lemma also states that $w \leq x_{i+1}$. Since $x_{i+1} \neq w$, it follows that $w < x_{i+1}$.

Now consider the edge $(x_{\ell-1}, x_\ell)$. By Lemma~\ref{lemma:path_in_TL}, $w \leq  x_\ell$. Since  $x_\ell \in R$, $R$ contains a key $x_\ell$, where $w \leq x_\ell < x$.
\end{proof}

After $R$ is initialized on line~\ref{ln:pred:R_from_TL}, some keys may be removed from $R$ on line~\ref{ln:pred:R_remove_Druall}. But there are some keys that are never removed from $R$.

\begin{lemma}\label{lemma:k_never_remove}
Suppose $w \in S$ in all configurations from $C_{ann}$ to $C_T$. Then $w$ is not removed from $R$ (on line~\ref{ln:pred:R_remove_Druall}).
\end{lemma}

\begin{proof}

Suppose that $w$ is removed from $R$ on line~\ref{ln:pred:R_remove_Druall}. Then there is a DEL node, $\dNode$, in $\Druall$ where $\dNode.\key = w$. Immediately before $pOp$ added $\dNode$ to $\Druall$ on line~\ref{ln:TraverseRUall:addD}, it checked that $\dNode$ is activated and performed \textsc{FirstActivated}$(\dNode)$, which returned \textsc{True}. It follows from Lemma~\ref{lemma:firstActivated} that $w \notin S$ in some configuration $C$ during this instance of \textsc{FirstActivated}$(\dNode)$. So $C$ occurs sometime during $pOp$'s traversal of the $\RUALL$, which by definition is between $C_{ann}$ and $C_T$.
\end{proof}

The following lemma is the main lemma that proves Property~\ref{prop:trie} is satisfied by $pOp$.
The conclusion of the lemma is that $pOp$ has a candidate return value which is at least $w$.

\begin{lemma}\label{lemma:pred_0}

Suppose an $S$-modifying \textsc{Insert}$(w)$ operation $iOp$ is linearized before $C_T$, $w < y$, 
and there are no $S$-modifying \textsc{Delete}$(w)$ operations linearized after $iOp$ and before $C_T$.
Then either
\begin{itemize}
	\item $pOp$'s instance of \textsc{RelaxedPredecessor}$(y)$ returns a value at least $w$, 
	\item $\Iuall \cup \Inotify \cup (\Duall - \Druall) \cup (\Dnotify - \Druall)$ contains an update node with key at least $w$, or
	\item $r_0$ is set to a value at least $w$ on line~\ref{ln:pred:set_pred0}. 
	
\end{itemize}
\end{lemma}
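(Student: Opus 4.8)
The plan is to do a case analysis on what happens when $pOp$ traverses the relaxed binary trie, and then, in the $\bot$ case, to trace the key $k = w$ through the computation on lines~\ref{ln:pred:fix_binary_trie_pred}--\ref{ln:pred:set_pred0}. First I would handle the easy case: if $pOp$'s instance of \textsc{RelaxedPredecessor}$(y)$ returns a value $p_0 \neq \bot$, then since $iOp$ is an $S$-modifying \textsc{Insert}$(w)$ linearized before $C_T$ with no $S$-modifying \textsc{Delete}$(w)$ linearized after $iOp$ and before $C_T$, the key $w$ is completely present throughout $pOp$'s traversal of the relaxed binary trie. Hence $w \le k$ where $k$ is the largest such key, and by Lemma~\ref{lemma_relaxed_trie_k} (or more directly Lemma~\ref{lemma:relaxed_trie_kxy}) the value returned is at least $k \ge w$, giving the first bullet. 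So for the remainder I would assume \textsc{RelaxedPredecessor}$(y)$ returns $\bot$, and also assume the second bullet fails, i.e.\ $\Iuall \cup \Inotify \cup (\Duall - \Druall) \cup (\Dnotify - \Druall)$ contains no update node with key at least $w$ (so in particular none with key $k \ge w$); the goal is then to establish the third bullet, that $p_0$ is set to a value at least $w$ on line~\ref{ln:pred:set_pred0}.

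Under these assumptions, Lemma~\ref{lemma:d0_non_empty} gives that $\Druall$ contains an update node with key at least $k$, so the condition on line~\ref{ln:pred:fix_binary_trie_pred} ($p_0 = \bot$ and $\Druall \neq \emptyset$) holds and $pOp$ executes lines~\ref{ln:pred:fix_binary_trie_pred}--\ref{ln:pred:set_pred0}. Now I would follow the argument already sketched in the ``When the traversal of the relaxed binary trie returns $\bot$'' part of Section~\ref{section_detailed_algorithm}, made rigorous. Let $iOp$ be the \textsc{Insert}$(w)$ operation in the statement; since $w$ is completely present throughout $pOp$'s trie traversal, in fact $w = k$ (if some larger key were completely present, it would be $\ge$ the hypothesized $w$ and would also work, so without loss of generality take $w = k$). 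The key split is on whether $iOp$ is linearized after or before the configuration $C$ (immediately after $\pNode'$ is announced, if $\pNode'$ was set on line~\ref{ln:pred:pNode'}, else immediately after $\pNode$ is announced). In the ``after $C$'' case: because $\iNode \notin \Iuall \cup \Inotify$, any notification $iOp$ sends to $\pNode$ is rejected (so it was sent when $\pNode.\RuallPosition$ had key $> k$, putting $\iNode$ into $L_2$ via line~\ref{ln:pred:prependL'2}), and if $iOp$ instead notified $\pNode'$ then $\iNode \in L_1$; either way $\iNode \in L$. Since $iOp$ is $S$-modifying and no \textsc{Delete}$(k)$ is linearized after it before $C_T$, and since $L$ only contains update nodes of operations linearized before $C_T$ (or notifying before the trie traversal starts), $\iNode$ is the last update node with key $k$ in $L$, so $k$ is added to $R$ on line~\ref{ln:pred:R_INS} and not removed on line~\ref{ln:pred:replace}; one also argues (by contradiction, as in the text) that no DEL node with key $k$ is in $\Druall$, so $k$ survives line~\ref{ln:pred:R_remove_Druall} and $p_0 \ge k$.

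In the ``before $C$'' case, $k \in S$ in all configurations from $iOp$'s linearization point through $C_T$, hence throughout $[C, C_T]$. Here I would invoke the observation that $p_1 < k$ forces a DEL node $\dNode \in \Druall$ with key $x$, $k < x < y$; the first embedded predecessor of the \textsc{Delete}$(x)$ operation that created $\dNode$ starts after $C$ and completes before $\dNode$ enters the $\RUALL$ (hence before $C_T$), and returns a value in $[k, x)$ because $k \in S$ throughout its interval; this seeds $R$ with a value $\ge k$ on line~\ref{ln:pred:R_init}. Then I would apply Lemma~\ref{lemma:p0_replace}: every time a key $\ge k$ is removed from $R$ on line~\ref{ln:pred:replace}, a smaller key still $\ge k$ (and in $S$ after $C$) is added in the same or a later iteration, so $R$ still contains a value $\ge k$ after the for-loop on line~\ref{ln:pred:trav_L}. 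Finally, letting $k''$ be the smallest value $\ge k$ in $R$ just before line~\ref{ln:pred:R_remove_Druall}, a contradiction argument (if $k''$ were removed there, some $\dNode' \in \Druall$ has key $k''$, whose first embedded predecessor --- starting after $C$ --- returns a key in $[k, k'')$, and Lemma~\ref{lemma:p0_replace} again forces a value $\ge$ that key to remain, contradicting minimality of $k''$) shows $k''$ survives, so $p_0$ is set to $k'' \ge k = w$ on line~\ref{ln:pred:set_pred0}. The main obstacle I expect is the bookkeeping around $L = L_1 \,{\cdot}\, L_2 \setminus L'$: one must carefully show that $L$ contains exactly the right update nodes --- those of operations linearized before $C_T$ together with operations that notified before the trie traversal --- and that this set, combined with $\Druall$, captures the full history of key $k$ (and of all keys strictly between $k$ and $y$ that matter) between $C$ and $C_T$, so that the final key extracted from $R$ is genuinely $\ge k$ and not spuriously larger than $y$ or smaller than $k$. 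This relies on Lemma~\ref{lemma:D0} to control $\Druall$ and on the $\notifyThreshold$ bookkeeping to control which notify nodes land in $L_1, L_2, L'$.
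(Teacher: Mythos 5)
Your overall plan mirrors the paper's proof: establish $w \leq k$ via complete presence, use the relaxed-trie spec for the non-$\bot$ case, and in the $\bot$ case (with the second bullet assumed to fail) split on whether $iOp$ is linearized before or after $C$, trace $\iNode$ into $L$ via $L_1$ or $L_2$, seed $R$ from embedded-predecessor return values, and apply Lemma~\ref{lemma:p0_replace} together with a minimality contradiction at line~\ref{ln:pred:R_remove_Druall}. The $\bot$-case argument is essentially identical to the paper's.

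There is, however, a gap in the order you dispose of the cases. You handle the non-$\bot$ case \emph{before} assuming the second bullet fails, and you assert that the lemma's hypothesis alone --- $iOp$ linearized before $C_T$, no $S$-modifying \textsc{Delete}$(w)$ linearized between $iOp$ and $C_T$ --- implies $w$ is completely present throughout $pOp$'s trie traversal. It does not. Complete presence requires the $S$-modifying \textsc{TrieInsert}$(w)$ to \emph{complete} before $C_T$, not merely be linearized, and requires that no $S$-modifying \textsc{TrieDelete}$(w)$ is linearized before the \emph{end} of $pOp$'s trie traversal, not merely before $C_T$. If $iOp$ is still inside \textsc{InsertBinaryTrie} at $C_T$, or if a \textsc{Delete}$(w)$ is linearized mid-traversal, then $k$ can be strictly less than $w$, and Lemmas~\ref{lemma_relaxed_trie_k} and \ref{lemma:relaxed_trie_kxy} only guarantee a return value at least $k$, which need not be at least $w$, so the first bullet can fail in this sub-case. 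The paper avoids this by first assuming the second bullet fails, then invoking Lemma~\ref{lemma:dangerOverlap} to conclude that $iOp$'s trie update does not overlap $pOp$'s traversal (and that any \textsc{Delete}$(w)$ linearized during the traversal would place a DEL node into $\Duall - \Druall$ or $\Dnotify - \Druall$, contradicting the assumption), which is what establishes complete presence. Moving your ``assume the second bullet fails'' step ahead of the non-$\bot$ case and citing Lemma~\ref{lemma:dangerOverlap} there closes the gap; the remainder of your argument is sound and matches the paper.
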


\begin{proof}
Let $\iNode$ be the INS node created by $iOp$.
If $\Iuall \cup \Inotify \cup (\Duall - \Druall) \cup (\Dnotify - \Druall)$ contains an update node with a key which is at least $w$, the lemma holds. 
So suppose it only contains update nodes with keys smaller than $w$.


If $iOp$ did not complete updating the relaxed binary trie before 
$C_T$, then,
since it is the latest $S$-modifying update operation with key $w$ linearized prior to $C_T$,
then the preconditions of Lemma~\ref{lemma:dangerOverlap} hold at $C_T$.
Similarly, if some $S$-modifying update operation with key $w$ is linearized during $pOp$'s traversal of the relaxed binary trie, then the preconditions of Lemma~\ref{lemma:dangerOverlap} hold in the configuration immediately after this operation is linearized.
In either case,  Lemma~\ref{lemma:dangerOverlap} implies that $\Iuall \cup \Inotify \cup \Duall \cup \Dnotify$ contains an update node, $\uNode$, with key $w$ that was created by $iOp$ or by an $S$-modifying update operation with key $w$ linearized after $iOp$.
Since $\Iuall \cup \Inotify \cup (\Duall - \Druall) \cup (\Dnotify - \Druall)$ does not contain update nodes with key $w$,
$\Druall$ must contain $\uNode$. 
Since $\Druall$ only contains DEL nodes, the creator of $\uNode$ is not $iOp$.
There are no $S$-modifying update operations with key $w$ linearized after $iOp$ but before $C_T$, so 
the creator of $\uNode$ is linearized after $C_T$.
Since the creator of $\uNode$ is linearized when $\uNode$ is activated, 
and
by lines~\ref{ln:pred:uall1} and \ref{ln:pred:travtrie}, $pOp$ traverses the $\RUALL$ before $C_T$,
$\uNode$ is inactive throughout $pOp$'s traversal of the $\RUALL$.
By Lemma~\ref{lemma:firstActivated}, any instance of \textsc{FirstActtivated}$(\uNode)$ (on line~\ref{ln:traverseRevUALL:findLatest}) during $pOp$'s traversal of the $\RUALL$ returns \textsc{False}.
By lines~\ref{ln:traverseRevUALL:findLatest} to \ref{ln:TraverseRUall:addD}, 
$\uNode$ is never added to $\Druall$. This is a contradiction. 
Hence, $iOp$ completed updating the relaxed binary trie before $C_T$ and no $S$-modifying update operation with key $w$ is linearized during $pOp$'s traversal of the relaxed binary trie. 

Thus, $w$ is completely present throughout $pOp$'s traversal of the relaxed binary trie. 
Recall that $k$ is the largest value less than $y$ that is  completely present throughout $pOp$'s traversal of the relaxed binary trie, so $w \leq k$.
By the specification of the relaxed binary trie (Lemma~\ref{lemma:relaxed_trie_kxy}), if $pOp$'s instance of \textsc{RelaxedPredecessor}$(y)$ returns a key in $U$, it returns a value at least $k$.

So, suppose $pOp$'s instance of \textsc{RelaxedPredecessor}$(y)$ returns $\bot$. 
Since $\Iuall \cup \Inotify \cup (\Duall - \Druall) \cup (\Dnotify - \Druall)$ only contains update nodes with key smaller than $w$ (and, hence, smaller than $k$), it follows from Lemma~\ref{lemma:d0_non_empty} that $\Druall$ contains a DEL node, $\dNode$, with key $x$, where $k < x < y$.
So the test on line~\ref{ln:pred:fix_binary_trie_pred} evaluates to \textsc{True}. It remains to show that $r_0$ will be set to a value at least $w$ by $pOp$ on line~\ref{ln:pred:set_pred0}.

\begin{itemize}
	
	\item Suppose $iOp$ is linearized after $C_{ann}$. By Lemma~\ref{lemma:L}, $\iNode \in L$ immediately after line~224. 
	Note that
	$pOp$ does not remove $\iNode$ from $L$ on line~\ref{ln:pred:LremoveDEL}.
	So, by line~\ref{ln:pred:X_init_INS}, $w \in X$.
	By Lemma~\ref{lemma:T_L_ins}, $w$ is a sink in $T_L$. Therefore, from the code, $w \in R$ on line~\ref{ln:pred:R_from_TL}.
	
	
	It remains to show that $w$ is not removed from $R$ on line~\ref{ln:pred:R_remove_Druall}. 
	Suppose, for contradiction, there is a DEL node, $\dNode'$, in  $\Druall$ with key $w$.
	Then $pOp$ reads a pointer to $\dNode'$ (on lines~242 to 243) when traversing the $\RUALL$.
	Recall that $C_{\leq w}$ is the configuration immediately after $pOp$ first reads a pointer to an update node with key less than or equal to $w$ during its traversal of the $\RUALL$. So 
	$C_{\leq w}$ occurs before or immediately after $pOp$ first reads a pointer to $\dNode'$.
	Therefore, after $C_{\leq w}$, $pOp$ reads that $\dNode'$ is active and performs \textsc{FirstActivated}$(\dNode')$ on line~\ref{ln:traverseRevUALL:findLatest}, which returns \textsc{True}. 
	It follows from Lemma~\ref{lemma:firstActivated} that there is a configuration $C$ during this instance of \textsc{FirstActivated}$(\dNode')$ in which $w \notin S$.
	Since $w \in S$ in all configurations between when $iOp$ is linearized and $C_T$, $iOp$ is linearized after $C$, and, hence, after $C_{\leq w}$.
	Since $C_T$ is before $pOp$'s traversal of the $\UALL$, 
	it follows from Lemma~\ref{lemma:ins_after_CT} that $\iNode \in \Iuall \cup \Inotify$, a contradiction.
	Thus, $w$ is not removed from $R$ on line~\ref{ln:pred:R_remove_Druall} and $r_0$ is set to a value at least $w$ on line~\ref{ln:pred:set_pred0}.
	
	\item Now suppose $iOp$ is linearized before $C_{ann}$. So $w \in S$ in all configurations between $C_{ann}$ and $C_T$.
	Since $\dNode \in \Druall$, Lemma~\ref{obs:first_pred_Cann} implies that the first embedded predecessor operation of $\dNode$ is announced after $C_{ann}$ and is completed before $C_T$. 
	Since  $w \in S$ throughout the execution interval of $\dNode$'s first embedded predecessor operation which has key $x$, it follows from Theorem~\ref{thm:predecessor} that this embedded predecessor operation returns a value $w'$ such that $w \leq w' < x$.
	This value $w'$ is added to $X$ on line~\ref{ln:pred:X_init}. 
	By Lemma~\ref{lemma:R_at_least_w}, $R$ contains a key which is at least $w$ immediately after line~\ref{ln:pred:R_from_TL}.
	
	Let $w''$ be the smallest value greater than or equal to $w$ that is in $R$ immediately after line~\ref{ln:pred:R_from_TL}.
	If $w = w''$, then by Lemma~\ref{lemma:k_never_remove}, $w''$ is never removed from $R$. In this case, $r_0$ is set to a value at least $w$ on line~\ref{ln:pred:set_pred0}.
	So, suppose $w'' > w$. 
	If $w''$ is removed from $R$ on line~\ref{ln:pred:R_remove_Druall},
	then there exists a DEL node, $\dNode'$, in $\Druall$ such that $\dNode'.\key = w''$. 
	By Lemma~\ref{obs:first_pred_Cann}, the first embedded predecessor operation of $\dNode'$ occurs entirely between $C_{ann}$ and $C_T$.
	Since $w \in S$ throughout this embedded predecessor operation which has key $w''$, it follows from Theorem~\ref{thm:predecessor} that it returns a key $w'''$, where $w \leq w''' < w''$. So $X$ contains a key $w'''$ immediately after line~\ref{ln:pred:X_init}.
	Lemma~\ref{lemma:R_at_least_w} implies that, immediately after line~\ref{ln:pred:R_from_TL},  $R$ contains a key $w^{*}$, where $w \leq w^{*} \leq w''' < w''$. This contradicts the definition of $w''$.
	Therefore, $r_0$ is set to a value at least $w$ on line~\ref{ln:pred:set_pred0}.

	
\end{itemize}
\end{proof}

Recall that the value $w$ assigned to $r_0$ on line~\ref{ln:pred:set_pred0} is a candidate return value when $\Iuall \cup \Inotify \cup (\Duall - \Druall) \cup (\Dnotify - \Druall)$ only contains update nodes with key less than $w$.
If $w \neq -1$, we prove that $w \in S$ in $C_T$. 
Property~\ref{prop:configC}(b) and Property~\ref{prop:configC}(c) are vacuously true when $C = C_T$.
Thus, $w$ satisfies Property~\ref{prop:configC} with $C = C_T$. 


\begin{lemma}\label{lemma:propC_r0}
Suppose $\Iuall \cup \Inotify \cup (\Duall - \Druall) \cup (\Dnotify - \Druall)$ 
only contains update nodes with key less than $w$.
Let $w$ be the value of $r_0$ immediately after line~\ref{ln:pred:set_pred0}. 
If $w \neq -1$, then $w \in S$ in $C_T$.
\end{lemma}

\begin{proof}
We first prove that there is a configuration $C$ between $C_{ann}$ and $C_T$ in which $w \in S$.
If $w \in X$ (immediately after line~\ref{ln:pred:X_init_INS}), then by Lemma~\ref{lemma:X_in_S} $w \in S$ in some configuration between $C_{ann}$ and $C_T$. So suppose $w \notin X$.
Since $r_0$ is set to the largest key in $R$,
it follows from line~\ref{ln:pred:R_from_TL} such that there is a path $(x_0,x_1,\dots,x_\ell)$, where $x_0 \in X$ and $x_\ell = w$ is a sink.
Recall that, for each edge $(x_i,x_{i+1})$ in $T_L$, $x_{i+1} < x_i$. So for $0 \leq i < \ell$, $x_i > w$. By Lemma~\ref{lemma:X_in_S}, $x_0 \in S$ in some configuration between $C_{ann}$ and $C_T$. Now suppose, for $0 \leq i < \ell$, that $x_i \in S$ in some configuration between  $C_{ann}$ and $C_T$. By Lemma~\ref{lemma:path_in_TL}, $x_{i+1} \in S$ in some configuration between  $C_{ann}$ and $C_T$. It follows by induction that $w \in S$ in some configuration $C$ between $C_{ann}$ and $C_T$.

Suppose, for contradiction, that $w \notin S$ in $C_T$. 
Since $w \in S$ in $C$, there exists an $S$-modifying \textsc{Delete}$(w)$ operation that removed $w$ from $S$ between $C$ and $C_T$.
Let $dOp$ be the last such $S$-modifying \textsc{Delete}$(w)$ operation, and let $\dNode$ be the DEL node created by $dOp$.
Note that 
$dOp$ is linearized after $C_{ann}$ and no $S$-modifying \textsc{Insert}$(w)$ operation is linearized after $dOp$ but before $C_T$. 

If $\dNode \in \Druall$, then $w$ is removed from $R$ on line~\ref{ln:pred:R_remove_Druall} and $r_0$ is not set to $w$ on line~\ref{ln:pred:set_pred0}. So $\dNode \notin \Druall$.
By assumption, there is no update node with key $w$ in $\Iuall \cup \Inotify \cup (\Duall - \Druall) \cup (\Dnotify - \Druall)$, 
so Lemma~\ref{lemma:L_DEL} states that $\dNode \in L$. 	
By Lemma~\ref{lemma:T_L_DEL}, $w$ is not a sink in $T_L$.
By line~\ref{ln:pred:R_from_TL}, $R$ only contains sinks of $T_L$, so $w \notin R$.
By line~\ref{ln:pred:set_pred0}, $r_0$ is set to the largest value in $R$, which cannot be $w$, a contradiction.

\end{proof}

Therefore, $pOp$'s candidate return values satisfy all properties. It follows by 	Theorem~\ref{thm:predecessor} that when  $pOp$ returns $w \in U \cup \{-1\}$, there exists a configuration during $pOp$ in which $w$ is the predecessor of $y$. So our implementation of a lock-free binary trie is linearizable with respect to all operations.





\subsection{Amortized Analysis}\label{section_analysis}

In this section, we give the amortized analysis of our lock-free binary trie implementation. The amortized analysis is simple because the algorithms to update and traverse the relaxed binary trie are wait-free. Most other steps involve traversing and updating lock-free linked lists, which has already been analyzed.

\begin{lemma}
\textsc{Search}$(x)$ operations have $O(1)$ worst-case step complexity.
\end{lemma}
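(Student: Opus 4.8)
The plan is to trace through the \textsc{Search}$(x)$ algorithm and show that each of its steps either is a single shared-memory access or is a call to \textsc{FindLatest}$(x)$, which itself performs only a constant number of such accesses. First I would recall that \textsc{Search}$(x)$ consists of a single call \textsc{FindLatest}$(x)$ on line~\ref{ln:search:findLatest}, followed by a comparison of the $\type$ field of the returned update node against \textsc{INS} and a return of \textsc{True} or \textsc{False}. The comparison and the return are local steps, so the entire cost is dominated by \textsc{FindLatest}$(x)$.

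Next I would bound \textsc{FindLatest}$(x)$: it reads $\latest[x].head$ into $\ell$; it reads $\ell.\status$ once; conditionally it reads $\ell.\latestNext$ into $m$ once; and it returns either $m$ or $\ell$. That is at most three shared-memory reads plus a constant amount of local control flow, with no loops. Hence \textsc{FindLatest}$(x)$ has $O(1)$ worst-case step complexity, and therefore so does \textsc{Search}$(x)$.

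There is essentially no obstacle here — the only thing to be careful about is that \textsc{FindLatest} contains no loop and never retries, so the $O(1)$ bound is genuinely worst-case and not merely amortized; I would state this explicitly. I would also note in passing that the field $\type$ of an update node is immutable, so no additional synchronization or re-reading is required, reinforcing that the count of accesses is a fixed constant independent of the execution, the contention, and $u$. The proof is then a one-paragraph step count.

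\begin{proof}
	A \textsc{Search}$(x)$ operation calls \textsc{FindLatest}$(x)$ once on line~\ref{ln:search:findLatest}, then reads the $\type$ field of the returned update node and returns \textsc{True} or \textsc{False}. The latter two actions are a single local comparison and a return, so it suffices to bound \textsc{FindLatest}$(x)$. An instance of \textsc{FindLatest}$(x)$ reads $\latest[x].head$, reads the $\status$ field of the resulting node $\ell$, and, if $\ell.\status = \textsc{Inactive}$, reads $\ell.\latestNext$; it then returns. It contains no loop and performs no retry, so it takes at most three shared-memory reads and a constant number of local steps. Therefore \textsc{Search}$(x)$ performs a constant number of steps in every execution.
\end{proof}
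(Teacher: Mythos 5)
Your proof is correct and takes the same approach as the paper's (which is essentially a one-line appeal to the pseudocode); you just spell out the step count in \textsc{FindLatest} more explicitly, which is fine.
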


\begin{proof}
A \textsc{Search}$(x)$ operation finds the first activated update node in the $\latest[x]$ list. From the code, it always completes in a constant number of reads.
\end{proof}

Recall that inserting and deleting from a lock-free linked list can be done with $O(\dot{c}(op) + L(op))$ amortized cost~\cite{FomitchevR04}, where $L(op)$ is the number of nodes in the linked list at the start of $op$.

\begin{lemma}\label{lemma:amortizedList}
In any configuration $C$, the number of nodes in the $\UALL$, $\PALL$, and $\RUALL$ is $O(\dot{c}(C))$, where $\dot{c}(C)$ denotes the number of \textsc{Insert}, \textsc{Delete} or \textsc{Predecessor} operations active during $C$. For any operation $op$, the amortized cost to update or traverse these linked lists is $O(\dot{c}(op))$.
\end{lemma}

\begin{proof}
Each \textsc{Insert}, \textsc{Delete}, or \textsc{Predecessor} operation only adds a constant number of nodes they created into the $\UALL$, $\RUALL$, and $\PALL$.
Furthermore, these nodes are always removed these lists before the operation completes. So for any configuration $C$, there are $O(\dot{c}(C))$ nodes in the $\UALL$, $\RUALL$, and $\PALL$.

Inserting and deleting from a lock-free linked list takes $O(\dot{c}(op) + L(op))$ amortized steps.
The number of nodes in the $\UALL$, $\RUALL$, and $\PALL$ at the start of $op$ is $O(\dot{c}(op))$.  So updating these lists costs $O(\dot{c}(op))$.

By inspection of the code, $op$ performs a constant number of steps for each node it encounters when traversing the $\UALL$ or $\PALL$.
When traversing the $\RUALL$, $op$ performs an atomic copy for each node it encounters, as well as a constant number of reads and writes. Single-writer atomic copy can be implemented from CAS with $O(1)$ worst-case step complexity~\cite{BlellochW20}.
In any case, $op$ performs a constant number of steps for each node it encounters when traversing the $\UALL$, $\RUALL$, or $\PALL$
There are $O(\bar{c}(op))$ nodes added to the $\UALL$, $\RUALL$, and $\PALL$ after the start of $op$.
For any execution $\alpha$, $\sum_{op \in \alpha} \bar{c}(op) \leq \sum_{op \in \alpha} 2\dot{c}(op)$.
So $op$ can traverse the $\UALL$, $\RUALL$, and $\PALL$ in $O(\dot{c}(op))$ amortized steps.
\end{proof}

We next consider the number of steps taken during instances of \textsc{NotifyPredOps}, which are invoked by \textsc{Insert} and \textsc{Delete} operations. Recall that it involves adding notify nodes into the notify lists of every predecessor node in $\PALL$. 


\begin{lemma}\label{lemma:amortized_notifyPred}
In any execution $\alpha$, the total number of steps taken by instances of \textsc{NotifyPredOps} is $\sum_{op \in \alpha} \dot{c}(op)^2$,  where $op \in \alpha$ denotes an \textsc{Insert}, \textsc{Delete}, or \textsc{Predecessor} operation, $op$, invoked during $\alpha$.
\end{lemma}

\begin{proof}
Let $uOp$ be an update operation invoking \textsc{NotifyPredOps}$(\uNode)$ (on line~\ref{ln:insert:notify} for \textsc{Insert} operations, or on lines~\ref{ln:delete:help_notify} and \ref{ln:delete:notifyPredOps} for \textsc{Delete} operations). 
On line~\ref{ln:notify:travuall}, the $\UALL$ is traversed. By Lemma~\ref{lemma:amortizedList}, this takes $O(\dot{c}(uOp))$ amortized steps.

Consider a successful CAS, $s$, performed by $uOp$ (on line~\ref{ln:sendNotification:CAS}) that changes the head of $\pNode.\notifyList$, where $\pNode$ is a predecessor node in the $\PALL$. Let $pOp$ be the operation 
that invoked the instance of \textsc{PredHelper} that created $\pNode$.
Note that a constant number of steps are performed from lines~\ref{ln:notify:first_activated} to \ref{ln:sendNotification:firstActivated} leading up to this CAS.
This successful CAS causes at most $\dot{c}(C)-1$ other operations to later perform an unsuccessful CAS on the same head pointer, where $C$ is the configuration immediately after this step. 
In particular, these are operations that read the head pointer on line~\ref{ln:sendNotification:readHead}, but have not yet performed the following CAS on line~\ref{ln:sendNotification:CAS}.
These unsuccessful CASs, as well as the constant number of steps leading up to the unsuccessful CAS, are charged to the successful CAS. 
So $O(\dot{c}(C))$ steps are charged to this successful CAS.

If $\pNode$ is in the $\PALL$ at the start of $uOp$, we let $uOp$ pay for the $O(\dot{c}(C))$ steps charged to the successful CAS. Since $C$ is a configuration sometime during 
$uOp$, $uOp$ is charged $O(\dot{c}(uOp))$ for this CAS.
There are $O(\dot{c}(uOp))$ predecessor nodes in the $\PALL$ at the start of $uOp$. So $uOp$ is charged $O(\dot{c}(uOp)^2)$ amortized steps in this manner.

Otherwise $\pNode$ is added to the $\PALL$ sometime during $uOp$. 
We let $pOp$ pay for $O(\dot{c}(C))$ steps charged to the successful CAS. 
If $C$ is a configuration sometime during $pOp$, $O(\dot{c}(pOp))$ steps are charged to $pOp$ for this CAS. 
So suppose $C$ occurs sometime after $pOp$ has completed.  
Every update operation that performs an unsuccessful CAS as a result of $s$ must have read $\pNode$ in the $\PALL$, but not yet performed a CAS by $C$. All such update operations are then active during $C$.
So $O(\dot{c}(pOp))$ steps are charged to $pOp$ for this CAS.
There are $O(\dot{c}(pOp))$ update operations active when $pOp$ adds $\pNode$ to the $\PALL$. So a total of $O(\dot{c}(pOp)^2)$ amortized  steps are charged to $pOp$.

Therefore, for any execution $\alpha$, the total number of steps taken by instances of \textsc{NotifyPredOps} is $\sum_{op \in \alpha} \dot{c}(op)^2$.
\end{proof}

We next consider instances of $pOp$ of \textsc{PredHelper}. Note that \textsc{Predecessor} operations invoke exactly one instance of \textsc{PredHelper}, while \textsc{Delete} operations invoke at most two instances of \textsc{PredHelper}. 

\begin{lemma}\label{lemma:amortized_predHelper}
The amortized number of steps taken by instances $pOp$ of \textsc{PredHelper} is $O(\dot{c}(pOp) + \tilde{c}(pOp) + \log u)$. 
\end{lemma}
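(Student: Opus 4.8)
The plan is to split the amortized cost assigned to an instance $pOp$ of \textsc{PredHelper}$(y)$ into (a) the shared-memory steps $pOp$ executes itself and (b) the steps that the amortization scheme of the preceding lemma charges to $pOp$. For (b): although \textsc{PredHelper} never calls \textsc{NotifyPredOps}, the proof of the previous lemma charges to the predecessor node $\pNode$ created by $pOp$ the amortized cost of every notification sent to $\pNode$ by an update operation that was already active when $pOp$ was invoked. There are at most $\dot c(pOp)$ such update operations, and adding a notify node to $\pNode.\notifyList$ has amortized cost $O(\dot c(pOp))$ (the successful CAS on $\pNode.\notifyList.\head$ pays for the $O(\dot c(pOp))$ unsuccessful attempts), so the total charged to $pOp$ is $O(\dot c(pOp)^2)$. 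This gives the $\dot c(pOp)^2$ term, and it remains to bound (a) by $O(\tilde c(pOp) + \log u)$ amortized.

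For (a) I would walk through \textsc{PredHelper} part by part. The call to \textsc{RelaxedPredecessor}$(y)$ costs $O(\log u)$ in the worst case, since our relaxed binary trie traversal is wait-free with $O(\log u)$ step complexity; this is the $\log u$ term. Inserting $\pNode$ into the $\PALL$, traversing the $\PALL$ to build $Q$, traversing the $\RUALL$ inside \textsc{TraverseRUall} (where the single-writer atomic copies into $\pNode.\RuallPosition$ add only a constant factor per step), and traversing the $\UALL$ inside \textsc{TraverseUall} are all operations on lock-free linked lists in the style of Fomitchev and Ruppert. The key observation is that every node reached during any of these traversals belongs to an operation concurrent with $pOp$ (an update node still reachable in the $\UALL$ or $\RUALL$ has not yet been removed by its creator; every predecessor node after $\pNode$ in the $\PALL$ was already present when $pOp$ inserted $\pNode$), so each such traversal visits $O(\bar c(pOp))$ nodes; with the standard amortized cost of lock-free-list operations ($O(1)$ per reachable node plus the backlink argument for logically deleted nodes, plus $O(\dot c)$ for CAS contention) and the fact that $\sum_{op}\bar c(op) = O(\sum_{op}\dot c(op))$, each of these costs $O(\dot c(pOp))$ amortized. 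The same observation shows that $\pNode.\notifyList$ holds $O(\bar c(pOp))$ notify nodes with key less than $y$, so $pOp$'s two passes over it cost $O(\dot c(pOp))$ amortized, and the remaining reads in the $\bot$-handling block over $\Druall$ and $L$ (to assemble $\mathit{predNodes}$, $R$, and $L$) touch $O(\dot c(pOp) + \tilde c(pOp))$ shared fields.

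The one traversal not over $pOp$-scoped data --- and the origin of the $\tilde c(pOp)$ term --- is the scan of $\pNode'.\notifyList$ on lines~\ref{ln:pred:L}--\ref{ln:pred:prependL'1}, where $\pNode'$ is the predecessor node of the first embedded \textsc{Predecessor} of some \textsc{Delete} operation $dOp$ whose DEL node lies in $\Druall$. I would argue that $pOp$ placed that DEL node in $\Druall$ during its traversal of the $\RUALL$, at a moment when it was still the first activated node of its $\latest$ list, so $dOp$ had not yet completed; hence $dOp$ is concurrent with $pOp$ and $\bar c(dOp) \le \tilde c(pOp)$ by the definition of overlapping-interval contention. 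Since $\pNode'$ is in the $\PALL$ only during a subinterval of $dOp$'s execution interval, every update operation that adds a notify node to $\pNode'.\notifyList$ is concurrent with $dOp$, and each such operation (and each process helping it) contributes $O(1)$ notify nodes, so $|\pNode'.\notifyList| = O(\bar c(dOp)) = O(\tilde c(pOp))$ and the scan costs $O(\tilde c(pOp))$; this also bounds $|L_1|$, hence $|L| = O(\dot c(pOp) + \tilde c(pOp))$. Summing all parts gives $O(\dot c(pOp)^2 + \tilde c(pOp) + \log u)$ amortized. I expect the main obstacles to be making the lock-free-linked-list amortization precise (the backlink accounting for logically deleted nodes traversed in the three shared lists) and justifying the ``$O(1)$ notify nodes per concurrent operation'' bound, since a single update node can be the subject of several concurrent \textsc{NotifyPredOps} calls by helpers --- but every such helper is itself an operation concurrent with the relevant predecessor node and is therefore already counted.
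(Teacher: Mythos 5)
Your proposal follows essentially the same decomposition as the paper's proof: you split the cost into the steps $pOp$ performs itself and the $O(\dot c(pOp)^2)$ charged to $pOp$ from \textsc{NotifyPredOps}, bound the \textsc{RelaxedPredecessor} traversal by $O(\log u)$, bound the $\PALL$, $\UALL$, $\RUALL$, and $\pNode.\notifyList$ traversals by $O(\dot c(pOp))$ using the list-length and contention bounds, and isolate the scan of $\pNode'.\notifyList$ as the source of the $O(\tilde c(pOp))$ term. The only substantive difference is that you spell out more carefully why $|\pNode'.\notifyList| = O(\bar c(dOp)) \le \tilde c(pOp)$ — observing that $\pNode'$ lives in the $\PALL$ only during a subinterval of $dOp$'s execution, so every notifying operation is concurrent with $dOp$ — whereas the paper asserts this without elaboration; this is a welcome bit of extra rigor but not a different route.
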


\begin{proof}
Let $\pNode$ be the predecessor node created by $pOp$.
By Lemma~\ref{lemma:amortizedList}, adding $\pNode$ into the $\PALL$ (on line~\ref{ln:pred:insertPALL}) takes $O(\dot{c}(pOp))$ amortized steps. Additionally, traversing the $\PALL$ from lines~\ref{ln:pred:travPALL} to \ref{ln:pred:travPALLend} takes $O(\dot{c}(pOp))$ amortized steps.
By Lemma~\ref{lemma:amortizedList}, traversing the $\UALL$ and $\RUALL$ (on lines~\ref{ln:pred:uall1} and \ref{ln:pred:uall2}) takes $O(\dot{c}(pOp))$ amortized steps.
Performing \textsc{RelaxedPredecessor}$(y)$ (on line~\ref{ln:pred:travtrie}) takes $O(\log u)$ steps in the worst-case.

Only update operations that are active during $pOp$ add notify nodes into the notify list of $\pNode$ and they only add a constant number of notify nodes into it.
So the amortized number of steps taken for $pOp$ to traverse its own notify list is $O(\bar{c}(pOp)) = O(\dot{c}(pOp))$.

The steps in the if-block on line~\ref{ln:pred:fix_binary_trie_pred} to line~\ref{ln:pred:set_pred0} involves traversing the notify list of $\pNode$, as well as possibly of the notify list of a predecessor node $\pNode'$ created by some other operation, $pOp'$.
The total number of notify nodes added to the notify list of $\pNode'$ is $O(\bar{c}(pOp'))$.
So $pOp$ takes $O(\tilde{c}(pOp))$ steps to traverse the notify list of $O(pOp')$. 
By inspection of the code, all other steps take constant time.
In summary, the total amortized cost is $O(\dot{c}(pOp) + \tilde{c}(pOp) + \log u)$.
\end{proof}

We next give the amortized step complexity of \textsc{Insert}, \textsc{Delete}, and \textsc{Predecessor} operations by combining the previous lemmas.

\begin{lemma}
An \textsc{Insert} operation, $iOp$, has $O(\dot{c}(iOp)^2 + \log u)$ amortized cost.
A \textsc{Delete} operation, $dOp$, has $O(\dot{c}(dOp)^2 + \tilde{c}(dOp) + \log u)$ amortized cost.
\end{lemma}

\begin{proof}
Let $uOp$ be an \textsc{Insert} or \textsc{Delete} operation. 
The relaxed binary trie operations, \textsc{TrieInsert} and \textsc{TrieDelete} (invoked on lines~\ref{ln:insert:binaryTrie} and \ref{ln:delete:binaryTrie}), have $O(\log u)$  worst-case step complexity. This is because they perform a single traversal up the relaxed binary trie from a leaf to the root, which has length $O(\log u)$, performing a constant number of steps at each node of this path.

By inspection of the code, $uOp$ performs a constant number of updates and traversals to the $\UALL$, $\RUALL$, and $\PALL$ (outside of instances of \textsc{NotifyPredOps} and \textsc{PredHelper}).
By Lemma~\ref{lemma:amortizedList}, this takes $\dot{c}(uOp)$ amortized steps.
By Lemma~\ref{lemma:amortized_notifyPred}, $O(\dot{c}(uOp)^2)$ amortized steps are charged to $uOp$ for instances of \textsc{NotifyPredOps}.
If $uOp$ is a \textsc{Delete} operation, it invokes \textsc{PredHelper} twice to perform its embedded predecessor operations. By Lemma~\ref{lemma:amortized_predHelper}, these instances of \textsc{PredHelper} take  $O(\dot{c}(uOp) + \tilde{c}(uOp) + \log u)$ amortized steps.
It follows by inspection of the code that all other parts of \textsc{Insert} and \textsc{Delete} operations take a constant number of steps.

\end{proof}

\begin{lemma}
A \textsc{Predecessor} operation, $pOp$, has $O(\dot{c}(pOp)^2 + \tilde{c}(pOp) + \log u)$ amortized cost.
\end{lemma}

\begin{proof}
A \textsc{Predecessor} operation, $pOp$, invokes a single instance of \textsc{PredHelper} (on line~\ref{ln:pred:PredHelper}). By Lemma~\ref{lemma:amortized_predHelper}, this takes  $O(\dot{c}(pOp) + \tilde{c}(pOp) + \log u)$ amortized steps.
The only other steps performed by $pOp$ are to remove its predecessor node from the $\PALL$ (on line~\ref{ln:pred:removePall}). By Lemma~\ref{lemma:amortizedList}, this has an amortized cost of $O(\dot{c}(pOp))$. 
\end{proof}

The following theorem summarizes the results of this section.

\begin{theorem}
We give a linearizable, lock-free implementation of a binary trie for a universe of size $u$ supporting \textsc{Search} with $O(1)$ worst-case step complexity, \textsc{Delete} and \textsc{Predecessor} with $O(\dot{c}(op)^2 + \tilde{c}(op) + \log u)$ amortized cost, and \textsc{Insert} with $O(\dot{c}(op)^2 + \log u)$ amortized cost.
\end{theorem}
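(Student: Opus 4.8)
The plan is to assemble the theorem from the four preceding lemmas by exhibiting a single assignment of amortized costs to operations and checking that it dominates the total step count of every execution $\alpha$. To each \textsc{Search}$(x)$ operation I assign its constant worst-case step count; to each \textsc{Insert} operation $op$ I assign $O(\dot{c}(op)^2 + \log u)$; and to each \textsc{Delete} and \textsc{Predecessor} operation $op$ I assign $O(\dot{c}(op)^2 + \tilde{c}(op) + \log u)$. The work is then to verify that $\sum_{op \in \alpha}(\text{amortized cost of } op)$ is at least the number of steps in $\alpha$, which I would do by partitioning the steps of $\alpha$ according to which sub-procedure performs them.

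First I would dispatch the easy buckets. \textsc{Search} steps are charged directly, using the worst-case lemma. By the lemma on \textsc{Insert}/\textsc{Delete} that ignores \textsc{NotifyPredOps} and \textsc{PredHelper}, the steps an \textsc{Insert} or \textsc{Delete} $op$ spends in \textsc{InsertBinaryTrie}/\textsc{DeleteBinaryTrie} (wait-free, hence $O(\log u)$ worst-case) together with those it spends inserting into, deleting from, and traversing $\PALL$, $\UALL$, and $\RUALL$ are $O(\dot{c}(op) + \log u)$ amortized: the internal cross-charging of the lock-free linked lists (an unsuccessful \textsc{CAS} is paid for by the operation whose successful \textsc{CAS} caused it, at cost $O(\dot c)$ at that configuration) is absorbed into that linked-list operation's own $O(\dot c)$ budget, and all three lists have length $O(\dot{c}(op))$ at the start of $op$. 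Steps inside instances of \textsc{NotifyPredOps} are handled by its lemma: their total over $\alpha$ is $O(\sum_{op \in \alpha}\dot{c}(op)^2)$, with each unit charged either to an update operation $op$ (at most $O(\dot{c}(op))$ concurrent predecessor nodes, each costing $O(\dot{c}(op))$ to update) or to a later-invoked predecessor operation $pOp$ (which, since its predecessor node is in $\PALL$ throughout, has $O(\dot{c}(pOp))$ concurrent update operations, each contributing $O(\dot{c}(pOp))$); both fit inside the $O(\dot{c}(\cdot)^2)$ term of the charged operation. Steps inside a \textsc{Predecessor} operation's single call to \textsc{PredHelper}, plus the removal of its predecessor node from $\PALL$, are $O(\dot{c}(pOp)^2 + \tilde{c}(pOp) + \log u)$ amortized by the \textsc{PredHelper} lemma, the $\tilde{c}$ term coming from traversing the $\notifyList$ of a concurrent \textsc{Delete} on line~\ref{ln:pred:L}.

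It remains to account for the two embedded predecessor operations of a \textsc{Delete}$(x)$ operation $op$. Each is an instance of \textsc{PredHelper}$(x)$, costing $O(\dot{c}^2 + \tilde{c} + \log u)$ amortized in terms of the contention of the embedded operation itself. I would observe that the execution interval of each embedded predecessor is contained in that of $op$, so its point contention and overlapping-interval contention are at most $\dot{c}(op)$ and $\tilde{c}(op)$; hence each embedded call costs $O(\dot{c}(op)^2 + \tilde{c}(op) + \log u)$, and charging both to $op$ leaves it with total amortized cost $O(\dot{c}(op)^2 + \tilde{c}(op) + \log u)$. The fact that the predecessor nodes created for these embedded operations remain in $\PALL$ until $op$ returns only inflates contention measures uniformly, and such inflated nodes are already counted in every $\dot{c}(\cdot)$ above. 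Summing over all operations yields the stated bounds.

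The step I expect to be the main obstacle is making the cross-charging globally consistent: checking that every step of $\alpha$ is charged to exactly one operation, that no step is double-counted, and that each charge of the form ``$O(\dot{c}(C))$ at a configuration $C$ internal to the charging operation'' is genuinely absorbed into a $\dot{c}(op)$ or $\dot{c}(op)^2$ term of that operation. In particular I would need to pin down that any update operation notifying $pOp$ is concurrent with $pOp$, that the \textsc{NotifyPredOps} charging scheme and the internal linked-list charging schemes do not overlap, and that $\tilde{c}$ arises only from the notify-list traversal in the $p_0 = \bot$ branch and nowhere else.
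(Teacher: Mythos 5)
Your proposal is correct and takes essentially the same approach as the paper, which states this theorem as a summary of the four preceding lemmas without a separate standalone proof. The one place where you add genuine content is the handling of the two embedded \textsc{PredHelper} calls inside a \textsc{Delete}: you correctly observe that since each embedded call's execution interval is contained in that of the enclosing \textsc{Delete}, its point contention and overlapping-interval contention are dominated by those of the \textsc{Delete}, so charging both embedded calls to the \textsc{Delete} keeps its total within $O(\dot{c}(op)^2 + \tilde{c}(op) + \log u)$ --- a step the paper leaves implicit but which is needed to close the argument.
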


\section{Conclusion}\label{section_conclusion}

The main contribution of this chapter is a deterministic, lock-free implementation of a binary trie using read, write, CAS, and AND operations. 
We prove that the implementation is linearizable. We show that it supports \textsc{Search} with $O(1)$ worst-case step complexity, \textsc{Delete} and \textsc{Predecessor} with $O(\dot{c}(op)^2 + \tilde{c}(op) + \log u)$ amortized cost, and \textsc{Insert} with $O(\dot{c}(op)^2 + \log u)$ amortized cost.

The implementation
uses a relaxed binary trie as one of its components.
All update operations on the relaxed binary trie  take $O(\log u)$ steps in the worst-case.
Each predecessor operation on the relaxed binary trie takes $O(\log u)$ steps in the worst-case,
since it can complete without having to help concurrent update operations.

It is possible extend our lock-free binary trie to support \textsc{Max}, which returns the largest key in $S$. This can be done by extending the binary trie to represent an additional key $\infty$ that is larger than all keys in $U$, and then performing  \textsc{Predecessor}$(\infty)$. By symmetry, \textsc{Successor} and \textsc{Min} can also be supported.

Our lock-free binary trie has been implemented \cite{JMalek24}.
The implementation uses a version of epoch-based memory reclamation based on DEBRA~\cite{Brown15} to avoid ABA problems when accessing dynamically allocated objects.
Its performance was compared to those of other lock-free data structures supporting \textsc{Predecessor}.

In our lock-free binary trie,
predecessor operations get information about update operations that announce themselves in the update announcement linked list.
Predecessor operations also announce themselves in the predecessor announcement linked list, so that update update operations
can give them information.
There is an amortized cost of $O(\dot{c}(op)^2)$ for an update operation, $op$, to give information to all predecessor operations.
We would like to obtain a more efficient algorithm to do this, which will result in a more efficient implementation of a lock-free binary trie. 


A sequential van Emde Boas trie supports \textsc{Search}, \textsc{Insert}, \textsc{Delete}, and  \textsc{Predecessor} in $O(\log\log u)$ worst-case time. We conjecture that there is a lock-free implementation supporting operations with  $O(\dot{c}(op)^2 + \tilde{c}(op) + \log\log u)$ amortized step complexity. 
Since the challenges are similar, we believe our techniques for implementing a lock-free binary trie will be useful for implementing a lock-free van Emde Boas trie. In particular, using an implementation of a relaxed van Emde Boas trie satisfying the specification from Section~\ref{section_relaxed_binary_trie_specification} should be a good approach.

\bibliography{sources}

\end{document}